\newtheorem{theorem}{Theorem}
\newtheorem{proposition}[theorem]{Proposition}
\newtheorem{lemma}[theorem]{Lemma}
\newtheorem{task}[theorem]{Task}
\newtheorem{procedure}[theorem]{Procedure}
\newtheorem{corollary}[theorem]{Corollary}
\newtheorem{definition}[theorem]{Definition}
\theoremstyle{remark}
\newtheorem{remark}[theorem]{Remark}
\numberwithin{theorem}{section}
\newcommand{\eps}{\varepsilon}
\newcommand{\U}{\mathbf U}
\newcommand{\A}{\mathbf A}
\newcommand{\B}{\mathbf B}
\newcommand{\V}{\mathbf V}
\newcommand{\W}{\mathbf W}
\newcommand{\X}{\mathbf X}
\newcommand{\Y}{\mathbf Y}
\newcommand{\Z}{\mathbf Z}
\renewcommand{\v}{\mathbf v}
\renewcommand{\u}{\mathbf u}
\newcommand{\w}{\mathbf w}
\newcommand{\x}{\mathbf x}
\newcommand{\y}{\mathbf y}
\newcommand{\ba}{\boldsymbol\alpha}
\newcommand{\bb}{\boldsymbol\beta}
\newcommand{\s}{\mathfrak s}
\newcommand{\J}{\mathbf J}
\newcommand{\WL}{\bm{\mathcal{W}}}
\renewcommand{\aa}{\mathfrak a}
\newcommand{\ii}{\mathbf i}
\newcommand{\dd}{\mathrm d}
\newcommand{\E}{\mathbb E}
\newcommand{\T}{\mathsf T}
\newcommand{\la}{\langle}
\newcommand{\ra}{\rangle}
\begin{document}
\title{Canonical Correlation Analysis: review}

\author{Anna Bykhovskaya}
\address[Anna Bykhovskaya]{Duke University}
\email{anna.bykhovskaya@duke.edu}

\author{Vadim Gorin}
\address[Vadim Gorin]{University of California at Berkeley}
\email{vadicgor@gmail.com}


\date{\today}

\maketitle


\section*{Preface}

For over a century canonical correlations, variables, and related concepts have been studied across various fields, with contributions dating back to \cite{jordan1875essai} and \cite{harold1936relations}. This text surveys the evolution of canonical correlation analysis, a fundamental statistical tool, beginning with its foundational theorems and progressing to recent developments and open research problems. Along the way we introduce and review methods, notions, and fundamental concepts from linear algebra, random matrix theory, and high-dimensional statistics, placing particular emphasis on rigorous mathematical treatment.


The survey is intended for technically proficient graduate students and other researchers with an interest in this area.
The content is organized into five chapters, supplemented by six sets of exercises found in Chapter \ref{Section_exercises}. These exercises introduce additional material, reinforce key concepts, and serve to bridge ideas across chapters. We recommend the following sequence: first, solve Problem Set 0, then proceed with Chapter 1, solve Problem Set 1, and so on through the text.


The review grew out of lectures delivered by Gorin at the 2024 Random Matrix Theory Summer School at the University of Michigan, and from the topics class taught by Bykhovskaya at Duke in the Fall 2024. We are grateful to the organizers and participants of these events for their valuable contributions.
 Gorin's work was partially supported by NSF grant DMS - 2246449.


\newpage

\begin{singlespace}

\tableofcontents

\end{singlespace}

\section{Introduction and basic definitions}

\subsection{Opening remarks}

Broadly speaking, \emph{Canonical Correlation Analysis} (CCA) is a method for identifying common factors between two multi-dimensional objects. It can be seen as a companion to the \emph{Principal Component Analysis} (PCA):

\begin{itemize}
 \item PCA is used to find a signal (or factor) in a single large matrix with a large amount of noise. That is, PCA aims to explain and reduce the dimensionality of a single $K\times S$ matrix $X$.
 \item CCA is used to find a \emph{common} signal among two large matrices with a large amount of noise. That is, CCA aims to explain and reduce the dimensionality of the relationship between two matrices: $K\times S$ matrix $X$ and $M\times S$ matrix $Y$.
\end{itemize}

Empirically, CCA is widely applied in testing and inferring relationships between data sets: given two data sets $A$ and $B$, one can test for independence by examining canonical correlations between the row spaces of their respective matrices. If independence is rejected, CCA can then be used to identify the most interdependent components, or linear combinations, of $A$ and $B$. Additionally, applying CCA to spaces derived from the same data sets but through slightly varied procedures can uncover structural and temporal properties within the data.


Examples include: in genomics, finding shared patterns across assays from the same individuals; in neuroscience, linking brain activity with behavioral data; in ecology, correlating species characteristics with habitats; in econometrics, selecting common factors and performing cointegration analysis; and in finance, optimizing portfolio allocations. We will not cover all possible applications but provide references as starting points: \citet{gittins1985canonical,johansen_book}, \citet[Chapter 23]{simon1998assessing}, \citet{sherry2005conducting,witten2009extensions,breitung2013canonical,andreou2019inference,wang2020finding,zhuang2020technical, choi2021canonical,franchi2023estimating,firoozye2023canonical}.

\subsection{Three points of view on CCA} CCA can be approached from the probabilistic, statistical, and geometrical perspectives. The probabilistic (or \emph{population}) framework deals with two families of random variables and measures their dependence, while the statistics (or \emph{sample}) framework assumes that instead of observing the actual distributions (i.e., knowing the mean, variance, and so on) one observes realizations or samples from the distributions. This reflects real-world data scenarios, such as when researchers do not know the true distribution of stock returns, but have access to daily observations. Finally, the geometrical framework unifies both the probabilistic and statistical settings by interpreting random variables or their samples as vectors in an appropriately defined vector space.

Before introducing general definitions of CCA, it is helpful to first examine a one-dimensional case, where everything reduces to well-known notions of (sample) correlations and angles.


\begin{itemize}

\item {\bf Probability.}  Suppose that we are given two mean $0$ random variables $\mathsf x$ and $\mathsf y$. Linear dependence between $\mathsf x$ and $\mathsf y$ is measured by the \emph{correlation coefficient} $\rho$, where

    \begin{equation} \label{eq_cor_coefficient}
      \rho^2 = \frac{(\E \mathsf x \mathsf y)^2}{\E \mathsf x^2 \E \mathsf y^2}.
    \end{equation}
    We have $0\le \rho^2 \le 1$ and if $\rho^2$ is close to $1$, then $\mathsf x$ and $\mathsf y$ are highly dependent. In particular, $\rho^2=1$ implies that $\mathsf x = c\, \mathsf y$ for a deterministic constant $c$.

    \smallskip

\item {\bf Statistics.} Suppose that we are given two vectors of data $(x_1,x_2,\dots,x_S)$ and $(y_1,y_2,\dots,y_S)$ (e.g., samples of $(\mathsf x,\mathsf y)$ from the previous setting). Linear dependence between $(x_1,x_2,\dots,x_S)$ and $(y_1,y_2,\dots,y_S)$ is measured by the \emph{sample correlation coefficient} $\hat{\rho}$, where\footnote{For simplicity we assume here that the data is coming from a mean $0$ process. Otherwise, one first needs to subtract the means from the data.}
    \begin{equation} \label{eq_sample_cor_coefficient}
     \hat \rho^2= \frac{ \left( \sum\limits_{i=1}^S x_i y_i \right)^2}{\sum\limits_{i=1}^S x_i^2 \sum\limits_{i=1}^S y_i^2}.
    \end{equation}
    As before, $0\le \hat\rho^2 \le 1$ and if $\hat \rho^2$ is close to $1$, then two vectors are highly dependent. In particular, $\hat \rho^2=1$ implies that $x_i = c\, y_i$ for all $i=1,2,\dots,S$ for a constant $c$.

\item {\bf Geometry.} Suppose that we are given two lines passing through the origin in a linear vector space equipped with a scalar product $\la \cdot, \cdot \ra$. How can we describe their relative position? A numeric characteristic of this relationship is given by the \emph{angle} $\phi$ between these lines.\footnote{See Problem set $0$ in Chapter \ref{Section_exercises} for an explanation that this is the only characteristic.}. Suppose that $\u$ and $\v$ are vectors pointing in the direction of these two lines. Then $\phi$ can be computed through
    \begin{equation}\label{eq_angle}
     \cos^2\phi = \frac{\la \u,\v\ra^2}{\la \u,\u\ra \la \v,\v\ra}.
    \end{equation}
    Small $\phi$ or close to $1$ values of $\cos^2\phi$ mean that the lines almost coincide. In particular, $\cos^2\phi=1$ implies that the lines are the same.

\end{itemize}

\smallskip

Depending on the reader's preference, one may choose to work with one of the above frameworks. Yet, it is helpful to remember that \eqref{eq_cor_coefficient}, \eqref{eq_sample_cor_coefficient}, and \eqref{eq_angle} are three faces of the same mathematical object: both \eqref{eq_cor_coefficient} and \eqref{eq_sample_cor_coefficient} can be turned into the form \eqref{eq_angle} by properly specifying the vector space and the scalar product.

Canonical correlations provide a multi-dimensional extension of \eqref{eq_cor_coefficient}, \eqref{eq_sample_cor_coefficient}, and \eqref{eq_angle}. Instead of two random variables we consider two groups of random variables; rather than using two vectors of data, we work with two rectangular matrices of data; and instead of lines we examine linear subspaces.

The concepts behind CCA were originally introduced in \citet{jordan1875essai,harold1936relations} and remain widely used today. While we will cover all the key definitions, for additional reading, we recommend statistics textbooks such as \citet{thompson1984canonical,gittins1985canonical,anderson1958introduction}, and \citet{muirhead2009aspects}.

\subsection{Definitions: canonical correlations and variables} \label{Section_CCA_def}

In this subsection we provide formal definitions of canonical correlations and canonical variables and introduce different algorithms, which can be used to calculate them. We start with geometric framework via Theorem \ref{Theorem_canonical_bases} and then explain how to specialize it to probabilistic and statistics frameworks. We then present two alternative ways to calculate canonical correlations and canonical variables, which are commonly used perspectives in data science when thinking about CCA. The first approach can be viewed as a form of dimension reduction, where we iteratively capture as much of the correlation structure as possible by working with linear combinations of the data. It is also used to construct the bases of Theorem \ref{Theorem_canonical_bases}. The second approach relates CCA to the eigenvalues and eigenvectors of specific matrices. The proof of Theorem \ref{Theorem_canonical_bases} as well as some additional results are presented in Section \ref{Section_Th_CCA_proof}.

\subsubsection{Two bases}

\begin{theorem}\label{Theorem_canonical_bases}
Let $\W$ be a linear space\footnote{Throughout this text we work with real vector spaces, such as $\mathbb{R}^n$, though the theory also extends to complex spaces like $\mathbb{C}^n$, where some aspects are, in fact, simpler.}  with a scalar product $\la \cdot,\cdot\ra$. Suppose that $K\le M$ and let $\U$ and $\V$ be $K$-dimensional and $M$-dimensional subspaces of $\W$, respectively. Then there exist two orthonormal bases: vectors $\u_1, \u_2, \dots, \u_{K}$ span $\U$ and vectors $\v_1,\dots, \v_{M}$ span $\V$ -- such that for all meaningful indices $i$ and $j$ we have
\begin{equation}
\label{eq_scalar_products_table}
 \la \u_i, \u_j\ra = \delta_{i=j}, \qquad \la \v_i, \v_j\ra = \delta_{i=j},\qquad \la \u_i, \v_j\ra = c_i \delta_{i=j},
\end{equation}
where  $1\ge c_1\ge c_2 \ge \dots \ge c_{K}\ge 0$.
\end{theorem}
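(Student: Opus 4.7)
My plan is to reduce the statement to the singular value decomposition of an auxiliary cross-Gram matrix, treating the abstract scalar-product setting as an unknown ambient geometry that we can coordinatize by any convenient orthonormal frames.

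First, I would fix an arbitrary orthonormal basis $\tilde\u_1,\dots,\tilde\u_K$ of $\U$ and $\tilde\v_1,\dots,\tilde\v_M$ of $\V$ (they exist by Gram--Schmidt applied inside $\W$). Form the $K\times M$ real matrix $A$ with entries $A_{km}=\la \tilde\u_k,\tilde\v_m\ra$. Applying the SVD, write $A=P\Sigma Q^{\T}$ with $P$ an orthogonal $K\times K$ matrix, $Q$ an orthogonal $M\times M$ matrix, and $\Sigma$ the $K\times M$ diagonal matrix whose nonzero entries are the singular values $c_1\ge c_2\ge \dots \ge c_K\ge 0$. These $c_i$ will be the numbers in the statement.

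Second, I would define
\[
  \u_i := \sum_{k=1}^{K} P_{ki}\,\tilde\u_k \quad (1\le i\le K), \qquad \v_j := \sum_{m=1}^{M} Q_{mj}\,\tilde\v_m \quad (1\le j\le M).
\]
Orthogonality of $P$ gives $\la \u_i,\u_{i'}\ra=\sum_k P_{ki}P_{ki'}=\delta_{i=i'}$, and similarly for the $\v_j$, so both collections are orthonormal and thus span $\U$ and $\V$ respectively. The cross inner products assemble into the matrix $(\la \u_i,\v_j\ra)_{i,j}=P^{\T}A Q=\Sigma$, whose $(i,j)$-entry is exactly $c_i\,\delta_{i=j}$ (with $c_i=0$ implicitly understood when $i>K$, though we only need $i\le K$).

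Third, I would verify the range $0\le c_i\le 1$. Non-negativity is built into the SVD. For the upper bound, the largest singular value satisfies
\[
 c_1 = \|A\|_{\mathrm{op}} = \max_{\substack{\u\in\U,\ \v\in\V\\ \|\u\|=\|\v\|=1}} \la \u,\v\ra,
\]
which is at most $1$ by Cauchy--Schwarz. I do not anticipate a serious obstacle here: the only conceptual step is to recognize that the cross-Gram matrix of any two orthonormal bases is canonical up to orthogonal changes of basis on each side, and SVD is precisely the tool that produces those changes of basis. A more self-contained alternative would be the greedy/variational construction (pick $\u_1,\v_1$ maximizing $\la \u,\v\ra$, then iterate on orthogonal complements), which better parallels the sequential CCA perspective announced after the theorem; the harder part there is showing $\la \u_i,\v_j\ra=0$ for $i\ne j$, which follows from Lagrange-multiplier optimality at each stage but is more laborious than the one-shot SVD argument above.
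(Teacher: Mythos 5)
Your proof is correct, and it takes a genuinely different route from the paper's. The paper constructs the two bases by greedy iterative maximization: it defines $c_1$ as the maximum of $\la\u,\v\ra$ over the unit spheres of $\U$ and $\V$, proves a separate orthogonality lemma (by a first-order perturbation argument at the maximum) showing that any $\u\in\U$ orthogonal to $\u_1$ is automatically orthogonal to $\v_1$ and vice versa, and then recurses on the orthogonal complements $U^2\times V^2$, $U^3\times V^3$, etc. The cross-orthogonality $\la\u_i,\v_j\ra=0$ for $i\ne j$ is exactly the delicate step you flag in your closing paragraph, and the paper handles it by that optimality/perturbation argument. Your approach instead coordinatizes the problem: you package the entire correlation structure into the $K\times M$ cross-Gram matrix $A$ relative to fixed orthonormal frames, apply the SVD $A=P\Sigma Q^\T$, and read off the canonical bases from the columns of $P$ and $Q$. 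The computation $\bigl(\la\u_i,\v_j\ra\bigr)_{i,j}=P^\T A Q=\Sigma$ then yields \eqref{eq_scalar_products_table} in one stroke, and your $c_1\le 1$ argument via Cauchy--Schwarz is correct. The SVD route is shorter and cleaner if one grants the SVD as a black box (which is itself proved by essentially the same variational/spectral ideas applied to $A^\T A$), and it makes the basis-independence of the $c_i$ manifest; the paper's iterative construction is more self-contained and has the expository advantage of simultaneously proving the sequential-maximization characterization of canonical correlations, which the text uses heavily in the rest of Section \ref{subsection_maximiz_problem} and beyond. Both are complete proofs.
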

The numbers $c_1\ge c_2 \ge \dots \ge c_{K}$ are called \emph{canonical correlations} between subspaces $\U$ and $\V$; they are also cosines of the canonical angles between the subspaces. The vectors $\u_i$, $1\le i \le {K}$, and $\v_j$, $1\le j\le {M}$, are called \emph{canonical variables}; they split into ${K}$ pairs $(\u_i,\v_i)$ and $M-K$ singletons $\v_j$, $j\ge K$.

Theorem \ref{Theorem_canonical_bases} constructs an orthonormal basis in the ambient space, which encompasses $\U$ and $\V$. It can be treated as an analogue of the diagonalization procedure for symmetric matrices (i.e., the theorem which says that any symmetric matrix can be made diagonal by a choice of an appropriate basis of the space). As will be seen in Corollary \ref{Corollary_invariants}, canonical correlations $c_i$ are invariants of spaces $\U$ and $\V$ under orthogonal transformations of the space: they are present, no matter what orthogonal basis one chooses. The point of view of $c_i$ being invariants is explored, e.g., in \cite{halmos1969two} and \cite{bottcher2010gentle}.

\bigskip

Let us adapt Theorem \ref{Theorem_canonical_bases} to probability and statistics frameworks.

\textbf{Theorem \ref{Theorem_canonical_bases} in the probability framework.}
Suppose we are given a collection of mean $0$ random variables $\{\mathsf x^i\}_{i=1}^K$ and another collection $\{\mathsf y^j\}_{j=1}^M$. Then we take as $\U$ the space of all linear combinations $\alpha_1 \mathsf x^1+\alpha_2 \mathsf x^2+\dots+\alpha_K\mathsf x^K$ with \emph{deterministic} real coefficients $\alpha_i$. Similarly, $\V$ is the space of all linear combinations of $\mathsf y^j$ with deterministic real coefficients. The scalar product is given by expectation: $\la \mathsf u,\mathsf v\ra=\E \mathsf u \mathsf v$ for mean $0$ random variables $\mathsf u$ and $\mathsf v$. In this situation, the canonical variables are also random variables: $\u_i$ are linear combinations of $\mathsf x^1,\dots,\mathsf x^K$ and $\v_j$ are linear combinations of $\mathsf y^1,\dots,\mathsf y^M$ with deterministic coefficients. On the other hand, the canonical correlations $c_1,\dots,c_K$ are deterministic --- they describe the correlation structure between families $\{\mathsf x^i\}_{i=1}^K$ and $\{\mathsf y^j\}_{j=1}^M$. In particular, if $K=M=1$, then we are back to \eqref{eq_cor_coefficient} with $\rho=c_1$:
$$
 \u_1=\pm\frac{\mathsf x^1}{\sqrt{\E (\mathsf x^1)^2}}, \qquad \v_1=\pm\frac{\mathsf y^1}{\sqrt{\E (\mathsf y^1)^2}}, \qquad c_1= \frac{\left|\E \mathsf x^1\mathsf y^1\right|}{\sqrt{\E (\mathsf x^1)^2 \E (\mathsf y^1)^2}},
$$\
where the signs are chosen so that $\langle \u_1,\v_1\rangle \ge 0$.

\smallskip

For general $M\ge K$ Theorem \ref{Theorem_canonical_bases} gives the simplest possible form to which a correlation structure between two families of random variables  $\{\mathsf x^i\}_{i=1}^K$ and $\{\mathsf y^j\}_{j=1}^M$ can be reduced by linear transformations.
A useful analogy can be drawn with the theorem for Gaussian vectors, which states that any zero-mean Gaussian vector can be obtained by applying a linear transformation to a Gaussian vector with identity covariance matrix. Similarly, Theorem \ref{Theorem_canonical_bases} asserts that the $(K+M)\times (K+M)$ covariance matrix of the vector $(\mathsf x^1,\dots, \mathsf x^K, \mathsf y^1, \dots, \mathsf y^M)$ can be reduced by two linear transformations (one applied to $\{\mathsf x^i\}_{i=1}^K$ and another to $\{\mathsf y^j\}_{j=1}^M$) to a simple block form:

{\small
$$\left(
 \begin{array}{ccc:ccccc}
   1 & 0 & 0 & c_1 & 0 &0 &\dots & 0\\
   0 & \ddots  & 0 & 0 &  \ddots &0  & & 0\\
   0 & 0&  1 & 0 & 0  & c_K &\dots & 0 \\
   \hdashline
    c_1 & 0 & 0 &  1 & 0 & & & \\
    0 & \ddots & 0  & 0 & \ddots & & & \\
    0 & 0 & c_K & & &  &   & \\
     &   &   &    & & & \ddots  & 0\\
     0 &  \dots & 0  &    & & & 0 & 1
 \end{array}\right),
$$}
where the block sizes are $K\times K$, $K\times M$ in the top and $M\times K$, $M\times M$ in the bottom.

\smallskip

\textbf{Theorem \ref{Theorem_canonical_bases} in the statistics framework.}
Next, suppose we are given two rectangular matrices of data: $K\times S$ matrix $X=[X_{ij}]$ and $M\times S$ matrix $Y=[Y_{ij}]$. Then we take as $\U\subset \mathbb R^S$ the space of all linear combinations of $K$ rows of $X$ and $\V\subset\mathbb R^S$ is the space of all linear combinations of $M$ rows of $Y$. The scalar product is the standard scalar product in $\mathbb R^S$: $\la (x_1,\dots,x_S), (y_1,\dots,y_S)\ra = \sum_{i=1}^S x_i y_i$. In this situation the sample\footnote{In this context we add the term ``sample'' to indicate the statistical, or sampled, nature of the data.} canonical variables are $S$-dimensional vectors: $\u_i$ are linear combinations of rows of $X$ and $\v_j$ are linear combinations of rows of $Y$. The sample canonical correlations $c_1,\dots,c_K$ are numbers describing the cross-correlation structure between the rows of $X$ and $Y$. In particular, if $K=M=1$, then we are back to \eqref{eq_sample_cor_coefficient} with $\hat \rho=c_1$:
$$
 \u_1=\pm \frac{ (X_{11},X_{12},\dots,X_{1S})}{\sqrt{\sum\limits_{i=1}^S X_{1i}^2}},\qquad \v_1=\pm \frac{(Y_{11},Y_{12},\dots,Y_{1S})}{\sqrt{\sum\limits_{i=1}^S Y_{1i}^2}},\qquad c_1=\frac{\left|\sum\limits_{i=1}^S X_{1i}Y_{1i} \right| }{\sqrt{\sum\limits_{i=1}^S X_{1i}^2\sum\limits_{i=1}^S Y_{1i}^2}}.
$$


\subsubsection{Maximization problem}\label{subsection_maximiz_problem}

The first approach for finding the canonical correlations and variables of Theorem \ref{Theorem_canonical_bases} is to consider a function $f(\u,\v)=\la \u, \v\ra$, in which $\u$ varies over all unit vectors in $\U$ and $\v$ varies over all unit vectors in $\V$. This function represents (sample) correlation\footnote{Since correlation between two variables is unaffected by rescaling of the variables, we can restrict the space to be the unit sphere.} between $\u$ and $\v$.  Then $c_1$ is the maximum of $f$, which is achieved at $(\u_1,\v_1)$. After $(\u_1,\v_1,c_1)$ is found, we consider restriction of the same function $f(\u,\v)$ on the vectors $\u \in \U$ orthogonal to $\u_1$ and vectors $\v\in \V$ orthogonal to $\v_1$. Maximizing the restricted function, we get $c_2$, which is achieved at $(\u_2,\v_2)$. Next, we restrict $f$ to $\u\in \U$ orthogonal to both $\u_1$ and $\u_2$ and to $\v\in \V$ orthogonal to both $\v_1$ and $\v_2$, etc. Formally, step $i$ of the sequential maximization problem can be written as
\begin{equation}\label{eq_CCA_maximization}
\begin{split}
   &(\arg)\max\limits_{\u\in\U,\,\v\in\V} \la \u, \v\ra \\
    \text{such that }&\begin{cases}
    \la\u,\u\ra=\la\v,\v\ra=1,\\
    \la\u,\u_1\ra=\ldots=\la\u,\u_{i-1}\ra=0,\\
    \la\v,\v_1\ra=\ldots=\la\v,\v_{i-1}\ra=0.
    \end{cases}
\end{split}
 \end{equation}
Once all $(\u_i,\v_i,c_i)$, $1\le i \le {K}$, are identified, the remaining vectors $\v_j$, $j>{K}$, are an arbitrary orthogonal basis in the part of $\V$ orthogonal to $\v_1,\dots,\v_{K}$. This approach reduces the task of finding canonical correlations and variables to a series of iterative maximization problems, which can be efficiently solved using numerous numerical algorithms and also admit various generalizations.

If we wish to solve maximization problems \eqref{eq_CCA_maximization} simultaneously for all $1\le i \le K$, then we can view $(\u_i,\v_i,c_i)$, $1\le i \le {K}$, as critical points and corresponding values of $f$. Given the constraints $\la\u,\u\ra=\la\v,\v\ra=1$, these critical points can be found through the Lagrangian function:
$$
 \mathcal L(\u,\v)=\la \u, \v\ra+ a \la\u,\u\ra + b \la\v,\v\ra, \qquad \u\in\U, \quad \v\in \V.
$$
where $a$ and $b$ are Lagrange multipliers. Expanding $\u$ and $\v$ in arbitrary bases of $\U$ and $\V$, respectively (in the probabilistic setting it is natural to expand in $\{\mathsf x^i\}_{i=1}^K$ and  $\{\mathsf y^j\}_{j=1}^M$; while in the statistical setting it is natural to expand in rows of $X$ and $Y$), we differentiate $\mathcal{L}(\u, \v)$ with respect to all coefficients of the expansions and set the partial derivatives to zero. This yields a system of linear equations, whose solutions provide the pairs $(\pm \u_i, \pm \v_i)$.

\subsubsection{Eigenvalues and eigenvectors} \label{Subsection_ev}

Here is an alternative approach for finding the canonical correlations and variables of Theorem \ref{Theorem_canonical_bases}. For a subspace $\mathbf Q\subset \W$ let us denote through $P_{\mathbf Q}$ the orthogonal projector onto $\mathbf Q$.

 \begin{proposition} \label{Proposition_CCA_as_eigenvectors}
  In the notations of Theorem \ref{Theorem_canonical_bases}, the squares $c_i^2$, $1\le i \le {K}$, are eigenvalues of $P_{\U} P_{\V}$ and $\u_i$ are corresponding eigenvectors. Simultaneously, $c_j^2$ are eigenvalues of $P_{\V} P_{\U}$ and $\v_j$ are corresponding eigenvectors.
 \end{proposition}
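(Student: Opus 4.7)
The plan is to compute $P_{\U} P_{\V} \u_i$ directly using the canonical bases supplied by Theorem \ref{Theorem_canonical_bases}. The key observation is that orthogonal projection onto a subspace has a transparent formula when expressed in an orthonormal basis of that subspace, and the bases in Theorem \ref{Theorem_canonical_bases} are tailored precisely so that all cross inner products appearing in such a formula are explicit via \eqref{eq_scalar_products_table}.

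First I would recall that, since $\v_1,\dots,\v_M$ is an orthonormal basis of $\V$, for any $\w\in\W$
\[
 P_{\V} \w = \sum_{j=1}^{M} \la \w, \v_j\ra\, \v_j.
\]
Setting $\w = \u_i$ and invoking \eqref{eq_scalar_products_table}, which gives $\la \u_i, \v_j\ra = c_i\, \delta_{i=j}$, only the $j=i$ term survives, and we obtain $P_{\V}\u_i = c_i \v_i$. Next I would apply the analogous expansion of $P_{\U}$ in the orthonormal basis $\u_1,\dots,\u_K$ of $\U$: using \eqref{eq_scalar_products_table} again,
\[
 P_{\U} \v_i = \sum_{k=1}^{K} \la \v_i, \u_k\ra\, \u_k = c_i \u_i.
\]
Composing these two computations gives $P_{\U} P_{\V}\, \u_i = c_i (c_i \u_i) = c_i^2\, \u_i$, which is the first half of the statement. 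The second half, $P_{\V} P_{\U}\, \v_j = c_j^2 \v_j$ for $1\le j\le K$, follows by the symmetric argument (swap the roles of $\U$ and $\V$).

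I do not expect a substantive obstacle: the entire content of the proposition is essentially packaged inside Theorem \ref{Theorem_canonical_bases}, and the proof merely unfolds it using the elementary orthonormal-basis formula for projections. The only care needed is to use that $\u_1,\dots,\u_K$ spans all of $\U$ and $\v_1,\dots,\v_M$ spans all of $\V$, so the expansions of $P_{\U}$ and $P_{\V}$ are complete. As an incidental by-product, the same computation shows that $P_{\U} P_{\V}$ annihilates $\U^{\perp}$ (since its image lies in $\U$), so one in fact recovers the full spectral decomposition of the composition of the two projectors.
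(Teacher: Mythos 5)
Your proof is correct and follows essentially the same route as the paper: both compute $P_{\V}\u_i = c_i\v_i$ and then $P_{\U}P_{\V}\u_i = c_i^2\u_i$ directly from the table \eqref{eq_scalar_products_table}, with the symmetric argument for $P_{\V}P_{\U}$. The paper additionally opens with a rank count (rank $P_\U P_\V \le K$) to see at once that the $\u_i$ exhaust all eigenvectors with nonzero eigenvalue, which is the same completeness observation you record at the end.
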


In order to use Proposition \ref{Proposition_CCA_as_eigenvectors} as a numerical algorithm, one needs to calculate the matrices of the projectors. If we use the statistics point of view where $\U$ and $\V$ are rows of rectangular matrices $X$ and $Y$, then
\begin{equation}
\label{eq_x2}
 P_{\U} P_{\V}= X^\T (X X^\T)^{-1} X Y^\T (Y Y^\T)^{-1} Y, \qquad  P_{\V} P_{\U}=Y^\T (Y Y^\T)^{-1} Y X^\T (X X^\T)^{-1} X,
\end{equation}
where $(\cdot)^{\T}$ here and below denotes matrix transposition. Note that if the rows of $X$ and $Y$ are orthogonal, then  $X X^\T$ and $Y Y^\T$ are identity matrices, whose inversion is straightforward. This feature is sometimes used for developing more efficient numerical methods.

\begin{remark}
\label{Remark_projectors_abuse}
 With a minor abuse of notations, we often want to identify the spaces $\U$ and $\V$ with the linear span of the rows of matrices denoted by the same letters $\U$ and $\V$. In other words, we do not want to introduce another notation for $X$ and $Y$, but instead write directly $X=\U$ and $Y=\V$, so that \eqref{eq_x2} becomes:
\begin{equation}
\label{eq_x3}
 P_{\U} P_{\V}= \U^\T (\U \U^\T)^{-1} \U \V^\T (\V \V^\T)^{-1} \V, \qquad  P_{\V} P_{\U}=\V^\T (\V \V^\T)^{-1} \V \U^\T (\U \U^\T)^{-1} \U.
\end{equation}
\end{remark}

The product of projectors $P_{\U} P_{\V}$ in \eqref{eq_x2} is a $\dim(\W)\times \dim(\W)$ dimensional matrix, which can be hard to operate with if $\dim(\W)$ is large. In particular, if we are in the probability setting with random variables $\{\mathsf x^i\}_{i=1}^K$ and $\{\mathsf y^j\}_{j=1}^M$, it is natural to take as $\W$ the space of all square-integrable random variables (on some probability space), which makes $\W$ infinite-dimensional and, therefore, directly evaluating $P_{\U} P_{\V}$ becomes computationally challenging. However, as formally proved later in Proposition \ref{prop_CCA_matrix_form}, we can simplify the algorithm and deal with $K\times K$ and $M\times M$ matrices instead:

$\bullet$ In the statistics framework the squared sample canonical correlations can be equivalently found as eigenvalues of either of the matrices
\begin{equation}
\label{eq_x16}
  (\U \U^\T)^{-1} \U \V^\T (\V \V^\T)^{-1} \V \U^\T, \qquad\text{or} \qquad  \V \U^\T (\U \U^\T)^{-1} \U \V^\T (\V \V^\T)^{-1},
\end{equation}
where we note that the first matrix is $K\times K$ and the second one is $M\times M$. If $\ba$ is an eigenvector of the first matrix and $\bb$ is an eigenvector of the second matrix with the same eigenvalue $c$, then, up to normalization of vectors, the triplets $(\U^\T\ba, \V^\T\bb, c)$ represent sample canonical variables and correlations between $\U$ and $\V$. Sometimes, $\ba$ and $\bb$ are referred to as \emph{canonical vectors}, which should not be confused with canonical variables $\U^\T\ba$ and  $\V^\T\bb$.

\medskip

$\bullet$ Similarly, in the probability framework the squared (populational) canonical correlations can be found as eigenvalues of either of the following matrices, in which we used the vector notations $\u=(\mathsf x^1,\mathsf x^2,\dots,\mathsf x^K)^{\T}$, $\v=(\mathsf y^1,\mathsf y^2,\dots,\mathsf y^M)^{\T}$,
\begin{equation}
\label{eq_x17}
 (\E \u \u^\T)^{-1} \E \u \v^\T (\E \v\v^\T)^{-1} \E \v \u^{\T} \qquad\text{or} \qquad  \E \v \u^{\T} (\E \u \u^\T)^{-1} \E \u \v^\T (\E \v\v^\T)^{-1}.
\end{equation}
If $\ba$ is an eigenvector of the first matrix and $\bb$ is an eigenvector of the second matrix with the same eigenvalue $c$, then, up to normalization of vectors, the triplets $(\u^\T\ba, \v^\T\bb, c)$ represent canonical variables and correlations between the families $\{\mathsf x^i\}_{i=1}^K$ and $\{\mathsf y^j\}_{j=1}^M$.

\subsection{Proof of Theorem \ref{Theorem_canonical_bases} and corollaries} \label{Section_Th_CCA_proof}
This section presents proofs of all the previously mentioned facts, along with some additional results. We start by showing that the iterative maximization algorithm \eqref{eq_CCA_maximization} produces the desired bases $\u_1, \u_2, \dots, \u_{K}$ and $\v_1,\dots, \v_{M}$.

Let $U^1=\{\u\in\U\mid \la \u,\u\ra =1\}$, $V^1=\{\v\in\V\mid \la \v,\v\ra=1\}$ be two unit spheres in $K$ and $M$ dimensional spaces, respectively. Consider the function $f(\u,\v)=\la \u,\v\ra$ restricted to these spheres. If $f$ is identical zero, then we are done: the canonical correlations $c_i$ are $0$ and we can choose an arbitrary basis of $\U$ as $\u_1,\dots,\u_K$ and an arbitrary basis of $\V$ ad $\v_1,\dots,\v_M$.

If $f(\u,\v)$ is not identical zero, then $f(-\u,\v)=-f(\u,\v)$ implies that $f$ takes positive values at some points. Since $f$ is a continuous function and spheres are compact, $f$ achieves its maximum at a point. Let $0<c_1\le 1$ denote the maximal value of $f$, and let $(\u_1,\v_1)$ denote the point where it is achieved. By definition $\la \u_1,\u_1\ra=\la \v_1,\v_1\ra=1$ and $\la \u_1,\v_1\ra=c_1$, fitting into \eqref{eq_scalar_products_table}.

Further, set $U^2=\{\u\in U^1\mid \la \u,\u_1\ra =0\}$, $V^2=\{v\in V^1\mid \la \v,\v_1\ra=0\}$. These are $K-1$ and $M-1$ dimensional spheres, respectively, with an important property:
\begin{lemma} \label{Lemma_orthogonality}
 All vectors in $U^2$ are orthogonal to $\v_1$. All vectors in $V^2$ are orthogonal to $\u_1$.
\end{lemma}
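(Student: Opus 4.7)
The plan is to exploit that $(\u_1,\v_1)$ is the maximizer of $f(\u,\v)=\la \u,\v\ra$ over $U^1\times V^1$. The key observation is that with $\v_1$ held fixed, $\u_1$ must maximize the linear functional $g(\u):=\la \u,\v_1\ra$ over $U^1$, since $g(\u')=f(\u',\v_1)\le f(\u_1,\v_1)=g(\u_1)$ for every $\u'\in U^1$; symmetrically, with $\u_1$ fixed, $\v_1$ maximizes $h(\v):=\la \u_1,\v\ra$ over $V^1$. Both claims in the lemma will then follow from a one-variable variational argument on the unit sphere.

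Concretely, I would take an arbitrary $\u\in U^2$, so $\la \u,\u\ra=1$ and $\la \u,\u_1\ra=0$, and consider the curve $\u(t)=\cos(t)\,\u_1+\sin(t)\,\u$ in $\U$. Orthonormality of $\u_1$ and $\u$ gives $\la \u(t),\u(t)\ra=1$, so $\u(t)\in U^1$ for every $t$. Evaluating $g$ along this curve yields
\begin{equation*}
g(\u(t))=\cos(t)\,\la \u_1,\v_1\ra+\sin(t)\,\la \u,\v_1\ra=c_1\cos(t)+\la \u,\v_1\ra\sin(t),
\end{equation*}
which, by the maximizing property of $\u_1$ on $U^1$, attains its maximum at $t=0$. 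Differentiating and setting $t=0$ forces $\la \u,\v_1\ra=0$, establishing the first assertion. The second assertion is obtained by the symmetric argument: for any $\v\in V^2$, the curve $\v(t)=\cos(t)\,\v_1+\sin(t)\,\v$ lies on $V^1$, and differentiating $h(\v(t))$ at $t=0$ gives $\la \u_1,\v\ra=0$.

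There is no real obstacle here — the only thing to verify is that the chosen one-parameter family stays on the unit sphere, so that the maximizing property of $(\u_1,\v_1)$ applies, and this is immediate from orthonormality. An equivalent and slightly more conceptual formulation would observe that on $\U$ the functional $\u\mapsto \la \u,\v_1\ra$ equals $\la \u,P_\U\v_1\ra$, whose maximum over $U^1$ is $\|P_\U\v_1\|$ attained at the normalized projection; matching against $\u_1$ and the maximum value $c_1$ yields $P_\U\v_1=c_1\u_1$, and then $\la \u,\v_1\ra=c_1\la \u,\u_1\ra=0$ for every $\u\in U^2$, with the analogous identity $P_\V\u_1=c_1\v_1$ handling $V^2$.
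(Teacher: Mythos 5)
Your proposal is correct and takes essentially the same approach as the paper. The paper uses the parametrization $\u\alpha+\sqrt{1-\alpha^2}\,\u_1$ and derives a contradiction from a Taylor expansion, while you use $\cos(t)\u_1+\sin(t)\u$ and set a derivative to zero at $t=0$; these are the same first-order-condition argument at the maximizer, just phrased slightly differently.
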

\begin{proof}
 We argue by contradiction. Suppose that for $\u\in U^2$ we have $\la \u,\v_1\ra=d>0$. For any $0<\alpha<1$ consider a unit vector $\u \alpha+ \sqrt{1-\alpha^2} \u_1\in U^1$. For small positive $\alpha$ we have
 $$
  \la \u \alpha+ \sqrt{1-\alpha^2} \u_1, \v_1\ra = \alpha d +  \sqrt{1-\alpha^2}\, c_1= c_1 + \alpha d - \frac{\alpha^2}{2} c_1 + o(\alpha^2)> c_1,
 $$
 contradicting that $c_1$ is a maximum of $f$. The argument for $V^2$ is the same.
\end{proof}

Next, we consider the restriction of $f$ onto $U^2\times V^2$. If it is identical zero, then $c_2=c_3=\dots=0$, and we can complement $\u_1$ to an orthonormal basis of $\U$ and complement $\v_1$ to an orthonormal basis of $\V$, getting the desired \eqref{eq_scalar_products_table}.

Otherwise, let $c_2$ denote the maximum of $f$ on $U^2\times V^2$, achieved at $(\u_2,\v_2)$. By definitions and Lemma \ref{Lemma_orthogonality}, we have:
$$
 \la \u_2,\u_2\ra =\la \v_2,\v_2\ra=1,\quad \la \u_2,\u_1\ra=\la \u_2,\v_1\ra=\la \v_2,\u_1\ra= \la \v_2,\v_1\ra =0,\quad \la \u_2,\v_2\ra = c_2,
$$
fitting into \eqref{eq_scalar_products_table}. We can continue in the same way by defining $K-2$ and $M-2$ dimensional spheres $
 U^3=\{\u\in U^2\mid \la \u,\u_2\ra =0\}$, $V^3=\{v\in V^2\mid \la \v,\v_2\ra=0\}$. Then we use the same argument as in Lemma \ref{Lemma_orthogonality} to show that $U^3$ is orthogonal to $\v_1$ and $\v_2$ and $V^3$ is orthogonal to $\u_1$ and $\u_2$, maximize $f$ over $U^3\times V^3$, etc.

 Repeating the argument $K$ times, we exhaust $\U$ and construct its orthonormal basis $\u_1,\dots,\u_K$ and corresponding partial basis $\v_1,\dots,\v_K$ of $\V$. Yet again by a version of Lemma \ref{Lemma_orthogonality}, the complement of $\v_1,\dots,\v_K$ in $\V$ is orthogonal to all of $\U$. Hence, we can extend $\v_1,\dots,\v_K$ to an orthonormal basis of $\V$ in an arbitrary way, and the scalar products \eqref{eq_scalar_products_table} will hold. This finishes the proof of Theorem \ref{Theorem_canonical_bases}.

 \bigskip

 Let us now explain the validity of the second algorithm outlined in Section \ref{Subsection_ev}.

 \begin{proof}[Proof of Proposition \ref{Proposition_CCA_as_eigenvectors}] The operator $P_\U P_\V$ is a product of rank $K$ and rank $M$ operators, hence, its rank is at most $\min(K,M)=K$. Therefore, it has at most $K$ eigenvectors with non-zero eigenvalues. Let us show that these eigenvectors are $\u_1,\dots,\u_K$ and corresponding eigenvalues are $c_i^2$. Indeed, using the table \eqref{eq_scalar_products_table}, we have for any $1\le i\le K$:
 $$
  P_\V \u_i = c_i \v_i, \qquad P_\U P_\V \u_i = P_\U (c_i \v_i) = c_i^2 \u_i.
 $$
 Similarly, $\v_i$ are eigenvectors of $P_\V P_\U$.
 \end{proof}

\begin{corollary} \label{Corollary_invariants}
 Canonical correlations $c_i$ are invariants of subspaces $\U\subset \W$ and $\V\subset \W$ under orthogonal transformations of the ambient space $\W$
\end{corollary}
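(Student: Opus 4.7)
The plan is to reduce the statement to Proposition \ref{Proposition_CCA_as_eigenvectors} and exploit the fact that eigenvalues are invariant under conjugation. Let $O \colon \W \to \W$ be an orthogonal transformation, meaning $\langle O\mathbf{w}, O\mathbf{w}'\rangle = \langle \mathbf{w}, \mathbf{w}'\rangle$ for all $\mathbf{w}, \mathbf{w}' \in \W$, equivalently $O^\T O = I$. Write $\U' = O\U$ and $\V' = O\V$; these are the transformed subspaces, and I want to show that the canonical correlations between $\U', \V'$ coincide with those between $\U, \V$.

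First I would establish the conjugation formula for orthogonal projectors:
\begin{equation*}
 P_{O \mathbf{Q}} = O\, P_{\mathbf{Q}}\, O^\T \qquad \text{for any subspace } \mathbf{Q} \subset \W.
\end{equation*}
This is immediate from the unique characterization of $P_{O\mathbf{Q}}$ as the linear operator which is self-adjoint, idempotent, and equal to the identity on $O\mathbf{Q}$ while vanishing on $(O\mathbf{Q})^\perp$: the operator $O P_\mathbf{Q} O^\T$ is self-adjoint (since $P_\mathbf{Q}$ is and $O^\T = O^{-1}$), idempotent (since $O^\T O = I$), and fixes every vector of the form $O\mathbf{q}$ with $\mathbf{q} \in \mathbf{Q}$, while sending every vector orthogonal to $O\mathbf{Q}$ to zero (because $O^\T$ carries $(O\mathbf{Q})^\perp$ onto $\mathbf{Q}^\perp$).

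Applying this to both $\U$ and $\V$ gives
\begin{equation*}
 P_{\U'} P_{\V'} = (O P_\U O^\T)(O P_\V O^\T) = O\, P_\U P_\V\, O^\T = O\, P_\U P_\V\, O^{-1}.
\end{equation*}
Thus $P_{\U'} P_{\V'}$ and $P_\U P_\V$ are similar operators and therefore have identical spectra. By Proposition \ref{Proposition_CCA_as_eigenvectors}, the nonzero eigenvalues of these two products are precisely the squares of the canonical correlations between $(\U',\V')$ and $(\U,\V)$, respectively. Since all $c_i \ge 0$, taking square roots (and sorting in decreasing order) concludes the proof.

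I do not anticipate a serious obstacle here; the only point meriting care is the conjugation identity for projectors, and once that is in hand the rest is a routine invocation of the spectral characterization of canonical correlations provided by Proposition \ref{Proposition_CCA_as_eigenvectors}.
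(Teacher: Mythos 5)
Your proof is correct and takes essentially the same route as the paper: both establish the conjugation identity $P_{O\mathbf{Q}} = O P_{\mathbf{Q}} O^\T$, conclude that $P_{\U'}P_{\V'}$ and $P_\U P_\V$ are similar, and invoke Proposition \ref{Proposition_CCA_as_eigenvectors} to identify the squared canonical correlations with the (similarity-invariant) eigenvalues. The only difference is that you spell out the verification of the conjugation formula for projectors, whereas the paper treats it as immediate via the matrix expressions of Remark \ref{Remark_projectors_abuse}.
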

\begin{proof}
 Let us transform $\W$ by acting with an orthogonal matrix $O$ on all vectors. Then in the notations of Remark \ref{Remark_projectors_abuse}, $\U$ gets transformed into $\U O^\T$ and $\V$ gets transformed into $\V O^\T$. Thus, the projector $P_\U P_\V$ is transformed into $O P_{\U} O^\T O P_{\V} O^\T$. The latter matrix is the same as $O P_{\U} P_{\V} O^\T$ and its eigenvalues coincide with those of $P_\U P_\V$. Hence, by Proposition \ref{Proposition_CCA_as_eigenvectors}, the canonical correlations are preserved.
\end{proof}

\begin{proposition}\label{prop_CCA_matrix_form}
 Assume that all the matrices inverted in \eqref{eq_x16}, \eqref{eq_x17} are non-denerate. Then canonical correlations can be equivalently computed as eigenvalues of matrices \eqref{eq_x16} and \eqref{eq_x17} in the statistics and probability frameworks, respectively, and corresponding variables can be computed by the procedures outlined immediately after these formulas.
\end{proposition}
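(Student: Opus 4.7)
The strategy is to reduce the claim to Proposition \ref{Proposition_CCA_as_eigenvectors} via the elementary linear-algebraic fact that for any rectangular matrices $A$ and $B$ of compatible sizes, $AB$ and $BA$ share their nonzero eigenvalues with algebraic multiplicities, and if $BA\ba=\lambda\ba$ with $\lambda\neq 0$ then $A\ba$ is a nonzero eigenvector of $AB$ with the same eigenvalue.

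In the statistics framework, to handle the first matrix of \eqref{eq_x16}, factor
\[
 P_{\U}P_{\V} \;=\; \underbrace{\U^{\T}}_{A}\,\underbrace{(\U\U^{\T})^{-1}\U\V^{\T}(\V\V^{\T})^{-1}\V}_{B},
\]
so that $BA$ is exactly the first $K\times K$ matrix. Since the nonzero eigenvalues of $AB=P_{\U}P_{\V}$ are $c_1^2,\dots,c_K^2$ by Proposition \ref{Proposition_CCA_as_eigenvectors}, the $AB$--$BA$ principle identifies the nonzero spectrum of this first matrix with $\{c_i^2\}$. Moreover, if $\ba$ is an eigenvector with eigenvalue $c_i^2>0$, then $A\ba=\U^{\T}\ba$ is an eigenvector of $P_{\U}P_{\V}$ with the same eigenvalue, and hence by Proposition \ref{Proposition_CCA_as_eigenvectors} is a nonzero multiple of $\u_i$. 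The analogous factorization
\[
 P_{\V}P_{\U} \;=\; \underbrace{\V^{\T}(\V\V^{\T})^{-1}}_{A'}\,\underbrace{\V\U^{\T}(\U\U^{\T})^{-1}\U}_{B'}
\]
makes $B'A'$ equal to the second matrix of \eqref{eq_x16}, and the same argument yields that $A'\bb=\V^{\T}(\V\V^{\T})^{-1}\bb$ is a nonzero multiple of $\v_i$ whenever $\bb$ is a corresponding eigenvector of $B'A'$, with the trailing factor $(\V\V^{\T})^{-1}$ absorbed into the normalization freedom in the choice of eigenvector representative.

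For the probability framework, the only step that genuinely differs is the computation of the projectors. In the Hilbert space $\W$ of square-integrable random variables with inner product $\la\cdot,\cdot\ra=\E(\cdot\,\cdot)$, the orthogonality conditions defining $P_{\U}$ yield the explicit formula
\[
 P_{\U}\mathsf w \;=\; \u^{\T}(\E\u\u^{\T})^{-1}\E(\u\mathsf w), \qquad \mathsf w\in\W,
\]
and analogously for $P_{\V}$. Substituting these into $P_{\U}P_{\V}$ and $P_{\V}P_{\U}$ produces expressions of exactly the same algebraic shape as in \eqref{eq_x3}, with the Gram matrices $\U\U^{\T},\U\V^{\T},\V\V^{\T},\V\U^{\T}$ replaced by their expectation analogues $\E\u\u^{\T},\E\u\v^{\T},\E\v\v^{\T},\E\v\u^{\T}$. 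The statistics argument then transfers verbatim and establishes \eqref{eq_x17}.

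The main obstacle is careful bookkeeping of eigenvectors when the nonzero canonical correlations coincide: the map $\ba\mapsto\U^{\T}\ba$ must then be understood as an isomorphism between the nonzero eigenspaces of $BA$ and $AB$, within which specific canonical variables are identified only up to a change of basis in each eigenspace. The non-degeneracy hypotheses on $\U\U^{\T}$ and $\V\V^{\T}$ (and their probabilistic analogues) ensure that all inverses exist and that this restriction-to-eigenspace map is indeed bijective, so that the above procedures genuinely recover the full system of canonical variables and correlations.
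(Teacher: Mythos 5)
Your argument is correct and follows essentially the same route as the paper's own proof: both start from Proposition~\ref{Proposition_CCA_as_eigenvectors}, write out $P_{\U}P_{\V}$ (and its probabilistic analogue) in the factored form, and then transfer the nonzero spectrum and eigenvectors to the $K\times K$ and $M\times M$ matrices. The paper states the transfer as a direct ``$\u_i$ is an eigenvector iff $\u_i=\U^{\T}\ba_i$ and $\ba_i$ is an eigenvector'' observation read off from the product formula, while you package it via the $AB$--$BA$ principle; these are the same underlying computation.

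One small point of imprecision in your write-up: from your factorization $P_{\V}P_{\U}=A'B'$ with $A'=\V^{\T}(\V\V^{\T})^{-1}$, what you correctly obtain is that $A'\bb=\V^{\T}(\V\V^{\T})^{-1}\bb$ is a multiple of $\v_i$. The remark that ``the trailing factor $(\V\V^{\T})^{-1}$ [is] absorbed into the normalization freedom'' is not right as stated, since $(\V\V^{\T})^{-1}\bb$ is not a scalar multiple of $\bb$, so $\V^{\T}\bb$ and $\V^{\T}(\V\V^{\T})^{-1}\bb$ are generically not proportional. The clean way to align with the paper's stated recipe $(\U^{\T}\ba,\V^{\T}\bb,c)$ is to use for the $\V$--side the symmetric matrix $(\V\V^{\T})^{-1}\V\U^{\T}(\U\U^{\T})^{-1}\U\V^{\T}$ (which is the one that actually appears in Procedure~\ref{Procedure_CCA}); the matrix $\V\U^{\T}(\U\U^{\T})^{-1}\U\V^{\T}(\V\V^{\T})^{-1}$ shown in \eqref{eq_x16} is its conjugate by $\V\V^{\T}$, so it has the same eigenvalues but its eigenvectors must be pre-multiplied by $(\V\V^{\T})^{-1}$ before applying $\V^{\T}$, exactly as your factorization shows. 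This is a wrinkle in the source, not a flaw in your mathematics, but the ``absorbed into normalization'' phrasing should be dropped.
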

\begin{proof}
 For the statistics setting we use Proposition \ref{Proposition_CCA_as_eigenvectors} and (as in Remark \ref{Remark_projectors_abuse}) rewrite the product of projectors as
 $$
  P_{\U} P_{\V}= \U^\T (\U \U^\T)^{-1} \U \V^\T (\V \V^\T)^{-1} \V.
 $$
 Directly from the above formula we see that $\u_i$ is an eigenvector of the above matrix with eigenvalue $c_i^2>0$, if and only if $\u_i = \U^{\T} \ba_i$ and $\ba_i$ is an eigenvector of $(\U \U^\T)^{-1} \U \V^\T (\V \V^\T)^{-1} \V \U^\T$ with the same eigenvalue $c_i^2$. $\ba_i$ can be also expressed as $(c_i)^{-2}(\U \U^\T)^{-1} \U \V^\T (\V \V^\T)^{-1} \V\u_i$.


 For the probability setting, let us write down the projectors used in Proposition \ref{Proposition_CCA_as_eigenvectors}. The projector $P_\U$ on the space spanned by $\xi_1,\dots,\xi_K$ acts on a random variable $\zeta$ via
 $$
   P_\U \zeta = \u^\T \cdot [\E \u \u^\T]^{-1}  \cdot \E \u \zeta, \qquad \u=(\xi_1,\dots,\xi_K)^\T.
 $$
 In this formula random column-vector $\u^T$ of size $1\times K$ is being multiplied by a deterministic vector of weights, obtained by multiplying the $K\times K$ inverse matrix by $K\times 1$ vector $\E \u \zeta$.
 In order to verify the formula, one directly checks that each $\xi_l$ is mapped to itself and each random variable $\zeta$ with $\E \xi_l \zeta=0$ for all $1\le \l \le K$ is mapped to zero.
 Also the projector $P_\V$ on the space spanned by $\eta_1,\dots, \eta_M$ acts on a random variable $\zeta$ via
 $$
   P_\V \zeta = \v^\T \cdot [\E \v \v^\T]^{-1}\cdot \E \v \zeta, \qquad \v=(\eta_1,\eta_2,\dots,\eta_M)^\T.
 $$
 Hence,
 $$
   P_\U P_\V \zeta =   \u^\T \cdot [\E \u \u^\T]^{-1}  \cdot \E \u \v^\T \cdot [\E \v \v^\T]^{-1}\cdot \E \v \zeta.
 $$
 Directly from the above formula, we see that $\u_i$ is an eigenvector of the above matrix with eigenvalue $c_i^2>0$ if and only if $\u_i= \u^\T \ba$ and $\ba=(\alpha_1,\dots,\alpha_K)^\T$ is an eigenvector of the matrix $(\E \u \u^\T)^{-1} \E \u \v^\T (\E \v\v^\T)^{-1} \E \v \u^{\T} $ with the same eigenvalue $c_i^2$.
\end{proof}

 We end this section with a discussion of uniqueness of the bases satisfying \eqref{eq_scalar_products_table} in Theorem~\ref{Theorem_canonical_bases}. If all $c_i$ are distinct, then Proposition \ref{Proposition_CCA_as_eigenvectors} implies that the only freedom in choosing $\u_1,\dots,\u_K$ and $\v_1,\dots,\v_K$ is in simultaneous multiplication of $\u_i$ and $\v_i$ by $-1$. For $\v_{K+1},\dots,\v_M$ we always have more options: any $M-K$ orthonormal vectors complementing $\v_1,\dots,\v_K$ to a basis of $\V$ would work.

 If some $c_i$ coincide, then the situation is more delicate: eigenvectors in Proposition \ref{Proposition_CCA_as_eigenvectors} are no longer unique, only the linear spaces spanned by eigenvectors with the same eigenvalue are uniquely determined. In particular, imagine that $c_1$ has multiplicity two, so that for four unit vectors $\u_1,\u_2\in \U$ ,$\v_1,\v_2\in \V$ we have:
 $$
  \la \u_1,\v_1\ra = \la \u_2,\v_2\ra = c_1, \qquad  \la \u_1,\u_2\ra = \la \u_1,\v_2\ra = \la \u_2,\v_1\ra = \la \v_1,\v_2\ra = 0.
 $$
 Then for any $-1\le \alpha\le 1$, we also have
 $$
  \la \alpha \u_1 + \sqrt{1-\alpha^2} \u_2, \alpha \v_1 + \sqrt{1-\alpha^2} \v_2 \ra 
  = \alpha^2 c_1 +(1-\alpha^2)c_1=c_1.
 $$
Hence the vectors $(\alpha \u_1 + \sqrt{1-\alpha^2} \u_2, \alpha \v_1 + \sqrt{1-\alpha^2} \v_2)$ also deliver the maximum of the function $f(\u,\v)$ and (by the algorithm in subsection \ref{subsection_maximiz_problem}) can be included as the first vectors in a system satisfying \eqref{eq_scalar_products_table}.

Two extreme examples are when the subspaces $\U$ and $\V$ are either orthogonal or coincide. In both situations all $c_i$ are equal.
\begin{itemize}
 \item If subspaces $\U$ and $\V$ are orthogonal, then all $c_i$ are equal to $0$, $\u_1,\dots,\u_K$ is an arbitrary orthonormal basis of $\U$ and $\v_1,\dots,\v_M$ is an arbitrary orthogonal basis of $\V$.
 \item If $K=M$ and $\U=\V$, then all $c_i$ are equal to $1$, $\u_1,\dots,\u_K$ is an arbitrary orthonormal basis of $\U$ and $\v_i=\u_i$, $1\le i \le K$.
\end{itemize}

\subsection{Questions and applications}
There are many open questions that continue to drive research and applications of CCA. 
On the theoretical side, CCA-related questions in probability and statistics often center on the distributional properties -- both exact and asymptotic -- of canonical correlations and variables. In this review we focus on some of these topics. Specifically, we highlight that:



\begin{enumerate}
\item The histograms of canonical correlations frequently exhibit asymptotic non-randomness, converging to explicit deterministic \emph{limit shapes}.
\item The asymptotic distributions of individual canonical correlations, such as the largest one, are \emph{universal}, meaning that these limiting laws depend only on the high-level specification of the models, but not on the particular details.
\end{enumerate}

Returning to empirical applications, these two features allow us to test mathematical theorems about CCA on real-world data sets, yielding several insights. First, we often observe a remarkable alignment between theoretical predictions and empirical results in various data sets, highlighting the applicability of our modeling approach to real-world data. Second, we can leverage these theorems to derive structural insights from the data sets.

As a preview, we present three plots generated from financial data sets. Figure \ref{Fig_SP_Wachter_data} shows the histogram of sample canonical correlations for weekly data from S\&P 100 stocks over a ten-year period, correlating the logarithms of stock prices with their time increments. Figure \ref{Fig_cryptovar1} provides a similar plot but for daily data on 25 cryptocurrencies over two years. Figure \ref{Fig_two_stocks_hist} presents squared sample canonical correlations for weekly returns over ten years for two groups of 80 stocks: largest ``cyclical'' versus largest ``non-cyclical (defensive)'' stocks. In each case the empirical histograms are accompanied by (closely matching) theoretical curves, that represent the Wachter distribution. The relevance of the Wachter distribution to CCA and these data sets in particular is discussed in Chapters 3 and 5. The canonical correlations appearing to the right of the support of Wachter distribution (the flat region of the theoretical curves) can be interpreted as signals in the data, which we elaborate on in Chapter 4. For further details on these figures and the corresponding theoretical results we refer to \cite{BG1}, \cite{BG2}, and \cite{BG3}, respectively.

\begin{figure}[p]
\includegraphics[width=0.48\linewidth]{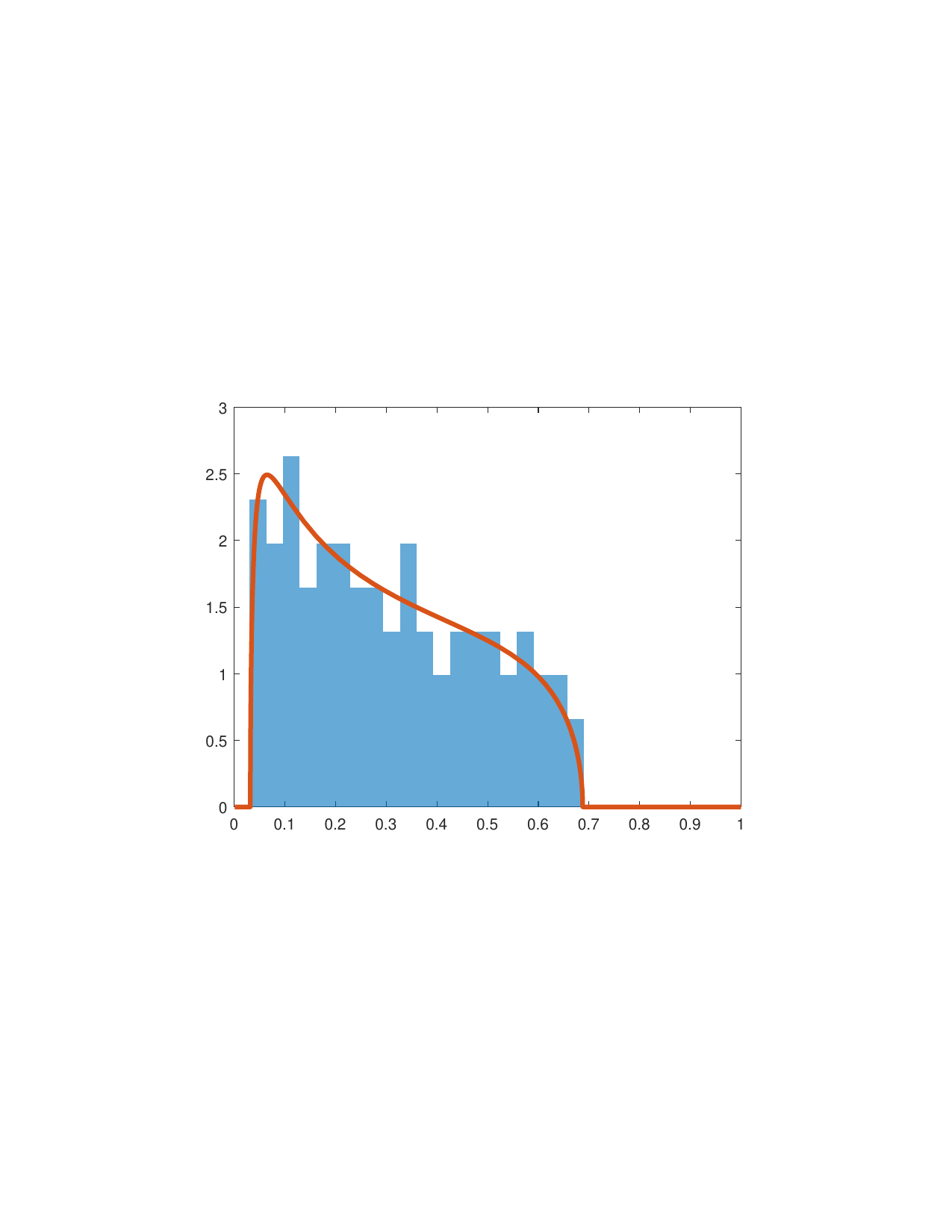}
\caption{S$\&$P 100 stocks log-prices vs time increments and Wachter.}
\label{Fig_SP_Wachter_data}
\end{figure}
\begin{figure}[p]
  \includegraphics[width=0.52\linewidth]{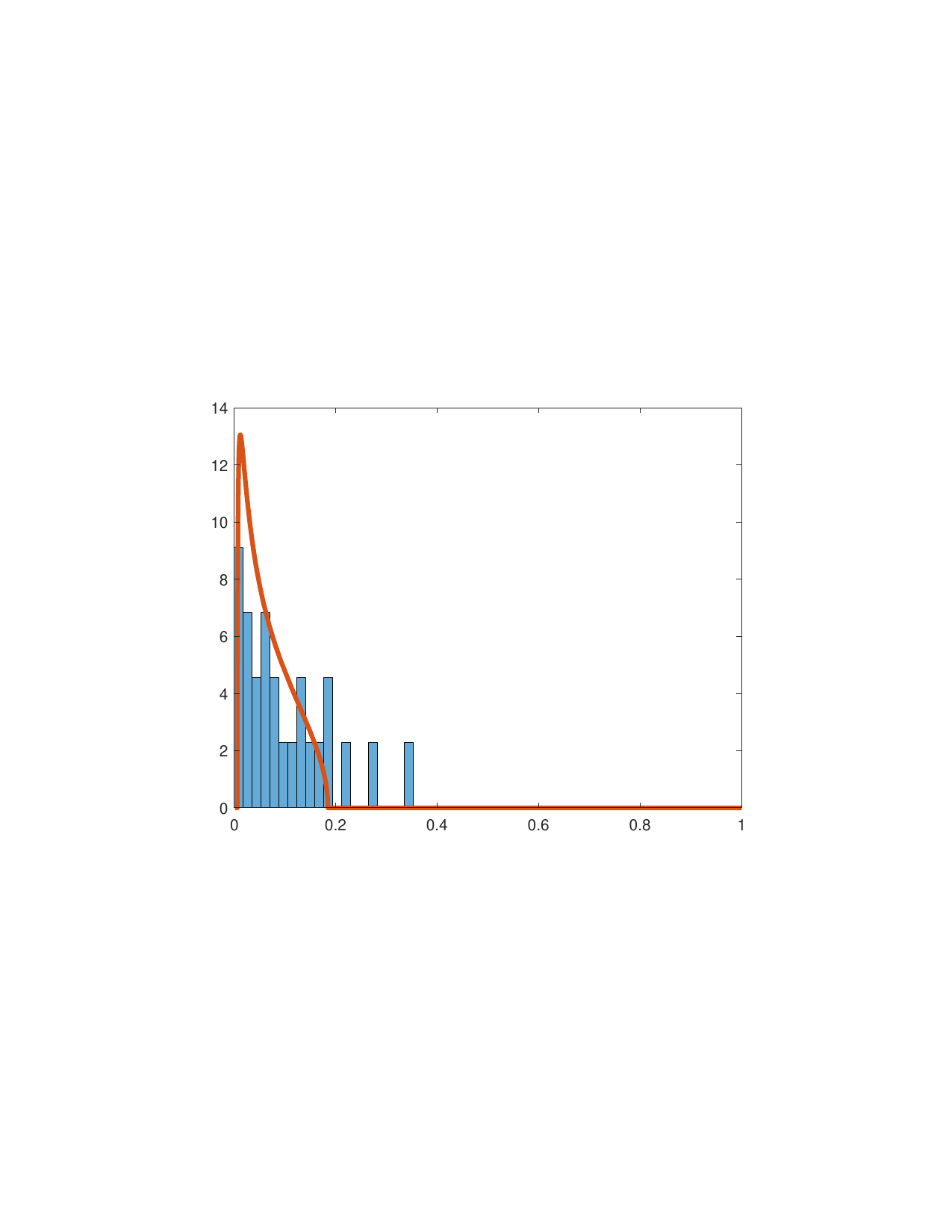}
 \caption{Cryptocurrency log-prices vs time increments and Wachter.}
   \label{Fig_cryptovar1}
\end{figure}
\begin{figure}[p]
  \centering
  \includegraphics[width=0.52\linewidth]{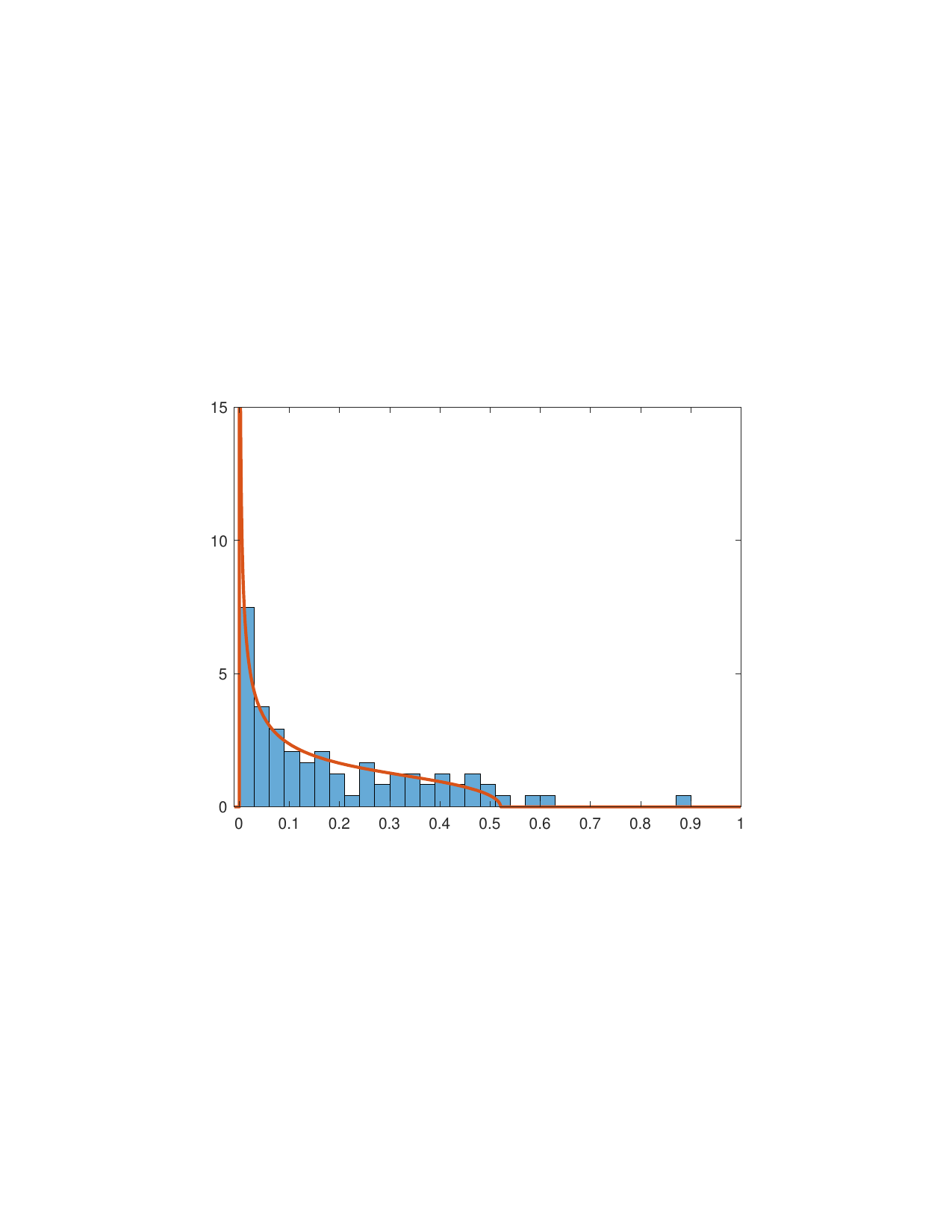}
  \caption{Cyclical vs non-cyclical stock returns and Wachter.}
  \label{Fig_two_stocks_hist}
\end{figure}

 \newpage

\section{IID setting}

\subsection{CCA and Maximum Likelihood Estimation} \label{Section_CCA_as estimation}
This chapter develops a connection between statistical and probability frameworks. We start from the population model with two random vectors: $\u=(u^1,\dots,u^K)^{\T}$ and $\v=(v^1,\dots,v^M)^\T$, $M\ge K$. We do not have direct access to these vectors and we do not know anything about them, except that they have mean $0$. Instead, we are observing $S$ i.i.d.\ samples of these vectors. Formally, this means that we have two matrices: $\U$ of size $K\times S$ and $\V$ of size $M\times S$. The combined $(K+M)\times S$ matrix $\W=\begin{pmatrix}\U\\ \V \end{pmatrix}$ has i.i.d.\ columns, with each column following the distribution of $\w=\begin{pmatrix}\u\\ \v \end{pmatrix}$.

\begin{task} How can the correlation structure between $\u$ and $\v$ be inferred from $\U$ and $\V$? \end{task}

Let us clarify the meaning of the words ``correlation structure''. There are three covariance matrices in play: $K\times K$ matrix of auto-covariances of $\u$,  $\Lambda_{uu}[i,j]=\E u^i u^j$, and $M\times M$ matrix of auto-covariances of $\v$, $\Lambda_{vv}[i,j]=\E v^i v^j$, and $K\times M$ matrix of cross-covariances $\Lambda_{uv}[i,j]=\E u^i v^j$; we also denote $\Lambda_{vu}=\Lambda_{uv}^\T$. Theorem \ref{Theorem_canonical_bases} implies that there exists $K\times K$ matrix $A$ and $M\times M$ matrix $B$, such that
\begin{equation} \label{eq_x1}
 \begin{pmatrix} A & 0_{K\times M}\\ 0_{M\times K} & B \end{pmatrix} \begin{pmatrix} \Lambda_{uu} & \Lambda_{uv}\\ \Lambda_{vu} & \Lambda_{vv}\end{pmatrix} \begin{pmatrix} A^\T & 0_{K\times M}\\ 0_{M\times K} & B^\T \end{pmatrix}= \begin{pmatrix} I_{K} & \mathrm{diag}(c_1,\dots,c_K)\\ \mathrm{diag}(c_1,\dots,c_K) & I_{M} \end{pmatrix},
\end{equation}
where $I_K$ and $I_M$ are identity matrices of $K\times K$ and $M\times M$ dimensions, respectively, and $\mathrm{diag}(c_1,\dots,c_K)$ is a rectangular matrix with $c_1,c_2,\dots,c_K$ on the main diagonal and $0$ everywhere else. Here $1\ge c_1\ge c_2\ge\dots\ge c_K\ge 0$ are canonical correlations between the spaces spanned by coordinates of $\u$ and coordinates of $\v$, $A \u$ is the random vector of $K$ canonical variables $(\u_1,\dots,\u_K)^{\T}$ and $B\v$ is the vector of $M$ canonical variables $(\v_1,\dots,\v_M)^{\T}$.

From \eqref{eq_x1} we conclude that knowing the covariance matrices $\Lambda_{uu}$, $\Lambda_{vv}$, $\Lambda_{uv}$ is equivalent to knowing the canonical correlations and variables between the coordinates of $\u$ and $\v$. The advantage of \eqref{eq_x1} is its clear separation of the internal correlation structures of $\u$ and $\v$, captured by $A$ and $B$, from the strength of the correlations between $\u$ and $\v$, represented $c_1,\dots,c_K$. Hence, our task can be reformulated as:

\begin{task} \label{Task_CCA_iid} How can we estimate the canonical correlations and variables between $\u$ and $\v$ from observing $\U$ and $\V$? \end{task}

There are various statistical methods for estimating parameters, with one of the most classical being the \emph{Maximum Likelihood Estimator} (MLE). The MLE procedure begins by assuming that the random variables of interest are sampled from a specific, predefined family of probability distributions (population model), that depend on our parameter of interest. Next, we compute the likelihood function -- a function of both the parameters and the observed samples -- that quantifies the probability, or probability density, of observing the given samples. The parameters that maximize this function, treating the observed samples as fixed, provide the desired estimator. The estimator is then treated as a function of the observed sample.

\begin{theorem} \label{Theorem_Gaussian_MLE} The Gaussian MLE for $\Lambda_{uu}$, $\Lambda_{vv}$, $\Lambda_{uv}$ are the sample covariances:
\begin{equation}
 \label{eq_estimation_with_sample}
 \hat \Lambda_{uu}=\frac{1}{S} \U \U^\T,\qquad  \hat \Lambda_{vv}=\frac{1}{S} \V \V^\T, \qquad  \hat \Lambda_{uv}=\frac{1}{S} \U \V^\T.
\end{equation}
\end{theorem}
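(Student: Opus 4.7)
The plan is to set up the joint Gaussian log-likelihood for the stacked vector $\w=\begin{pmatrix}\u\\ \v\end{pmatrix}$ and reduce the problem to the standard maximum-likelihood calculation for a single zero-mean multivariate Gaussian covariance matrix. Writing $\Lambda=\begin{pmatrix}\Lambda_{uu}&\Lambda_{uv}\\ \Lambda_{vu}&\Lambda_{vv}\end{pmatrix}$ and denoting the $s$-th column of $\W$ by $\w_s$, the $S$ i.i.d.\ samples yield the log-likelihood
\[
\ell(\Lambda)=-\tfrac{S(K+M)}{2}\log(2\pi)-\tfrac{S}{2}\log\det\Lambda-\tfrac{1}{2}\sum_{s=1}^S \w_s^\T\Lambda^{-1}\w_s.
\]
Using the trace identity $\w_s^\T\Lambda^{-1}\w_s=\operatorname{tr}(\Lambda^{-1}\w_s\w_s^\T)$ and summing over $s$, the last term becomes $\tfrac{S}{2}\operatorname{tr}(\Lambda^{-1}\hat\Lambda)$, where $\hat\Lambda=\tfrac1S\W\W^\T$ is the full sample covariance. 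Thus maximizing $\ell$ is equivalent to maximizing
\[
g(\Lambda)=-\log\det\Lambda-\operatorname{tr}(\Lambda^{-1}\hat\Lambda)
\]
over positive-definite $(K+M)\times(K+M)$ symmetric matrices $\Lambda$.

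Next I would show that the unique maximizer is $\Lambda=\hat\Lambda$. The cleanest route is the substitution $M=\hat\Lambda^{1/2}\Lambda^{-1}\hat\Lambda^{1/2}$ (assuming $\hat\Lambda$ is positive definite), which transforms $g$ into $\log\det M-\log\det\hat\Lambda+\operatorname{tr}(M)$ up to constants; diagonalizing $M$ with eigenvalues $\mu_1,\dots,\mu_{K+M}>0$ reduces the objective to $\sum_i(\log\mu_i-\mu_i)$ plus a constant, and elementary calculus shows each summand is uniquely maximized at $\mu_i=1$. Hence $M=I$, i.e.\ $\Lambda=\hat\Lambda$. Reading off the three blocks of $\hat\Lambda$ gives precisely the expressions in \eqref{eq_estimation_with_sample}.

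Finally I would remark on the prerequisites and degeneracies. The argument requires $\hat\Lambda$ to be positive definite, which holds almost surely provided $S\ge K+M$ and the population distribution is nondegenerate; otherwise the MLE is not well-defined as a unique maximizer on the open positive-definite cone. The boundary behavior of $g$ is not an obstacle since $g(\Lambda)\to-\infty$ whenever $\Lambda$ approaches a singular or unbounded matrix, so the unique interior critical point is automatically the global maximum.

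The main (and only) obstacle is the eigenvalue optimization step, which is standard but is the one place where the proof is not merely formal substitution; once it is in place, the block form of the MLE statement follows immediately from matching entries of $\Lambda$ and $\hat\Lambda$.
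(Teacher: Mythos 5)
Your proof is correct and follows essentially the same route the paper takes: it reduces to maximizing $-\gamma\ln\det(G)-\operatorname{tr}(G^{-1}D)$ over positive-definite $G$ and resolves this via a change of variables that brings $D$ to the identity followed by eigenvalue-by-eigenvalue calculus, which is exactly what the paper defers to Problem~2 of Problem Set~2. One minor slip: after the substitution $M=\hat\Lambda^{1/2}\Lambda^{-1}\hat\Lambda^{1/2}$ the objective is $\log\det M-\log\det\hat\Lambda-\operatorname{tr}(M)$, not $+\operatorname{tr}(M)$; your subsequent eigenvalue expression $\sum_i(\log\mu_i-\mu_i)$ is the correct one, so the conclusion stands.
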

\begin{proof}
 The adjective ``Gaussian'' in front of MLE means that we assume $\u$ and $\v$ to be Gaussian vectors and compute the likelihood function based on the Gaussian density. Denoting  $\Lambda=\begin{pmatrix} \Lambda_{uu} & \Lambda_{uv}\\ \Lambda_{vu} & \Lambda_{vv}\end{pmatrix}$, the likelihood function is expressed as
 \begin{equation}
 \label{eq_Gaussian_likelihood}
  L(\U,\V;\Lambda)=  \left[2^{K+M}\pi^{K+M} \det(\Lambda)\right]^{-S/2} \exp\left(-\frac{1}{2}\sum_{j=1}^S (\W^{j})^{\T} \Lambda^{-1} \W^{j}\right),
 \end{equation}
 where $\W^{j}$ is the $j$th column of the matrix $\W=\begin{pmatrix}\U\\ \V \end{pmatrix}$. The estimator is obtained by maximizing \eqref{eq_Gaussian_likelihood} over all positive-definite symmetric matrices $\Lambda$. Ideas of a derivation showing that this maximization problem results in formulas \eqref{eq_estimation_with_sample} are provided in Problem Set 2 of Chapter \ref{Section_exercises}.
\end{proof}

Theorem \ref{Theorem_Gaussian_MLE} says that sample covariances are MLE for their population counterparts. Thus, in view of \eqref{eq_x1}, the same should hold for canonical correlations and variables and
Task \ref{Task_CCA_iid} can be approached by working with the sample CCA:\footnote{For a different interpretation of canonical correlations as MLE estimators (for another probability model) see \citet{bach2005probabilistic}.}
\begin{procedure} \label{Procedure_CCA}
 We estimate the canonical correlations and variables between $\u$ and $\v$ by their sample versions, i.e., by sample canonical correlations and variables between the linear spaces spanned by the rows of matrices $\U$ and $\V$. By Proposition \ref{prop_CCA_matrix_form} we estimate the $i$--th canonical variables and squared correlations $(\sum_{j=1}^K A_{ij} u^j, \sum_{j=1}^M B_{ij} v^j, c_i^2)$ by the triplet
 \begin{align}
 \label{eq_CCA_estimator} \Biggl(\, &\text{eigenvector } (\hat A_{i1}, \hat A_{i2},\dots,\hat A_{iK})\text{ of }\,  (\U \U^\T)^{-1} \U \V^\T (\V \V^\T)^{-1} \V \U^\T, \\
  \notag &\text{eigenvector } (\hat B_{i1}, \hat B_{i2},\dots,\hat B_{iM})\text{ of }\,   (\V \V^\T)^{-1} \V \U^\T (\U \U^\T)^{-1} \U \V^\T,\\
  \notag &\text{corresponding to the same }i\text{-th largest eigenvalue }\, \hat c_i^2 \qquad\qquad \quad \Biggr).
 \end{align}
\end{procedure}
Several remarks are in order. First, the canonical variables for $\u$ and $\v$ are random variables (mathematically, these are functions on an abstract probability space), so we cannot estimate them directly, instead we are estimating the coefficients $A_{ij}$ and $B_{ij}$ of their decomposition in coordinates $u^1,\dots,u^K$ and $v^1,\dots,v^M$. On the other hand, canonical correlations are deterministic numbers, so they are being estimated directly in the third line. Second, the eigenvectors in the first two lines of \eqref{eq_CCA_estimator} are defined only up to multiplication by an arbitrary real number. If we want to have some normalization, we need to additionally impose it. Note that in Theorem \ref{Theorem_canonical_bases} there was also freedom in multiplying $\u_i$, $\v_i$ by $-1$, so one natural approach is to avoid normalizations at all, thinking about $\u_i$ and $\v_i$ as directions (or one-dimensional lines), rather than vectors.

\smallskip

The preceding statistical reasoning suggests to use Procedure \ref{Procedure_CCA}, however, we do not yet have any guarantees on the quality of estimation. Is there some sense, in which the estimates $\hat A, \hat B$, and $\hat c_1,\dots,\hat c_K$ are close to the true values of $A$, $B$ and $c_1,\dots,c_K$? We start addressing this question below and continue in the next two chapters. We will see that the answer crucially depends on how large $K$, $M$, and $S$ are.

\subsection{Two independent data sets} Our next task is to understand the probability distribution of \eqref{eq_CCA_estimator}, given some assumptions on vectors $\u$ and $\v$. The simplest possible setting is to assume that $\u$ and $\v$ are \emph{independent} mean $0$ Gaussian vectors with arbitrary non-degenerate covariance matrices $\Lambda_{uu}$ and $\Lambda_{vv}$. That is, $\Lambda_{uv}$ is a zero matrix.

From the applied point of view, this setting can be used to construct tests for independence of two data sets. From the theoretical point of view, this setting reveals fruitful connections to classical ensembles of the random matrix theory and serves as a foundation for further studies of more complicated cases.

\begin{theorem} \label{Theorem_not_depend_on_cov} Suppose that $\U$ and $\V$ are $K\times S$ and $M\times S$ independent random matrices, whose columns are i.i.d.\ samples of independent mean $0$ Gaussian vectors $\u$ and $\v$. The distribution of the matrices $P_\U P_\V$ and $P_\U P_\V$ does not depend on the choice of covariance matrices $\Lambda_{uu}$ and $\Lambda_{vv}$. In particular, the same is true for their eigenvalues (=squared sample canonical correlations) and eigenvectors.\footnote{Recall that $P_\U P_\V= \U^\T (\U \U^\T)^{-1} \U \V^\T (\V \V^\T)^{-1} \V $ is slightly different from the first line of \eqref{eq_CCA_estimator}, but has the same non-zero eigenvalues.}
\end{theorem}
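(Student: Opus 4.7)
The plan is to reduce the theorem to the special case of identity covariances by a ``whitening'' change of variables that leaves the projectors $P_\U$ and $P_\V$ unchanged. The key idea is that a projector onto a row space depends on the matrix only through that row space, and the row space is invariant under left multiplication by any invertible matrix.

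First, I would exploit the Gaussian structure to write $\U = \Lambda_{uu}^{1/2}\tilde\U$ and $\V = \Lambda_{vv}^{1/2}\tilde\V$, where $\Lambda_{uu}^{1/2}$ and $\Lambda_{vv}^{1/2}$ are the symmetric positive-definite square roots (well defined and invertible since the covariances are assumed non-degenerate), and $\tilde\U$, $\tilde\V$ are independent $K\times S$ and $M\times S$ matrices with i.i.d.\ $N(0,1)$ entries. The joint distribution of $(\tilde\U,\tilde\V)$ is manifestly free of $\Lambda_{uu}$ and $\Lambda_{vv}$, so it suffices to show that $P_\U P_\V$ and $P_\V P_\U$ are almost-sure functions of $(\tilde\U,\tilde\V)$ alone.

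Second, I would argue that the rows of $\Lambda_{uu}^{1/2}\tilde\U$ are linear combinations of the rows of $\tilde\U$ with the invertible coefficient matrix $\Lambda_{uu}^{1/2}$, so they span the same subspace of $\mathbb{R}^S$ as the rows of $\tilde\U$; hence $P_\U=P_{\tilde\U}$, and similarly $P_\V=P_{\tilde\V}$. One can also verify this directly from \eqref{eq_x3}:
\begin{equation*}
P_\U = \tilde\U^\T \Lambda_{uu}^{1/2} \bigl(\Lambda_{uu}^{1/2}\tilde\U\tilde\U^\T \Lambda_{uu}^{1/2}\bigr)^{-1}\Lambda_{uu}^{1/2}\tilde\U = \tilde\U^\T (\tilde\U\tilde\U^\T)^{-1}\tilde\U = P_{\tilde\U},
\end{equation*}
using symmetry of $\Lambda_{uu}^{1/2}$ and the factorization $(ABA)^{-1}=A^{-1}B^{-1}A^{-1}$ for invertible $A,B$. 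Combining these identities gives $P_\U P_\V = P_{\tilde\U} P_{\tilde\V}$ and $P_\V P_\U = P_{\tilde\V} P_{\tilde\U}$ almost surely, whence their distributions, and the induced distributions on their eigenvalues and eigenvectors, coincide with those produced in the identity-covariance setting.

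There is no substantive obstacle to this argument; the one mild technicality is verifying that $\tilde\U\tilde\U^\T$ and $\tilde\V\tilde\V^\T$ are almost surely invertible so that \eqref{eq_x3} is meaningful. This holds because a $K\times S$ standard Gaussian matrix with $S\ge K$ has full row rank almost surely (the rank-deficient locus is a proper algebraic subvariety of zero Lebesgue measure), and by the coupling $\U=\Lambda_{uu}^{1/2}\tilde\U$, $\V=\Lambda_{vv}^{1/2}\tilde\V$ the same holds for $\U$ and $\V$.
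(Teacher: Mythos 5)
Your proof is correct and takes essentially the same approach as the paper's: both observe that $P_\U$ is invariant under left multiplication of $\U$ by an invertible matrix (verified via \eqref{eq_x3}), and then whiten the Gaussians to reduce to identity covariances. The only cosmetic difference is direction (you write $\U=\Lambda_{uu}^{1/2}\tilde\U$ while the paper applies $F=\Lambda_{uu}^{-1/2}$ to $\U$), and you add a welcome remark about almost-sure invertibility of $\tilde\U\tilde\U^\T$ that the paper leaves implicit.
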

\begin{proof}
 Take a $K\times K$ non-degenerate matrix $F$ and a $M\times M$ non-degenerate matrix $G$. We claim that $P_{F \U} P_{G\V}=P_\U P_\V$. One way to see this is by noticing that as linear spaces $F\U=\U$ and $G\V=\V$. Alternative, one can use \eqref{eq_x3}:
 \begin{multline*}
  P_{F \U} P_{G\V}=\U^\T F^{\T} (F \U \U^\T F^{\T})^{-1} F \U \V^\T G^\T (G \V \V^\T G^\T )^{-1} G \V \\=\U^\T (\U \U^\T)^{-1} \U \V^\T (\V \V^\T)^{-1} \V = P_\U P_\V.
 \end{multline*}
 Under this transformation the correlation structure of the columns of $\U$ and $\V$ changes as
 $$
  \Lambda_{uu}\mapsto F \Lambda_{uu} F^\T,\qquad \Lambda_{vv}\mapsto G \Lambda_{vv} G^{\T}.
 $$
 Choosing $F=\Lambda_{uu}^{-1/2}$, $G=\Lambda_{vv}^{-1/2}$, we, therefore, can reduce arbitrary $\Lambda_{uu}$, $\Lambda_{vv}$ to identical matrices.
\end{proof}

\begin{corollary}
There is no loss of generality in choosing $\Lambda_{uu}=I_K$ and $\Lambda_{vv}=I_{M}$, which is what we will do in this and the next Chapter. Therefore, from now on both $\U$ and $\V$ are matrices of i.i.d.\ Gaussians $\mathcal N(0,1)$.
\end{corollary}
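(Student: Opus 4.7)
The plan is to recognize that this corollary is essentially an immediate rephrasing of Theorem \ref{Theorem_not_depend_on_cov}, combined with an observation about the Gaussian distribution under identity covariance. Nothing new needs to be proven; the task is rather to package the invariance statement into the "without loss of generality" form used going forward.

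First I would note that all the CCA objects we care about in the sequel, namely the sample canonical correlations and the canonical variables (viewed as one-dimensional directions, per the convention set at the end of Section \ref{Section_CCA_def}), are by Proposition \ref{Proposition_CCA_as_eigenvectors} determined by $P_\U P_\V$ and $P_\V P_\U$: they are, respectively, the nonzero eigenvalues and the corresponding eigen-directions. Consequently the joint distribution of these statistics is a measurable function of the joint distribution of $(P_\U P_\V, P_\V P_\U)$ alone.

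Next I would apply Theorem \ref{Theorem_not_depend_on_cov} with the particular choice $F=\Lambda_{uu}^{-1/2}$, $G=\Lambda_{vv}^{-1/2}$ already highlighted in its proof. This substitution maps $(\Lambda_{uu},\Lambda_{vv})$ to $(I_K,I_M)$ while leaving the joint law of $(P_\U P_\V, P_\V P_\U)$ unchanged. When the column covariance of $\U$ is $I_K$, the columns of $\U$ are i.i.d.\ $\mathcal N(0,I_K)$, which is equivalent to saying every entry of $\U$ is an independent $\mathcal N(0,1)$; the analogous statement holds for $\V$. Independence of $\U$ and $\V$ is preserved because $F$ and $G$ act on disjoint blocks. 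This yields exactly the reduction stated in the corollary.

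I do not expect a real obstacle; the only point worth stressing in the write-up is the precise meaning of "no loss of generality". The equality involved is distributional and is only claimed for the CCA-relevant functionals of $(\U,\V)$, not for the matrices themselves. Once that framing is in place, the proof collapses to a one-line invocation of Theorem \ref{Theorem_not_depend_on_cov}.
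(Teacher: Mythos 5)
Your argument is correct and is exactly the reasoning the paper leaves implicit: the corollary is stated without proof as an immediate consequence of Theorem \ref{Theorem_not_depend_on_cov}, and your write-up simply makes the reduction $F=\Lambda_{uu}^{-1/2}$, $G=\Lambda_{vv}^{-1/2}$ and the identification of identity covariance with i.i.d.\ $\mathcal N(0,1)$ entries explicit. The one clarification you add, that ``no loss of generality'' is a distributional statement about the CCA functionals rather than about $\U$ and $\V$ themselves, is a useful precision but not a departure from the paper's route.
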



\subsection{Jacobi ensemble}

Miraculously, in the i.i.d.\ Gaussian setting for $\U$ and $\V$, the distribution of the sample canonical correlations is explicit, as was first noticed by \cite{hsu1939distribution}.

\begin{definition} \label{Definition_Jacobi_matrix}
Given two parameters $p,q>0$, the real Jacobi matrix ensemble $\J(N;p,q)$ is a distribution on $N\times N$ real symmetric matrices $\mathcal M$ of density proportional to
 \begin{equation}
  \label{eq_Jacobi_def}
  \det(\mathcal M)^{p-1} \det(I_N-\mathcal M)^{q-1}\, \dd\mathcal M,\qquad 0<\mathcal M< I_N,
 \end{equation}
 with respect to the Lebesgue measure, where $I_N$ is the $N\times N$ identity matrix, and $0< \mathcal M < I_N$ means that both $\mathcal M$ and $I_N-\mathcal M$ are positive definite.
\end{definition}
For $N=1$, the matrix reduces to its single matrix element $\mathcal M_{11}\in (0,1)$ and Definition \ref{Definition_Jacobi_matrix} becomes the density of Beta distribution:
$$
 \frac{1}{\mathfrak B(p,q)} x^{p-1} (1-x)^{q-1},\quad 0<x<1,\qquad\qquad \mathfrak B(p,q)=\int_0^1 x^{p-1}(1-x)^{q-1}\dd x=\frac{\Gamma(p)\Gamma(q)}{\Gamma(p+q)}.
$$
\begin{definition} \label{Definition_Jacobi_ev}
Given $p,q>0$, the real Jacobi eigenvalue ensemble $\mathcal J(N;p,q)$ is a distribution on $N$--tuples of real numbers $1>x_1>x_2>\dots>x_N>0$ of density
 \begin{equation}
  \label{eq_Jacobi_ev_def}
  \left[ N! \prod_{k=0}^{N-1} \frac{\Gamma(p+q+\frac{N+k-1}{2})\Gamma(\frac{3}{2})}{\Gamma(p+\frac{k}{2})\Gamma(q+\frac{k}{2})\Gamma(1+\frac{k+1}{2})}\right]\cdot  \prod_{1\le i<j\le N} (x_i-x_j) \prod_{i=1}^N x_i^{p-1} (1-x_i)^{q-1} \dd x_i.
 \end{equation}
\end{definition}
\begin{remark} The computation of the prefactor $[\cdot]$ in \eqref{eq_Jacobi_ev_def} which makes it a probability distribution is the famous Selberg integral evaluation, see, e.g.\ \cite{forrester2008importance}. \end{remark}

Both \eqref{eq_Jacobi_def} and \eqref{eq_Jacobi_ev_def} can be viewed as multidimensional generalizations of the Beta distribution. The two generalizations are closely related, as we explain in Theorem \ref{Theorem_two_Jacobis}. Their connection to the canonical correlation analysis is stated in Theorem \ref{Theorem_CCA_Jacobi}.

\begin{theorem}\label{Theorem_two_Jacobis}
 Eigenvalues of $\J(N;p,q)$--distributed matrix $\mathcal M$ are distributed as $\mathcal J(N;p,q)$.
\end{theorem}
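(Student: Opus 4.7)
The plan is to derive the eigenvalue density from the matrix density via the spectral decomposition, in the standard way for ensembles defined by $U(N)$-invariant (here $O(N)$-invariant) densities. Since the density in \eqref{eq_Jacobi_def} depends on $\mathcal M$ only through $\det(\mathcal M)$ and $\det(I_N-\mathcal M)$, both of which are class functions of the eigenvalues, the only real content is the Jacobian of the change of variables from matrix entries to (eigenvalues, eigenvectors).

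First I would write $\mathcal M = O \Lambda O^\T$ where $O \in O(N)$ and $\Lambda = \mathrm{diag}(x_1,\dots,x_N)$ with the ordering convention $1 > x_1 > x_2 > \dots > x_N > 0$; the constraint $0 < \mathcal M < I_N$ is exactly this condition on the $x_i$. This parametrization is one-to-one (away from a measure-zero set) after quotienting $O(N)$ by the discrete group $\{\pm 1\}^N$ that flips signs of columns of $O$ while fixing $\Lambda$. Next I would invoke the classical Jacobian computation for real symmetric matrices, which gives
\begin{equation*}
 \dd\mathcal M = C_N \cdot \prod_{1\le i<j\le N} (x_i - x_j) \, \prod_{i=1}^N \dd x_i \cdot \dd O,
\end{equation*}
where $\dd O$ is Haar measure on $O(N)/\{\pm 1\}^N$ and $C_N$ is a known constant. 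The substitutions $\det(\mathcal M) = \prod_i x_i$ and $\det(I_N - \mathcal M) = \prod_i (1-x_i)$ then turn \eqref{eq_Jacobi_def} into exactly the form \eqref{eq_Jacobi_ev_def}, up to the overall constant.

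Finally I would handle the normalization. Since the integrand after the change of variables is a product of a function of $\Lambda$ and the Haar measure on the quotient of $O(N)$, the angular variables integrate out to another constant, and everything combines into a prefactor that multiplies $\prod_{i<j}(x_i-x_j) \prod_i x_i^{p-1}(1-x_i)^{q-1}$. Matching this constant to the one in \eqref{eq_Jacobi_ev_def} is the Selberg integral evaluation referenced in the remark after that definition; alternatively, one may simply observe that both densities integrate to $1$ and have the same $(x_1,\dots,x_N)$-dependence, so the normalizing constants must coincide. An extra factor of $N!$ appears or cancels depending on whether one works with ordered or unordered eigenvalues, and it is absorbed into the prefactor written in \eqref{eq_Jacobi_ev_def}.

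The main obstacle is the Jacobian computation, which is the substantive step; the rest is bookkeeping. A clean way to carry it out is to differentiate $\mathcal M = O\Lambda O^\T$ at a diagonal $\mathcal M = \Lambda$, writing $\dd\mathcal M = \dd\Lambda + [\dd O \cdot O^\T, \Lambda]$, and observing that the antisymmetric matrix $\dd O \cdot O^\T$ contributes the off-diagonal entries $(x_i - x_j)(\dd O\cdot O^\T)_{ij}$, so that taking the wedge product of all independent matrix entries yields the Vandermonde factor to the first power. This is the only place where the real symmetric ($\beta=1$) structure enters, which is why the Vandermonde in \eqref{eq_Jacobi_ev_def} appears to the first power rather than the second or fourth.
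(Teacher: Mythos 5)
Your proposal is correct and follows essentially the same route as the paper: reduce to the Jacobian of the spectral decomposition $\mathcal M = O\Lambda O^\T$, compute it by linearizing at a diagonal point, substitute the determinants, and absorb the angular integral into the constant. The paper packages the Jacobian computation as a separate lemma using the parametrization $O=\exp(R)$ with skew-symmetric $R$, whereas you use the Maurer–Cartan form $\dd O\cdot O^\T$; near the identity these are the same differential, so the two derivations are identical in content.
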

\begin{theorem}\label{Theorem_CCA_Jacobi} Let $\U$ and $\V$ be two independent $K\times S$ and $M\times S$ random matrices, such that their $S$ columns are i.i.d.\ non-degenerate mean $0$ Gaussian vectors of dimensions $K$ and $M$, respectively, with arbitrary non-degenerate covariances. Suppose that $K\le M$ and $K+M\le S$. Then the squared sample canonical correlations $1\ge \hat c_1^2 \ge \hat c_2^2\ge\dots\ge \hat c_K^2$ between $\U$ and $\V$ have distribution $\mathcal J(K; \frac{M-K+1}{2},\frac{S-K-M+1}{2})$.
\end{theorem}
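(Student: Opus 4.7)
My plan is to reduce the problem to a classical change of variables between two independent Wishart matrices, and then invoke Theorem~\ref{Theorem_two_Jacobis}.

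\emph{Step 1 (Reduction to i.i.d.\ Gaussians and to a $K\times K$ matrix).} By Theorem~\ref{Theorem_not_depend_on_cov}, the distribution of the squared sample canonical correlations is unchanged if I replace $\Lambda_{uu}$ and $\Lambda_{vv}$ by identity matrices, so I may assume $\U$ and $\V$ are independent matrices of i.i.d.\ $\mathcal{N}(0,1)$ entries. By Proposition~\ref{Proposition_CCA_as_eigenvectors} and Remark~\ref{Remark_projectors_abuse} (combined with \eqref{eq_x16}), the $\hat c_i^2$ are the eigenvalues of the $K\times K$ matrix $(\U\U^\T)^{-1}\,\U\, P_\V\, \U^\T$.

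\emph{Step 2 (Rotate so $P_\V$ becomes a coordinate projection).} Condition on $\V$. Since $S\ge M$, $P_\V$ is a.s.\ a rank-$M$ orthogonal projector in $\mathbb R^S$. The distribution of $\U$ is invariant under $\U\mapsto \U O^\T$ for $O\in O(S)$, so I may conjugate by $O$ taking $P_\V$ to the projection onto the first $M$ coordinates. Write $\U=[\U_1\mid \U_2]$ with $\U_1$ the first $M$ columns and $\U_2$ the remaining $S-M$ columns. Then
\[
W_1:=\U_1\U_1^\T,\qquad W_2:=\U_2\U_2^\T
\]
are independent $K\times K$ Wishart matrices with $M$ and $S-M$ degrees of freedom, respectively. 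The assumptions $K\le M$ and $K+M\le S$ imply $M\ge K$ and $S-M\ge K$, so $W_1$, $W_2$, and $W:=W_1+W_2=\U\U^\T$ are all a.s.\ invertible. Moreover $\U P_\V \U^\T=W_1$, so $\hat c_i^2$ are the eigenvalues of $W^{-1}W_1$, equivalently of the symmetric matrix
\[
B:=W^{-1/2}\,W_1\,W^{-1/2},\qquad 0<B<I_K.
\]

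\emph{Step 3 (Matrix change of variables $(W_1,W_2)\mapsto (W,B)$).} I will show $B\sim \J\!\left(K;\tfrac{M-K+1}{2},\tfrac{S-M-K+1}{2}\right)$; then Theorem~\ref{Theorem_two_Jacobis} gives the eigenvalue distribution claimed. Inverting, $W_1=W^{1/2}BW^{1/2}$ and $W_2=W^{1/2}(I_K-B)W^{1/2}$. The Wishart densities are proportional to
\[
\det(W_1)^{(M-K-1)/2}e^{-\tfrac12 \mathrm{tr}(W_1)}\quad\text{and}\quad \det(W_2)^{(S-M-K-1)/2}e^{-\tfrac12 \mathrm{tr}(W_2)}.
\]
Using $\det W_1=\det W\cdot \det B$, $\det W_2=\det W\cdot \det(I_K-B)$, $\mathrm{tr}\,W_1+\mathrm{tr}\,W_2=\mathrm{tr}\,W$, together with the Jacobian $\dd W_1\,\dd W_2=\det(W)^{(K+1)/2}\,\dd W\,\dd B$ (obtained from the standard rule that $B\mapsto ABA^\T$ on $K\times K$ symmetric matrices has Jacobian $\det(A)^{K+1}$, applied to $A=W^{1/2}$, combined with the linear change $(W_1,W_2)\mapsto (W_1,W)$), the joint density of $(W,B)$ factors as
\[
\bigl[\,\mathrm{const}\cdot \det(W)^{(S-K-1)/2} e^{-\tfrac12 \mathrm{tr}(W)}\bigr]\;\cdot\;\bigl[\,\mathrm{const}\cdot \det(B)^{(M-K-1)/2}\det(I_K-B)^{(S-M-K-1)/2}\bigr].
\]
Thus $W$ and $B$ are independent (a consistency check: $W$ is Wishart with $S$ degrees of freedom, as it should be), and $B$ has exactly the $\J(K; p,q)$ density of Definition~\ref{Definition_Jacobi_matrix} with $p-1=(M-K-1)/2$ and $q-1=(S-M-K-1)/2$.

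\emph{Main obstacle.} The only technically delicate point is the Jacobian identity $\dd W_1\,\dd W_2=\det(W)^{(K+1)/2}\dd W\,\dd B$: one must track carefully the measure on the cone of symmetric positive-definite matrices and justify the $\det(A)^{K+1}$ factor for the congruence action $B\mapsto ABA^\T$. This is classical (it underlies the definition of the matrix Beta distribution, see e.g.\ Muirhead's book) but would be the one step I would verify in detail. Everything else reduces to invariance arguments and bookkeeping of exponents.
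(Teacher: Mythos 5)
Your proposal is correct and follows essentially the same path as the paper: reduce to i.i.d.\ Gaussians via Theorem~\ref{Theorem_not_depend_on_cov}, rotate so that $P_\V$ becomes a coordinate projection (using independence of $\U$ from $\V$ and rotational invariance of the Gaussian), split $\U$ into two blocks whose Gram matrices are independent Wisharts, and identify the resulting MANOVA matrix as $\J(K;\tfrac{M-K+1}{2},\tfrac{S-M-K+1}{2})$ via the Wishart-to-Jacobi change of variables before applying Theorem~\ref{Theorem_two_Jacobis}. The only cosmetic difference is that the paper works with the $M\times M$ corner of $P_\U$ and then invokes the ``$\mathrm{eig}(AB)=\mathrm{eig}(BA)$'' identity to pass to the $K\times K$ MANOVA matrix, whereas you stay with the $K\times K$ matrix $(\U\U^\T)^{-1}\U P_\V\U^\T$ throughout; your Step~3 is precisely the paper's Lemma~\ref{Lemma_MANOVA}, inlined.
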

In $K=M=1$ case Theorem \ref{Theorem_CCA_Jacobi} states that the sample correlation coefficient between two Gaussian vectors has a Beta distribution. The $K\le M$ restriction is just for convenience and can be removed by symmetry between $\U$ and $\V$. The $K+M\le S$ restriction is more interesting: note that if $K+M>S$, then any $K$--dimensional linear subspace in $S$--dimensional space will have a $(K+M-S)$--dimensional intersection with any $M$--dimensional linear subspace in $S$--dimensional space. Hence, in this situation there are $K+M-S$ deterministic canonical correlations equal to $1$. One can still analyze the joint distribution of the remaining correlations (it will be again given by an instance of the Jacobi ensemble with appropriate parameters), yet we will not address it here: from the point of view of statistical applications the case of large $S$ is the most natural one, as we will see later.

Theorem \ref{Theorem_CCA_Jacobi} can be generalized to cases in which the columns of $\U$ and $\V$ are i.i.d.\ samples of Gaussian $\u$ and $\v$, where $\u$ and $\v$ allowed to be correlated, see \cite{constantine1963some}. The distribution becomes much more complicated than the Jacobi ensemble --- it involves so-called matrix hypergeometric functions.

The proofs of Theorems \ref{Theorem_two_Jacobis} and \ref{Theorem_CCA_Jacobi} rely on the techniques for computing matrix integrals and Jacobians for transformations between various matrices and their eigenvalues. Three encyclopedic sources for such techniques are \cite{Hua}, \cite{muirhead2009aspects}, and \cite{Forrester10}. We present the general schemes of the proofs in the next section omitting some of the details.

\subsection{Scheme of the proof of Theorem \ref{Theorem_two_Jacobis}} This is a very general computation which works for many ensembles of random matrices. The key point is to figure out the Jacobian of the map between a symmetric matrix and its eigenvalues.

Consider the map
$$
 \Phi: (O; \lambda_1,\dots,\lambda_N) \mapsto \X = O\, \mathrm{diag}(\lambda_1,\dots,\lambda_N)\, O^{\T},
$$
which takes a real orthogonal matrix $O$ and an $N$-tuple of real numbers $\lambda_1\ge\lambda_2\ge\dots\ge\lambda_N$ as an input and produces a real symmetric matrix as an output.

Note that for a \emph{generic} $\X$, $\Phi^{-1}(\X)$ has $2^N$ elements: $\lambda_1,\dots,\lambda_N$ are eigenvalues of $\X$ and columns of $O$ are normalized eigenvectors of $\X$. The latter are defined up to multiplication by $-1$, hence, $2^N$ options. Strictly speaking, this is only true when all $\lambda_i$ are distinct, as otherwise there is more freedom in choosing eigenvectors; however, the $\lambda_i$ coincide on a lower-dimensional set, which has probability $0$ with respect to distributions of interest, and, therefore, is not important for us.

We would like to compute the Jacobian of the transformation $\Phi$. In order to make this precise, note that the set of all $N\times N$ orthogonal matrices is a $N(N-1)/2$ dimensional manifold in $\mathbb R^{N^2}$: it is being cut by $N(N+1)/2$ quadratic equations expressing that the columns of the matrix $O$ are orthonormal. Hence, the space of all orthogonal matrices has a natural measure $\dd O$ inherited from the Lebesgue measure on $\mathbb R^{N^2}$. In parallel, $(\lambda_1,\dots,\lambda_N)$ are equipped with the Lebesgue measure $\dd \lambda_1\cdots \dd \lambda_N$ inherited from $\mathbb R^{N}$ and the symmetric matrices $\X$ are equipped with the Lebesgue measure $\dd X$ inherited from $\mathbb R^{N(N+1)/2}$, identified with the space of matrix elements on and above the diagonal. The following lemma explains how these measures are related under $\Phi$:

\begin{lemma} \label{Lemma_ev_Jacobian}
 Under the map $\Phi$ we have
 $$
  \prod_{i<j}(\lambda_i-\lambda_j)\, \dd \lambda_1 \cdots \dd \lambda_N \, \dd O \stackrel{\Phi}{\longrightarrow} c_N\, \dd \X,
 $$
 where $c_N$ is an omitted explicit constant, which depends on $N$, but not on $\X$.
\end{lemma}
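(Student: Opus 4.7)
The plan is to exploit the invariance of both the Lebesgue measure $\dd\X$ on symmetric matrices (under conjugation $\X \mapsto Q\X Q^{\T}$) and the natural measure $\dd O$ on the orthogonal group (which is bi-invariant, i.e., Haar measure) to reduce the Jacobian computation to an infinitesimal calculation at a single convenient point. The essential content of the lemma lies in how the off-diagonal entries of $\X$ vary with a small rotation of the eigenbasis, and the factor $\prod_{i<j}(\lambda_i-\lambda_j)$ will emerge directly from that variation.

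First I would fix a diagonal matrix $D=\mathrm{diag}(\lambda_1,\dots,\lambda_N)$ and parametrize a neighborhood of the identity in $O(N)$ by antisymmetric matrices $A$ via $O=I+A+O(\|A\|^2)$; this is justified because the tangent space to $O(N)$ at $I$ is exactly the space of antisymmetric matrices, which is $N(N-1)/2$-dimensional, matching $\dim O(N)$. The natural measure $\dd O$ near $I$ coincides, up to a constant depending only on $N$, with the Lebesgue measure $\prod_{i<j}\dd A_{ij}$. Expanding $\Phi(I+A,\,D+d\Lambda)=(I+A)(D+d\Lambda)(I-A)+O(\|A\|^2)$, I would compute
\begin{equation*}
 \Phi(I+A, D+d\Lambda) - D = d\Lambda + (AD - DA) + O(\|A\|^2),
\end{equation*}
whose $(i,j)$ entry for $i<j$ is $A_{ij}(\lambda_j-\lambda_i)$ and whose diagonal entry at $i$ is $d\lambda_i$. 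Hence, the differential $d\Phi$ is block-diagonal in the parameters $(\{A_{ij}\}_{i<j},\{d\lambda_i\})$: the Jacobian of the linear map from these parameters to $(\{\dd \X_{ij}\}_{i<j},\{\dd\X_{ii}\})$ equals $\prod_{i<j}|\lambda_j-\lambda_i|$. This directly produces the claimed density $\prod_{i<j}(\lambda_i-\lambda_j)\,\dd\lambda_1\cdots\dd\lambda_N\,\dd O$ after incorporating the $N$-dependent normalizing constant relating $\dd O$ near $I$ to $\prod_{i<j}\dd A_{ij}$.

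To extend from the identity to an arbitrary point $(O_0; \lambda_1,\dots,\lambda_N)$, I would use right-translation on $O(N)$: writing $O=O_0(I+A)$, the bi-invariance of $\dd O$ shows that its pushforward near $O_0$ matches that near $I$. Simultaneously, the output $\Phi(O_0(I+A), D+d\Lambda) = O_0 \,\Phi(I+A, D+d\Lambda)\, O_0^{\T}$, together with the conjugation-invariance of $\dd \X$, means the Jacobian at $O_0$ equals the Jacobian at $I$. Together these identifications yield the lemma globally (on the generic locus where eigenvalues are distinct, which has full measure).

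The main obstacle is conceptual rather than computational: making precise in what sense $\dd O$ is defined and showing it coincides, up to a universal constant, with the pullback of $\prod_{i<j}\dd A_{ij}$ under the chart $A\mapsto I+A$. This requires verifying that the natural measure on $O(N)$ inherited from the embedding in $\mathbb R^{N^2}$ (restricted by the orthonormality equations) is indeed bi-invariant, which can be done either via the implicit function theorem applied to the defining equations $O^{\T}O=I$, or by appealing to the general fact that compact Lie groups admit a unique (up to scaling) bi-invariant Haar measure and observing that the embedded measure clearly inherits invariance from the orthogonal action on $\mathbb R^{N^2}$. With this in hand, the rest of the argument is the infinitesimal linear-algebra calculation above.
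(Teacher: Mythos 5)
Your proposal is correct and follows essentially the same route as the paper's proof: reduce to the identity via conjugation-invariance of $\dd\X$ and translation-invariance of $\dd O$, then linearize $\Phi$ at $O=I_N$ using skew-symmetric coordinates ($I+A$ in your version, $\exp(R)\approx I+R$ in the paper's, which agree to first order) and read off the Jacobian as $\prod_{i<j}|\lambda_j-\lambda_i|$ from the off-diagonal entries.
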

\begin{proof}
 We show that $\dd\Phi(O;\lambda_1,\dots,\lambda_N)\sim \prod_{i<j}(\lambda_i-\lambda_j)\, \dd \lambda_1 \cdots \dd \lambda_N \, \dd O$ by computing the Jacobian of the map $\Phi$ at an arbitrary point $(O; \lambda_1\ge \lambda_2\dots\ge \lambda_N)$. Without loss of generality, we can impose two simplifying assumptions. First, the inequalities are strict $\lambda_1>\lambda_2>\dots>\lambda_N$, as the $\lambda$'s with equal coordinates have measure $0$. Second, it is sufficient to only compute the Jacobian at the identical matrix, $O=I_N$. This is because of the obvious property of $\Phi$:
 $$
  \Phi(\tilde O O; \lambda_1,\dots,\lambda_N)= \tilde O  \Phi(O; \lambda_1,\dots,\lambda_N) \tilde O^\T,
 $$
 and an observation that multiplication of $O$ by $\tilde O$ preserves the Lebesgue measure (as any orthogonal transformation), and similarly the map $\X\mapsto \tilde O \X \tilde O^\T$ preserves the Lebesgue measure. Hence, the Jacobian at any $O$ is equal to the Jacobian at $O=I_N$.

 For $O$ near the identity matrix we can introduce a natural coordinate system on the orthogonal matrices:
 \begin{align*}
  O&=\exp(R)=I_N+R+\frac{R^2}{2!}+\frac{R^2}{3!}+\dots, \\ R&=\ln(O)=\ln(I_N+ O-I_N)=(O-I_N)-\frac{(O-I_N)^2}{2}+\frac{(O-I_N)^3}{3}-\dots.
 \end{align*}
 One can check that the above correspondence is a diffeomorphism between orthogonal matrices $O$ close to $I_N$ and skew-symmetric matrices  $R$ is (i.e., $R^\T=-R$) close to $0$. Because locally near $R=0$, the coordinate system simplifies to a shift
 $O\approx I_N+R$,
 the Lebesgue measure $\dd O$ near the identity matrix becomes $\dd R$ (i.e., the Lebesgue measure on $N(N-1)/2$ matrix elements of $R$ above the diagonal) --- the formal statement is that the Jacobian of the transformation between $O$ and $R$ is $1$ at $R=0$ or $O= I_N$.

 Let us rewrite the map $\Phi$ in terms of $R=[R_{ij}]_{i,j=1}^N$:
 \begin{multline}
  \Phi(\exp(R);\lambda_1,\dots,\lambda_N)=(I_N+R)\mathrm{diag}(\lambda_1,\dots,\lambda_N)(I_N-R) + o(R)\\=
  \mathrm{diag}(\lambda_1,\dots,\lambda_N) +R\cdot \mathrm{diag}(\lambda_1,\dots,\lambda_N)-\mathrm{diag}(\lambda_1,\dots,\lambda_N)\cdot R+o(R)= o(R)
  \\+ \begin{pmatrix} \lambda_1 & R_{12}(\lambda_2-\lambda_1) & R_{13}(\lambda_3-\lambda_1) & \dots
          \\ R_{21}(\lambda_1-\lambda_2) & \lambda_2 & R_{23}(\lambda_3-\lambda_2) & \dots \\
            \vdots &&&\ddots & R_{N-1,N}(\lambda_N-\lambda_{N-1}) \\ &&& R_{N,N-1}(\lambda_{N-1}-\lambda_N) & \lambda_N
            \end{pmatrix}.
 \end{multline}
 The coordinates on symmetric matrices are their entries above and on the diagonal. Hence, we need to differentiate all of them in $R_{ij}$, $1\le i<j\le N$ and $\lambda_{i}$, $1\le i \le N$ to get the Jacobian matrix. The Jacobian matrix at $R=0$ is diagonal: the part corresponding to the partial derivatives of the above the diagonal elements with respect to $R_{ij}$ is diagonal with $(\lambda_j-\lambda_i)$ elements; the partial derivatives of the diagonal elements with respect to $\lambda_i$ are the identity matrix. We conclude that the Jacobian is $\prod_{i<j}(\lambda_j-\lambda_i)$, as claimed in the lemma.
\end{proof}

\begin{proof}[Proof of Theorem \ref{Theorem_two_Jacobis}] We represent $\mathcal M$ in Definition \ref{Definition_Jacobi_matrix} in terms of its eigenvalues and eigenvectors as $\mathcal M=\Phi(O;\lambda_1,\dots,\lambda_N)$, rewrite
 $\det(\mathcal M)^{p-1} \det(I_N-\mathcal M)^{q-1}$ in terms of $\lambda_i$ and use Lemma \ref{Lemma_ev_Jacobian} to transform \eqref{eq_Jacobi_def} into
 $$
  C_N \cdot \prod_{i<j}(\lambda_i-\lambda_j) \prod_{i=1}^N \lambda_i^{p-1}(1-\lambda_i)^{q-1} \dd \lambda_1\cdots \dd \lambda_N\, \dd O,
 $$
 where $C_N$ is a normalization constant, whose exact value we omit from this computation (it is given by the ratio of the constant in \eqref{eq_Jacobi_def} and the constant $c_N$ from Lemma \ref{Lemma_ev_Jacobian}). Integrating out $\dd O$ and renaming $\lambda_i$ into $x_i$, we arrive at \eqref{eq_Jacobi_ev_def}.
\end{proof}

\subsection{Scheme of the proof of Theorem \ref{Theorem_CCA_Jacobi}}

An important part of \eqref{eq_CCA_estimator} is the matrices $\U \U^\T$ and $\V\V^\T$. The first step is to find their distribution.
\begin{definition} \label{Definition_Wishart_matrix}
Given a parameter $p>0$, the real Wishart (also known as Laguerre) matrix ensemble $\WL(N;p)$ is a distribution on $N\times N$ real symmetric positive-definite matrices $\mathcal M$ of density (with respect to the Lebesgue measure) proportional to
 \begin{equation}
  \label{eq_Wishart_def}
  \det(\mathcal M)^{p-1}  \exp\left(-\frac{1}{2} \mathrm{Trace}(\mathcal M) \right)\, \dd\mathcal M,\qquad \mathcal M>0.
 \end{equation}
\end{definition}

\begin{lemma} \label{Lemma_Wishart}
 For $L\ge K$, let $\Z$ be a $K\times L$ matrix of i.i.d.\ $\mathcal N(0,1)$ random variables. Then the distribution of $K\times K$ symmetric matrix $\Z \Z^\T$ is $\WL(K;\frac{L-K+1}{2})$
\end{lemma}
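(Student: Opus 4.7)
The plan is to derive the density of $\mathcal M = \Z\Z^\T$ by performing two successive changes of variables on the joint density of the entries of $\Z$, in the same spirit as the proof of Theorem \ref{Theorem_two_Jacobis}. The starting point is elementary: since the $KL$ entries of $\Z$ are i.i.d.\ $\mathcal N(0,1)$, their joint density is proportional to $\exp\!\bigl(-\tfrac12 \mathrm{Tr}(\Z\Z^\T)\bigr)\,\dd\Z$. This expression already depends on $\Z$ only through $\Z\Z^\T$, so what remains is to integrate out ``angular'' nuisance coordinates and compute the Jacobian explicitly.

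The first change of variables is a QR-type decomposition. Since $L \ge K$, Gram--Schmidt applied to the rows of $\Z$ yields, almost surely uniquely, $\Z = \mathbf T\,\mathbf O$, where $\mathbf T$ is a $K\times K$ lower triangular matrix with strictly positive diagonal and $\mathbf O$ is $K\times L$ with orthonormal rows, $\mathbf O\mathbf O^\T = I_K$. In particular $\Z\Z^\T = \mathbf T\mathbf T^\T$, so $\mathbf O$ is a nuisance parameter. To compute the Jacobian, decompose row by row: for the $i$-th row, its components along the first $i-1$ rows of $\mathbf O$ contribute the entries $T_{i1},\dots,T_{i,i-1}$, while its component in the $(L-i+1)$-dimensional orthogonal complement is $T_{ii}>0$ times a unit vector on $S^{L-i}$, and the standard polar change of variables in $\mathbb R^{L-i+1}$ contributes a factor $T_{ii}^{L-i}$. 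Multiplying over $i$,
$$
 \dd\Z \;=\; \prod_{i=1}^{K} T_{ii}^{L-i}\,\dd\mathbf T\,\dd\mathbf O,
$$
where $\dd\mathbf O$ denotes the natural invariant measure on the Stiefel manifold of orthonormal $K$-frames in $\mathbb R^L$, inherited from Lebesgue measure on $\mathbb R^{KL}$ (analogously to $\dd O$ in Lemma \ref{Lemma_ev_Jacobian}).

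The second step is the Cholesky change of variables $\mathbf T \mapsto \mathcal M = \mathbf T\mathbf T^\T$. Differentiating $\mathcal M_{ij} = \sum_{k\le \min(i,j)} T_{ik} T_{jk}$ and ordering the $\mathbf T$-entries by row (and by column within each row) exhibits the Jacobian matrix as block lower triangular, with diagonal contributions $2T_{ii}$ (from $\partial \mathcal M_{ii}/\partial T_{ii}$) and $T_{ii}$ (from $\partial \mathcal M_{ji}/\partial T_{ji}$ for each $j>i$), giving $\dd\mathcal M = 2^K \prod_{i=1}^K T_{ii}^{K+1-i}\,\dd\mathbf T$. Combining the two Jacobians and integrating out the global constant $\int\dd\mathbf O$, the density of $\mathcal M$ is proportional to
$$
 \exp\!\bigl(-\tfrac12\mathrm{Tr}(\mathcal M)\bigr)\,\prod_{i=1}^K T_{ii}^{(L-i)-(K+1-i)} \;=\; \exp\!\bigl(-\tfrac12\mathrm{Tr}(\mathcal M)\bigr)\,\prod_{i=1}^K T_{ii}^{L-K-1}.
$$
Since $\det(\mathcal M)=\prod_i T_{ii}^2$, the last product equals $\det(\mathcal M)^{(L-K-1)/2}$, matching \eqref{eq_Wishart_def} with $p = (L-K+1)/2$, as claimed.

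The main obstacle I anticipate is the rigorous setup of the first Jacobian: the Stiefel manifold is not a linear space, so one must first pin down the invariant measure $\dd\mathbf O$ consistently with the polar decomposition sketched above. Once this measure-theoretic framework is in place---paralleling the treatment of the orthogonal group in Lemma \ref{Lemma_ev_Jacobian}---both the row-by-row polar argument and the Cholesky Jacobian reduce to routine differential computations, and no further conceptual input is needed.
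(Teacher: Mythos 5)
Your proposal is correct and follows exactly the route the paper itself sketches: the paper writes down the Gaussian density as $(2\pi)^{-KL/2}\exp(-\tfrac12\mathrm{Tr}(\Z\Z^\T))$, observes that the exponential factor already matches \eqref{eq_Wishart_def}, and then states that the determinant prefactor comes from projecting the Lebesgue measure along $\Z\mapsto\Z\Z^\T$ via ``the QR--decomposition for rectangular matrices,'' deferring the Jacobian computations to Muirhead and Forrester. You have simply filled in those deferred details---the $LQ$ (equivalently, row-by-row Gram--Schmidt) decomposition $\Z=\mathbf T\mathbf O$ with Jacobian $\prod_i T_{ii}^{L-i}$, the Cholesky Jacobian $2^K\prod_i T_{ii}^{K+1-i}$, and the identification $\prod_i T_{ii}^{L-K-1}=\det(\mathcal M)^{(L-K-1)/2}$---and both Jacobians are stated correctly. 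The caveat you flag at the end, namely that a rigorous treatment of the Stiefel-manifold measure is the only nontrivial overhead, is precisely the point at which the paper hands off to the references, so you and the paper are at the same level of rigor.
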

The matrix $\Z \Z^\T$  is often called the sample covariance matrix in the literature.
\begin{proof}[Ingredients of the proof of Lemma \ref{Lemma_Wishart}] Let us start from the simplest $L=K=1$ case. $\Z$ has a unque matrix element $Z_{11}$, whose probability distribution is
$$
 \frac{1}{\sqrt{2 \pi}} \exp\left(-\frac{1}{2} (Z_{11})^2\right)\dd Z_{11}.
$$
The only matrix element of $\Z \Z^{\T}$ is $x=(Z_{11})^2$. By using the change of variables formula for the probability distributions, we compute the density of $x$ to be
\begin{equation}
\label{eq_x6}
 2 \cdot \frac{1}{\sqrt{2\pi}}   \exp\left(-\frac{1}{2} x\right)\dd (\sqrt{x})= \frac{1}{\sqrt{2\pi}} x^{-1/2}  \exp\left(-\frac{1}{2} x\right)\dd x,
\end{equation}
where the prefactor $2$ comes from the observation that the map $z\mapsto z^2$ glues two points $z$ and $-z$ together.
Clearly, \eqref{eq_x6} matches the statement of the lemma. A more complicated case $K=1$, $L\ge 1$ is covered in Problem set 1 of Chapter \ref{Section_exercises}. In the most general case, the probability density of $\Z=[Z_{ij}]$ is
\begin{equation}
\label{eq_x4}
 (2\pi)^{-\frac{KL}{2}} \exp\left(-\frac{1}{2}\sum_{i,j} (Z_{ij})^2\right)= (2\pi)^{-\frac{KL}{2}} \exp\left(-\frac{1}{2}\mathrm{Trace}(\Z\Z^\T)\right).
\end{equation}
It remains to project the measure \eqref{eq_x4} along the map $\U\mapsto \X=\U \U^\T$. We see that the second factor in \eqref{eq_Wishart_def} is the same as the second factor in \eqref{eq_x4}. The first factor accounts for the change of measure in the map. It is similar to the Jacobian of the transformation, which we have just computed in \eqref{eq_x6}, yet a technical difficulty is that we map $KL$ dimensional space into $K(K+1)/2$ dimensional space, i.e., the dimension goes down. This means that we need to introduce additional coordinates encoding $\Z$ (if we assume $\X=\Z\Z^\T$ to be known), and then integrate them out\footnote{For example, if $K=L=1$, then the additional coordinate is $\pm 1$ and it is responsible for the prefactor $2$ in \eqref{eq_x6}. If $K=1$, $L=2$, then we integrate over a circle and a natural choice of the coordinate is the angle.}. A standard way to introduce these coordinates is through the  QR--decomposition for rectangular matrices. Further details of this approach are in \cite[Chapter 3]{muirhead2009aspects} or \cite[Section 3.2.3]{Forrester10}. An alternative approach is in \cite[Section 3.1.1]{Forrester10}.
\end{proof}
Two independent Wishart matrices give rise to an instance of the Jacobi ensemble.
\begin{lemma} \label{Lemma_MANOVA}
 Suppose $K\le L$ and $K\le Q$. Let $\Z$ be $K\times L$ matrix of i.i.d.\ $\mathcal N(0,1)$ and let $\Y$ be independent $K\times Q$ matrix of i.i.d.\ $\mathcal N(0,1)$. Then the distribution of the $K\times K$ symmetric matrix
 \begin{equation}
 \label{eq_MANOVA}
  \X=(\Z\Z^\T+\Y \Y^\T)^{-1/2} \Z \Z^\T (\Z\Z^\T + \Y \Y^\T)^{-1/2}
 \end{equation}
 is $\J\bigl(K;\frac{L-K+1}{2},\frac{Q-K+1}{2}\bigr)$.
\end{lemma}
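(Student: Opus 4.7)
The plan is to reduce the claim to a statement about two independent Wishart matrices and then carry out an explicit change of variables that diagonalizes the joint density into a Jacobi piece times a Wishart-like piece.

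\textbf{Step 1 (reduction to Wisharts).} By Lemma \ref{Lemma_Wishart}, setting $\A := \Z \Z^\T$ and $\B := \Y \Y^\T$, we get $\A \sim \WL(K; p)$ and $\B \sim \WL(K; q)$, independent, with $p = \tfrac{L-K+1}{2}$ and $q = \tfrac{Q-K+1}{2}$. So the joint density of $(\A,\B)$ on positive-definite $K \times K$ symmetric matrices is proportional to
$$
 \det(\A)^{p-1}\det(\B)^{q-1}\exp\!\bigl(-\tfrac{1}{2}\mathrm{Tr}(\A+\B)\bigr)\, d\A \, d\B.
$$
Note that $\X = (\A+\B)^{-1/2}\A(\A+\B)^{-1/2}$, so it suffices to show that for such $\A,\B$ the matrix $\X$ is distributed as $\J(K;p,q)$.

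\textbf{Step 2 (change of variables).} Introduce the bijection $(\A,\B)\leftrightarrow(\X,\S)$ given by
$$
 \S := \A+\B,\qquad \X := \S^{-1/2}\A\,\S^{-1/2},\qquad \A = \S^{1/2}\X\S^{1/2},\qquad \B = \S^{1/2}(I_K-\X)\S^{1/2},
$$
where $(\X,\S)$ ranges over $\{0<\X<I_K\}\times\{\S>0\}$. Decompose this as $(\A,\B)\to(\A,\S)\to(\X,\S)$. The first map at fixed $\A$ is a shift $\B\mapsto\S=\A+\B$, hence unit Jacobian. The second map at fixed $\S$ is the symmetric-matrix linear map $\X\mapsto \S^{1/2}\X\S^{1/2}$, whose Jacobian I would compute (after diagonalizing $\S = O D O^\T$ and using orthogonal invariance) to be $\det(\S)^{(K+1)/2}$, because in the eigenbasis of $\S$ each entry scales as $\A_{ij} = \sqrt{D_iD_j}\,\X_{ij}$ and $\prod_{i\le j}\sqrt{D_iD_j} = \det(D)^{(K+1)/2}$. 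Thus
$$
 d\A\, d\B = \det(\S)^{(K+1)/2}\, d\X\, d\S.
$$

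\textbf{Step 3 (identification).} Using $\det(\A)=\det(\S)\det(\X)$ and $\det(\B)=\det(\S)\det(I_K-\X)$, the joint density of $(\X,\S)$ becomes proportional to
$$
 \underbrace{\det(\X)^{p-1}\det(I_K-\X)^{q-1}}_{\text{function of }\X}\,\cdot\,\underbrace{\det(\S)^{p+q-2+(K+1)/2}\exp\!\bigl(-\tfrac{1}{2}\mathrm{Tr}(\S)\bigr)}_{\text{function of }\S}\, d\X\, d\S.
$$
Since this factorizes, $\X$ and $\S$ are independent, and the marginal of $\X$ is proportional to $\det(\X)^{p-1}\det(I_K-\X)^{q-1}$ on $0<\X<I_K$, which is exactly $\J(K;p,q)$ by Definition \ref{Definition_Jacobi_matrix}. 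As a bonus, the $\S$ marginal is a Wishart of parameter $p+q+\tfrac{K-1}{2}$, matching the convolution rule for Wisharts sharing a common covariance.

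\textbf{Main obstacle.} The only non-routine ingredient is the Jacobian $d\A = \det(\S)^{(K+1)/2}\,d\X$ for $\A = \S^{1/2}\X\S^{1/2}$ acting on the $\tfrac{K(K+1)}{2}$-dimensional space of symmetric matrices. The cleanest route is to reduce to diagonal $\S$ by orthogonal invariance of the Lebesgue measure on symmetric matrices (an argument analogous to the one used at the start of the proof of Lemma \ref{Lemma_ev_Jacobian}), after which the Jacobian is read off entrywise. Once this is in hand, the rest is bookkeeping.
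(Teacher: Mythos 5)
Your proof is correct and follows essentially the same route as the paper: reduce to two independent Wisharts via Lemma \ref{Lemma_Wishart}, change variables $(\A,\B)\to(\X,\S)$ with unit Jacobian for $\B\mapsto\S$ and $\det(\S)^{(K+1)/2}$ for $\A=\S^{1/2}\X\S^{1/2}$, and read off the factorization. The only cosmetic difference is that you sketch the Jacobian computation by diagonalizing $\S$, whereas the paper cites it from \cite{Forrester10}; both are fine.
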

The matrix in \eqref{eq_MANOVA} is commonly referred to in the literature as the MANOVA ensemble, with the name derived from Multivariate Analysis of Variance.
\begin{proof}[Proof of Lemma \ref{Lemma_MANOVA}]
 By Lemma \ref{Lemma_Wishart}, the joint probability distribution of $A=\Z\Z^\T$ and $B=\Y\Y^\T$ is proportional to
$$
 \det(A)^{\frac{L-K-1}{2}} \det(B)^{\frac{Q-K-1}{2}} \exp\left(-\frac{1}{2}\mathrm{Trace}(A+B)\right) \dd A\, \dd B.
$$
Introducing the matrix $C=A+B$ and noting that the transformation $(A,B)\mapsto (A,A+B)$ has Jacobian $1$, the joint distribution of $A$ and $C$ is
$$
 \det(A)^{\frac{L-K-1}{2}} \det(C-A)^{\frac{Q-K-1}{2}} \exp\left(-\frac{1}{2}\mathrm{Trace}(C)\right) \dd A\, \dd C.
$$
Note that $\X=C^{-1/2} A C^{-1/2}$ and that the transformation $A\mapsto \X$ in the space of $K\times K$ positive-definite symmetric matrices has Jacobian: $\dd\A=\det(C)^{\frac{K+1}{2}}\dd \X$, see, e.g.\ \cite[(1.35)]{Forrester10}. Hence, the joint distribution of $\X$ and $C$ is
\begin{multline*}
 \det(C)^{\frac{K+1}{2}} \det(C^{1/2} \X C^{1/2})^{\frac{L-K-1}{2}} \det(C-C^{1/2} \X C^{1/2})^{\frac{Q-K-1}{2}} \exp\left(-\frac{1}{2}\mathrm{Trace}(C)\right) \dd \X\, \dd C\\=
 \det(\X)^{\frac{L-K-1}{2}} \det(I_K-\X)^{\frac{Q-K-1}{2}} \exp\left(-\frac{1}{2}\mathrm{Trace}(C)\right)  \det(C)^{\frac{K+1}{2}+\frac{L-K-1}{2}+\frac{Q-K-1}{2}} \dd \X\, \dd C.
\end{multline*}
The density has a factorized form and to get the distribution of $\X$, we collect all $\X$--dependent factors, arriving at $\J\bigl(K;\frac{L-K+1}{2},\frac{Q-K+1}{2}\bigr)$.
\end{proof}

\begin{proof}[Proof of Theorem  \ref{Theorem_CCA_Jacobi}] By Theorem \ref{Theorem_not_depend_on_cov} we can assume without loss of generality that all matrix elements $\U$ and $\V$ are i.i.d.\ $\mathcal N(0,1)$, which is the approach taken throughout the proof.

 The squared sample canonical correlations between $\U$ and $\V$ are eigenvalues of $P_\U P_\V$, which is the same as eigenvalues of $P_\U P_\V P_\V $ and the same as eigenvalues of $P_\V P_\U P_\V$. The idea of the proof is to convert $P_\V P_\U P_\V$ into the form of Lemma \ref{Lemma_MANOVA} by an orthogonal transformation of the ambient $S$--dimensional space. Choose an $S\times S$ orthogonal matrix $O$, which maps the rows of $\V$ into the first $M$ coordinate vectors, i.e.,
$$
 \V O^\T=\begin{pmatrix} 1 & 0 & \dots\\ 0 & 1 & 0 & \dots \\ & & \ddots\\   0 & \dots & 0 & 1 & 0 & \dots\end{pmatrix}.
$$
Directly from the definition of the projector (or from the matrix formulas, as in \eqref{eq_x3}), one sees that
$$
 O P_\V P_\U P_\V O^\T = P_{\V O^{\T}} P_{\U O^\T} P_{\V O^{\T}}.
$$
Since conjugations do not change eigenvalues, we conclude that the desired canonical correlations are the eigenvalues of the matrix $ P_{\V O^{\T}} P_{\U O^\T} P_{\V O^{\T}}$. The latter matrix is the top-left $M\times M$ corner of the projector $P_{\U O^{\T}}$.

Next, note that $\U O^{\T} \stackrel{d}{=} \U$. Indeed, $O$ is a function of $\V$, and therefore is independent from $\U$ and can be taken to be deterministic for the purpose of this computation. Each row of $\U$ is a vector of i.i.d.\ $\mathcal N(0,1)$, and such vectors are invariant (in distribution) under orthogonal transformations, because Gaussian vectors are uniquely determined by the covariance, and the latter changes by $I_{S}\mapsto O I_S O^{\T} = I_S$.

The conclusion from this discussion is that the positive canonical correlations between $\U$ and $\V$ have the same distribution as the positive eigenvalues of $M\times M$ corner of the matrix $P_{\U}=\U^{\T} (\U \U^\T)^{-1} \U$.

Let us split the $K\times S$ matrix $\U$ into two parts: the first $M$ columns form $K\times M$ matrix $\Z$ and the remaining $S-M$ columns form $K\times (S-M)$ matrix $\Y$. Then the $M\times M$ corner of the matrix $P_{\U}=\U^{\T} (\U \U^\T)^{-1} \U$ is the same as the $M\times M$ matrix  $\Z^{\T} (\Z\Z^{\T}+\Y \Y^{\T})^{-1} \Z$. Using the fact that for any rectangular matrices $A$ and $B$ of same dimensions the non-zero eigenvalues of $AB$ and $BA$ coincide, we convert the eigenvalues:
\begin{multline*}
 \mathrm{e.v.}\left(\Z^{\T} (\Z\Z^{\T}+\Y \Y^{\T})^{-1} \Z\right)=\mathrm{e.v.}\left( (\Z\Z^{\T}+\Y \Y^{\T})^{-1} \Z\Z^{\T}\right)\\=\mathrm{e.v.}\left( (\Z\Z^{\T}+\Y \Y^{\T})^{-1/2} \Z\Z^{\T}(\Z\Z^{\T}+\Y \Y^{\T})^{-1/2}\right).
\end{multline*}
 The last matrix is that of Lemma \ref{Lemma_MANOVA} with $L=M$ and $Q=S-M$. We conclude that this matrix has distribution
 $\J\bigl(K;\frac{L-K+1}{2},\frac{Q-K+1}{2}\bigr)=\J\bigl(K;\frac{M-K+1}{2},\frac{S-K-M+1}{2}\bigr)$. By Theorem \ref{Theorem_two_Jacobis}, this implies that the eigenvalues have distribution $\mathcal J\bigl(K;\frac{M-K+1}{2},\frac{S-K-M+1}{2}\bigr)$.
\end{proof}

\newpage

\section{Limits of the Jacobi ensemble}

In the last chapter we have shown that the distribution of the $K$ squared (sample) canonical correlations between two independent spaces spanned by i.i.d.\ Gaussian matrices of $K\times S$ and $M\times S$ sizes has density proportional to
\begin{equation}
\label{eq_CCA_Jacobi_specialized}
 \prod_{1\le i<j\le K} (x_i-x_j) \prod_{i=1}^K x_i^{\frac{M-K-1}{2}} (1-x_i)^{\frac{S-K-M-1}{2}}, \qquad 1>x_1>x_2>\dots>x_K>0.
\end{equation}
The goal for this chapter is to analyze the behavior of \eqref{eq_CCA_Jacobi_specialized} in various asymptotic regimes.

\subsection{Finite $K$ and $M$} We begin by assuming that $K$ and $M$ are fixed and small, while $S$ is large. Based on the interpretations in Section \ref{Section_CCA_as estimation}, this means that we have \emph{many} i.i.d.\ samples of low-dimensional vectors $\u$ and $\v$. The following result goes back to \cite{hsu1941limiting}.

\begin{theorem} \label{Theorem_CCA_to_Laguerre}
 Take $M\ge K$ and suppose that $\x(K,M,S)$ is a $K$--dimensional random vector distributed as \eqref{eq_CCA_Jacobi_specialized}. Then
 \begin{equation}
  \lim_{S\to\infty} \left[S\cdot \x(K,M,S)\right] \stackrel{d}{=} \y(K,M),
 \end{equation}
 where $\y(K,M)$ is a $K$--dimensional random vector with density proportional to
\begin{equation}
\label{eq_Laguerre_limit}
 \prod_{1\le i<j\le K} (y_i-y_j) \prod_{i=1}^K y_i^{\frac{M-K-1}{2}} \exp\left(-\frac{1}{2}y_i\right), \qquad y_1>y_2>\dots>y_K>0.
\end{equation}
\end{theorem}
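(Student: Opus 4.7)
The plan is a direct density computation: substitute $y_i = S x_i$ in \eqref{eq_CCA_Jacobi_specialized}, take the pointwise $S\to\infty$ limit, and use Scheff\'e's lemma to upgrade pointwise convergence of densities to convergence in distribution.

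Let $p_S$ denote the density of $\x(K,M,S)$, so that the density of $S\cdot\x(K,M,S)$ is $\tilde p_S(y) = S^{-K} p_S(y/S)$ on the chamber $S > y_1 > \dots > y_K > 0$. Under $x_i = y_i/S$:
\begin{align*}
 \prod_{i<j}(x_i-x_j) &= S^{-\binom{K}{2}}\prod_{i<j}(y_i-y_j), \\
 \prod_i x_i^{\frac{M-K-1}{2}} &= S^{-K\,\frac{M-K-1}{2}}\prod_i y_i^{\frac{M-K-1}{2}}, \\
 \prod_i(1-x_i)^{\frac{S-K-M-1}{2}} &= \prod_i\left(1-\frac{y_i}{S}\right)^{\frac{S-K-M-1}{2}}\longrightarrow \prod_i\exp\!\left(-\frac{y_i}{2}\right),
\end{align*}
where the last limit uses $(S-K-M-1)/(2S)\to 1/2$ together with $(1-y/S)^S\to e^{-y}$. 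The three $S$-powers combined with the Jacobian $S^{-K}$ yield an overall $S$-dependent prefactor which, once combined with the normalization $Z_S$ of $p_S$ (the bracket in \eqref{eq_Jacobi_ev_def}), must converge to the Laguerre normalization.

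Rather than tracking $Z_S$ explicitly via the Selberg evaluation, I would invoke Scheff\'e's lemma. The unnormalized integrand is dominated, for $S$ large, by $\prod_{i<j}(y_i-y_j)\prod_i y_i^{(M-K-1)/2} e^{-y_i/4}$ (using $(1-y/S)^\beta \le e^{-\beta y/S}$ for $y\in(0,S)$, together with $\beta/S \ge 1/4$ for $S$ large, where $\beta = (S-K-M-1)/2$), which is integrable on $\{y_1>\dots>y_K>0\}$. Dominated convergence then shows that the integral of the unnormalized scaled density converges to a positive finite limit; dividing by it forces pointwise convergence of $\tilde p_S$ to the Laguerre density \eqref{eq_Laguerre_limit}, and Scheff\'e upgrades this to total variation convergence, hence to convergence in distribution.

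The main (and only) subtle point is confirming that the limiting density \eqref{eq_Laguerre_limit} is itself a bona fide probability density with finite positive normalization — this is classical and amounts to the Laguerre--Selberg integral. The algebra itself is immediate: the exponential factor $e^{-y_i/2}$ is produced entirely by $(1-y_i/S)^{(S-K-M-1)/2}$ via the fundamental limit, while the Vandermonde and the power factors are inherited from \eqref{eq_CCA_Jacobi_specialized} up to a pure power of $S$ that is absorbed into the normalization. An alternative route, avoiding Scheff\'e entirely, is to compute $Z_S$ from the Selberg evaluation and match the $S$-asymptotics of the resulting Gamma-function ratio to the Laguerre normalization via Stirling's formula $\Gamma(S/2 + c)/\Gamma(S/2)\sim (S/2)^c$.
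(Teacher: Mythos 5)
Your proof is correct and follows the same core substitution as the paper's one-line argument, namely $x_i = y_i/S$ together with $\lim_{S\to\infty}(1-y/S)^{S/2}=e^{-y/2}$. The Scheff\'e-plus-domination bookkeeping you add simply supplies the rigor (convergence of normalization constants and upgrade from pointwise to distributional convergence) that the paper leaves implicit.
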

\begin{remark}
 Using Lemma \ref{Lemma_Wishart} and repeating the proof of Theorem \ref{Theorem_two_Jacobis}, one can show that $\y(K,M)$ has the distribution of $K$ eigenvalues of the Wishart matrix $\Z \Z^\T$, where $\Z$ is $K\times M$ matrix of i.i.d.\ $\mathcal N(0,1)$.
\end{remark}
\begin{proof}
 We substitute $x_i=\frac{y_i}{S}$ into \eqref{eq_CCA_Jacobi_specialized} and take the limit as $S\to\infty$, using ${\lim_{S\to\infty}(1-\tfrac{y}{S})^{S/2}=\exp(-y/2)}$.
\end{proof}

Theorem \ref{Theorem_CCA_to_Laguerre} has two conceptual consequences. First, it shows that the squared sample canonical correlations converge as $S\to\infty$ to the true squared canonical correlations, which are all zeros in this case, as two data-sets are independent; the speed of convergence is $\frac{1}{S}$. Second, the theorem can be used to construct a \emph{statistical test} for the hypothesis ``two data-sets are independent'':

\begin{procedure} \label{Procedure_independence_test_1} Choose the confidence level\footnote{It is typical to choose $\alpha=0.9,\,\alpha=0.95$, or $\alpha=0.99$.} $\alpha\in(0,1)$. Let $q_\alpha$ be $\alpha$--quantile for the largest coordinate in $\y(K,M)$:
$$
 \mathrm{Prob}\bigl(\text{largest coordinate of }\y(K,M)<q_\alpha\bigr)=\alpha.
$$
Given two rectangular data-sets $\U$ and $\V$ of $K\times S$ and $M\times S$ sizes, compute their largest squared sample canonical correlations $\hat c_1^2$. If
$$
 S\hat c_1^2> q_\alpha,
$$
then \emph{reject} the hypothesis that $\U$ and $\V$ are independent.
\end{procedure}
Theorem \ref{Theorem_CCA_to_Laguerre} coupled with Theorem \ref{Theorem_not_depend_on_cov} implies that for $\U$ and $\V$ with i.i.d.\ Gaussian columns, the independence of $\U$ and $\V$ implies that $Sc_1^2>q_\alpha$ with probability close to $1-\alpha$ for large $S$. Hence, for $\alpha$ close to $1$, this is a very unlikely event, which is the reason for the rejection of the hypothesis. In fact, one can relax the Gaussianity assumption in Theorem \ref{Theorem_CCA_to_Laguerre} (see Problem set 3 in Section \ref{Section_exercises}), which implies the asymptotic validity of Procedure \ref{Procedure_independence_test_1} for non-Gaussian $\U$ and $\V$. On the other hand, a limitation of the procedure is that the assumption of i.i.d.\ columns can only be slightly relaxed --- uncorrelated columns with some mild technical conditions are permissible --- but cannot be entirely removed.

\subsection{Large $K$ and $M$} The next step is to investigate the case of large $K$ and large $M$. Figure \ref{Fig_Wachter_simulation} shows a histogram of all squared canonical correlations obtained from a single simulation for i.i.d.~$\mathcal{N}(0,1)$ matrix elements, $K=100$, $M=150$, $S=500$. The histogram is constructed by splitting the $[0,1]$ interval into bins, counting the squared correlations within each bin, and normalizing so that the area under the histogram equals $1$. While the histogram is random, the simulation reveals a notable feature: Even though the true canonical correlations are all zero, the sample correlations are far from being zero. We will quantify this observation in two theorems below, first by examining all canonical correlations together and then by focusing on the largest ones.

\begin{figure}[t]
\includegraphics[width=0.7\linewidth]{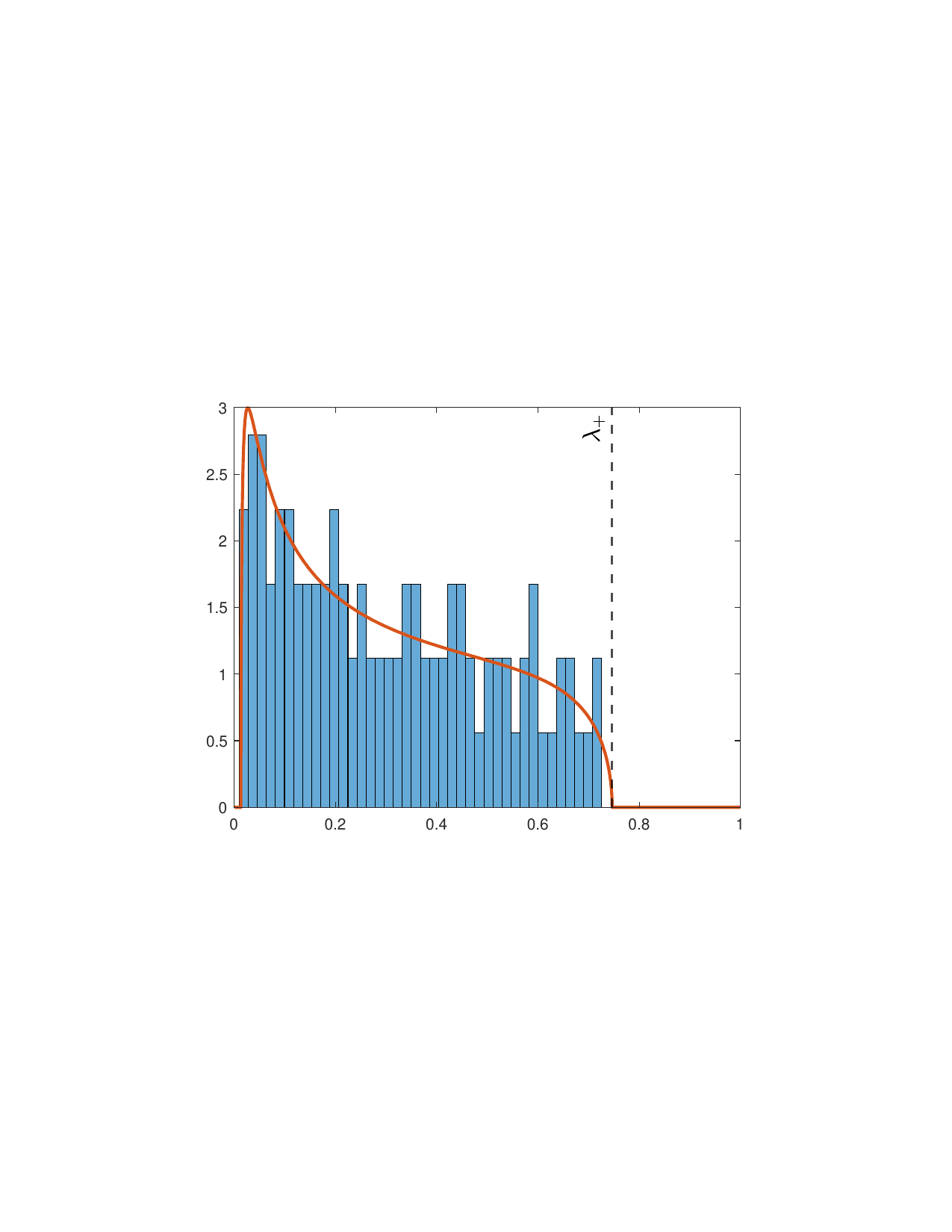 }
\caption{Histogram of simulated squared sample canonical correlations (blue columns). $K=100$, $M=150$, $S=500$. The density of the Wachter distribution is shown in orange.}
\label{Fig_Wachter_simulation}
\end{figure}

For the first theorem, we represent $K$ squared sample correlations $\hat c_i^2$ by an atomic probability measure, called the \emph{empirical measure}:
$$
 \mu_{K,M,S}=\frac{1}{K}\sum_{i=1}^K \delta_{\hat c_i^2}.
$$

The asymptotic density of $\mu_{K,M,S}$ is precisely what we see in orange in Figure \ref{Fig_Wachter_simulation}. It is typical to think of a measure through its pairings with test-functions $f$:
$$
 \mu_{K,M,S}[f]=\frac{1}{K} \sum_{i=1}^k f(\hat c_i^2).
$$
Note that since $\hat c_i^2$ are random, $\mu_{K,M,S}[f]$ is also a random variable. In particular, when $f$ is an indicator function of the interval $(-\infty,t]$, we get the \emph{empirical distribution function}:
$$
 \mu_{K,M,S}[ \mathbf 1_{(-\infty,t]}]=\frac{1}{K} \sum_{i=1}^K \mathbf 1_{c_i^2\le t} = \frac{\#\{c_i^2 \text{ which are smaller or equal than } t\}}{K}.
$$

Let us introduce the limiting object from Figure \ref{Fig_Wachter_simulation}, which is called \emph{the Wachter distribution}, paying tribute to \cite{wachter1980limiting}.
\begin{definition} \label{Definition_Wachter}
For two real parameters  $\tau_K\ge \tau_M>1$ with $\tau_K^{-1}+\tau_M^{-1}<1$, the \emph{Wachter distribution} $\omega_{\tau_K,\tau_M}$ is defined through its density
\begin{equation}
\label{eq_Wachter_density}
 \omega_{\tau_K,\tau_M}(x)\, \dd x=\frac{\tau_K}{2\pi } \frac{\sqrt{(x-\lambda_-)(\lambda_+-x)}}{x (1-x)} \mathbf 1_{[\lambda_-,\lambda_+]}\, \dd x,
\end{equation}
where the support $[\lambda_-,\lambda_+]$ of the measure is defined via
\begin{equation}
\label{eq_lambda_pm_def}
\lambda_\pm=\left(\sqrt{\tau_M^{-1}(1-\tau_K^{-1})}\pm \sqrt{\tau_K^{-1}(1-\tau_M^{-1})}  \right)^2.
\end{equation}
\end{definition}
One can check directly that $0<\lambda_-<\lambda_+<1$ for every  $\tau_K>\tau_M>1$ with $\tau_K^{-1}+\tau_M^{-1}<1$ and that \eqref{eq_Wachter_density} is a probability density, i.e., it integrates to $1$. For any integrable function $f(x)$ we denote
$$
  \omega_{\tau_K,\tau_M}[f]=\int_0^1 f(x) \omega_{\tau_K,\tau_M}(x) \dd x.
$$

\begin{theorem}\label{Theorem_CCA_LLN} Let $\U$ and $\V$ be two independent $K\times S$ and $M\times S$ random matrices, such that their $S$ columns are i.i.d.\ non-degenerate mean $0$ Gaussian vectors of dimensions $K$ and $M$, respectively, with arbitrary non-degenerate covariances.
 Let $\mu_{K,M,S}$ be the empirical measure of the correspinding squared sample canonical correlations. Suppose that $K,M,S\to \infty$, in such a way that $\frac{S}{K}\to\tau_K$, $\frac{S}{M}\to\tau_M$ with $\tau_K\ge \tau_M>1$ and $\tau_K^{-1}+\tau_M^{-1}<1$. Then
 \begin{equation}
 \label{eq_CCA_LLN}
  \lim_{K,M,S\to\infty}  \mu_{K,M,S} = \omega_{\tau_K,\tau_M}, \qquad \text{weakly, in probability,}
 \end{equation}
 which means that for each continuous function $f(x)$, we have
 \begin{equation}
  \label{eq_CCA_LLN_f}
   \lim_{K,M,S\to\infty}  \mu_{K,M,S}[f]=\omega_{\tau_K,\tau_M}[f], \qquad \text{in probability.}
 \end{equation}
\end{theorem}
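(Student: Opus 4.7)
By Theorem \ref{Theorem_CCA_Jacobi}, the vector $(\hat c_1^2,\ldots,\hat c_K^2)$ has the explicit real Jacobi density \eqref{eq_CCA_Jacobi_specialized}. The plan is to interpret this density as a Coulomb gas on $[0,1]$, show via a large-deviation argument that the empirical measure $\mu_{K,M,S}$ concentrates on the unique minimizer of a convex energy functional, and finally verify that this minimizer coincides with the Wachter density $\omega_{\tau_K,\tau_M}$.

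Rewriting \eqref{eq_CCA_Jacobi_specialized} as $\exp(-H(x_1,\ldots,x_K))$ with
$$H = -\sum_{i<j}\log|x_i-x_j| - \alpha_K\sum_{i=1}^K \log x_i - \beta_K\sum_{i=1}^K \log(1-x_i),$$
where $\alpha_K = (M-K-1)/2$, $\beta_K = (S-K-M-1)/2$, and setting $\mu = \mu_{K,M,S} = K^{-1}\sum_i\delta_{x_i}$, one has the heuristic $H \approx K^2\mathcal E[\mu]$ under $S/K\to\tau_K$, $S/M\to\tau_M$, where
$$\mathcal E[\mu] := -\tfrac12 \iint \log|x-y|\,\dd\mu(x)\dd\mu(y) - \tfrac{a}{2}\int \log x\,\dd\mu(x) - \tfrac{b}{2}\int \log(1-x)\,\dd\mu(x),$$
with $a = \tau_K/\tau_M - 1 \ge 0$ and $b = \tau_K(1-\tau_M^{-1}) - 1$. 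The strict positivity $b>0$ is equivalent to $\tau_K^{-1}+\tau_M^{-1}<1$.

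The functional $\mathcal E$ is strictly convex and lower semicontinuous on Borel probability measures on $[0,1]$ with finite logarithmic energy, and its external-potential singularities at $0$ and $1$ are integrable since $a,b\ge 0$; hence it has a unique minimizer $\mu_\star$. A standard Coulomb gas large deviation argument (in the spirit of Ben Arous and Guionnet, and Hiai and Petz) applied to \eqref{eq_CCA_Jacobi_specialized} then yields an LDP for $\mu_{K,M,S}$ at speed $K^2$ with good rate function $\mathcal E - \inf\mathcal E$, which in turn gives weak convergence in probability $\mu_{K,M,S} \to \mu_\star$. This concentration step is the technical heart of the proof: one needs quantitative control on the partition function, ensuring that configurations with particles escaping to $0$ or $1$, or pairs coalescing, contribute negligibly; the hypotheses $\tau_M>1$ and $\tau_K^{-1}+\tau_M^{-1}<1$ are precisely what make this work.

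It remains to identify $\mu_\star$ via its Euler-Lagrange equation, which on the interior of $\mathrm{supp}(\mu_\star)$ reads
$$\mathrm{p.v.}\int \frac{\dd\mu_\star(y)}{x-y} = -\frac{a}{2x} + \frac{b}{2(1-x)}.$$
Introducing the Stieltjes transform $g(z) = \int (z-y)^{-1}\dd\mu_\star(y)$ and making the single-cut ansatz $\mathrm{supp}(\mu_\star) = [\lambda_-,\lambda_+]\subset(0,1)$, the above boundary-value problem reduces to a quadratic equation for $g$; the unique solution analytic off the cut and behaving like $1/z$ at infinity determines the endpoints as in \eqref{eq_lambda_pm_def} and produces density $\pi^{-1}\mathrm{Im}\,g(x-i0) = \omega_{\tau_K,\tau_M}(x)$. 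The appearance of the Wachter distribution here is consistent with (and can alternatively be derived from) Lemma \ref{Lemma_MANOVA}, which realizes our ensemble as a MANOVA-type combination of two independent Wishart matrices whose spectra both converge to Marchenko-Pastur laws under the same scaling.
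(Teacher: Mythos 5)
Your proposal is correct and follows the same two-stage structure as the paper's proof: first a variational/large-deviation argument that reduces the problem to minimizing a log-gas energy functional $\mathcal{E}$ (the paper's $\tilde{\mathfrak{E}}$ up to a sign), then identification of the minimizer as the Wachter law via its Stieltjes transform. The paper's proof of this theorem stretches across Lemmas \ref{Lemma_variational_form}--\ref{Lemma_LLN_var} and Theorems \ref{Theorem_DS_equation}--\ref{Theorem_Jacobi_LLN_final}, and your constants $a=\tau_K/\tau_M-1$, $b=\tau_K(1-\tau_M^{-1})-1$ do match the paper's coefficients $\frac{\tau_M^{-1}-\tau_K^{-1}}{\tau_K^{-1}}$ and $\frac{1-\tau_K^{-1}-\tau_M^{-1}}{\tau_K^{-1}}$.

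The one place where you genuinely diverge is the identification step. You derive the quadratic equation for the Stieltjes transform from the Euler--Lagrange (singular-integral) equation of the limiting functional $\mathcal{E}$, together with a single-cut ansatz --- this is the classical logarithmic-potential-theory route. The paper instead derives a Dyson--Schwinger (loop) equation for the \emph{prelimit} Jacobi ensemble $\mathcal J(K,p,q)$ by integration by parts (Theorem \ref{Theorem_DS_equation}), takes its $K,M,S\to\infty$ limit to obtain an equation for the limiting Stieltjes transform, and only then solves the quadratic. The paper explicitly notes that your route is available and standard (citing \cite{saff2013logarithmic}), but deliberately avoids it. The trade-off: the Euler--Lagrange route is shorter once the variational characterization is in hand, but requires justifying the single-cut ansatz and that the support stays away from $0$ and $1$; the paper's loop-equation route produces an exact finite-$K$ identity whose limit gives the desired equation ``for free'' and sidesteps the ansatz, at the cost of an extra integration-by-parts computation. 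Both correctly yield \eqref{eq_Wachter_density_Stieltjes_equation}, \eqref{eq_Wachter_Stieltjes}, and, by Stieltjes inversion, the density in Definition \ref{Definition_Wachter}. One small quibble: you say the log singularities at $0$ and $1$ are ``integrable since $a,b\ge 0$'' --- integrability of $\log x$ near $0$ holds for any sign of $a$; the relevant consequence of $a\ge 0$, $b>0$ is rather that the external potential confines the gas away from the endpoints, which is what makes the functional bounded below and the minimizer's support a compact subinterval of $(0,1)$.
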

\begin{remark}
 How large is the difference $\mu_{K,M,S}[f]-\omega_{\tau_K,\tau_M}[f]$? The answer depends on the smoothness of $f$. If $f$ is smooth enough (any analytic function is fine, and any function differentiable sufficient number of times is also fine), then the distance is $\frac{1}{S}$ with a Gaussian limit:
 \begin{equation}
 \label{eq_CLT_smooth}
 S\cdot \bigl(\mu_{K,M,S}[f]-\omega_{\tau_K,\tau_M}[f]\bigr)\xrightarrow[K,M,S\to\infty]{d} \xi_f, \qquad \xi_f\sim \mathcal N(m(f),\sigma^2(f)).
 \end{equation}
 The expressions for the mean $m(f)$ and covariance $\sigma^2(f)$ are explicit, see \cite{dumitriu2012global,borodin2015general}. On the other hand, suppose that $f$ is an indicator function, say $f=\mathbf 1_{[0,y]}$ for some $0\le y\le 1$. Then the scaling gets an additional logarithmic factor, see \citet{killip2008gaussian}:
 \begin{equation}
 \label{eq_CLT_indicator}
  \frac{S}{\sqrt{\ln(S)}} \cdot \biggl(\mu_{K,M,S}[\mathbf 1_{[0,y]}]-\omega_{\tau_K,\tau_M}[\mathbf 1_{[0,y]}]\biggr)\xrightarrow[K,M,S\to\infty]{d} \eta_y, \qquad \eta_y\sim \mathcal N(0,\sigma^2(y)).
 \end{equation}
 Interestingly, while in \eqref{eq_CLT_smooth}, the Gaussian $\xi_f$ are correlated as $f$ varies, in \eqref{eq_CLT_indicator} the random variables $\eta_y$ are independent over $y$.
\end{remark}

There are many approaches to proving Theorem \ref{Theorem_CCA_LLN} and analyzing fluctuations \eqref{eq_CLT_smooth}, \eqref{eq_CLT_indicator}:
\begin{itemize}
 \item The oldest approach reduces the problem to the study of the spectrum of a product of a Wishart matrix and a deterministic matrix, see \cite{wachter1980limiting}, \cite{bai2010spectral}, also \cite{bouchaud2007large}.

 \item One can produce a tridiagonal matrix model for the canonical correlations and pass to the limit using it, see \cite{killip2004matrix}, \cite{dumitriu2012global}.

 \item One can develop a version of the Fourier transform for the Jacobi ensemble and then extract asymptotic information from it using the Macdonald difference operators and their limits, see \cite{borodin2015general}, \cite{gorin2018interlacing}.

  \item Another classical approach develops formulas for marginals of the distribution of canonical correlations via Pfaffians involving Jacobi orthogonal polynomials, see \cite{Meh,Forrester10}.

 \item One can prove the existence of the limit in \eqref{eq_CCA_LLN} using the variational problem for the empirical measure, and then identify the limit and fluctuations around it, using the Dyson--Schwinger or loop equations, see \cite{guionnet2019asymptotics} for an overview.
\end{itemize}
  We sketch a proof via the last method in the next section. Our approach will crucially rely on the Gaussianity assumption, but other tools allow one to extend Theorem \ref{Theorem_CCA_LLN} to non-Gaussian matrix elements of $\U$ and $\V$.

\smallskip

  Note that while Theorem \ref{Theorem_CCA_LLN} describes all canonical correlations together, it does not say anything about their individual limits: for instance, the limits \eqref{eq_CCA_LLN}, \eqref{eq_CCA_LLN_f} will not change if we shift $\hat{c}_1^2$ by an arbitrary number. Hence, we complement Theorem \ref{Theorem_CCA_LLN} with the following theorem describing the largest and smallest canonical correlations.\footnote{While the correlations in the middle (or ``in the bulk'') are also well-understood, we will not cover them in these lectures, as they are less relevant for the statistical applications we focus on.} We define:
  \begin{equation}
    c_\pm=\frac{\tau_K}{2} \frac{\sqrt{\lambda_+-\lambda_-}}{\lambda_\pm (1-\lambda_\pm)},
  \end{equation}
   and note that
  $$
   \omega_{\tau_K,\tau_M}(x)\approx \frac{c_\pm}{\pi} \sqrt{|x-\lambda_{\pm}|}, \text{ as } x\to\lambda_{\pm}\quad \text{ inside }\quad  [\lambda_-,\lambda_+].
  $$
 \begin{theorem}\label{Theorem_CCA_extreme}
 Let $\U$ and $\V$ be two independent $K\times S$ and $M\times S$ random matrices, such that their $S$ columns are i.i.d.\ non-degenerate mean $0$ Gaussian vectors of dimensions $K$ and $M$, respectively, with arbitrary non-degenerate covariances. Let $\hat c_1^2\ge \hat c_2^2\ge \dots \ge \hat c_K^2$ be the corresponding squared sample canonical correlations. Suppose that $K,M,S\to \infty$, in such a way that $\frac{S}{K}\to\tau_K$, $\frac{S}{M}\to\tau_M$ with $\tau_K>\tau_M>1$ and $\tau_K^{-1}+\tau_M^{-1}<1$. Then, in the sense of convergence of finite-dimensional distributions:
 \begin{equation}
      \label{eq_Jacobi_up_edge}
        \lim_{K,M,S\to\infty} \left\{ K^{2/3} c_+^{2/3} \left(\hat c_i^2- \lambda_+\right)  \right\}_{i=1}^{N}= \{\aa_i\}_{i=1}^{\infty}
 \end{equation}
  and\footnote{The limiting processes $\{\aa_i\}_{i=1}^{\infty}$ arising for the largest and smallest eigenvalues are independent. If $\tau_K=\tau_M$, then \eqref{eq_Jacobi_up_edge} is still valid, but $\lambda_-=0$ and \eqref{eq_Jacobi_low_edge} changes.}
    \begin{equation}
    \label{eq_Jacobi_low_edge}
        \lim_{K,M,S\to\infty} \left\{ K^{2/3} c_-^{2/3} \left(\lambda_--\hat c_{K+1-i}^2\right)  \right\}_{i=1}^{N}= \{\aa_i\}_{i=1}^{\infty},
    \end{equation}
    where $\aa_1>\aa_2>\aa_3>\dots$ are coordinates of the Airy$_1$ random point process.
\end{theorem}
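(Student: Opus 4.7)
The plan is to reduce the edge asymptotics of the Jacobi density \eqref{eq_CCA_Jacobi_specialized} to the analogous edge statement for an F-matrix (ratio of two independent Wishart matrices) and then invoke the Airy$_1$ edge theorem for this classical random matrix model. By Lemma \ref{Lemma_MANOVA}, the squared sample canonical correlations are eigenvalues of
$$
 \X=(\Z\Z^\T+\Y \Y^\T)^{-1/2} \Z \Z^\T (\Z\Z^\T + \Y \Y^\T)^{-1/2},
$$
where $\Z$ is $K\times M$ and $\Y$ is $K\times(S-M)$ with i.i.d.\ $\mathcal N(0,1)$ entries. If $\mu\in(0,1)$ is an eigenvalue of $\X$, then $\nu=\mu/(1-\mu)$ is an eigenvalue of the Fisher matrix $F=\Z\Z^\T(\Y\Y^\T)^{-1}$. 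The map $\phi(\mu)=\mu/(1-\mu)$ is a smooth, strictly increasing bijection of $(0,1)$ onto $(0,\infty)$, so its linearization at $\mu=\lambda_+$ converts the scaling in \eqref{eq_Jacobi_up_edge} into a scaling near the upper edge of $F$ up to an explicit multiplicative constant.

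First I would track constants through this change of variables: compute $\phi'(\lambda_+)=1/(1-\lambda_+)^2$, identify $\phi(\lambda_+)$ as the upper edge of the limiting empirical spectral measure of $F$ (the push-forward of the Wachter law $\omega_{\tau_K,\tau_M}$ under $\phi$), and verify that the scaling $K^{2/3} c_+^{2/3}$ is exactly the one required to match the $\beta=1$ F-matrix edge scaling. Second, I would invoke the Airy$_1$ edge theorem: the top eigenvalues of $F$, appropriately centered and scaled, converge in finite-dimensional distribution to $\{\aa_i\}_{i=1}^\infty$. This is the multi-particle upgrade of Johnstone's Tracy--Widom$_1$ theorem for the largest root and can be established either (i) by steepest-descent or Riemann--Hilbert analysis of the Pfaffian correlation kernel of the Jacobi $\beta=1$ ensemble, expressed through Jacobi orthogonal polynomials; or (ii) by passing from a tridiagonal matrix model for the Jacobi ensemble to the stochastic Airy operator, whose spectrum is by definition the Airy$_1$ process. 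Both routes appear in the literature listed in the bullet points preceding the theorem. Pulling the result back through $\phi$ yields \eqref{eq_Jacobi_up_edge}.

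For the lower edge I would exploit the reflection symmetry $x_i\mapsto 1-x_i$, which sends the Jacobi density \eqref{eq_CCA_Jacobi_specialized} with parameters $(M, S-M)$ to the same density with $(M, S-M)$ swapped (equivalently, with $p\leftrightarrow q$), and corresponds in the MANOVA realization to interchanging $\Z$ and $\Y$. A direct algebraic computation from \eqref{eq_lambda_pm_def} shows that under this swap the new upper edge equals $1-\lambda_-$, the new lower edge equals $1-\lambda_+$, and the analogue of $c_+$ for the swapped parameters equals $c_-$. Therefore \eqref{eq_Jacobi_low_edge} is precisely \eqref{eq_Jacobi_up_edge} applied to the reflected ensemble. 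The asymptotic independence of the two edge processes in the regime $\tau_K>\tau_M$ (which guarantees $\lambda_->0$) follows from the exponential decay of correlations between spectral regions at macroscopic distance, a standard feature of the Pfaffian and tridiagonal descriptions.

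The main technical obstacle is strengthening convergence of a single extreme eigenvalue (Tracy--Widom$_1$) to joint convergence of the top $N$ eigenvalues as a point process. This requires either uniform asymptotics of the full $K$-particle Pfaffian kernel of the Jacobi $\beta=1$ ensemble on $K^{-2/3}$-neighborhoods of $\lambda_\pm$, which demands careful Riemann--Hilbert analysis of Jacobi polynomials with parameters growing linearly in $K$; or, in the tridiagonal approach, precise control of the convergence of the rescaled edge block of the tridiagonal matrix to the coefficients of the stochastic Airy operator, together with a stability argument for its point spectrum. Either route is delicate, and throughout one must carefully track the scaling constants $c_\pm$ and the $K^{2/3}$ exponent to match the form stated in \eqref{eq_Jacobi_up_edge}--\eqref{eq_Jacobi_low_edge}.
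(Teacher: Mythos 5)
The paper does not actually prove Theorem \ref{Theorem_CCA_extreme}; it explicitly states ``We do not provide a proof of Theorem \ref{Theorem_CCA_extreme} here'' and then lists three possible routes (Pfaffian kernels with Jacobi orthogonal polynomials as in \cite{Johnstone_Jacobi}; the tridiagonal model of \cite{killip2004matrix} combined with \cite{ramirez2011beta} or \cite{gorin2018stochastic}; universality/comparison arguments). Your sketch sits squarely inside that landscape: the second paragraph is essentially a restatement of the first two bullets, and, like the paper, you outsource the hard edge-universality analysis to the cited literature rather than carrying it out.

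Two remarks on the content of your sketch. First, the detour through the Fisher matrix $F=\Z\Z^\T(\Y\Y^\T)^{-1}$ via the map $\phi(\mu)=\mu/(1-\mu)$ is correct but unnecessary: the sources you would invoke (\cite{Johnstone_Jacobi}, the tridiagonal/stochastic-Airy-operator route) prove the Airy$_1$ edge limit directly for the Jacobi $\beta=1$ ensemble at $\lambda_\pm$, so the change of variables only adds one more constant to track without buying anything. Second, your reflection argument for the lower edge is correct and worth keeping: under $x_i\mapsto 1-x_i$ the density \eqref{eq_CCA_Jacobi_specialized} has its exponents swapped, corresponding to $M\leftrightarrow S-M$ (equivalently $\tau_M^{-1}\leftrightarrow 1-\tau_M^{-1}$ with $\tau_K$ fixed); writing $a=\sqrt{\tau_M^{-1}(1-\tau_K^{-1})}$, $b=\sqrt{\tau_K^{-1}(1-\tau_M^{-1})}$ one checks that the swapped edges become $1-\lambda_\mp$ and the swapped constant $c_+'$ equals $c_-$, so \eqref{eq_Jacobi_low_edge} really is \eqref{eq_Jacobi_up_edge} for the reflected ensemble. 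The one point where you are somewhat optimistic is the claim of asymptotic independence of the two edge processes ``from exponential decay of correlations between spectral regions at macroscopic distance'' --- this is believable but is itself a nontrivial statement that would need a reference or a separate argument (e.g., via the determinantal/Pfaffian structure or via the tridiagonal model); it is not an automatic corollary of the single-edge Airy$_1$ convergence.
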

The limiting point process $\{\aa_i\}_{i=1}^{\infty}$ frequently shows up in random matrix theory: largest (or smallest) eigenvalues in many ensembles of real symmetric random matrices converge to it. The law of the largest coordinate $\aa_1$ is called the \emph{Tracy--Widom distribution} $F_1$, and it is even more widespread, appearing also in the limit theorems for interacting particles systems and stochastic PDES, such as the celebrated Kardar-Parisi-Zhang (KPZ) equation. All existing formulas for the finite-dimensional distributions of  $\{\aa_i\}_{i=1}^{\infty}$ are quite complicated. Yet, the distribution can be easily sampled and tabulated, see \cite{bornemann2009numerical,vignette_largevars}.
\smallskip

We do not provide a proof of Theorem \ref{Theorem_CCA_extreme} here. Possible approaches include:
\begin{itemize}
\item The first proof in \cite{Johnstone_Jacobi} combines the technology of Pfaffian point processes with the asymptotic analysis of Jacobi orthogonal polynomials.
\item Tridiagonal matrix model of \cite{killip2004matrix} can be combined with asymptotic analysis for the latter of \cite{ramirez2011beta} or \cite{gorin2018stochastic}.
\item The universality considerations identify the limit in \eqref{eq_Jacobi_up_edge}, \eqref{eq_Jacobi_low_edge} by comparing it with edge limits for other ensembles of random matrices, see e.g., \cite{deift2009random}, \cite{bourgade2014edge}. Comparison approaches also allow to relax the Gaussianity assumption for $\U$ and $\V$, see \cite{HanPanZhang_2016}, \cite{FanYang}.
\end{itemize}

\bigskip

Theorems \ref{Theorem_CCA_LLN} and \ref{Theorem_CCA_extreme} have two conceptual consequences.

First, they serve as a warning to practitioners applying CCA to real-world data sets: In high-dimensional setting the presence of bounded away from $0$ squared sample canonical correlations does \emph{not} imply that two data sets are dependent. Care is needed when interpreting CCA results to avoid drawing spurious conclusions.

Second, in parallel with Procedure \ref{Procedure_independence_test_1}, Theorems \ref{Theorem_CCA_LLN} and \ref{Theorem_CCA_extreme} can serve as a foundation for constructing statistical tests for the hypothesis of independence\footnote{For recent work on testing joint independence of more than two datasets see \cite{bao2024spectral}.}  of $\U$ and $\V$: one can either use all canonical correlations, relying on \eqref{eq_CLT_smooth}, \eqref{eq_CLT_indicator} or use the largest canonical correlations relying on \eqref{eq_Jacobi_up_edge}.

\subsection{Existence of the limit of empirical measure in CCA}

We begin by outlining the main ideas behind one particular proof of Theorem \ref{Theorem_CCA_LLN}. Many technical details in this approach can be reconstructed from \cite{guionnet2019asymptotics} and references therein.

The first step is to develop the \emph{variational principle}, which is based on the following rewriting of the Jacobi density $\mathcal J(K,p,q)$:

\begin{lemma} \label{Lemma_variational_form}
 Take a vector $\x=(x_1,\dots,x_K)$ and let $\mu^{\x}=\frac{1}{K} \sum_{i=1}^K \delta_{x_i}$ be its empirical measure. The Jacobi density \eqref{eq_Jacobi_ev_def} can be rewriten, up to a constant prefactor, as\footnote{$\mathfrak E$ in the notation stays for energy.}
 \begin{equation}
 \label{eq_Var_principle}
  \exp\left( K^2 \cdot \mathfrak E(\mu^{\x}) \right),\qquad\text{where}
 \end{equation}
 \begin{equation}
  \mathfrak E(\mu)=\frac{1}{2}\iint_{x\ne y} \ln|x-y|\, \mu(\dd x) \mu(\dd y) + \int_0^1\left[ \frac{p-1}{K} \ln(x)+ \frac{q-1}{K}\ln(1-x)\right] \mu(\dd x).
 \end{equation}
\end{lemma}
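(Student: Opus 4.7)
The plan is to verify the identity \eqref{eq_Var_principle} by direct comparison: take the logarithm of the Jacobi density in \eqref{eq_Jacobi_ev_def}, write the result as a sum over the atoms $x_1,\dots,x_K$, and then recognize those sums as integrals against the empirical measure $\mu^{\x}=\tfrac{1}{K}\sum_{i=1}^K \delta_{x_i}$. Since we are allowed to ignore the explicit Selberg prefactor $[\cdot]$ from \eqref{eq_Jacobi_ev_def} (it is absorbed into the ``constant prefactor'' mentioned in the lemma), we need only match the exponential part.

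Concretely, the $\x$-dependent factor in \eqref{eq_Jacobi_ev_def} equals
\begin{equation*}
 \exp\!\Biggl(\,\sum_{1\le i<j\le K}\ln|x_i-x_j| \;+\; (p-1)\sum_{i=1}^{K}\ln(x_i) \;+\; (q-1)\sum_{i=1}^{K}\ln(1-x_i)\Biggr).
\end{equation*}
The first sum, when rewritten using $\mu^{\x}$, produces
\begin{equation*}
 \sum_{i<j}\ln|x_i-x_j| \;=\; \tfrac{1}{2}\sum_{i\ne j}\ln|x_i-x_j| \;=\; \tfrac{K^2}{2}\iint_{x\ne y}\ln|x-y|\,\mu^{\x}(\dd x)\mu^{\x}(\dd y),
\end{equation*}
where the exclusion of the diagonal $x=y$ matches the restriction $i\ne j$ on the left-hand side. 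The single sums rewrite similarly as $K\int_0^1 \ln(x)\,\mu^{\x}(\dd x)$ and $K\int_0^1 \ln(1-x)\,\mu^{\x}(\dd x)$. Collecting the $K^2$ prefactors and pulling them out, the exponent becomes exactly $K^2\cdot \mathfrak E(\mu^{\x})$ with $\mathfrak E$ as defined in the lemma.

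There is essentially no obstacle here beyond bookkeeping: the logarithmic pair interaction naturally generates the double integral with a factor of $1/2$, while the potential terms $(p-1)\ln x$ and $(q-1)\ln(1-x)$ generate linear statistics that must be rescaled by $\tfrac{1}{K}$ so that together with the outer $K^2$ they produce the prefactors $\tfrac{p-1}{K}$ and $\tfrac{q-1}{K}$ shown in $\mathfrak E(\mu)$. The only subtlety worth flagging is the diagonal exclusion in the double integral: because $\mu^{\x}$ is atomic, $\iint_{x=y}\ln|x-y|\,\mu^{\x}(\dd x)\mu^{\x}(\dd y)$ is a divergent $+\infty\cdot 0$ type expression, so one must keep the $x\ne y$ restriction in the definition of $\mathfrak E$ for the rewriting to make literal sense on atomic measures. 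With that convention in place, comparing exponents gives \eqref{eq_Var_principle} immediately.
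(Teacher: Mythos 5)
Your proof is correct and follows essentially the same route as the paper, which dispatches the lemma in a single line as a direct transformation based on the identity $\int f(x)\,\mu^{\x}(\dd x)=\tfrac{1}{K}\sum_{i=1}^K f(x_i)$; you have simply spelled out the bookkeeping (the factor of $\tfrac12$ from symmetrizing the Vandermonde sum, the $\tfrac{1}{K}$ rescaling of the potential terms, and the diagonal exclusion needed to make the double integral finite for an atomic measure).
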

\begin{proof} Direct transformation \eqref{eq_Jacobi_ev_def} based on $\int f(x) \mu^\x(\dd x)=\frac{1}{K}\sum_{i=1}^K f(x_i)$. \end{proof}

\begin{lemma} \label{Lemma_LLN_var}
 Suppose that $K,M,S\to \infty$, in such a way that $\frac{S}{K}\to\tau_K$, $\frac{S}{M}\to\tau_M$ with $\tau_K\ge \tau_M>1$ and $\tau_K^{-1}+\tau_M^{-1}<1$. Then
 \begin{equation}
 \label{eq_CCA_LLN_2}
  \lim_{K,M,S\to\infty}  \mu_{K,M,S} = \tilde \omega_{\tau_K,\tau_M}, \qquad \text{weakly, in probability,}
 \end{equation}
 where the probability measure $\tilde \omega_{\tau_K,\tau_M}$ is the (unique) maximizer of the functional
 \begin{multline}
  \tilde{\mathfrak E}(\mu)
  =\frac{1}{2}\iint_{x\ne y} \ln|x-y|\, \mu(\dd x) \mu(\dd y)
  \\+ \frac{1}{2}\int_0^1\left[ \frac{\tau_M^{-1}-\tau_K^{-1}}{\tau_K^{-1}} \ln(x)+ \frac{1-\tau_K^{-1}-\tau_M^{-1}}{\tau_K^{-1}}\ln(1-x)\right] \mu(\dd x).
 \end{multline}
 over the space of all probability measures $\mu$ on the interval $[0,1]$.
\end{lemma}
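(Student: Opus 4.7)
The plan is to invoke Lemma \ref{Lemma_variational_form} to view the Jacobi distribution as a log-gas with external field, and then follow the classical large-deviation scheme for $\beta$-ensembles (see \cite{guionnet2019asymptotics} and references therein).

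First, I would establish existence and uniqueness of the maximizer $\tilde\omega = \tilde\omega_{\tau_K,\tau_M}$. The space of Borel probability measures on $[0,1]$ is weakly compact. The logarithmic self-energy $\iint \ln|x-y|\,\mu(dx)\mu(dy)$ is strictly concave in $\mu$ (a classical fact from potential theory), and the two linear external-field terms preserve strict concavity. Under $\tau_K\ge \tau_M>1$ and $\tau_K^{-1}+\tau_M^{-1}<1$, both external-field coefficients $\frac{\tau_M^{-1}-\tau_K^{-1}}{\tau_K^{-1}}$ and $\frac{1-\tau_K^{-1}-\tau_M^{-1}}{\tau_K^{-1}}$ are nonnegative, so the external potential is bounded above on $(0,1)$; combined with an upper-semi-continuity argument this gives existence of a maximizer, while strict concavity gives uniqueness.

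Second, Lemma \ref{Lemma_variational_form} writes the joint density of $\x=(\hat c_1^2,\ldots,\hat c_K^2)$ on the ordered simplex as $Z_K^{-1}\exp\bigl(K^2\mathfrak{E}(\mu^{\x})\bigr)$. Under our scaling, with $p=\frac{M-K+1}{2}$ and $q=\frac{S-K-M+1}{2}$, the coefficients of $\mathfrak{E}$ satisfy $\frac{p-1}{K}\to\frac{1}{2}\cdot\frac{\tau_M^{-1}-\tau_K^{-1}}{\tau_K^{-1}}$ and $\frac{q-1}{K}\to\frac{1}{2}\cdot\frac{1-\tau_K^{-1}-\tau_M^{-1}}{\tau_K^{-1}}$, exactly matching the external field of $\tilde{\mathfrak{E}}$. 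Because the $K^2$ prefactor in the exponent dominates the $K$-dimensional ambient volume, one expects the standard log-gas large-deviation bound
\begin{equation*}
\limsup_{K\to\infty}K^{-2}\log\mathbb{P}(\mu_{K,M,S}\in F)\le -\inf_{\mu\in F}\bigl(\tilde{\mathfrak{E}}(\tilde\omega)-\tilde{\mathfrak{E}}(\mu)\bigr)
\end{equation*}
for weakly closed sets $F$ of probability measures on $[0,1]$. The matching lower bound for $Z_K$ is produced by evaluating $\mathfrak{E}$ on a deterministic reference configuration, e.g.\ with $x_i^{(K)}$ placed at the $(i/K)$-quantile of $\tilde\omega$, and checking by hand that $\mathfrak{E}(\mu^{\x^{(K)}})\to\tilde{\mathfrak{E}}(\tilde\omega)$.

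The main obstacle is the mismatch between the discrete energy (integrated over $x\ne y$) and the continuum energy (integrated over the full square): the $\ln|x-y|$ singularity makes $\mathfrak{E}(\mu^{\x})$ \emph{not} weakly continuous in $\mu^{\x}$, so regularization is required in both directions. The standard remedy is a two-parameter argument -- cut off the kernel at scale $\varepsilon$, pass $K\to\infty$, then $\varepsilon\to 0$. For the quantile reference configuration the missed near-diagonal contribution is of order $K^{-1}\log K$ per particle, hence vanishes in the $K^2$-normalized energy; for the upper bound one combines upper semi-continuity of the truncated energy with the (known) boundedness of the density of $\tilde\omega$ away from $\{0,1\}$ to control the external-field integrals. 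Once the large-deviation bound is established, strict concavity of $\tilde{\mathfrak{E}}$ forces $\mathbb{P}(\mu_{K,M,S}\notin U)\le \exp(-cK^2)$ for every weak neighborhood $U$ of $\tilde\omega$ and some $c>0$, which is considerably stronger than the claimed convergence in probability.
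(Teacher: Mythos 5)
Your proposal follows essentially the same route as the paper's (explicitly labeled) sketch: invoke Lemma \ref{Lemma_variational_form} to write the Jacobi density in log-gas form, establish uniqueness of the maximizer of $\tilde{\mathfrak E}$ via strict concavity of the logarithmic energy, match the limiting external-field coefficients, and conclude via a large-deviation bound that configurations away from $\tilde\omega_{\tau_K,\tau_M}$ have probability $\exp(-cK^2)$. The paper stops at this level of detail and defers the regularization of the $\ln|x-y|$ singularity and the lower-bound construction to \cite{Benarous1997large} and \cite{guionnet2019asymptotics}; you have simply filled in exactly those deferred steps.
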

\begin{proof}[Sketch of the proof] By Theorem \ref{Theorem_CCA_Jacobi}, $\mu_{K,M,S}$ is the empirical measure for the eigenvalues of $\mathcal J(K; \frac{M-K+1}{2},\frac{S-K-M+1}{2})$, i.e., it fits into the setting of Lemma \ref{Lemma_variational_form} with $p=\frac{M-K+1}{2}$ and $q=\frac{S-K-M+1}{2}$.

The first step is to prove that $\tilde{\mathfrak E}(\mu)$ has a unique maximizer --- this is based on the convexity ideas, as we deal with a quadratic functional. Next, we notice that for large $K,M,S$, the functionals $\tilde{\mathfrak E}(\mu)$ and ${\mathfrak E}(\mu)$ are almost the same. Hence, if $\mu^\x$ is close to the maximizer $\tilde \omega_{\tau_K,\tau_M}$, then $\mathfrak E (\mu^\x)$ is close to $\tilde{\mathfrak E}(\tilde \omega_{\tau_K,\tau_M})$; on the other hand, if $\mu^\x$ is away from $\tilde \omega_{\tau_K,\tau_M}$, then $\mathfrak E(\mu^\x)$ is smaller than $\tilde{\mathfrak E}(\tilde \omega_{\tau_K,\tau_M})$. Thefefore, \eqref{eq_Var_principle} implies that the probability of such $\x$ is exponentially small. Further details of this proof can be found in \cite{Benarous1997large} or \cite{guionnet2019asymptotics}.
\end{proof}

\subsection{Identification of the limit}

Our next step is to solve the variational problem ${\tilde{\mathfrak E}(\mu)\to \max}$ and show that the maximizer $\tilde \omega_{\tau_K,\tau_M}$ in Lemma \ref{Lemma_LLN_var} is the Wachter distribution $\omega_{\tau_K,\tau_M}$. We do this by writing down an algebraic equation which uniquely determins $\tilde \omega_{\tau_K,\tau_M}$ and then solve it. It is possible to deduce this equation by directly manipulating the functional $\tilde{\mathfrak E}(\mu)$ (such computations are often done in the \emph{logarithmic potential theory}, cf.\ \cite{saff2013logarithmic}). We proceed in a slightly different way and instead derive exact equations for the prelimit object --- Jacobi ensemble $\mathcal J(K,p,q)$. The equation for $\tilde{\mathfrak E}(\mu)$ is then obtained by taking a limit.

\begin{theorem} \label{Theorem_DS_equation}
 Suppose that $\x=(x_1,\dots,x_K)$ with $1>x_1>\dots>x_K>0$ is distributed as Jacobi ensemble $\mathcal J(K,p,q)$ of Definition \ref{Definition_Jacobi_ev} with $p>1$, $q>1$, and let $\mu=\frac{1}{K}\sum_{i=1}^K \delta_{x_i}$ be the corresponding empirical measure. Then for any continuously differentiable function $f(x)$, $x\in[0,1]$, we have:
 \begin{multline}
 \label{eq_DS_equation}
  \E \left( \int_{0}^1 \left[f(x)\left(\frac{p-1}{K x}+\frac{q-1}{K(x-1)}\right)\right]\mu(\dd x)  +\frac{1}{2} \int\int_{[0,1]^2}  \frac{f(x)-f(y)}{x-y} \mu(\dd x) \mu(\dd y) \right)\\=-\frac{1}{2K} \E \left( \int_{0}^1 f'(x)\mu(\dd x)\right),
 \end{multline}
 where $ \frac{f(x)-f(y)}{x-y}$ is understood as $f'(x)$ when $x=y$.
\end{theorem}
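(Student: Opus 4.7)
The plan is to prove \eqref{eq_DS_equation} by a coordinate-by-coordinate integration by parts against the unnormalized Jacobi density, followed by a standard symmetrization of the resulting double sum. This is the finite-$K$ version of the ``loop equation'' or Dyson--Schwinger relation for $\mathcal J(K,p,q)$; its $K\to\infty$ limit is what will pin down the maximizer $\tilde\omega_{\tau_K,\tau_M}$ of the variational problem from Lemma \ref{Lemma_LLN_var}, but here we only need the prelimit identity.

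Write the unnormalized density as $P(\x)=\prod_{i<j}(x_i-x_j)\prod_k x_k^{p-1}(1-x_k)^{q-1}$ on the ordered simplex $\Delta=\{1>x_1>\cdots>x_K>0\}$, so that
\[
\partial_i\log P \;=\; \sum_{j\ne i}\frac{1}{x_i-x_j}\;+\;\frac{p-1}{x_i}\;+\;\frac{q-1}{x_i-1}.
\]
Given a $C^1$ test function $f$, I would apply the fundamental theorem of calculus in the $x_i$ variable to $f(x_i)P(\x)$ on $\Delta$. The boundary contributions vanish: at $x_1=1$ and $x_K=0$ the factors $(1-x_1)^{q-1}$ and $x_K^{p-1}$ force $P$ to zero (here the hypothesis $p,q>1$ enters), and at each coincidence face $x_i=x_{i\pm 1}$ the corresponding Vandermonde factor kills the integrand. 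Dividing by the partition function yields, for every $i$,
\[
\E[f'(x_i)] \;+\; \E\!\left[f(x_i)\!\left(\sum_{j\ne i}\tfrac{1}{x_i-x_j}+\tfrac{p-1}{x_i}+\tfrac{q-1}{x_i-1}\right)\right]=0.
\]

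Summing over $i=1,\dots,K$ and writing the one-particle sums in terms of $\mu$, the only nonroutine piece is the double sum $\sum_{i\ne j} f(x_i)/(x_i-x_j)$. The standard relabeling trick $i\leftrightarrow j$ symmetrizes it to $\tfrac12\sum_{i\ne j}(f(x_i)-f(x_j))/(x_i-x_j)$, and then extending to the full double sum (with the diagonal convention $\frac{f(x)-f(y)}{x-y}\bigl|_{x=y}=f'(x)$) produces $\tfrac{K^2}{2}\iint\frac{f(x)-f(y)}{x-y}\mu(dx)\mu(dy)$ together with a diagonal correction $-\tfrac{K}{2}\int f'(x)\,\mu(dx)$. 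After dividing the whole equation by $K^2$, this correction combines with the original $\sum_i\E[f'(x_i)]=K\,\E\!\int f'\mu$ to leave precisely $\tfrac{1}{2K}\E\!\int f'\mu$, which is then moved to the right-hand side of \eqref{eq_DS_equation}; all other terms match directly.

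The main obstacle I anticipate is the careful verification that the boundary terms in the integration by parts really vanish at the coincidence faces $x_i=x_{i\pm 1}$: there the Vandermonde vanishes only to first order, so one has to check that $f(x_i)P(\x)$ extends continuously to zero along these faces for every $C^1$ function $f$ on $[0,1]$, which it does since the remaining factors stay bounded. Everything else is bookkeeping: collecting $i$-sums, symmetrizing, and accounting for the diagonal-convention correction.
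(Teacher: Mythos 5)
Your proposal is correct and follows essentially the same route as the paper: integration by parts against the unnormalized Jacobi density, vanishing boundary terms (using $p,q>1$), summation over the coordinate index, and symmetrization of the double sum with the diagonal convention. Your explicit check that the boundary contributions at the coincidence faces $x_i=x_{i\pm1}$ vanish via the Vandermonde factor is a small clarification the paper leaves implicit, but it does not change the argument.
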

\begin{proof} The idea is to integrate by parts the expression for $\E f'(x_k)$. Omitting the normalization constant of the measure and noting that all the boundary terms vanish because of $p>1$, $q>1$, this is
\begin{multline}
 \iint\limits_{1>x_1>\dots>x_K>0} f'(x_k) \prod_{1\le i<j\le K} (x_i-x_j) \prod_{i=1}^K x_i^{p-1} (1-x_i)^{q-1} \dd x_i
 \\=-\iint\limits_{1>x_1>\dots>x_K>0} f(x_k)\left(\frac{p-1}{x_k}+\frac{q-1}{x_k-1}+\sum_{i\ne k} \frac{1}{x_k-x_i}\right) \prod_{1\le i<j\le K} (x_i-x_j) \prod_{i=1}^K x_i^{p-1} (1-x_i)^{q-1} \dd x_i.
\end{multline}
Summing over all $k=1,\dots,K$ and converting to the expectations, we get:
\begin{equation}
\label{eq_x7}
 \E_\x\left[\sum_{k=1}^K f'(x_k)\right]
 =-\E_\x\left[\sum_{k=1}^K f(x_k)\left(\frac{p-1}{x_k}+\frac{q-1}{x_k-1}\right)\right]-\E_\x \left[\sum_{k=1}^K \sum_{i\ne k} \frac{f(x_k)}{x_k-x_i}\right].
\end{equation}
We further convert the last double sum, using the notation $\frac{f(x)-f(y)}{x-y}\Bigr|_{x=y}=f'(x)$:
$$
 \sum_{k=1}^K \sum_{i\ne k} \frac{f(x_k)}{x_k-x_i}=\frac{1}{2} \sum_{1\le i\ne k \le K} \frac{f(x_k)-f(x_i)}{x_k-x_i}=\frac{1}{2} \sum_{i,k=1}^{K} \frac{f(x_k)-f(x_i)}{x_k-x_i} - \frac{1}{2}\sum_{k=1}^K f'(x_k).
$$
Therefore, \eqref{eq_x7} gets transformed to the desired form of \eqref{eq_DS_equation}:
\begin{multline}
\label{eq_x8}
 \frac{1}{2K }\E_\x\left[\frac{1}{K}\sum_{k=1}^K f'(x_k)\right]
 \\=-\E_\x\left[\frac{1}{K}\sum_{k=1}^K f(x_k)\left(\frac{p-1}{K x_k}+\frac{q-1}{K(x_k-1)}\right)\right]-\E_\x \left[\frac{1}{2 K^2}\sum_{i,k=1}^K  \frac{f(x_k)-f(x_i)}{x_k-x_i}\right].\qedhere
\end{multline}
\end{proof}
Equation \eqref{eq_DS_equation} is an instance of the Dyson--Schwinger or loop equation. It can be made a part of a larger hierarchy of similar equations, which uniquely determine the distribution of $\mathcal J(K,p,q)$. Replacing a complicated distribution with a simpler equations for expectations with respect to this distribution is a powerful approach, as we demonstrate on the example of the Gaussian distribution in Problem set 3 in Chapter \ref{Section_exercises}.

\begin{corollary}
 Assume $\tau_K>\tau_M$ and let $\tilde \omega_{\tau_K,\tau_M}$ be the limit from Lemma \ref{Lemma_LLN_var}. Then for any continuously differentiable function $f(x)$, $x\in[0,1]$, we have
 \begin{multline}
 \label{eq_Wachter_density_equation}
   \int_{0}^1 \left[f(x)\left(\frac{\tau_M^{-1}-\tau_K^{-1}}{\tau_K^{-1}}\cdot \frac{1}{x}+\frac{1-\tau_K^{-1}-\tau_M^{-1}}{\tau_K^{-1}} \cdot \frac{1}{x-1}\right)\right]\tilde \omega_{\tau_K,\tau_M}(\dd x)  \\ + \int\int_{[0,1]^2}  \frac{f(x)-f(y)}{x-y} \tilde \omega_{\tau_K,\tau_M}(\dd x) \tilde \omega_{\tau_K,\tau_M}(\dd y)=0.
 \end{multline}
\end{corollary}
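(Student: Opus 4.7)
The plan is to derive \eqref{eq_Wachter_density_equation} as the large-$K$ limit of the Dyson--Schwinger equation \eqref{eq_DS_equation}, with the specific parameter choice $p=\tfrac{M-K+1}{2}$, $q=\tfrac{S-K-M+1}{2}$ dictated by Theorem \ref{Theorem_CCA_Jacobi}. Under this choice, $\mu_{K,M,S}$ is exactly the empirical measure of a $\mathcal J(K,p,q)$-distributed tuple, so \eqref{eq_DS_equation} applies directly (one checks $p,q>1$ for all sufficiently large $K,M,S$ thanks to $\tau_K^{-1}+\tau_M^{-1}<1$ and $\tau_K\ge \tau_M$).

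First I would multiply \eqref{eq_DS_equation} by $2$ and compute the limits of the coefficients. Since $M/K\to \tau_K/\tau_M$, one has
\begin{equation*}
\frac{2(p-1)}{K}=\frac{M-K-1}{K}\longrightarrow \frac{\tau_K}{\tau_M}-1=\frac{\tau_M^{-1}-\tau_K^{-1}}{\tau_K^{-1}},
\end{equation*}
and analogously $\tfrac{2(q-1)}{K}\to \tau_K-1-\tau_K/\tau_M=(1-\tau_K^{-1}-\tau_M^{-1})/\tau_K^{-1}$. The right-hand side $-\tfrac{1}{K}\mathbb E\int f'\,d\mu$ tends to $0$ because $f'$ is bounded on $[0,1]$ and the prefactor is $O(1/K)$. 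The double-integral term is a bounded continuous functional of the empirical measure once we recognize that $(x,y)\mapsto \tfrac{f(x)-f(y)}{x-y}$ extends to a bounded continuous function on $[0,1]^2$ (setting the value $f'(x)$ on the diagonal), so Lemma \ref{Lemma_LLN_var} together with the portmanteau theorem yields convergence in probability to $\iint \tfrac{f(x)-f(y)}{x-y}\,\tilde\omega_{\tau_K,\tau_M}(dx)\tilde\omega_{\tau_K,\tau_M}(dy)$. Bounded convergence then promotes this to convergence of expectations.

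The main obstacle is the singular piece $\int f(x)\bigl(\tfrac{1}{x}+\tfrac{1}{x-1}\bigr)\,\mu_{K,M,S}(dx)$: the integrands $1/x$ and $1/(x-1)$ are unbounded near the endpoints of $[0,1]$, so convergence of integrals against $\mu_{K,M,S}$ does not follow from weak convergence alone. To handle this I would exploit that under the assumption $\tau_K>\tau_M$, the limiting support $[\lambda_-,\lambda_+]$ given by \eqref{eq_lambda_pm_def} satisfies $0<\lambda_-\le \lambda_+<1$. One then picks $\varepsilon>0$ with $[\lambda_--\varepsilon,\lambda_++\varepsilon]\subset(0,1)$ and splits the integral via a smooth cutoff $\chi_\varepsilon$ equal to $1$ on $[\lambda_-,\lambda_+]$ and vanishing outside $[\lambda_--\varepsilon,\lambda_++\varepsilon]$. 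On the bulk region the integrand $f(x)\chi_\varepsilon(x)\bigl(\tfrac{1}{x}+\tfrac{1}{x-1}\bigr)$ is bounded continuous and its integral converges to the desired quantity against $\tilde\omega_{\tau_K,\tau_M}$ (the limit measure gives the complement zero mass). For the tail one invokes a rigidity/exponential-decay bound stating that, with probability $1-o(1)$, no eigenvalue of $\mathcal J(K,p,q)$ lies outside $[\lambda_--\varepsilon,\lambda_++\varepsilon]$; such large-deviation bounds come from the same variational framework of \cite{Benarous1997large,guionnet2019asymptotics} used in Lemma \ref{Lemma_LLN_var}, and they ensure that $\mathbb E\int_{\{|x-\lambda_\pm|>\varepsilon\}} f(x)\bigl(\tfrac{1}{x}+\tfrac{1}{x-1}\bigr)\mu_{K,M,S}(dx)$ vanishes in the limit (uniform integrability being provided by the $\log$-singular weight of the Jacobi density suppressing particles near $0$ and $1$ once $p,q>1$).

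Combining all three parts, every term on the left of \eqref{eq_DS_equation} (after multiplication by $2$) converges to the corresponding term in \eqref{eq_Wachter_density_equation}, while the right side vanishes; since the limiting measure $\tilde\omega_{\tau_K,\tau_M}$ is deterministic, the expectation can be removed, yielding the claimed identity. A concluding remark I would include: by a standard approximation of $C^1$ functions, \eqref{eq_Wachter_density_equation} extends to all $f$ that are Lipschitz (and even merely bounded with Lipschitz restriction to a neighborhood of $[\lambda_-,\lambda_+]$), so it can be combined with a Stieltjes-transform argument (taking $f(x)=1/(x-z)$ for $z\notin[0,1]$) to produce an algebraic equation for the Stieltjes transform of $\tilde\omega_{\tau_K,\tau_M}$ and thereby identify it with the Wachter density \eqref{eq_Wachter_density}.
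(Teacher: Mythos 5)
Your proposal takes essentially the same route as the paper: substitute $p=\tfrac{M-K+1}{2}$, $q=\tfrac{S-K-M+1}{2}$ into the Dyson--Schwinger equation \eqref{eq_DS_equation}, multiply by $2$, compute the limits of the coefficients, and pass to the limit using the law of large numbers from Lemma \ref{Lemma_LLN_var}, with the singular terms near $0$ and $1$ controlled because the support $[\lambda_-,\lambda_+]$ is bounded away from the endpoints when $\tau_K>\tau_M$ and $\tau_K^{-1}+\tau_M^{-1}<1$. Where the paper explicitly states ``We omit this justification,'' you supply a concrete plan: a smooth cutoff around the limiting support, portmanteau/bounded convergence for the bulk piece, and edge rigidity plus a moment bound for the tail piece. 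This is a legitimate and standard way to make the paper's omitted step rigorous. One small imprecision: you refer to the ``$\log$-singular weight'' of the Jacobi density suppressing particles near the endpoints; the weight is $x^{p-1}(1-x)^{q-1}$, which for $p,q>1$ is a vanishing polynomial factor rather than a logarithmic singularity, and it is this vanishing (together with $p,q\to\infty$ proportionally to $K$) that gives the uniform integrability of $1/x$ and $1/(1-x)$ you need. The coefficient computations $\tfrac{2(p-1)}{K}\to\tau_K\tau_M^{-1}-1$ and $\tfrac{2(q-1)}{K}\to\tau_K-1-\tau_K\tau_M^{-1}$, and the vanishing of the right-hand side $O(1/K)$ term, are all correct.
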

\begin{proof}
 We plug $p=\frac{M-K+1}{2}$, $q=\frac{S-K-M+1}{2}$ into \eqref{eq_DS_equation} and then send $K,M,S\to\infty$ using \eqref{eq_CCA_LLN_2}. Certain care is required near $x=0$ and near $x=1$, as the integrand might have a singularity there. We omit this justification, noting that eventually $0$ and $1$ lie outside the support of $\tilde \omega_{\tau_K,\tau_M}$, a fact we will use without proof. It can be derived by directly analyzing the form of the functional $\tilde{\mathfrak E}(\mu)$).
\end{proof}

One can find an explicit formula for $\tilde \omega_{\tau_K,\tau_M}(\dd x)$ from \eqref{eq_Wachter_density_equation} by using the Stieltes tranform of the measure, defined as
$$
 G(z)=\int_0^1 \frac{1}{z-x} \tilde \omega_{\tau_K,\tau_M}(\dd x),\qquad z\in \mathbb C\setminus \text{support}(\tilde \omega_{\tau_K,\tau_M}).
$$

\begin{theorem} $G(z)$ satisfies a quadratic equation:
 \begin{equation}
 \label{eq_Wachter_density_Stieltjes_equation}
   \frac{1}{z(z-1)}\left(\frac{\tau_K^{-1}-1}{\tau_K^{-1}} \right) +   \left(\frac{\tau_M^{-1}-\tau_K^{-1}}{\tau_K^{-1}}\cdot \frac{1}{z}+\frac{1-\tau_K^{-1}-\tau_M^{-1}}{\tau_K^{-1}} \cdot \frac{1}{z-1}\right) G(z)   + \bigl[G(z)\bigr]^2=0.
 \end{equation}
\end{theorem}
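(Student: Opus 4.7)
The plan is to apply the loop equation \eqref{eq_Wachter_density_equation} to the test function $f(x)=\frac{1}{z-x}$. For $z$ outside the support $[\lambda_-,\lambda_+]$ of $\tilde\omega_{\tau_K,\tau_M}$, this $f$ is smooth on the support. The crucial observation is that
$$\frac{f(x)-f(y)}{x-y}=\frac{1}{(z-x)(z-y)},$$
so the double integral in \eqref{eq_Wachter_density_equation} collapses to exactly $G(z)^2$.

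For the single integral, I would use the partial fractions
$$\frac{1}{x(z-x)}=\frac{1}{z}\left(\frac{1}{x}+\frac{1}{z-x}\right),\qquad \frac{1}{(x-1)(z-x)}=\frac{1}{z-1}\left(\frac{1}{x-1}+\frac{1}{z-x}\right).$$
Writing $A:=\int_0^1\tfrac{1}{x}\,\tilde\omega_{\tau_K,\tau_M}(\dd x)$ and $B:=\int_0^1\tfrac{1}{x-1}\,\tilde\omega_{\tau_K,\tau_M}(\dd x)$, the loop equation becomes the quadratic
$$G(z)^2+\left[\frac{\tau_M^{-1}-\tau_K^{-1}}{\tau_K^{-1}z}+\frac{1-\tau_K^{-1}-\tau_M^{-1}}{\tau_K^{-1}(z-1)}\right]G(z)+\frac{(\tau_M^{-1}-\tau_K^{-1})A}{\tau_K^{-1}z}+\frac{(1-\tau_K^{-1}-\tau_M^{-1})B}{\tau_K^{-1}(z-1)}=0.$$

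It remains to pin down the two unknown constants $A$ and $B$, which I would do by feeding two simpler test functions into \eqref{eq_Wachter_density_equation}. Taking $f\equiv 1$ kills the double integral and yields the relation $(\tau_M^{-1}-\tau_K^{-1})A+(1-\tau_K^{-1}-\tau_M^{-1})B=0$. Taking $f(x)=x$, the divided difference is identically $1$, so the double integral equals $1$; combined with $\tfrac{x}{x-1}=1+\tfrac{1}{x-1}$ and the normalization $\int\tilde\omega_{\tau_K,\tau_M}=1$, this yields a second linear relation which solves to
$$B=-\frac{1-\tau_K^{-1}}{1-\tau_K^{-1}-\tau_M^{-1}},\qquad A=\frac{1-\tau_K^{-1}}{\tau_M^{-1}-\tau_K^{-1}}.$$
Substituting and simplifying the constant term via $\tfrac{1}{z}-\tfrac{1}{z-1}=-\tfrac{1}{z(z-1)}$ yields exactly $\tfrac{\tau_K^{-1}-1}{\tau_K^{-1}z(z-1)}$, recovering \eqref{eq_Wachter_density_Stieltjes_equation}.

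The main technical nuisance is that \eqref{eq_Wachter_density_equation} is stated for $C^1$ functions on $[0,1]$, yet $\tfrac{1}{z-x}$ has a singularity when $z\in(0,1)\setminus[\lambda_-,\lambda_+]$. This is harmless: $\tilde\omega_{\tau_K,\tau_M}$ charges only $[\lambda_-,\lambda_+]$, where $f$ and its derivatives are bounded, so a smooth cutoff outside the support produces a genuine $C^1$ representative on $[0,1]$ giving the same integrals. Equivalently, the identity is first established for $z\in\mathbb C\setminus[0,1]$ and then extended to $z\in\mathbb C\setminus[\lambda_-,\lambda_+]$ by analytic continuation, both sides being holomorphic in $z$ on that set.
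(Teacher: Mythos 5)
Your argument is correct and, for its first half (plugging $f(x)=\frac{1}{z-x}$ into the loop equation and applying the partial-fraction identities), is identical to the paper's. The one genuine variation is in how you pin down the two unknown constants. Writing $A=\int\frac{1}{x}\,\tilde\omega(\dd x)=-G(0)$ and $B=\int\frac{1}{x-1}\,\tilde\omega(\dd x)=-G(1)$, the paper determines these by expanding the already-derived equation in powers of $1/z$ at $z\to\infty$ and matching the $1/z$ and $1/z^2$ coefficients, whereas you recover the same two linear relations by feeding the additional test functions $f\equiv 1$ and $f(x)=x$ directly into \eqref{eq_Wachter_density_equation}. These are in fact dual views of one computation (matching the expansion at $\infty$ is equivalent to hitting the loop equation with moments), but your route is arguably a little cleaner: it stays entirely inside the loop-equation framework and avoids invoking the Laurent expansion of $G$. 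The paper's route has the minor advantage of producing the two relations as a byproduct of the equation you already have, rather than requiring two more applications of the master identity.

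One small calculational check confirms your constants: with $B=-\frac{1-\tau_K^{-1}}{1-\tau_K^{-1}-\tau_M^{-1}}$ and $A=\frac{1-\tau_K^{-1}}{\tau_M^{-1}-\tau_K^{-1}}$, the constant term of your quadratic becomes
$$\frac{1-\tau_K^{-1}}{\tau_K^{-1}}\left(\frac{1}{z}-\frac{1}{z-1}\right)=\frac{\tau_K^{-1}-1}{\tau_K^{-1}\,z(z-1)},$$
matching \eqref{eq_Wachter_density_Stieltjes_equation}. Your closing remark about the smooth cutoff / analytic continuation to handle $z\in(0,1)\setminus[\lambda_-,\lambda_+]$ is also the right way to deal with the technical point that the paper itself waves away in the corollary preceding the theorem.
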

\begin{proof} We plug $f(x)=\frac{1}{z-x}$ into \eqref{eq_Wachter_density_equation} and use the identities:
$$
 \frac{1}{z-x} \cdot \frac{1}{x}= \frac{1}{z} \left(\frac{1}{z-x} + \frac{1}{x}\right), \qquad \frac{1}{z-x} \cdot \frac{1}{x-1} = \frac{1}{z-1} \left( \frac{1}{z-x}+\frac{1}{x-1}\right),
$$
$$
\frac{\frac{1}{z-x}- \frac{1}{z-y}}{x-y}=\frac{1}{z-x} \cdot \frac{1}{z-y}.
$$
We obtain
 \begin{multline}
 \label{eq_Wachter_density_equation_modified}
   -\frac{\tau_M^{-1}-\tau_K^{-1}}{\tau_K^{-1}}\cdot \frac{G(0)}{z}-\frac{1-\tau_K^{-1}-\tau_M^{-1}}{\tau_K^{-1}} \cdot \frac{G(1)}{z-1}\\ +  \left(\frac{\tau_M^{-1}-\tau_K^{-1}}{\tau_K^{-1}}\cdot \frac{1}{z}+\frac{1-\tau_K^{-1}-\tau_M^{-1}}{\tau_K^{-1}} \cdot \frac{1}{z-1}\right) G(z)   + \bigl[G(z)\bigr]^2=0.
 \end{multline}
So far $G(0)$ and $G(1)$ are not known to us. However, we can identify them by noting that $G(z)$ has the asymptotic expansion at $\infty$ of the form $G(z)=\frac{1}{z}+ \frac{a_2}{z^2}+\frac{a_3}{z^3}+\dots$. Plugging into \eqref{eq_Wachter_density_equation_modified}, we get
 \begin{multline}
 \label{eq_Wachter_density_equation_modified_2}
   -\frac{\tau_M^{-1}-\tau_K^{-1}}{\tau_K^{-1}}\cdot \frac{G(0)}{z}-\frac{1-\tau_K^{-1}-\tau_M^{-1}}{\tau_K^{-1}} \cdot \frac{G(1)}{z-1}\\ +  \left(\frac{\tau_M^{-1}-\tau_K^{-1}}{\tau_K^{-1}}\cdot \frac{1}{z^2}+\frac{1-\tau_K^{-1}-\tau_M^{-1}}{\tau_K^{-1}} \cdot \frac{1}{z(z-1)}\right) +\frac{1}{z^2} =o\left(\frac{1}{z^2}\right).
 \end{multline}
Expanding all terms in $\frac{1}{z}$, we conclude
 \begin{multline}
 \label{eq_Wachter_density_equation_modified_2}
   -\frac{\tau_M^{-1}-\tau_K^{-1}}{\tau_K^{-1}}\cdot \frac{G(0)}{z}-\frac{1-\tau_K^{-1}-\tau_M^{-1}}{\tau_K^{-1}} \cdot G(1)\left(\frac{1}{z}+\frac{1}{z^2}\right)\\ +  \left(\frac{\tau_M^{-1}-\tau_K^{-1}}{\tau_K^{-1}}\cdot \frac{1}{z^2}+\frac{1-\tau_K^{-1}-\tau_M^{-1}}{\tau_K^{-1}} \cdot \frac{1}{z^2}\right) +\frac{1}{z^2} =0.
 \end{multline}
The coefficients of $\frac{1}{z}$ and $\frac{1}{z^2}$ give two linear equations on $G(0)$ and $G(1)$:
$$
 \begin{cases} -\frac{\tau_M^{-1}-\tau_K^{-1}}{\tau_K^{-1}}\cdot G(0)-\frac{1-\tau_K^{-1}-\tau_M^{-1}}{\tau_K^{-1}} \cdot G(1)=0,\\
 -\frac{1-\tau_K^{-1}-\tau_M^{-1}}{\tau_K^{-1}} \cdot G(1) +  \frac{\tau_M^{-1}-\tau_K^{-1}}{\tau_K^{-1}}+\frac{1-\tau_K^{-1}-\tau_M^{-1}}{\tau_K^{-1}} +1 =0. \end{cases}
$$
Solving for $G(0)$, $G(1)$, and then plugging into \eqref{eq_Wachter_density_equation_modified}, we get \eqref{eq_Wachter_density_Stieltjes_equation}.
\end{proof}

Directly solving equation \eqref{eq_Wachter_density_equation} for $G(z)$, we get the explicit formula:
\begin{corollary} \label{Corollary_Wachter_Stieltjes} Using the $\lambda_\pm$ notation \eqref{eq_lambda_pm_def}, we have:
 \begin{equation}
\label{eq_Wachter_Stieltjes}
G(z)= \frac{\tau_M^{-1}+\tau_K^{-1}-z+ \sqrt{ (z-\lambda_-)(z-\lambda_+)}}{2 \tau_K^{-1} z(z-1)} + \frac{1}{z}.
\end{equation}
\end{corollary}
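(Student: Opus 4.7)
The plan is to directly solve the quadratic equation \eqref{eq_Wachter_density_Stieltjes_equation} for $G(z)$ by the quadratic formula and then pick the correct branch of the square root using the asymptotic condition at infinity. For bookkeeping I will set $a=\tau_K^{-1}$ and $m=\tau_M^{-1}$ and rewrite \eqref{eq_Wachter_density_Stieltjes_equation} as $G(z)^2+b(z)G(z)+c(z)=0$, where, after combining the two fractions in $b$,
$$
 b(z)=\frac{(1-2a)z+a-m}{a\,z(z-1)},\qquad c(z)=\frac{a-1}{a\,z(z-1)}.
$$
Then $G(z)=\tfrac{1}{2}\bigl(-b(z)\pm\sqrt{b(z)^2-4c(z)}\bigr)$, so everything comes down to simplifying the discriminant and identifying the correct sign.

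The computational core will be the discriminant identity
$$
 a^2 z^2(z-1)^2\bigl(b(z)^2-4c(z)\bigr)=\bigl[(1-2a)z+a-m\bigr]^2-4a(a-1)z(z-1)=(z-\lambda_-)(z-\lambda_+).
$$
This is a polynomial identity in $z$, which I will verify by matching the three coefficients against $\lambda_++\lambda_-=2m(1-a)+2a(1-m)$ and $\lambda_+\lambda_-=(a-m)^2$; both symmetric functions are readily extracted from \eqref{eq_lambda_pm_def}. This is the main obstacle in the sense that it is the only nontrivial step, but it is just polynomial arithmetic and ought to go through cleanly once one notices the split $\lambda_\pm=m(1-a)+a(1-m)\pm 2\sqrt{am(1-a)(1-m)}$.

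Once the discriminant is in hand, a one-line algebraic rearrangement
$$
 -\bigl[(1-2a)z+a-m\bigr]=(m+a-z)+2a(z-1)
$$
separates the $\tfrac{1}{z}$ piece, giving
$$
 G(z)=\frac{1}{z}+\frac{m+a-z\pm\sqrt{(z-\lambda_-)(z-\lambda_+)}}{2a\,z(z-1)},
$$
which is the formula \eqref{eq_Wachter_Stieltjes} up to the choice of sign.

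Finally, to fix the branch I will use that $G$ is the Stieltjes transform of a probability measure, so $G(z)=\tfrac{1}{z}+O(z^{-2})$ as $z\to\infty$. Expanding $\sqrt{(z-\lambda_-)(z-\lambda_+)}=z-\tfrac{1}{2}(\lambda_++\lambda_-)+O(z^{-1})$ and substituting into the two candidate formulas, the $+$ sign makes the second summand decay like $O(z^{-2})$, while the $-$ sign produces a spurious $-1/(az)$ term that is incompatible with the normalization. Hence the $+$ sign is forced, and we obtain \eqref{eq_Wachter_Stieltjes}.
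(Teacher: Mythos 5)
Your proposal is correct and follows the same route the paper takes: the text states only that the formula is obtained by ``directly solving'' the quadratic equation \eqref{eq_Wachter_density_Stieltjes_equation}, leaving the computation implicit, and your proof is exactly the worked-out version of that one-line claim. The discriminant identity, the splitting off of the $1/z$ term, and the branch selection via $G(z)=\tfrac1z+O(z^{-2})$ are all carried out correctly.
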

\begin{remark}
 For the function $\sqrt{ (z-\lambda_-)(z-\lambda_+)}$, we should choose a continuous branch in $\mathbb C\setminus [\lambda_-,\lambda_+]$, such that $\sqrt{ (z-\lambda_-)(z-\lambda_+)}\sim z$ for large $z$. This (rather than the one with minus sign) branch is uniquely determined by the $G(z)\sim \frac{1}{z}$ condition.
\end{remark}

We are now ready to finish the proof of Theorem \ref{Theorem_CCA_LLN}.
\begin{theorem} \label{Theorem_Jacobi_LLN_final}
 For any $\tau_K>\tau_M$ the measure $\tilde \omega_{\tau_K,\tau_M}$ from Lemma \ref{Lemma_LLN_var} coincides with $\omega_{\tau_K,\tau_M}$, i.e.,
 $$\tilde \omega_{\tau_K,\tau_M}=\frac{\tau_K}{2\pi } \frac{\sqrt{(x-\lambda_-)(\lambda_+-x)}}{x (1-x)} \mathbf 1_{[\lambda_-,\lambda_+]}\, \dd x.$$
\end{theorem}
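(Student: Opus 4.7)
The plan is to recover $\tilde \omega_{\tau_K,\tau_M}$ directly from its Stieltjes transform $G(z)$ given by \eqref{eq_Wachter_Stieltjes} via the Stieltjes inversion formula: for any probability measure $\mu$ with Stieltjes transform $G$, the absolutely continuous part has density $-\pi^{-1}\lim_{\varepsilon\downarrow 0}\operatorname{Im} G(x+i\varepsilon)$, and an atom at $a$ has mass $-\lim_{\varepsilon\downarrow 0} i\varepsilon\, G(a+i\varepsilon)$. Since both $\tilde \omega_{\tau_K,\tau_M}$ and the candidate $\omega_{\tau_K,\tau_M}$ from Definition~\ref{Definition_Wachter} are probability measures, it suffices to show that inversion applied to \eqref{eq_Wachter_Stieltjes} produces the density \eqref{eq_Wachter_density} on $[\lambda_-,\lambda_+]$ and no extra mass elsewhere.

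First I would use the branch specification from the Remark after Corollary~\ref{Corollary_Wachter_Stieltjes} to compute the boundary value of the square root. For $z=x+i\varepsilon$ with $x\in(\lambda_-,\lambda_+)$ and $\varepsilon\downarrow 0$, the principal branch chosen so that $\sqrt{(z-\lambda_-)(z-\lambda_+)}\sim z$ at infinity yields $\sqrt{(z-\lambda_-)(z-\lambda_+)}\to +i\sqrt{(x-\lambda_-)(\lambda_+-x)}$, while the remaining terms in \eqref{eq_Wachter_Stieltjes} are manifestly real on this interval. Therefore
$$
-\tfrac{1}{\pi}\operatorname{Im} G(x+i0)=-\tfrac{1}{\pi}\cdot\frac{\sqrt{(x-\lambda_-)(\lambda_+-x)}}{2\tau_K^{-1}\,x(x-1)}=\frac{\tau_K}{2\pi}\cdot\frac{\sqrt{(x-\lambda_-)(\lambda_+-x)}}{x(1-x)},
$$
which is precisely the Wachter density. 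Outside $[\lambda_-,\lambda_+]$, and away from the two potentially singular points $z=0$ and $z=1$, the function $G$ is analytic on a real neighborhood, so its boundary value is real and contributes no absolutely continuous mass.

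The main obstacle, and the real content of the proof, is to show that $G$ is actually bounded at $z=0$ and $z=1$, so that $\tilde \omega_{\tau_K,\tau_M}$ has no atom at either point. The rational factor in \eqref{eq_Wachter_Stieltjes} and the added summand $\tfrac{1}{z}$ both look singular at $z=0$; the cancellation has to be exhibited. I would do this by first computing from \eqref{eq_lambda_pm_def} that $\lambda_-\lambda_+=(\tau_M^{-1}-\tau_K^{-1})^2$ and $(1-\lambda_-)(1-\lambda_+)=(1-\tau_K^{-1}-\tau_M^{-1})^2$. Since the chosen branch of the square root extends to the region $(-\infty,\lambda_-)$ with a minus sign and to $(\lambda_+,+\infty)$ with a plus sign, one finds $\sqrt{(0-\lambda_-)(0-\lambda_+)}=\tau_K^{-1}-\tau_M^{-1}$ and $\sqrt{(1-\lambda_-)(1-\lambda_+)}=1-\tau_K^{-1}-\tau_M^{-1}$. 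Substituting into the numerator of \eqref{eq_Wachter_Stieltjes} shows that it equals $2\tau_K^{-1}$ at $z=0$ (so the first term behaves like $\tfrac{1}{z(z-1)}+O(1)=-\tfrac{1}{z}+O(1)$, exactly cancelling the explicit $\tfrac{1}{z}$) and vanishes at $z=1$ (yielding a removable $0/0$). Hence $\lim_{\varepsilon\downarrow 0}i\varepsilon\, G(a+i\varepsilon)=0$ for $a\in\{0,1\}$, and the Stieltjes inversion formula delivers $\tilde \omega_{\tau_K,\tau_M}=\omega_{\tau_K,\tau_M}$, completing the proof.
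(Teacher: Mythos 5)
Your proof is correct and follows the same route as the paper's: Stieltjes inversion applied to the explicit formula for $G(z)$ from Corollary~\ref{Corollary_Wachter_Stieltjes}. You additionally verify the absence of mass at $z=0$ and $z=1$ via the branch-value computations $\sqrt{\lambda_-\lambda_+}=\tau_K^{-1}-\tau_M^{-1}$ and $\sqrt{(1-\lambda_-)(1-\lambda_+)}=1-\tau_K^{-1}-\tau_M^{-1}$ and the resulting cancellations in the numerator --- a fact the paper explicitly uses without proof --- so your version closes a small gap left open in the text.
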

\begin{proof}
 We already know the Stieltjes transform of $\tilde \omega_{\tau_K,\tau_M}$  and it remains to connect it to the density of the measure. For that we notice that for $\eps>0$, we have
 $$
  \frac{1}{\pi} \mathrm{Im}\, G(x_0-\ii \eps)=\int_0^1 \frac1{\pi}\mathrm{Im}\left[\frac{1}{x_0-x-\ii\eps}\right] \tilde \omega_{\tau_K,\tau_M}(\dd x)= \int_0^1 \left[\frac{1}{\pi}\, \frac{\eps }{(x_0-x)^2+\eps^2}\right] \tilde \omega_{\tau_K,\tau_M}(\dd x).
 $$
 The function $\frac{1}{\pi}\, \frac{\eps }{(x_0-x)^2+\eps^2}$ is positive, has total integral $1$ and sharply concentrates near $x_0$ as $\eps\to 0+$. Hence, its integral picks up the density of $\tilde \omega_{\tau_K,\tau_M}$:
 $$
  \lim_{\eps\to 0+} \frac{1}{\pi} \mathrm{Im}\, G(x_0-\ii \eps)=  \tilde \omega_{\tau_K,\tau_M}(x_0).
 $$
 Plugging the expression for $G(z)$ from \eqref{eq_Wachter_Stieltjes}, we get the desired formula for the density.
\end{proof}

The case $\tau_K=\tau_M$ in Theorem \ref{Theorem_Jacobi_LLN_final} can be treated similarly, but one should be slightly more careful when handling the $x=0$ point: in this case $\lambda_-=0$ and also we might have $p=1$ in Theorem \ref{Theorem_DS_equation}, so that a new boundary term appears in integration by parts.

\newpage

\section{Signal plus noise}

In the previous chapter we studied the asymptotic properties of the sample canonical correlations in the situation when no signal exists: two data sets are independent, so all canonical correlations are zeros. I.e., there was no information to be inferred about the correlation structure. Nevertheless, the asymptotic behavior turned out to be quite rich. In this chapter we make the next step and allow for some signal in the data. That is, we discuss the situation when some of the canonical correlations $c_i$ are not zeros.

\subsection{Finite $K$ and $M$} \label{Section_finite_K_M_limit} As previously, we start by assuming that $K$ and $M$ are fixed and small, while $S$ is large. I.e., we work with matrices $\U$ and $\V$ that are composed of many i.i.d.\ samples of low-dimensional vectors $\u$ and $\v$. We start by analyzing canonical correlations and then move to canonical variables.

\begin{theorem} \label{Theorem_CCA_consistency}Suppose that the dimensions $K$ and $M$ are fixed, and consider two mean-zero random vectors\footnote{These vectors need not be Gaussian.} $\u=(u^1,\dots,u^K)^\T$ and $\v=(u^1,\dots,u^M)^\T$. Suppose that the covariance matrices $\E \u\u^\T$ and $\E \v \v^\T$ are non-degenerate and let $1\ge c_1\ge c_2\ge \dots\ge c_K\ge 0$ be the canonical correlations between $\u$ and $\v$. Suppose that $\U$ and $\V$ are matrices, whose $S$ columns are i.i.d.\ samples from $(\u^\T,\v^\T)^\T$ and let $1\ge \hat c_1(S) \ge \hat c_2(S)\ge\dots\ge c_K(S)\ge 0$ be sample canonical correlations of $\U$ and $\V$. Then,
\begin{equation} \label{eq_consistency_CCA}
 \lim_{S\to\infty} \hat c_i(S)= c_i, \qquad \text{in probability, for all }1\le i \le K.
\end{equation}
\end{theorem}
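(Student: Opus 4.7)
The plan is to combine Proposition~\ref{prop_CCA_matrix_form} (which expresses canonical correlations as eigenvalues of explicit matrices) with the Law of Large Numbers and the continuity of eigenvalues as functions of matrix entries. The setting is particularly favorable because $K$ and $M$ are fixed, so the matrices in question live in a fixed finite-dimensional space and classical continuity arguments apply without any uniformity concerns.

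\smallskip

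\textbf{Step 1 (matrix reformulation).} By Proposition~\ref{prop_CCA_matrix_form} and \eqref{eq_x17}, the squared population canonical correlations $c_1^2, \dots, c_K^2$ are the eigenvalues of
$$
 \Sigma := (\E \u \u^\T)^{-1}\, \E \u \v^\T\, (\E \v\v^\T)^{-1}\, \E \v \u^\T,
$$
while by \eqref{eq_x16} the squared sample canonical correlations $\hat c_i^2(S)$ are eigenvalues of
$$
 \hat \Sigma(S) := \left(\tfrac{1}{S}\U\U^\T\right)^{-1}\, \tfrac{1}{S}\U\V^\T\, \left(\tfrac{1}{S}\V\V^\T\right)^{-1}\, \tfrac{1}{S}\V\U^\T,
$$
where we have inserted compensating factors of $S$ that do not change the matrix. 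Thus it suffices to show that the ordered eigenvalues of $\hat \Sigma(S)$ converge in probability to those of $\Sigma$.

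\smallskip

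\textbf{Step 2 (convergence of sample covariances).} Since $K, M$ are fixed and the covariances $\E \u \u^\T$, $\E \v\v^\T$ are non-degenerate (hence in particular finite), each entry of the sample covariance matrices $\tfrac{1}{S}\U\U^\T$, $\tfrac{1}{S}\V\V^\T$, $\tfrac{1}{S}\U\V^\T$ is an average of $S$ i.i.d.\ scalar random variables with finite mean (namely, the products $u^i u^j$, $v^i v^j$, $u^i v^j$ sampled columnwise). By the weak Law of Large Numbers applied entrywise,
$$
 \tfrac{1}{S}\U\U^\T \xrightarrow{P} \E \u \u^\T, \qquad \tfrac{1}{S}\V\V^\T \xrightarrow{P} \E \v \v^\T, \qquad \tfrac{1}{S}\U\V^\T \xrightarrow{P} \E \u \v^\T.
$$

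\smallskip

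\textbf{Step 3 (continuity and continuous mapping).} The set of invertible symmetric matrices is open, and on this set the inversion map $A \mapsto A^{-1}$ is continuous. Because $\E \u \u^\T$ and $\E \v \v^\T$ are non-degenerate by assumption, with probability tending to $1$ the sample covariances $\tfrac{1}{S}\U\U^\T$ and $\tfrac{1}{S}\V\V^\T$ are invertible, and their inverses converge in probability to $(\E \u\u^\T)^{-1}$ and $(\E \v\v^\T)^{-1}$, respectively. Composing four matrix multiplications (all continuous operations on finite matrices), we obtain
$$
 \hat \Sigma(S) \xrightarrow{P} \Sigma.
$$
Finally, the ordered eigenvalues of a matrix depend continuously on its entries (a standard fact; one can also pass to the similar symmetric matrix $(\E\u\u^\T)^{-1/2} \E \u \v^\T (\E \v\v^\T)^{-1} \E \v \u^\T (\E\u\u^\T)^{-1/2}$ for which continuity of ordered eigenvalues is classical). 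Applying the continuous mapping theorem yields $\hat c_i^2(S) \xrightarrow{P} c_i^2$, and since $x \mapsto \sqrt{x}$ is continuous on $[0,\infty)$, also $\hat c_i(S) \xrightarrow{P} c_i$.

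\smallskip

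\textbf{Main obstacle.} No single step is deep; the only subtle point is the passage from convergence of the matrices $\hat\Sigma(S) \xrightarrow{P} \Sigma$ to convergence of eigenvalues. The cleanest way to handle this is to reduce to a symmetric conjugate so that standard Weyl-type continuity of ordered eigenvalues applies, and to verify measurability/invertibility on an event of probability tending to $1$ so that all quantities are well-defined with high probability. Everything else is routine once Proposition~\ref{prop_CCA_matrix_form} is in hand.
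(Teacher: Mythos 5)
Your proposal is correct and follows essentially the same route as the paper: rewrite the squared sample canonical correlations as eigenvalues of $(\frac{1}{S}\U\U^\T)^{-1}\frac{1}{S}\U\V^\T(\frac{1}{S}\V\V^\T)^{-1}\frac{1}{S}\V\U^\T$ via Proposition~\ref{prop_CCA_matrix_form}, apply the Law of Large Numbers entrywise to each of the four (fixed-dimensional) matrices, invoke continuity of matrix inversion and multiplication, and then appeal to continuity of eigenvalues as functions of the matrix entries. Your extra remarks (conjugating to a symmetric PSD matrix to justify continuity of ordered eigenvalues, restricting to the high-probability event of invertibility) are reasonable technical polish, but the argument is the one the paper gives.
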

\begin{remark}
 The differences $\hat c_i(S)-c_i$ are of order $S^{-1/2}$ and their rescaled limit depends on whether some of the $c_i$ are equal to each other and on whether they are equal to zero or one, as first noticed in \cite{hsu1941limiting}. The simplest case is when all $c_i$ are positive, less than one, and distinct --- in this situation $\{S^{1/2}(\hat c_i(S)-c_i)\}_{i=1}^{K}$ are asymptotically Gaussian and independent over $1\le i\le K$. Note that there is no contradiction with $S^{-1}$ magnitude of the fluctuations in Theorem \ref{Theorem_CCA_to_Laguerre}, because that theorem deals with $\hat c_i^2$, rather than $\hat c_i$, which changes the scaling in the situation $c_i=0$ (and does not change if $c_i>0$).
\end{remark}
From the statistics perspective, Theorem \ref{Theorem_CCA_consistency} tells that $\hat c_i(S)$ are consistent estimates of $c_i$: they approximate the true (population) values of the parameters of the model as $S\to\infty$.

\begin{proof}[Proof of Theorem \ref{Theorem_CCA_consistency}] Recalling Procedure \ref{Procedure_CCA}, $\hat c_i^2(S)$ are eigenvalues of the $K\times K$ matrix $(\U \U^\T)^{-1} \U \V^\T (\V \V^\T)^{-1} \V \U^\T$. Note that the dimensions of the four matrices
$$
 \frac{1}{S}\U \U^\T, \qquad \frac{1}{S}  \U \V^\T, \qquad \frac{1}{S} \V \V^\T,\qquad \frac{1}{S} \V \U^\T
$$
do not grow with $S$ and each matrix element is $\frac{1}{S}$ times a sum of $S$ i.i.d.\ random variables. Hence, by the Law of Large Numbers, the matrix elements converge in probability to their expectations as $S\to\infty$. We conclude that
\begin{multline}\label{eq_x9}
 \lim_{S\to\infty} \left[ (\U \U^\T)^{-1} \U \V^\T (\V \V^\T)^{-1} \V \U^\T\right]=\lim_{S\to\infty} \left[ \left(\frac{\U \U^\T}{S}\right)^{-1} \frac{\U \V^\T}{S} \left(\frac{\V \V^\T}{S}\right)^{-1} \frac{\V \U^\T}{S}\right]\\= (\E \u \u^\T)^{-1} \E \u \v^\T (\E \v\v^\T)^{-1} \E \v \u^{\T}.
\end{multline}
Convergence of matrix elements of finite-dimensional matrices  implies convergence of their eigenvalues, see, e.g.,  \citet[Chapter 9, Theorem 6]{Lax} or  \citet[Corollary 6.3.8]{horn2012matrix}. The eigenvalues of the last matrix formed by expectation are precisely $c_1^2,\dots,c_K^2$.
\end{proof}

We now turn to canonical variables and seek an appropriate metric to assess how closely the variables derived from the ``sample'' $\U$ and $\V$ approximate those from the ``population'' $\u$ and $\v$. In the finite $K$ and $M$ setting, all proximity metrics are somewhat equivalent; however, selecting the right metric will become crucial in the next subsection, where we allow for large $K$ and $M$.

Let $c_i$ be the $i$--th largest canonical correlation between $\u$ and $\v$ and let $\ba=(\alpha_1,\alpha_2,\dots,\alpha_K)^\T$ and $\bb=(\beta_1,\beta_2,\dots,\beta_M)^\T$ be the corresponding canonical vectors, i.e., eigenvectors of the matrices
$$
 (\E \u \u^\T)^{-1} \E \u \v^\T (\E \v\v^\T)^{-1} \E \v \u^{\T}\quad \text{ and }\quad (\E \v \v^\T)^{-1} \E \v \u^\T (\E \u\u^\T)^{-1} \E \u \v^{\T}, \qquad \text{respectively.}
$$
As shown in Proposition \ref{prop_CCA_matrix_form}, the random variables $\ba^{\T}\u$ and $\bb^\T\v$ are canonical variables, corresponding to $c_i$. An advantage of looking at canonical vectors $\ba$ and $\bb$ instead of the canonical variables is that $\ba$ and $\bb$ are non-random and, therefore, can be directly compared with their sample versions $\hat \ba (S)$ and $\hat \bb(S)$, where $\hat \ba (S)$ and $\hat \bb(S)$ are the eigenvectors of
$$
(\U \U^\T)^{-1} \U \V^\T (\V \V^\T)^{-1} \V \U^\T \quad \text{ and }\quad (\V \V^\T)^{-1} \V \U^\T (\U \U^\T)^{-1} \U \V^\T, \qquad \text{respectively,}
$$
corresponding to the $i$th largest eigenvalue or squared sample canonical correlation $\hat c_i^2(S)$.

\begin{definition} \label{Definition_angles}
 The angle $0\le \theta^\u_i\le \pi/2$ is defined as the angle between $S$--dimensional vectors $\U^\T \ba$ and $\U^\T \widehat \ba$. Similarly, $\theta^\v_i$ is the angle between $\V^{\T}\bb$ and $\V^\T\widehat \bb$. We have
 $$
  \sin^2 \theta^\u_i=1- \frac{\la \U^\T \ba, \U^\T \widehat \ba\ra^2}{\la \U^\T \ba, \U^\T \ba\ra \la \U^\T \widehat \ba, \U^\T \widehat \ba\ra},\qquad
  \sin^2 \theta^\v_i=1- \frac{\la \V^\T \bb, \V^\T \widehat \bb\ra^2}{\la \V^\T \bb, \U^\T \bb\ra \la \V^\T \widehat \bb, \V^\T \widehat \bb\ra}.
 $$
\end{definition}
Let us emphasize two features of Definition \ref{Definition_angles}. First, the eigenvectors are only defined up to multiplication by an arbitrary real constant (or up to multiplication by $-1$ if we normalize them), hence, it is tricky to compare them directly; the use of angles avoids this problem.\footnote{The limits of individual coordinates of $\ba$ and $\bb$ are also discussed in the literature, see \cite{anderson1999asymptotic} and references therein.} Second, instead of measuring the angles between $\widehat \ba$ and $\ba$ (or $\widehat \bb$ and $\bb$), we use the angles between  $\U^\T \ba$ and $\U^\T \widehat \ba$ (or between $\V^{\T}\bb$ and $\V^\T\widehat \bb$). The reason behind this is that the former angles depend on the choice of units of measurement: the angle changes if we multiply one of the coordinates of $\ba$ by a constant (equivalently, divide a component of $\u$ or a row in $\U$ by the same constant). In practice, there is generally no default normalization of real data, and this dependence on measurement units implies that, asymptotically, the angles are influenced not only by canonical correlations but also by the covariances of $\u$ and $\v$. By focusing on  $\theta^\u_i$ and $ \theta^\v_i$, we eliminate the issue of normalization, ensuring that the results in the theorems of the next subsection are independent of the covariance matrix of the vector $\u$ or of the covariance matrix of the vector $\v$.


\begin{theorem}\label{Theorem_CCA_consistency_variables} In the setting of Theorem \ref{Theorem_CCA_consistency}, suppose that $c_i$ is distinct from $0$, $1$, and all other $c_j$'s. Then, using Definition \ref{Definition_angles}:
\begin{equation} \label{eq_consistency_CCA_angles}
 \lim_{S\to\infty}\theta^\u_i= \lim_{S\to\infty}\theta^\v_i=0 \quad \text{ in probability}.
\end{equation}
\end{theorem}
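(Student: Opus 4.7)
The plan is to reduce the claim to eigenvector perturbation theory applied to the $K\times K$ and $M\times M$ matrices whose eigenvectors are the canonical vectors, and then to translate that convergence into convergence of the angles, which live in the ambient $S$-dimensional sample space.

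I would first invoke the matrix convergence \eqref{eq_x9} from the proof of Theorem \ref{Theorem_CCA_consistency}: the sample matrix
\[
 \hat M_S=(\U\U^\T)^{-1}\U\V^\T(\V\V^\T)^{-1}\V\U^\T
\]
converges in probability to the population matrix $M=(\E\u\u^\T)^{-1}\E\u\v^\T(\E\v\v^\T)^{-1}\E\v\u^\T$. Both matrices are similar to real symmetric positive-semidefinite matrices via conjugation by $(\U\U^\T)^{1/2}$ and $(\E\u\u^\T)^{1/2}$, respectively, hence each is diagonalizable with real nonnegative eigenvalues. The eigenvalues of $M$ are $c_1^2,\dots,c_K^2$, and under the hypothesis that $c_i$ is distinct from $0$, $1$, and every other $c_j$, the value $c_i^2\in(0,1)$ is a \emph{simple} eigenvalue of $M$ with one-dimensional eigenspace spanned by $\ba$.

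Next, I would apply standard eigenvector perturbation theory for simple eigenvalues of diagonalizable matrices (Kato type): fix any continuous normalization near $\ba$ (say $\|\ba\|_2=1$ with a sign convention on the coordinate of largest absolute value). Then for all $S$ sufficiently large, the matrix $\hat M_S$ has an eigenvector $\hat\ba$ associated with $\hat c_i^2$ obeying the same normalization and satisfying $\hat\ba\to\ba$ in probability. This follows from the continuity of the spectral projector at an isolated simple eigenvalue, combined with the continuous mapping theorem applied to $\hat M_S\to M$. The analogous statement $\hat\bb\to\bb$ in probability holds for the $M\times M$ matrix $(\V\V^\T)^{-1}\V\U^\T(\U\U^\T)^{-1}\U\V^\T$ and its population counterpart.

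Finally, I would translate the vector convergence into angle convergence. Expanding $\langle \U^\T\ba,\U^\T\hat\ba\rangle=\ba^\T\U\U^\T\hat\ba$ and similarly for the other inner products in Definition \ref{Theorem_canonical_bases}, and dividing numerator and denominator of the ratio in Definition \ref{Definition_angles} by $S^2$, the scale-invariant angle becomes
\[
 \sin^2\theta^\u_i=1-\frac{\bigl(\ba^\T(\U\U^\T/S)\hat\ba\bigr)^2}{\bigl(\ba^\T(\U\U^\T/S)\ba\bigr)\bigl(\hat\ba^\T(\U\U^\T/S)\hat\ba\bigr)}.
\]
Scale-invariance permits me to use the normalization of the previous step. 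Since $\U\U^\T/S\to \E\u\u^\T$ in probability by the Law of Large Numbers and $\hat\ba\to\ba$ in probability, all three quadratic forms converge in probability to the common strictly positive limit $\ba^\T(\E\u\u^\T)\ba$; the continuous mapping theorem then gives $\sin^2\theta^\u_i\to 0$. The argument for $\theta^\v_i$ is identical with the roles of $\u,\U$ and $\v,\V$ interchanged.

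The main obstacle is the nonsymmetry of $M$ and $\hat M_S$, which forces one to invoke perturbation theory for general diagonalizable matrices rather than the cleaner symmetric-matrix version, and to fix a normalization for the sample canonical vectors before the eigenvector convergence statement is even meaningful. The three pieces of the hypothesis on $c_i$ are used at this exact point: $c_i\neq c_j$ for $j\neq i$ guarantees simplicity of the eigenvalue $c_i^2$ so that the eigenprojector is continuous, while $c_i\notin\{0,1\}$ keeps $c_i^2$ in the interior of $[0,1]$ and ensures that the denominator $\ba^\T(\E\u\u^\T)\ba$ appearing in the limit is bounded away from any degeneracy coming from $\ba$ lying in the null space of relevant factors.
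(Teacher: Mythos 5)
Your proposal follows essentially the same route as the paper: it invokes the matrix convergence \eqref{eq_x9}, uses continuity of eigenvectors at a simple eigenvalue (the paper cites Lax, Chapter 9, Theorem 8; you cite Kato-type perturbation theory plus the observation that the matrices are diagonalizable via similarity to symmetric ones), and then expands $\sin^2\theta^\u_i$ as a ratio of quadratic forms in $\U\U^\T/S$ and passes to the limit by the Law of Large Numbers and continuous mapping. The added bookkeeping about fixing a normalization for $\hat\ba$ is a legitimate way to make precise what the paper expresses as ``$\hat\ba\to q_1\ba$ up to proportionality.''

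One small quibble with the commentary in your final paragraph: the role of the hypothesis $c_i\notin\{0,1\}$ is not to keep $\ba^\T(\E\u\u^\T)\ba$ nondegenerate---that quantity is automatically positive since $\E\u\u^\T$ is nondegenerate and $\ba\ne0$. Rather, $c_i\ne0$ is needed because the $M\times M$ matrix $(\E\v\v^\T)^{-1}\E\v\u^\T(\E\u\u^\T)^{-1}\E\u\v^\T$ has rank at most $K<M$, so the eigenvalue $0$ has multiplicity at least $M-K$; if $c_i=0$ then the eigenvector $\bb$ is no longer tied to a simple eigenvalue. The exclusion of $c_i=1$ is similarly a boundary safeguard (cf.\ the remark in the paper attributing the changed fluctuation behavior at $c_i\in\{0,1\}$ to \cite{hsu1941limiting}). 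This does not affect the validity of your main argument.
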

\begin{remark}\label{remark_nonuniqueness}
 The requirement for $c_i$ to be distinct from others is related to the uniqueness of canonical correlations discussed at the end of Chapter 1. Imagine that $c_i=c_{i+1}$. In this situation, there is a two-dimensional space of eigenvectors of $(\E \u \u^\T)^{-1} \E \u \v^\T (\E \v\v^\T)^{-1} \E \v \u^{\T}$ with eigenvalue $c_i^2$ and there is no natural way to single out two canonical variables out of this space. Hence, we can only discuss the distance between this space and the sample canonical variables; one can still show that a properly defined distance between such spaces tends to $0$ as $S\to\infty$, but we leave this out of the scope of the survey.
\end{remark}
\begin{proof}[Proof of Theorem \ref{Theorem_CCA_consistency_variables}] We follow the proof of Theorem \ref{Theorem_CCA_consistency} and use  \eqref{eq_x9}. The eigenvectors are found as solutions of a system of homogeneous linear equations. The matrix of this system is non-degenerate when the multiplicity of the corresponding eigenvalue is $1$, and the coefficients of the solution (defined up to simultaneous multiplication by a constant) smoothly depend on the matrix of the equation. The conclusion is that eigenvectors continuously depend on the matrix, cf.\ \citet[Chapter 9, Theorem 8]{Lax}. Hence, \eqref{eq_x9} implies convergence of eigenvectors, up to a proportionality. For the eigenvectors corresponding to the $i$--th largest eigenvalue this can be summarized as $\lim_{S\to\infty} \widehat \ba = q_1\cdot \ba$, $\lim_{S\to\infty} \widehat \bb = q_2\cdot \bb$ for deterministic constants $q_1,q_2\ne 0$. We plug the limit into Definition \ref{Definition_angles} and use the Law of Large Numbers:
\begin{multline*}
 \lim_{S\to\infty} \frac{1}{S} \la \U^\T \ba, \U^\T \widehat \ba\ra= \lim_{S\to\infty} \sum_{k,l=1}^K \alpha_k \widehat \alpha_l  \frac{1}{S}\bigl\la k\text{th column of } \U^\T ,\, l\text{th column of } \U^\T \bigr\ra\\= q_1 \sum_{k,l=1}^K \alpha_k \alpha_l \E \bigl[u^k u^l\bigr], \qquad \text{in probability.}
\end{multline*}
Repeating the same computation two more times, we get
$$
  \lim_{S\to\infty} \sin^2 \theta^\u_i=\lim_{S\to\infty}\left[1- \frac{\la \U^\T \ba, \U^\T \widehat \ba\ra^2}{\la \U^\T \ba, \U^\T \ba\ra \la \U^\T \widehat \ba, \U^\T \widehat \ba\ra}\right]= 1- \frac{(q_1)^2}{1\cdot (q_1)^2}=0.
$$
The argument for $\theta^\v_i$ is the same and we get \eqref{eq_consistency_CCA_angles}.
\end{proof}

\subsection{Linearly growing $K$ and $M$} \label{Section_growing_KM} For the case of growing $K$ and $M$ we introduce the ``signal plus noise'' or ``spiked random matrix'' model, with general terminology going back to \cite{johnstone2001distribution}. In this model, independent vectors $\u$ and $\v$ (or their sample version, arrays $\U$ and $\V$) with all canonical correlations $c_i=0$ represent pure noise, meaning no signal is present in the data. Conversely, when some $c_i$ are positive, they and their corresponding canonical variables are treated as signals. Our goal is to separate signal from noise, reconstructing signals from the observed noisy data.


The natural first step toward this goal is to assume that only few $c_i$ are positive, possibly just one. The results from Chapter 3 and, in particular, Figure \ref{Fig_Wachter_simulation} indicate that inferring a non-zero correlation from observed data is challenging: the presence of non-zero sample correlations is not sufficient to conclude that even a single $c_i$ is non-zero.


Let us assume that there is only one signal: $c_1=r>0$ and all other $c_i$ equal zero. Thus, $r^2$ equals the single nonzero eigenvalue of the $K\times K$ matrix
 $(\E \u \u^\T)^{-1} (\E\u \v^\T) (\E\v \v^\T)^{-1} (\E\v \u^\T)$ and the single nonzero eigenvalue of the ${M\times M}$ matrix $(\E\v \v^\T)^{-1} (\E\v \u^\T) (\E\u \u^\T)^{-1} (\E\u \v^\T)$. We let $\ba$ and $\bb$ be the corresponding eigenvectors of the former and the latter matrices, respectively. Note that  $(\u^\T\ba, \v^\T\bb, r)$ is the single non-trivial triplet of canonical variables and correlations for $\u$ and $\v$.

We further assume that as $S\to\infty$, $r=r(S)$ tends to a constant $\rho>0$. The reason for distinguishing between $r$ and $\rho$ is that the vectors $\u$ and $\v$ are now necessarily $S$--dependent, because their dimensions $K$ and $M$ grow with $S$, thus, their correlation structure, captured by $r$, is allowed to change with $S$.

We recall the definition of $\lambda_+$ from \eqref{eq_lambda_pm_def} and introduce three additional functions of $\rho$:


\begin{align}
\label{eq_lambda_plus} \lambda_+&=\left(\sqrt{\tau_M^{-1}(1-\tau_K^{-1})}+ \sqrt{\tau_K^{-1}(1-\tau_M^{-1})}  \right)^2, \\
\label{eq_zrho}  z_\rho&=\frac{\bigl(  (\tau_K-1)\rho^2  + 1 \bigr) \bigl(  (\tau_M-1) \rho^2 + 1\bigr)}{\rho^2 \tau_K \tau_M },\\
\label{eq_sx} \s_\u&= \frac{(1-\rho^2)(\tau_K-1)}{(\tau_M-1)(\tau_K-1) \rho^2 - 1 } \cdot \frac{ (\tau_M-1) \rho^2 + 1 }{(\tau_K-1) \rho^2 + 1},\\
\label{eq_sy} \s_\v&= \frac{(1-\rho^2)(\tau_M-1)}{(\tau_M-1)(\tau_K-1) \rho^2 - 1 } \cdot \frac{ (\tau_K-1) \rho^2 + 1 }{(\tau_M-1) \rho^2 + 1}.
\end{align}

\begin{theorem} \label{Theorem_CCA_one_spike}
 In the one signal situation outlined above, assume that $\u$ and $\v$ are jointly Gaussian and $\W=\begin{pmatrix} \U\\ \V\end{pmatrix}$ is a $(K+M)\times S$ matrix with i.i.d.\ columns distributed as $\begin{pmatrix} \u \\ \v\end{pmatrix}$. Let $S$ tend to infinity and $K\le M$ depend on it in such a way that the ratios $S/K$ and $S/M$ converge to $\tau_K>1$ and $\tau_M>1$, respectively, and $\tau_M^{-1}+\tau_K^{-1}<1$. Simultaneously, suppose that $\lim_{S\to\infty} r^2=\rho^2$. Let $\hat c_1\ge \hat c_2\ge \dots$ be the canonical correlations between $\U$ and $\V$ and let $\theta^\u$, $\theta^\v$ be the angles corresponding to $\hat c_1$, as in Definition \ref{Definition_angles} with $i=1$. Then:
 \begin{enumerate}
  \item[\bf 1.] If $\rho^2> \tfrac{1}{\sqrt{(\tau_M-1)(\tau_K-1)}}$, then $z_\rho>\lambda_+$, and
  \begin{equation}
  \label{eq_canonical_limit}
   \lim_{S\to\infty} \hat c_1^2=z_\rho, \qquad \lim_{S\to\infty} \hat c_2^2=\lambda_+  \qquad \text{in probability},
  \end{equation}
  \begin{equation}
  \label{eq_vector_limit1}
   \lim_{S\to\infty} \sin^2\theta^\u=\s_\u \qquad \text{in probability},
  \end{equation}
  \begin{equation}
  \label{eq_vector_limit2}
   \lim_{S\to\infty} \sin^2\theta^\v=\s_\v \qquad \text{in probability}.
  \end{equation}
  \item[\bf 2.] If instead $\rho^2\le \tfrac{1}{\sqrt{(\tau_M-1)(\tau_K-1)}}$, then
  \begin{equation}
  \label{eq_no_spike}
   \lim_{S\to\infty} \hat c_1^2=\lambda_+ \qquad \text{in probability}.
  \end{equation}
 \end{enumerate}
\end{theorem}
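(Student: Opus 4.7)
The plan is to reduce to a canonical covariance form, recognize the problem as a low-rank signal added to the null CCA model of Theorem \ref{Theorem_CCA_LLN}, and then apply Stieltjes-transform techniques in the BBP spirit.

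\emph{Step 1: canonical reduction.} By \eqref{eq_x1} there exist invertible $A\in\mathbb R^{K\times K}$ and $B\in\mathbb R^{M\times M}$ such that, after replacing $\u$ by $A\u$ and $\v$ by $B\v$, the joint covariance takes the block form with $I_K,I_M$ on the diagonal and a single $r$ at cross-position $(1,1)$. Left-multiplication of $\U$ by $A$ and of $\V$ by $B$ preserves their row spans, hence preserves $P_\U,P_\V$, all sample canonical correlations, and the angles $\theta^\u,\theta^\v$ of Definition \ref{Definition_angles} (the check is that $\ba$ transforms to $A^{-\T}\ba$, so $\U^\T\ba$ is invariant). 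I may therefore assume that $\U$ and $\V$ have i.i.d.\ $\mathcal N(0,1)$ entries except that the first row of $\V$ is $\v_1 = r\u_1 + \sqrt{1-r^2}\,\boldsymbol\xi$ for a vector $\boldsymbol\xi\sim\mathcal N(0,I_S)$ independent of everything else, and the population canonical vectors are simply $\ba=e_1^{(K)}$ and $\bb=e_1^{(M)}$.

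\emph{Step 2: characteristic equation.} Let $T=\hat\Lambda_{uu}^{-1}\hat\Lambda_{uv}\hat\Lambda_{vv}^{-1}\hat\Lambda_{vu}$, whose eigenvalues are the $\hat c_i^2$. Replacing the first row of $\V$ by $\boldsymbol\xi$ yields a matrix $\V_0$ such that $(\U,\V_0)$ is pure noise; the corresponding matrix $T_0$ has bulk spectrum supported on $[\lambda_-,\lambda_+]$ with Wachter density and no outliers (Theorems \ref{Theorem_CCA_LLN} and \ref{Theorem_CCA_extreme}). The row-1 modification makes $\V-\V_0$ a rank-one matrix of the form $e_1^{(M)}(r\u_1+(\sqrt{1-r^2}-1)\boldsymbol\xi)^\T$, so all perturbations $\hat\Lambda_{uv}-(\hat\Lambda_{uv})_0$ and $\hat\Lambda_{vv}-(\hat\Lambda_{vv})_0$ are of bounded rank and supported in the two-dimensional block spanned by $e_1^{(K)}$ and $e_1^{(M)}$. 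Taking Schur complements of $\det(zI_K-T)=0$ in this block reduces the question ``does $T$ have an eigenvalue $z$ outside $[\lambda_-,\lambda_+]$?'' to a scalar self-consistent equation of the form $1=\rho^2\,H(z)$, where $H(z)$ is a deterministic rational function built from the Wachter Stieltjes transform $G(z)$ of Corollary \ref{Corollary_Wachter_Stieltjes} and its $M$-side analogue, with the quadratic identity \eqref{eq_Wachter_density_Stieltjes_equation} allowing a closed-form simplification of $H$.

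\emph{Step 3: threshold, second eigenvalue, angles.} Since $H$ is monotone on $(\lambda_+,\infty)$ with finite limit $H(\lambda_+^+)$, the equation $1=\rho^2 H(z)$ has a solution $z>\lambda_+$ iff $\rho^2 H(\lambda_+)>1$; a direct computation using \eqref{eq_Wachter_Stieltjes} should yield precisely the threshold $\rho^2>1/\sqrt{(\tau_K-1)(\tau_M-1)}$ and the root $z=z_\rho$ as in \eqref{eq_zrho}. As a sanity check, at threshold, writing $\tau_K=a^2+1,\tau_M=b^2+1$, one finds $z_\rho=(a+b)^2/(\tau_K\tau_M)=\lambda_+$, matching \eqref{eq_lambda_plus}. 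The limit $\hat c_2^2\to\lambda_+$ in \eqref{eq_canonical_limit} follows from Weyl-type interlacing together with edge convergence for the null model, while \eqref{eq_no_spike} follows because below the threshold $1=\rho^2 H(z)$ has no solution outside $[\lambda_-,\lambda_+]$. For the angles \eqref{eq_vector_limit1}--\eqref{eq_vector_limit2}, representing $\hat\ba$ as a contour integral $\oint_\Gamma(zI-T)^{-1}e_1^{(K)}\,dz$ around $z_\rho$ expresses $\cos^2\theta^\u=\la\U^\T\ba,\U^\T\hat\ba\ra^2/(\la\U^\T\ba,\U^\T\ba\ra\la\U^\T\hat\ba,\U^\T\hat\ba\ra)$ as a ratio of quadratic forms in $(z_\rho I-T)^{-1}$, which concentrate on deterministic limits in $G(z_\rho)$ and $G'(z_\rho)$; matching the resulting algebra with \eqref{eq_sx} finishes the $\u$ case, and $\v$ is the same under $\tau_K\leftrightarrow\tau_M$.

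The principal obstacle is the probabilistic backbone of Step 2: an ``isotropic local law'' guaranteeing that quadratic forms like $(e_1^{(K)})^\T(zI-T_0)^{-1}e_1^{(K)}$ concentrate on their deterministic equivalents uniformly for $z$ bounded away from $[\lambda_-,\lambda_+]$, and on the correct scale near $\lambda_+$. Such laws for the CCA matrix have been established under broad distributional assumptions by \citet{FanYang}; given that input, the algebraic derivation of $H$, the threshold, and the formulas $z_\rho,\s_\u,\s_\v$ is routine but lengthy manipulation of $G(z)$ in \eqref{eq_Wachter_Stieltjes}.
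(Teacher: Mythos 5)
Your proposal is correct in spirit but takes a genuinely different route from the paper. The paper (following \cite{BG3}) does not perturb the resolvent of $T$; it develops a bespoke \emph{perturbation theory for CCA of subspaces}. Concretely, Theorem~\ref{Theorem_CCA_master_equation} is an exact algebraic identity for the canonical correlations of $\U=\mathrm{span}(\u^*,\widetilde\U)$, $\V=\mathrm{span}(\v^*,\widetilde\V)$ in terms of the canonical correlations and variables of the smaller pair $(\widetilde\U,\widetilde\V)$; it is obtained by writing down the Lagrangian system \eqref{eq_CCA_equations}, solving the $2\times 2$ blocks for $\alpha_i,\beta_i$ in terms of $\alpha_0,\beta_0$, and imposing degeneracy of the resulting $2\times 2$ system on $(\alpha_0,\beta_0)$. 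One then takes $\u^*,\v^*$ to be the (signal-carrying) first rows of $\U,\V$ and $\widetilde\U,\widetilde\V$ to be the independent pure-noise spans of the remaining rows; conditioning on the spectrum of $(\widetilde\U,\widetilde\V)$ and applying the law of large numbers to the sums of scalar products with $\u^*,\v^*$ turns the exact identity into the limiting equation \eqref{eq_z_lim_eq} for $z_\rho$ in terms of $G(z)$. That route is self-contained in the sense of not requiring any external resolvent local law, though it leans on Gaussianity at the conditioning/LLN step. Your route --- finite-rank perturbation of $T$, Schur complement, and an isotropic local law as in \cite{FanYang} --- is the standard BBP-style argument; it outsources concentration to the local law but in return extends more readily beyond Gaussian entries. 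One small inaccuracy worth flagging: after your canonical reduction, the column space of the rank-one perturbation of $\hat\Lambda_{uv}$ is the random vector $\tfrac1S\U\bigl(r\,\u_1+(\sqrt{1-r^2}-1)\boldsymbol\xi\bigr)$, not exactly $e_1^{(K)}$, so the Schur block is not a fixed coordinate block --- it has a random, $\U$-dependent direction on the $K$-side --- and it is precisely the isotropic local law that licenses replacing that direction by its deterministic approximation to leading order.
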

\begin{remark} \label{Remark_Wachter_perturbation}
 One can show that the matrix $(\U \U^\T)^{-1} \U \V^\T (\V \V^\T)^{-1} \V \U^\T$ is a small rank perturbation of a similar matrix of the pure noise setting studied in the previous two chapters, see, e.g., \cite[Lemma A.6]{BG3}. Hence, the convergence to Wachter distribution of Theorem \ref{Theorem_CCA_LLN} continues to hold. Therefore, one can not distinguish the presence of one positive $c_i$ by looking at the limit of the empirical measures.
\end{remark}
The part of Theorem \ref{Theorem_CCA_one_spike} about the limit of $\hat{c}_i^2$ is proved in \cite{bao2019canonical} and is further extended to non-Gaussian $\U$ and $\V$ in \cite{yang2022limiting}; another approach to the Gaussian case can be found in \cite{hou2023spiked}.
These articles also show that in case $\mathbf{1}$ the rescaled difference $S^{1/2}(\hat c_1^2-\lambda_+)$ is asymptotically Gaussian with explicit variance. The part about the limit of the angles $\theta^\u,\,\theta^\v$ is proved in \cite{BG3}.

\begin{figure}[t]
\begin{subfigure}{.49\textwidth}
  \centering
  \includegraphics[width=1.0\linewidth]{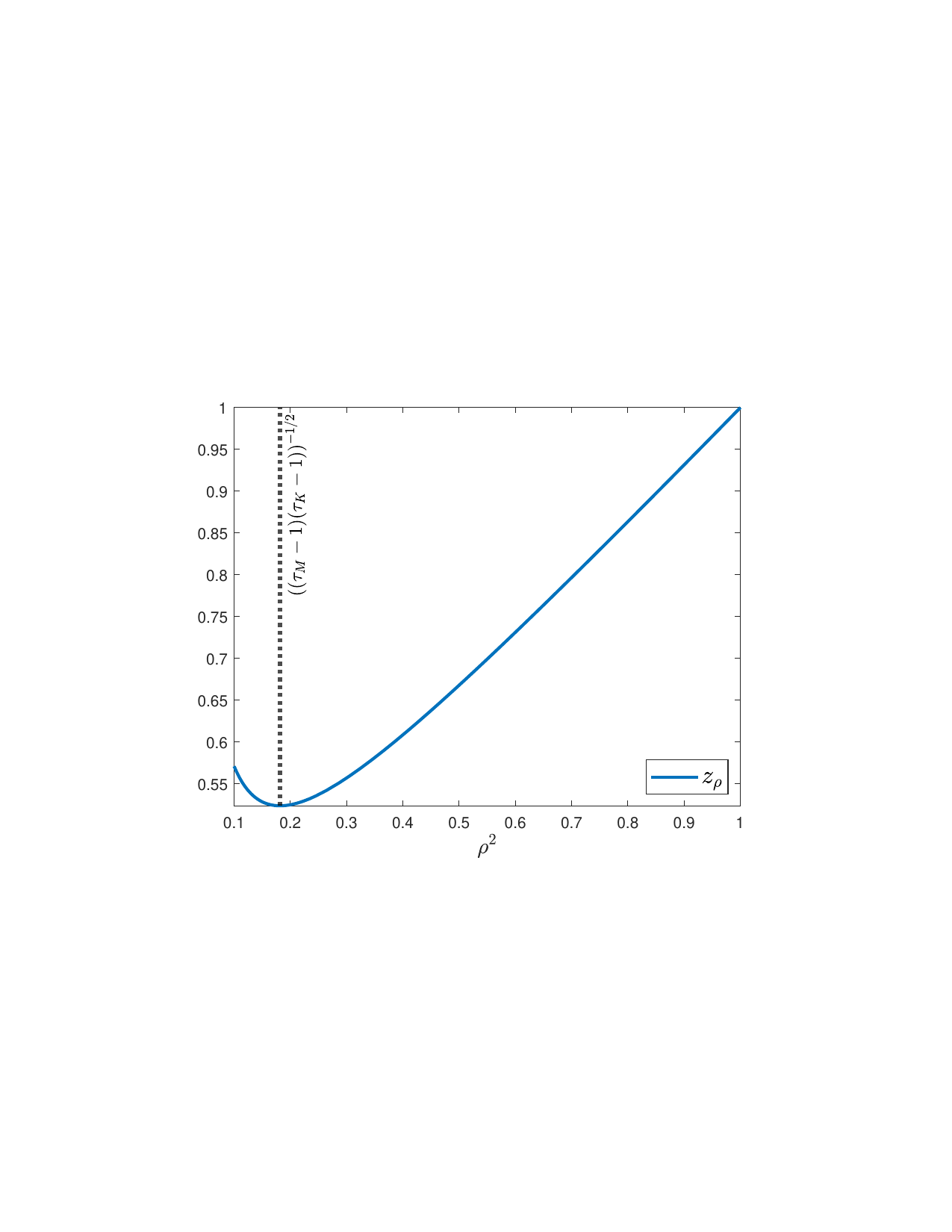}
  \caption{$z_{\rho}$ as a function of $\rho^2$.}
  \label{rho2_zrho}
\end{subfigure}%
\begin{subfigure}{.49\textwidth}
  \centering
  \includegraphics[width=1.0\linewidth]{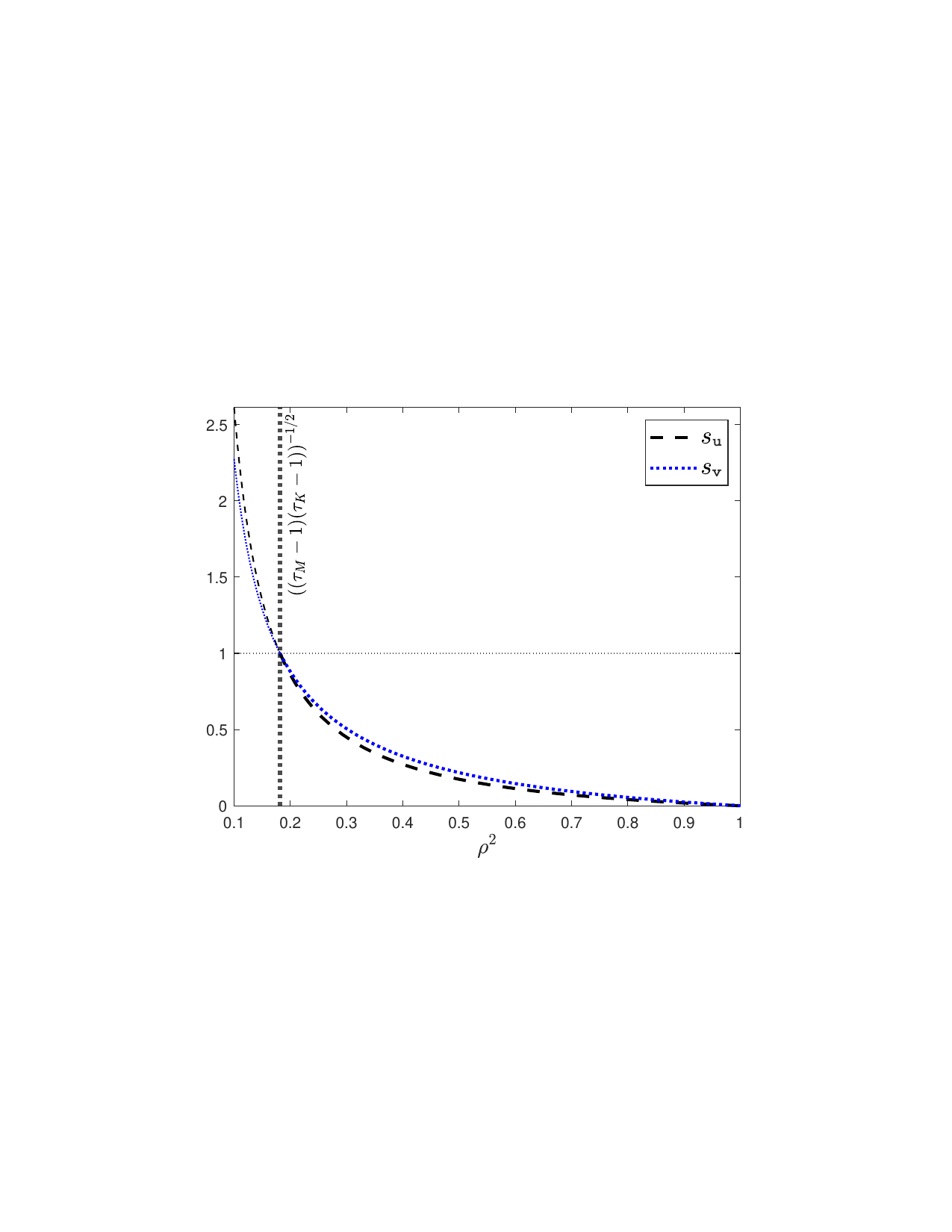}
  \caption{$\s_\u$ and $\s_\v$ as functions of $\rho^2$.}
  \label{rho2_sxsy}
\end{subfigure}
\caption{Functions in \eqref{eq_zrho}, \eqref{eq_sx}, \eqref{eq_sy} for $K=1000,\,M=1500,\,S=8000$.}
\label{fig_rho2_depend}
\end{figure}

\begin{figure}[t]
    \centering
        \includegraphics[width=0.6\textwidth]{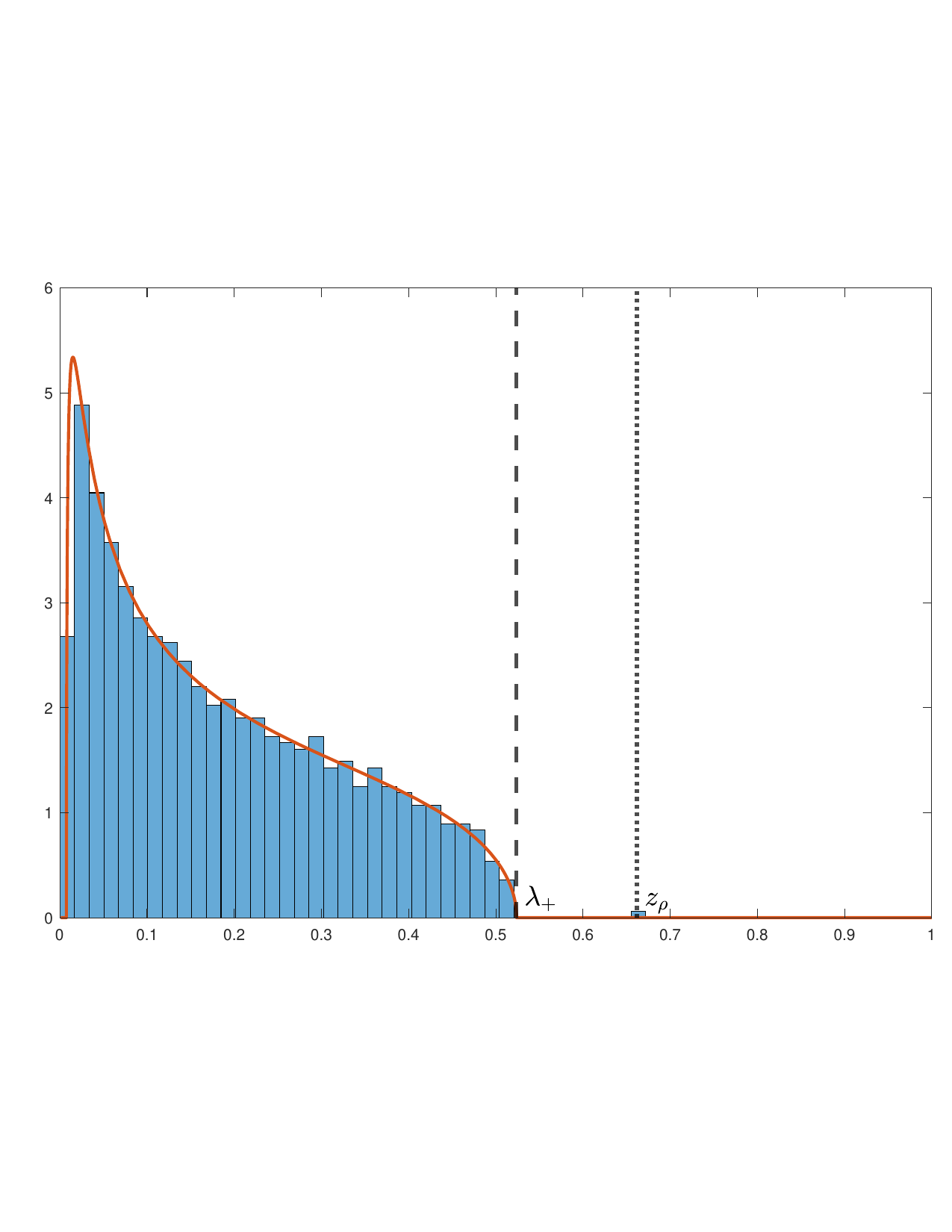}
    \caption{Histogram of simulated squared sample canonical correlations (blue columns) for a model with 1 signal, $K=1000$, $M=1500$, $S=8000$, $r^2=0.49$. A single spike is located approximately at $z_\rho$. The density of the Wachter distribution is shown in orange.} \label{Fig_spike}
\end{figure}

\begin{figure}[t]
    \centering
        \includegraphics[width=0.4\textwidth]{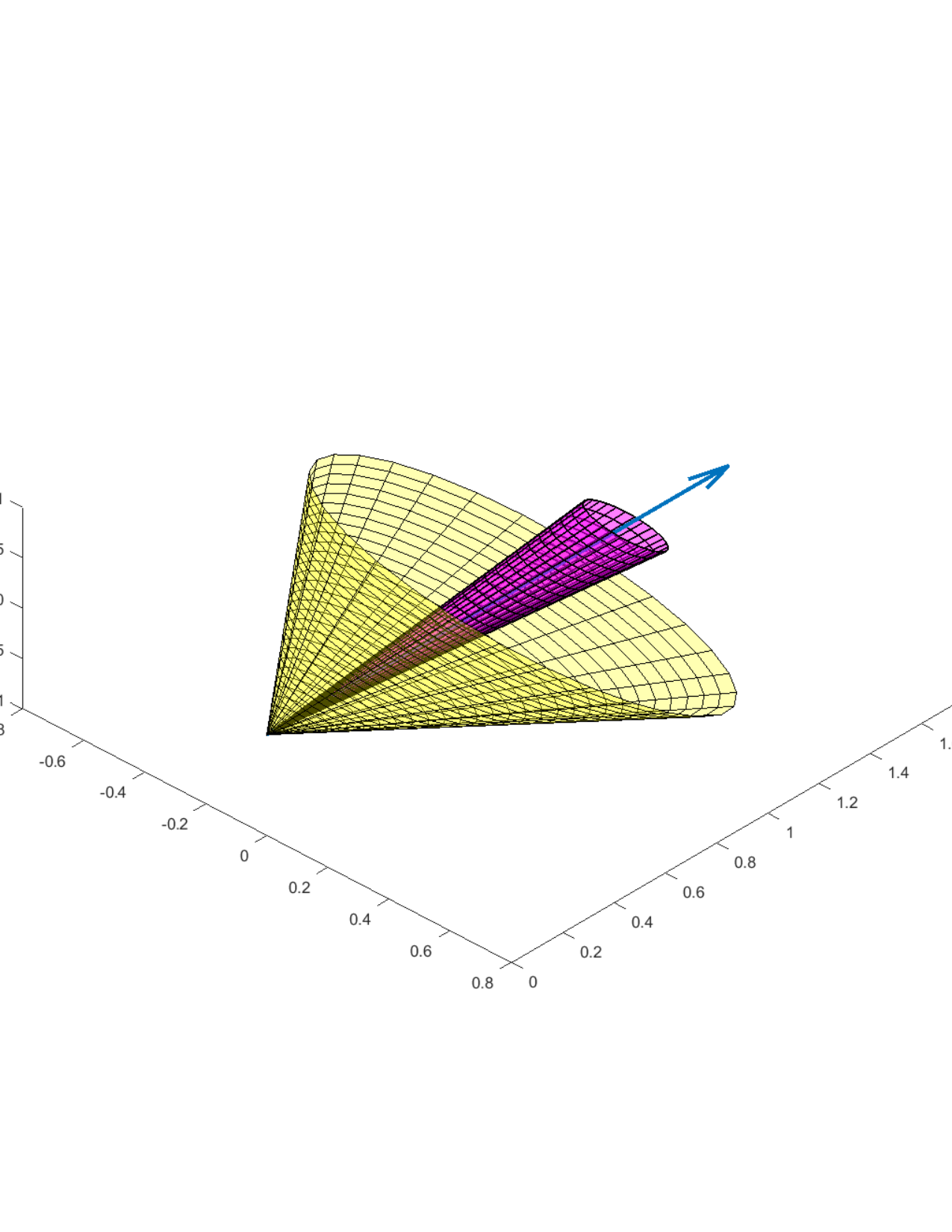}
    \caption{The estimated canonical variable lies on a cone whose axis aligns with the true direction, shown by the blue arrow. If $\sin^2\theta$ is small, then the cone is narrow, as shown in purple; if $\sin^2\theta$ is large, then the cone is wide, as shown in yellow. \label{Fig_cone}}
\end{figure}

Figure \ref{fig_rho2_depend} plots the dependence of $z_\rho$, $\s_u$, and $\s_\v$ on $\rho^2$ for a particular choice of $\tau_K$ and $\tau_M$. Analyzing the formulas \eqref{eq_zrho}, \eqref{eq_sx}, and \eqref{eq_sy} one can deduce many features of this dependence. In particular, for $\rho^2<1$, we get $z_\rho>\rho^2$, so that the largest sample canonical correlation always overestimates the true correlation. In addition, $\s_\u>0$ and $\s_\v>0$, which means that the estimates for the canonical variables are not consistent, but rather inclined by certain angles toward the desired true direction, as illustrated in Figure \ref{Fig_cone}. However, whenever either $\tau_K,\tau_M\to\infty$ or $\rho^2\to 1$, both $z_\rho$ approaches $\rho^2$ and the angles tend to $0$. Therefore, the consistency is restored in such a limit, matching the result of Theorems \ref{Theorem_CCA_consistency} and \ref{Theorem_CCA_consistency_variables}.

When we are working with real data and have no knowledge of the true value of $\rho$, the results of Theorem \ref{Theorem_CCA_one_spike} should be applied in the following way: if  the model matches the data, then most of the squared canonical correlations should belong to the $[\lambda_-,\lambda_+]$ interval. Further, if there is a gap between the largest sample canonical correlation $\lambda_1$ and $\lambda_+$, as in Figure \ref{Fig_spike}, then using Eq.~\eqref{eq_canonical_limit}, we take $\lambda_1$ as an approximation of $z_\rho$. Treating $z_\rho$ as known and approximating $\tau_K$, $\tau_M$ with $S/K$, $S/M$, Eq.~\eqref{eq_zrho} becomes a quadratic equation in $\rho^2$. Solving it and using $ \tfrac{1}{\sqrt{(\tau_M-1)(\tau_K-1)}}\le \rho^2\le 1$ to choose the correct root out of the two, we obtain an estimate for $\rho^2$. Further plugging into \eqref{eq_sx}, \eqref{eq_sy} and using  \eqref{eq_vector_limit1}, \eqref{eq_vector_limit2}, we obtain an estimate for the angles between sample and population canonical variables.

\subsection{Some ingredients of the proof of Theorem \ref{Theorem_CCA_one_spike}} We briefly outline an approach towards proving Theorem \ref{Theorem_CCA_one_spike} from \cite{BG3}. The idea is to develop CCA perturbation theory and then to produce an equation explaining how the canonical correlations and variables change when we add a vector to each of the subspaces.

Suppose that we are given two subspaces, $\widetilde \U$ and $\widetilde \V$ in $S$--dimensional space and let $\dim(\widetilde \V)={M-1}\ge {K-1}=\dim(\widetilde \U)$. In addition, we are given two vectors $\u^*$ and $\v^*$, that we add to spaces $\widetilde \U$ and $\widetilde \V$. Define
$$
 \U=\mathrm{span}(\u^*, \widetilde \U), \qquad \V=\mathrm{span}(\v^*, \widetilde \V).
$$
Both spaces $\widetilde \U$, $\widetilde \V$ and vectors $\u^*$, $\v^*$ are arbitrary, either deterministic or random, and no assumptions are made about them. Our task is to connect the canonical correlations and variables between $\U$ and $\V$ to those between $\widetilde \U$ and $\widetilde \V$.

Let $\{\widetilde \u_i\}_{i=1}^{K-1}$, $\{\widetilde \v_j\}_{j=1}^{M-1}$ and $\{\widetilde c_i\}_{i=1}^{K-1}$ be the canonical correlations and variables between $\widetilde \U$ and $\widetilde \V$, so that:
$$
  \langle \widetilde \u_i, \widetilde \u_j\rangle = \delta_{i=j}, \qquad \langle \widetilde \v_i, \widetilde \v_j\rangle = \delta_{i=j},\qquad \langle \widetilde \u_i, \widetilde \v_j\rangle = \widetilde c_i \delta_{i=j}.
$$
We also set $\widetilde c_j=0$ for $K\le j \le M-1$ for convenience of the notations.

\medskip

Now take two vectors $\widehat \ba=(\alpha_0,\alpha_1,\dots,\alpha_{K-1})$ and $\widehat \bb=(\beta_0,\beta_1,\dots,\beta_{M-1})$, such that $ \alpha_0\u^*+\sum_{i=1}^{K-1} \alpha_i \widetilde \u_i$ and $\beta_0 \v^*+\sum_{j=1}^{M-1} \beta_j \widetilde \v_j$ is a pair of canonical variables for $\U$ and $\V$ corresponding to a squared canonical correlation denoted $z$. We normalize the vectors so that
\begin{equation}
\label{eq_normalization}
 \left\|\alpha_0 \u^*+ \sum_{i=1}^{K-1} \alpha_i \widetilde \u_i\right\|^2=\left\|\beta_0 \v^*+ \sum_{j=1}^{M-1} \beta_j \widetilde \v_j\right\|^2=1.
\end{equation}
Then $z=\left\la \alpha_0 \u^*+ \sum_{i=1}^{K-1} \alpha_i \widetilde \u_i,\, \beta_0 \v^*+ \sum_{j=1}^{M-1} \beta_j \widetilde \v_j\right\ra^2$.

Note that previously, in Sections \ref{Section_finite_K_M_limit} and \ref{Section_growing_KM}, we wrote canonical variables in the bases of rows of the matrices $\U$ and $\V$; this is how canonical vectors were defined there. This time, the basis is slightly different: for $\U$ we are expanding the canonical variables as combinations of $\u^*$ and $\widetilde \u_1,\dots,\widetilde \u_{K-1}$; for $\V$ we are expanding the canonical variables as combinations of $\v^*$ and $\widetilde \v_1,\dots,\widetilde \v_{M-1}$. Up to this detail, the heuristic meaning of $\widehat \ba$ and $\widehat \bb$ is the same as before: these are still coefficients or coordinates of the expansion of canonical variables.

\begin{theorem} \label{Theorem_CCA_master_equation}
For each such triplet $\widehat \ba$, $\widehat \bb$, $z$, we have:
\begin{multline}
\label{eq_CCA_master}
 \left[\langle \u^*, \v^*\rangle + \sum_{j=1}^{M-1}  \frac{\langle\u^*, \widetilde \v_j\rangle(\widetilde c_j \langle \v^*, \widetilde \u_j\rangle -z \langle \v^*, \widetilde\v_j\rangle)}{z-\widetilde c_j^2} -z  \sum_{i=1}^{K-1}    \frac{\langle\u^*, \widetilde\u_i\rangle(\langle \widetilde\v^*, \widetilde\u_i\rangle - \widetilde c_i \langle \v^*, \widetilde\v_i\rangle)}{z - \widetilde c_i^2}   \right]^2\\= z \left[-\langle\u^*,\u^*\rangle+\sum_{j=1}^{M-1}\frac{  \langle \u^*, \widetilde \v_j \rangle^2 -2 \widetilde c_j  \langle\u^*, \widetilde\v_j\rangle  \langle \u^*, \widetilde\u_j\rangle }{z-\widetilde c_j^2} +z\sum_{i=1}^{K-1}   \frac{\langle \u^*, \widetilde\u_i\rangle^2}{z - \widetilde c_i^2} \right] \\ \times \left[-\langle\v^*,\v^*\rangle+\sum_{i=1}^{K-1}  \frac{\langle \v^*, \widetilde \u_i\rangle^2 - 2  \widetilde c_i \langle \v^*, \widetilde \u_i\rangle \langle \v^*, \widetilde \v_i\rangle}{z -\widetilde c_i^2} +z\sum_{j=1}^{M-1} \frac{\langle\v^*,\widetilde \v_j\rangle^2}{z - \widetilde c_j^2}\right].
\end{multline}
\end{theorem}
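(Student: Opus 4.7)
My plan is to apply the Lagrangian/critical--point formulation of CCA from Subsection \ref{subsection_maximiz_problem} to the augmented spaces $\U$ and $\V$, expanding the canonical variables in the (non--orthonormal) spanning sets $\{\u^*,\widetilde\u_1,\dots,\widetilde\u_{K-1}\}$ of $\U$ and $\{\v^*,\widetilde\v_1,\dots,\widetilde\v_{M-1}\}$ of $\V$. Writing $\u=\alpha_0\u^*+\sum_i\alpha_i\widetilde\u_i$ and $\v=\beta_0\v^*+\sum_j\beta_j\widetilde\v_j$, differentiating the Lagrangian $\mathcal{L}=\la\u,\v\ra+a\la\u,\u\ra+b\la\v,\v\ra$ with respect to each coefficient and pairing back with $\u,\v$ under the unit--norm constraints \eqref{eq_normalization} gives $2a=2b=-c$ where $c=\la\u,\v\ra=\sqrt z$. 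This produces $K+M$ stationarity equations: $\la\mathbf e,\v\ra=c\la\mathbf e,\u\ra$ for $\mathbf e\in\{\u^*,\widetilde\u_1,\dots,\widetilde\u_{K-1}\}$ and $\la\mathbf f,\u\ra=c\la\mathbf f,\v\ra$ for $\mathbf f\in\{\v^*,\widetilde\v_1,\dots,\widetilde\v_{M-1}\}$.

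The next step is to eliminate the ``inner'' coefficients $\alpha_i,\beta_j$ in favor of the two ``pivot'' coefficients $\alpha_0,\beta_0$. Using $\la\widetilde\u_i,\widetilde\u_k\ra=\delta_{ik}$, $\la\widetilde\v_j,\widetilde\v_l\ra=\delta_{jl}$ and the biorthogonality $\la\widetilde\u_i,\widetilde\v_j\ra=\widetilde c_i\delta_{ij}$, the equation coming from $\widetilde\u_i$ together with the one coming from $\widetilde\v_i$ decouples (for each $i\le K-1$) into a $2\times 2$ linear system in $(\alpha_i,\beta_i)$ whose coefficient matrix has diagonal entries $-c$ and off--diagonal entries $\widetilde c_i$, hence determinant $z-\widetilde c_i^2$. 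Inverting it expresses $\alpha_i$ and $\beta_i$ as explicit linear combinations of $\alpha_0$ and $\beta_0$, with coefficients built from the four scalar products $\la\u^*,\widetilde\u_i\ra,\la\u^*,\widetilde\v_i\ra,\la\v^*,\widetilde\u_i\ra,\la\v^*,\widetilde\v_i\ra$. For indices $j\ge K$, where $\widetilde\u_j$ does not exist, only the equation from $\widetilde\v_j$ is present and yields $\beta_j$ in terms of $\alpha_0,\beta_0$ directly; by the convention $\widetilde c_j=0$ this fits into the same unified formula.

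Substituting these expressions into the two remaining equations -- the ones coming from $\u^*$ and from $\v^*$ -- produces a homogeneous $2\times 2$ linear system in $(\alpha_0,\beta_0)$. The diagonal entries turn out to be $cT_\u$ and $cT_\v$, where $T_\u$ and $T_\v$ coincide with the second and third bracketed factors on the right--hand side of \eqref{eq_CCA_master}: the contribution from the $i$--sum and the $j=i$ term of the $j$--sum combine for $i\le K-1$ into the symmetric form $\bigl(z\la\u^*,\widetilde\u_i\ra^2+\la\u^*,\widetilde\v_i\ra^2-2\widetilde c_i\la\u^*,\widetilde\u_i\ra\la\u^*,\widetilde\v_i\ra\bigr)/(z-\widetilde c_i^2)$, and the $j\ge K$ tail fits this same formula with $\widetilde c_j=0$ (and analogously for $T_\v$). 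Both off--diagonal entries turn out to equal the expression $L$ inside the brackets on the left--hand side of \eqref{eq_CCA_master}. Demanding the determinant to vanish for a non--trivial solution yields $zT_\u T_\v=L^2$, which is exactly \eqref{eq_CCA_master}.

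The main obstacle is the verification that the two off--diagonal entries of the reduced $2\times 2$ system actually coincide and both equal $L$. This is an a priori asymmetric assertion, because the equation from $\u^*$ and the equation from $\v^*$ play quite different roles and contribute through different substitutions; the equality of the two off--diagonals is precisely what forces the left--hand side of \eqref{eq_CCA_master} to appear as a perfect square rather than a product of two distinct bracketed quantities. Checking it boils down to a careful bookkeeping exercise in which several cross--terms proportional to $\widetilde c_i$ must reassemble identically into the same bilinear form in $\u^*,\v^*$. Besides this, there are two minor issues to dispatch: handling the index convention $j\ge K$ consistently with the missing $\widetilde\u_j$ (set $\la\u^*,\widetilde\u_j\ra=\la\v^*,\widetilde\u_j\ra=0$ in that range so that $\widetilde c_j=0$ erases the corresponding cross--terms), and the potentially singular case $z=\widetilde c_i^2$ for some $i$ in which the Step 2 systems degenerate -- the latter lies on an algebraic hypersurface in parameters and can be treated by continuous limit from the generic case.
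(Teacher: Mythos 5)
Your proposal follows essentially the same route as the paper's proof: form the Lagrangian, differentiate to obtain $K+M$ stationarity equations, eliminate the inner coefficients $(\alpha_i,\beta_i)$ via the decoupled $2\times 2$ blocks with determinant $z-\widetilde c_i^2$, then substitute into the $\u^*$ and $\v^*$ equations and set the determinant of the resulting $2\times 2$ homogeneous system in $(\alpha_0,\beta_0)$ to zero. The one organizational difference is that you exploit $2a=2b=\pm c$ immediately (by pairing the stationarity relations with $\u$ and $\v$ under the unit-norm constraints), which is a small simplification; the paper keeps $a$ and $b$ as separate unknowns and only records $z=4ab$ at the end — both are correct.

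Your identification of the ``main obstacle'' is accurate and worth noting: establishing that the two off-diagonal entries of the reduced system are actually equal (or, more precisely, that their product is $L^2$ and not $-L^2$) is a genuine cancellation that requires working through the substitution carefully. The paper's displayed proof is a sketch that simply asserts the determinant condition is \eqref{eq_CCA_master} and points to \cite[Section A.2]{BG3} for the full bookkeeping, so you are not expected to carry it out here, but it is indeed the nontrivial content hiding behind the word ``precisely.'' Your remarks on handling $j\ge K$ via $\widetilde c_j=0$ and on the degenerate hypersurface $z=\widetilde c_i^2$ are both sound and consistent with the paper's conventions.
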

\begin{remark}
 Supplementing \eqref{eq_CCA_master} with additional equations, we can also completely determine $\widehat \ba$ and $\widehat \bb$, see \cite[Section A.2]{BG3} for the full details and proofs.
\end{remark}
\begin{proof}[Sketch of the proof of Theorem \ref{Theorem_CCA_master_equation}] The proof involves heavy computations, and a toy version (based on the same ideas) is included in Problem set 4 in Section \ref{Section_exercises}.

 According to one of the equivalent algorithms for finding canonical correlations and variables, we seek for a pair of vectors $\widehat \ba=(\alpha_0,\alpha_1,\dots,\alpha_{K-1})$ and $\widehat \bb=(\beta_0,\beta_1,\dots,\beta_{M-1})$, which represent critical points of the function
$$
 f(\widehat \ba,\widehat \bb)=\left\langle \alpha_0 \u^*+ \sum_{i=1}^{K-1} \alpha_i \widetilde \u_i,\, \beta_0 \v^*+ \sum_{j=1}^{M-1} \beta_j \widetilde \v_j\right\rangle,
$$
subject to normalization constraints \eqref{eq_normalization}. Introducing the Lagrange multipliers $a$ and $b$ corresponding to two normalizations, we are led to the Lagrangian function
\begin{multline}
 g(\widehat \ba,\widehat \bb,a,b)= \alpha_0 \beta_0 \langle \u^*, \v^*\rangle + \alpha_0 \sum_{j=1}^{M-1} \beta_j \langle \u^*,\widetilde \v_j\rangle +\beta_0 \sum_{i=1}^{K-1} \alpha_i \langle \v^*, \widetilde \u_i\rangle + \sum_{i=1}^{K-1} \alpha_i \beta_i \widetilde c_i
 \\- a\left( \alpha_0^2 \langle\u^*,\u^*\rangle + 2\alpha_0 \sum_{i=1}^{K-1} \alpha_i \langle \u^*, \widetilde \u_i\rangle +\sum_{i=1}^{K-1} \alpha_i^2 - 1\right)
   \\- b\left( \beta_0^2 \langle\v^*,\v^*\rangle + 2 \beta_0 \sum_{j=1}^{M-1} \beta_j \langle \v^*, \widetilde \v_j\rangle +\sum_{j=1}^{M-1} \beta_j^2 -1\right).
\end{multline}
We need to find critical points of this function. Differentiating with respect to $\alpha_i$ and $\beta_j$, we get a system of $K+M$ homogeneous linear equations on coordinates of $\widehat \ba$ and $\widehat \bb$:

\begin{equation} \label{eq_CCA_equations}
\begin{dcases}
 0=\frac{\partial g}{\partial \alpha_0}= \beta_0 \langle \u^*, \v^*\rangle + \sum_{j=1}^{M-1} \beta_j \langle \u^*, \widetilde \v_j\rangle - 2 a\alpha_0 \langle\u^*,\u^*\rangle -2 a \sum_{i=1}^{K-1} \alpha_i \langle \u^*, \widetilde \u_i\rangle,\\
 0=\frac{\partial g}{\partial \alpha_i}= \beta_0 \langle \v^*, \widetilde \u_i\rangle + \beta_i \widetilde c_i -2a \alpha_0 \langle \u^*, \widetilde \u_i \rangle - 2 a \alpha_i, & 1\le i < K,\\
 0=\frac{\partial g}{\partial \beta_0} = \alpha_0 \langle \u^*, \v^*\rangle +\sum_{i=1}^{K-1} \alpha_i \langle \v^*, \widetilde \u_i\rangle  - 2b \beta_0 \langle\v^*,\v^*\rangle - 2b \sum_{j=1}^{M-1} \beta_j \langle \v^*, \widetilde \v_j\rangle,\\
  0=\frac{\partial g}{\partial \beta_j}= \alpha_0 \langle \u^*, \widetilde \v_j\rangle + \delta_{j< K} \cdot \alpha_j \widetilde c_j - 2 b \beta_0 \langle \v^*, \widetilde \v_j\rangle - 2b \beta_j, & 1 \le j <M.\\
\end{dcases}
\end{equation}
In the matrix form, equations \eqref{eq_CCA_equations} can be rewritten as
\begin{equation} \label{eq_CCA_equations_matrix}
\begin{cases}
 S_{uv} \widehat \bb=2a S_{uu} \widehat \ba,\\
 S_{vu} \widehat \ba= 2b S_{vv} \widehat \bb,
\end{cases}
\end{equation}
for appropriate matrices $S_{uv}$, $S_{uu}$, $S_{vu}$, $S_{vv}$.
Comparing with the definition of squared canonical correlations as eigenvalues, one can deduce that $z=4ab$. Hence, it remains to solve \eqref{eq_CCA_equations}. For that we notice that the 2nd and 4th sets of equations split into $2\times 2$ blocks which can be used to express $\alpha_i$ and $\beta_i$ through $\alpha_0$ and $\beta_0$:
\begin{equation}
\label{eq_2equations}
\begin{cases}
 - 2 a \alpha_i + \widetilde c_i\beta_i  &=\,\,2a \alpha_0 \langle \u^*, \widetilde \u_i \rangle -\beta_0 \langle \v^*, \widetilde \u_i\rangle, \\
  \,\,\,\,\,\widetilde c_i\alpha_i - 2b \beta_i &=\,\,- \alpha_0 \langle \u^*, \widetilde \v_i\rangle  + 2 b \beta_0 \langle \v^*, \widetilde \v_i\rangle.
\end{cases}
\end{equation}
After plugging \eqref{eq_2equations} back into the first and third equations in \eqref{eq_CCA_equations}, we arrive at two homogeneous linear equations on $\alpha_0$ and $\beta_0$.  A non-degenerate system of two homogenous linear equations with two variables does not have non-zero solutions. Hence, in order for such a solution to exist, the system needs to be degenerate, i.e.\ the determinant of its matrix of coefficients should be equal to zero.  This condition is precisely \eqref{eq_CCA_master}.
\end{proof}

\begin{lemma} \label{Lemma_Wachter_spike_answer} In the setting of Theorem \ref{Theorem_CCA_one_spike}, the squared canonical correlations $z$, which are larger than $\lambda_+$, asymptotically solve an equation, which is a limit of \eqref{eq_CCA_master}:
\begin{equation} \label{eq_z_lim_eq}
  \frac{\displaystyle z \left[1-2\tau_K^{-1}-\frac{1}{z} \cdot (\tau_M^{-1}-\tau_K^{-1})-(1-z) \tau_K^{-1} G(z) \right]  \left[1-\tau_K^{-1}-\tau_M^{-1}-(1 - z) \tau_K^{-1} G(z) \right]}{\displaystyle  \left[1 - \tau_M^{-1}-z\tau_K^{-1}  -  z(1-z)  \tau_K^{-1} G(z)  \right]^2}
  =\rho^2,
\end{equation}
where $G$ is the Stieltjes transform of the Wachter distribution in Corollary \ref{Corollary_Wachter_Stieltjes}. If $0\le \rho^2\le\frac{1}{\sqrt{(\tau_M-1)(\tau_K-1)}}$, then there is no $z>\lambda_+$ satisfying \eqref{eq_z_lim_eq}. But if
\begin{equation}
 \label{eq_Wachter_cutoff}
  \frac{1}{\sqrt{(\tau_M-1)(\tau_K-1)}}<\rho^2 \le 1,
\end{equation}
then there is a unique $z>\lambda_+$ satisfying \eqref{eq_z_lim_eq}, denoted $z_\rho$. In this situation the relationship between $\rho$ and $z_\rho$ is:
\begin{align}
\label{eq_z_Wachter} z_\rho&=\frac{\bigl(  (\tau_K-1)\rho^2  + 1 \bigr) \bigl(  (\tau_M-1) \rho^2 + 1\bigr)}{\rho^2 \tau_K \tau_M } \qquad \text{ or, equivalently, }\\
\label{eq_z_inverse_Wachter} \rho^2&= \frac{z_\rho-\tau_M^{-1}-\tau_K^{-1}+2\tau_M^{-1}\tau_K^{-1} +\sqrt{(z_\rho-\lambda_-)(z_\rho-\lambda_+)}}{2(1-\tau_M^{-1})(1-\tau_K^{-1})}.
\end{align}
\end{lemma}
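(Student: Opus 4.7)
My plan is to take the $S\to\infty$ limit of the master equation \eqref{eq_CCA_master} under a convenient decomposition of the one-signal model, and then to solve the resulting limit equation explicitly.

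First, by Theorem \ref{Theorem_not_depend_on_cov} I reduce to unit covariances, and after rotating coordinates in $\mathbb R^K$ and $\mathbb R^M$ so that the canonical directions align with the first coordinates, I can write $\u = (\xi_1, \dots, \xi_K)^{\T}$ and $\v = (r\xi_1 + \sqrt{1-r^2}\,\eta_1, \eta_2, \dots, \eta_M)^{\T}$ with i.i.d.\ $\mathcal N(0,1)$ entries $\xi_i, \eta_j$. I then take $\u^*$ and $\v^*$ to be the first rows of $\U$ and $\V$, and $\widetilde\U, \widetilde\V$ to be the spans of their remaining rows. The key structural features are: $\|\u^*\|^2 \approx \|\v^*\|^2 \approx S$ and $\langle \u^*, \v^*\rangle \approx Sr$ by the Law of Large Numbers; the pair $(\u^*, \v^*)$ is independent of $(\widetilde\U, \widetilde\V)$; and $(\widetilde\U, \widetilde\V)$ is exactly the pure-noise setting of Theorem \ref{Theorem_CCA_LLN}, so the empirical distribution of $\widetilde c_j^2$ converges to $\omega_{\tau_K,\tau_M}$.

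Next, I apply Theorem \ref{Theorem_CCA_master_equation} with this decomposition. Conditional on $(\widetilde\U, \widetilde\V)$, Gaussian rotation invariance yields $\E[\langle \u^*, \widetilde\u_i\rangle \langle \v^*, \widetilde\u_i\rangle \mid \widetilde\U, \widetilde\V] = r$ and $\E[\langle \u^*, \widetilde\u_i\rangle \langle \v^*, \widetilde\v_i\rangle \mid \widetilde\U, \widetilde\V] = r\widetilde c_i$, with analogous formulas for the other mixed terms. Weighted sums of the form $\frac{1}{S}\sum_i \frac{1}{z - \widetilde c_i^2}$ concentrate, by Theorem \ref{Theorem_CCA_LLN}, to $\tau_K^{-1} G(z)$ for sums indexed by $\widetilde\u_i$ and to $\tau_K^{-1} G(z) + (\tau_M^{-1} - \tau_K^{-1})/z$ for sums indexed by $\widetilde\v_j$ (the extra term accounting for the $M-K$ zero $\widetilde c_j$'s). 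After dividing \eqref{eq_CCA_master} by $S^2$ and collecting, the LHS bracket tends to $\rho\bigl[1 - \tau_M^{-1} - z\tau_K^{-1} - z(1-z)\tau_K^{-1}G(z)\bigr]$, while the two RHS brackets tend (up to overall signs) to $1 - 2\tau_K^{-1} - (\tau_M^{-1} - \tau_K^{-1})/z - (1-z)\tau_K^{-1}G(z)$ and $1 - \tau_K^{-1} - \tau_M^{-1} - (1-z)\tau_K^{-1}G(z)$, which upon equating produces exactly \eqref{eq_z_lim_eq}.

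To identify $z_\rho$ and the threshold, I substitute the explicit formula for $G(z)$ from Corollary \ref{Corollary_Wachter_Stieltjes} into \eqref{eq_z_lim_eq}. At the edge $z=\lambda_+$ the square root in $G(z)$ vanishes and a direct computation reduces the RHS of \eqref{eq_z_lim_eq} to the critical value $\tfrac{1}{\sqrt{(\tau_M-1)(\tau_K-1)}}$. Viewing \eqref{eq_z_lim_eq} as a quadratic in $\rho^2$ (after isolating the square root and squaring) gives the inverse formula \eqref{eq_z_inverse_Wachter}; inverting this quadratic yields the product formula \eqref{eq_z_Wachter}. A monotonicity argument on $(\lambda_+, \infty)$, namely that the RHS of \eqref{eq_z_lim_eq} decreases from the threshold value to $0$ as $z$ grows (which can be verified from the explicit rational expression in $z$), then guarantees a unique admissible $z_\rho > \lambda_+$ precisely when $\rho^2$ lies in the range \eqref{eq_Wachter_cutoff}.

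The main obstacle is the bookkeeping in the second step: the master equation contains many similar-looking terms, each requiring its own conditional-expectation calculation and scaling analysis, and small algebraic errors propagate through the entire limit. A secondary delicate point is justifying uniform concentration of the weighted sums for $z$ bounded away from $\lambda_+$; this relies on the fact that $\mathrm{supp}(\omega_{\tau_K,\tau_M})$ is strictly inside $(0,1)$, keeping the denominators $z - \widetilde c_j^2$ safely away from zero with high probability. The final passage from \eqref{eq_z_lim_eq} to the clean form \eqref{eq_z_Wachter} is lengthy but mechanical, and is probably cleanest to carry out by substituting \eqref{eq_z_Wachter} back into \eqref{eq_z_lim_eq} and verifying the identity after clearing denominators.
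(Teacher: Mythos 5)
Your proposal matches the paper's own proof sketch: the paper also reduces to a decomposition where $\u^*$ and $\v^*$ carry the signal and are independent of $(\widetilde\U, \widetilde\V)$, applies the master equation of Theorem \ref{Theorem_CCA_master_equation}, conditions on $(\widetilde\U, \widetilde\V)$ to replace the Gaussian scalar products by their conditional expectations, recognizes the Stieltjes transform $G(z)$ via Theorem \ref{Theorem_CCA_LLN}, and then simplifies using Corollary \ref{Corollary_Wachter_Stieltjes}. Your version is correct and spells out the concentration and bracket-by-bracket accounting slightly more explicitly, but it is the same argument.
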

\begin{proof}[Sketch of the proof.] First, we make linear transformations of $\u$ and $\v$ vectors, so that $\ba$ turns into $(1,0,\dots,0)$ and $\bb$ turns into $(1,\dots,0)$ and all the remaining coordinates of $\u$ and $\v$ (with indices larger than one) are independent with each other and with the first coordinates.

Now we are in the context of Theorem \ref{Theorem_CCA_master_equation} where $\u^*$ is the first row of $\U$, $\v^{*}$ is the first row of $\V$, $\tilde \U$ is the span of the second, third,\dots, $K$th rows of $\U$, and $\tilde \V$ is the span of the second, third, \dots, $M$th rows of $\V$. Hence, \eqref{eq_CCA_master} holds and we need to study its large $K,M,S$ limit.

For that we further observe that $\la\u^*, \widetilde \u_i\ra$, $i=1,\dots,K-1$, are i.i.d.\ Gaussians, because $\u^*$ has i.i.d.\ Gaussian coordinates and $\u_i$ are orthonormal. Similarly, $\la\u^*, \widetilde \v_j\ra$ are i.i.d.\ Gaussians, and so are  $\la\v^*, \widetilde \u_i\ra$ and  $\la\v^*, \widetilde \v_j\ra$. Hence, we can condition on the values of $\widetilde  c_i$ and then apply the law of large numbers to the sums in \eqref{eq_CCA_master}, replacing various sums of scalar products with $\u^*$ and $\v^*$ by their (conditional) expectations. After that we get sums of expressions involving $\frac{1}{z-\widetilde c_i^2}$. Noting that $\widetilde \U$ and $\widetilde \V$ are independent, we can apply Theorem \ref{Theorem_CCA_LLN}, and recognize the appearance of $G(z)$ in the limit of the sums. In this way we arrive at \eqref{eq_z_lim_eq}.

It remains to simplify \eqref{eq_z_lim_eq} to the expressions of \eqref{eq_z_Wachter}, \eqref{eq_z_inverse_Wachter} using the explicit formula of Corollary \ref{Corollary_Wachter_Stieltjes}, which is a cumbersome, but straightforward computation, see \cite{BG3} for the further details.
\end{proof}

\subsection{Multiple signals} Theorem \ref{Theorem_CCA_one_spike} covers a situation with a single non-zero canonical correlation $c_i$ in the population and one can ask what happens when there are more. The case of finitely many non-zero $c_i$ is well understood, whereas for the case of infinitely many non-zero $c_i$ many questions remain open. In the finite case, when some of the non-zero canonical correlations $c_i$ coincide, the non-uniqueness of canonical variables is again relevant, see, e.g., Remark \ref{remark_nonuniqueness}. In addition, the centered distributional limits of $\hat c_i$ become non-Gaussian, see \cite[Theorem 1.10]{bao2019canonical}. If the canonical correlations are all distinct then essentially the same Theorem \ref{Theorem_CCA_one_spike} continues to hold for each $c_i$ individually and independently, see \cite[Theorem 1.10]{bao2019canonical} and \cite[Theorem 3.7]{BG3}. The histogram in Figure \ref{Fig_two_stocks_hist}, referenced in Chapter 1, illustrates multiple signals --- three, to be specific--- using real data. Let us explain that picture in more detail.

We calculate canonical correlations between consumer cyclical and non-cyclical (consumer defensive) stocks. The former are known to follow the state of the economy, while the latter are not related to business cycles and are useful during economic slowdowns. Knowing their correlations can be helpful for portfolio allocation. Hence, we apply CCA and search for maximally correlated combinations of cyclical and non-cyclical stocks. We use weekly returns (which are widely believed to be uncorrelated across time) for the $80$ largest cyclical and $80$ largest non-cyclical stocks over ten years ($01.01.2010-01.01.2020$), which gives us $521$ observations across time; the exact list of stocks is reported in \cite[Appendix C]{BG3}. We can see that the overall shape of the histogram of the correlations matches the Wachter distribution. In view of Theorem \ref{Theorem_CCA_LLN} from Chapter 3 and Remark \ref{Remark_Wachter_perturbation}, this indicates that the data matches our modelling assumptions and we can use the formulas of Theorem \ref{Theorem_CCA_extreme}. At the same time,  three largest correlations are clearly separated from the rest. Interestingly, the presence of three correlations is resonant with the celebrated three factors of \citet{fama1992cross}.

\bigskip

When a linear proportion of the population canonical correlations $c_i$ are nonzero, several new questions arise. In this regime, the Wachter distribution of Theorem~\ref{Theorem_CCA_LLN} no longer describes the limiting empirical distribution of the squared sample canonical correlations. A more intricate limiting law, which depends on the empirical distribution of the population canonical correlations, has recently been derived in \cite{bai2022limiting, hou2023spiked, zhang2025limiting}. These works provide explicit formulas for the limiting empirical measure of the sample canonical correlations as a functional of the population counterpart.

From a statistical perspective, the corresponding inverse problem is also of interest: given the observed sample canonical correlations and variables, one aims to recover their population analogues in the setting where many $c_i$ are nonzero. Theorem~\ref{Theorem_CCA_one_spike} suggests that consistent recovery is not possible, yet optimal estimation procedures can still be developed. This question has been extensively studied in the context of Principal Component Analysis (PCA), where such procedures are known as \emph{cleaning} or \emph{shrinkage}; see \cite{ledoit2022power} for a review. Closer to the CCA setting, optimal cleaning methods for cross-covariance matrices were developed in \cite{benaych2023optimal}. For CCA itself, however, we are not aware of any works establishing analogous optimal shrinkage or cleaning schemes.

\subsection{Subcritical signals and regularizations}

Theorem \ref{Theorem_CCA_one_spike} requires $\rho^2> \tfrac{1}{\sqrt{(\tau_M-1)(\tau_K-1)}}$, i.e., the correlation of the signal to be large enough, in order to be useful. When $\rho^2$ is smaller, by \eqref{eq_no_spike} the histogram of canonical correlations does not have spikes anymore and the theorem is of no use. The situation $\rho^2<\tfrac{1}{\sqrt{(\tau_M-1)(\tau_K-1)}}$ is called \emph{subcritical} in the literature, and one can be wondering: is there any way to infer asymptotic information on the signal, i.e., the non-zero canonical correlation and corresponding variables in this case? Is it even possible to detect the presence of a subcrticial signal?

If one sticks to spectral methods, then the most likely answer is ``no''. In this direction, assuming the data to be Gaussian, \cite{johnstone2020testing} computed the asymptotics of the ratio of the probability densities of the canonical correlations under the alternative of a single subcritical signal and under the null of no signal. They show that the ratio neither grows to $\infty$, nor decays to $0$, but rather converges to a non-trivial Gaussian process, parameterized by the value of subcritical $\rho^2$. This is an indication of impossibility of the detection with probability close to $1$ of subcritical $\rho^2$. The transition to subcritical regime at $\rho^2 \approx \tfrac{1}{\sqrt{(\tau_M-1)(\tau_K-1)}}$ for the largest eigenvalues is explored in detail in \cite{bykhovskaya2025weak}.

For more general class of methods it might be possible to improve the $\tfrac{1}{\sqrt{(\tau_M-1)(\tau_K-1)}}$ threshold and detect some subcritical $\rho^2$ when the noise is non-Gaussian. The details have not yet been figured out in the CCA setting, however, in the related PCA setup improvement can be achieved by applying a certain function (which depends on the distribution on the noise and is the identical function for the Gaussian noise --- leading to no improvements in the latter case) to each matrix element before using spectral methods, see, e.g., \citet{perry2018optimality} and references therein.

An alternative approach to overcoming the problem of the deterioration of the quality of CCA estimates as the ratio $S/K$ and $S/M$ become smaller is by imposing additional restrictions. If we have extra information about the nature of canonical variables, then we can use it by introducing \emph{regularizations} into CCA. Some of the popular approaches require the canonical vectors to be sparse (i.e.,  only a few coordinates of canonical variables are non-zero) or smooth (i.e., the coordinates depend on their labels in a continuous or smooth way). A way to achieve these regularizing requirements is by using the variational formula of Section \ref{subsection_maximiz_problem} for the canonical correlations and variables and then adding extra terms penalizing undesired features to the functions which we maximize. This is a very flexible approach, as various penalties can be used to guarantee various features of regularized canonical variables. However, a drawback is that one needs to rely on the assumptions imposed in constructing the regularization, whose validity might be hard to check in real-world data sets.

There is a large literature on various modifications of CCA and as a possible starting point for explorations of numerous methods, one can consider
\citet{gao2015minimax,gao2017sparse,uurtio2017tutorial,yang2019survey,tuzhilina2023canonical} and references therein. 

\newpage

\section{Cointegration}
Beyond its primary objective of identifying correlated components across different data sets, CCA can also be applied to analyze various properties of time series. The idea is to manually construct two rectangular data sets out of a single data set and then  implement CCA to infer some features of the data. In particular, in this section we explore how CCA can be used to test for the presence of non-exploding, or cointegrating, linear combinations within an exploding multidimensional time series.

In time series analysis data is typically allowed to be correlated across the $S$ dimension, which in this case represents time and is frequently denoted as $T$. Thus, we now depart from the i.i.d.\ column setting for  $\U$ and $\V$ used in the previous three chapters. Going forward, the matrices $\U$ and $\V$ will exhibit strong dependencies, with the columns of one matrix also being interdependent.

\subsection{Vector autoregression of order $\mathbf{1}$}

The simplest yet nontrivial model for time evolution $X=(X_0,X_1,X_2,\dots,X_T)$ is obtained by considering the vector autoregression model (without an intercept) of order $1$, VAR($1$). The model is based on a sequence of i.i.d.~mean zero vector errors $\{\eps_t\}$ with non-degenerate covariance matrix $\Lambda$. The $K$--dimensional process $X_t$ is initialized at a fixed $X_0$ and solves an evolution equation:
\begin{equation}\label{eq_var_1}
\Delta X_t=\Pi X_{t-1}+\eps_t,\qquad t=1,\ldots,T,
\end{equation}
where $\Delta X_t:=X_t-X_{t-1}$. Eq.~\eqref{eq_var_1} is called \emph{error correction} form of a vector autoregression. Eq.~\eqref{eq_var_1} has two (unknown, in principle) deterministic parameters: $K\times K$ positive definite covariance matrix $\Lambda$ and $K\times K$ coefficient matrix $\Pi$. Throughout this chapter we assume that the errors $\eps_t$ are Gaussian, however, many results are either proven or expected to extend to the non-Gaussian setting.

The properties of $X_t$ as a function of $t$ can be very different depending on the matrix $\Pi$. Let us illustrate this in the case $K=1$, so that the evolution equation \eqref{eq_var_1} becomes
\begin{equation}\label{eq_ar_1}
 x_t=\theta x_{t-1} + \eps_t, \quad t=1,2,\dots,
\end{equation}
where $\theta$ is a constant corresponding to $I+\Pi$ in \eqref{eq_var_1}. Possible behaviors of \eqref{eq_ar_1} are:
\begin{itemize}
\item If $\theta=0$, then $x_t=\eps_t$ is an i.i.d.\ sequence with no time correlations or dependencies.
\item More generally, if $|\theta|<1$, then $x_t$ has correlations in $t$, but they are short-range: the correlation coefficient between $x_t$ and $x_{t+s}$ decays exponentially in $s$. In addition, $x_t$ converges as $t\to\infty$ in distribution to a non-degenerate limiting random variable. This regime is called \emph{stationary} --- the terminology comes from the fact that one can choose an initial condition $x_0$, which would make $x_t$ a stationary process\footnote{We call a stochastic process $x_t$, $t=0,1,\dots$ stationary, if the joint distributions of the shifted process $\{x_{t+s}\}_{t=0,1,2,\dots}$ do not depend on the choice of $s=0,1,2,\dots$.} in $t$; in fact the law of this special initial condition coincides with $\lim_{t\to\infty} x_t$ for a solution started from \emph{any} initial condition. Some authors refer to such processes as $I(0)$ or integrated of order $0$.
\item If $|\theta|>1$, then $x_t$ grows exponentially in $t$.
\item In the boundary case $\theta=1$, the variance of $x_t$ grows linearly with $t$. Noting that $x_t=x_0+\eps_1+\eps_2+\dots+\eps_t$, the process can be rescaled so that\footnote{This is an application of the Functional Central Limit Theorem, see, e.g., \citet[Theorem 8.1.4]{durrett2019probability}.}
    $$
     \lim_{T\to\infty} \frac{1}{\sqrt{T}} x_{\lfloor \tau T\rfloor}= B_\tau,
    $$
    where $B_\tau$, $\tau\geq0$, is the standard Brownian motion multiplied by a standard deviation of $\eps_t$. In particular, the correlations of $x_t$ are very wide: the covariance $\E(x_t x_{t+s})$ does not decay with $s$, while the correlation between $x_t$ and $x_{t+s}$ decays as $s^{-1/2}$. This regime is called the \emph{unit root} case in the literature --- the terminology comes from association to each autoregression an evolution of a certain polynomial, which is $1-z$ in this case and therefore has a root $1$. Some authors refer to such processes as $I(1)$ or integrated of order $1$.

\item Another boundary case $\theta=-1$ is reduced to the previous one by $x_t\mapsto (-1)^t x_t$.
\end{itemize}
The autoregression model \eqref{eq_ar_1} and its various generalizations are widely adopted as a workhorse for time series modeling, see, e.g., \citet{hamilton2020time}. The reasons for that are simplicity of the linear model on one hand and diverse possible behaviors on the other hand. The $K$--dimensional extension \eqref{eq_var_1} has even richer behaviors. In particular, if $\Pi=0$, then $X_t$ is a multidimensional random walk, as in the $\theta=1$ case above. On the other hand, if $\Pi=-I_K$, then $X_t=\eps_t$ is just a sequence of i.i.d.\ random variables, as in the $\theta=0$ case. As we will see below, many intermediate cases are important for applications and have non-trivial theoretical properties.

\subsection{Cointegration} We start with an informal definition. A $K$--dimensional time-series $X_t$, $t=0,1,2,\dots,$ is said to be \emph{cointegrated}, if its individual components
have statistical properties of a unit root  --- like the $\theta=1$ case in \eqref{eq_ar_1} ---  but there exists a linear combination of components which is stationary. The number of such linearly independent stationary combinations is called the \emph{cointegration rank}.

The prefix ``co-'' in cointegration refers to a sense of ``jointly'', as in co-movements. The intuition is that all coordinates follow a long-run equilibrium defined by their stationary linear combination.  This is particularly intuitive in the $K=2$ case, where cointegration indicates that the first and second coordinates of $X_t$ tend to move together over time.

The importance of cointegration stems from the celebrated papers \cite{granger1981} and \cite{engle_granger1987}. For example, they argue that monthly rates on $1$-month and $20$-year treasury bonds are cointegrated. A lot of other variables in macroeconomics and finance, such as price level, consumption, output, trade flows, and interest rates are non-stationary, and, thus, are potentially subject to cointegration. Figure \ref{Fig_Tbills} illustrates this point with interest rates on U.S.~Treasury bills of three-month and one-year maturities. The time series on the left show wide correlations and can be modeled as unit roots, while the difference of the two series, called spread, on the right exhibits much shorter correlations, consistent with a stationary process.

\begin{figure}[t]
  \begin{subfigure}{.47\textwidth}
  \centering
  \includegraphics[width=1.0\linewidth]{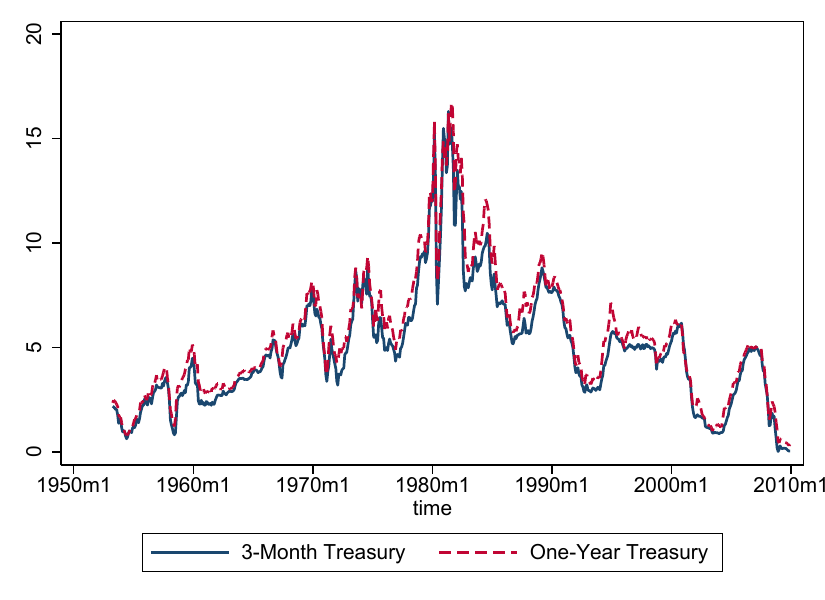}
  \end{subfigure}%
  \begin{subfigure}{.47\textwidth}
  \includegraphics[width=1.0\linewidth]{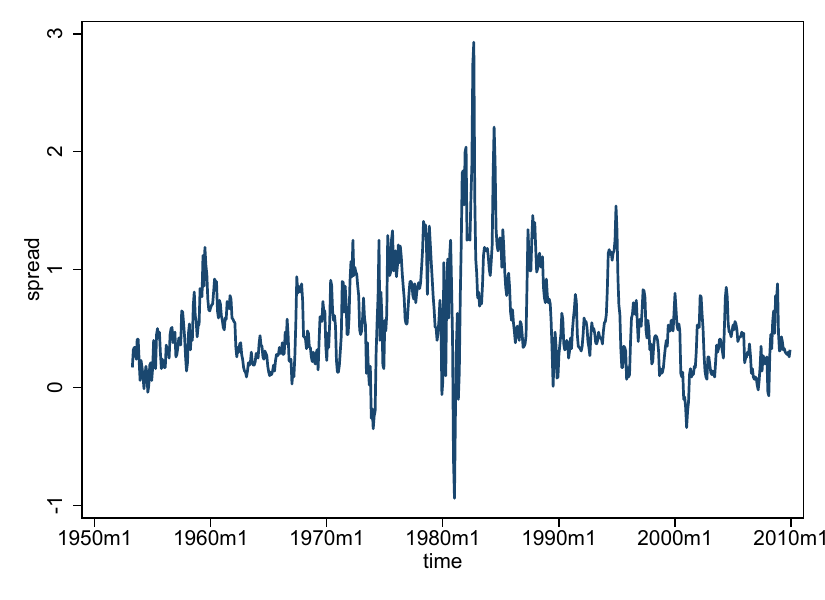}
\end{subfigure}
  \caption{Left panel: interest rates for US Treasury bills of two different maturities. Right panel: spread, i.e., the difference between the rates.}\label{Fig_Tbills}
\end{figure}

Going back to the VAR(1) model \eqref{eq_var_1}, we can restate the presence of cointegration as a property of the matrix $\Pi$.

\begin{theorem}[Granger representation theorem] \label{Theorem_Granger_representation} Suppose that the parameters $\Lambda$ and $\Pi$ in VAR(1) model \eqref{eq_var_1} are such that:
\begin{enumerate}
 \item There exists a (random) initial condition $X_0$, for which the time increments $\Delta X_t$ form a stationary process.
 \item There are exactly $r$ cointegrating relationships. This means: For a deterministic $K\times r$ matrix $\B$ of rank $r$, there exists a (random) initial condition $X_0$, for which $\B^{\T} X_t$ is an $r$--dimensional stationary process; however, there should be no such $K\times (r+1)$ matrix $\B$ of rank $r+1$
\end{enumerate}
Then the rank of the matrix $\Pi$ is $r$.
\end{theorem}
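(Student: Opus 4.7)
The plan is to characterize the cointegrating space
$V = \{\beta \in \mathbb{R}^K : \beta^{\T} X_t \text{ is stationary for an appropriate choice of } X_0\}$
and prove that it equals $\mathrm{range}(\Pi^{\T})$. Since $\dim \mathrm{range}(\Pi^{\T}) = \mathrm{rank}(\Pi)$, and Assumption (2) translates to $\dim V = r$, this identification immediately yields $\mathrm{rank}(\Pi) = r$. I would establish the two one-sided inclusions $\mathrm{range}(\Pi^{\T}) \subseteq V$, which gives $\mathrm{rank}(\Pi) \le r$, and $V \cap \ker(\Pi^{\T}) = \{0\}$, which gives $r = \dim V \le K - \dim \ker(\Pi^{\T}) = \mathrm{rank}(\Pi)$.

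For the first inclusion, I would exploit the VECM identity $\Pi X_{t-1} = \Delta X_t - \eps_t$ obtained directly from \eqref{eq_var_1}. If $\beta = \Pi^{\T} v$ for some $v \in \mathbb{R}^K$, then $\beta^{\T} X_{t-1} = v^{\T} \Pi X_{t-1} = v^{\T} \Delta X_t - v^{\T} \eps_t$. Choosing $X_0$ as in Assumption (1) makes $\Delta X_t$ stationary; since $\eps_t$ is i.i.d., the right-hand side, and therefore $\beta^{\T} X_t$ itself, is a stationary process. Hence $\beta \in V$.

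For the second inclusion, take $\beta$ with $\beta^{\T} \Pi = 0$. Left-multiplying \eqref{eq_var_1} by $\beta^{\T}$ gives $\Delta(\beta^{\T} X_t) = \beta^{\T} \eps_t$, and telescoping produces $\beta^{\T} X_t = \beta^{\T} X_0 + \sum_{s=1}^{t} \beta^{\T} \eps_s$. The summands are i.i.d.\ with variance $\beta^{\T} \Lambda \beta$, which is strictly positive whenever $\beta \neq 0$ because $\Lambda$ is non-degenerate. Therefore $\mathrm{Var}(\beta^{\T} X_t)$ grows linearly in $t$, precluding stationarity under any choice of $X_0$, so $\beta \in V$ forces $\beta = 0$.

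The main technical obstacle I anticipate concerns the existential quantifier on $X_0$ in Assumptions (1) and (2): a priori, distinct cointegrating directions might require distinct initial conditions, so one must justify that $V$ is honestly a linear subspace and that $\dim V = r$ under a single consistent choice of $X_0$. The way around this is to observe that changing $X_0$ only adds to $\beta^{\T} X_t$ a random shift independent of $t$, so (non)stationarity properties driven by linearly growing variance---as in the argument above---are intrinsic to the pair $(\Pi, \Lambda)$. This makes the kernel argument robust, and lets one fix the $X_0$ from Assumption (1) throughout without loss of generality.
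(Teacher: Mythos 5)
Your proof is correct and takes essentially the same approach as the paper. The paper uses a rank factorization $\Pi = AB$ with $A$ of full column rank and $B$ of full row rank, then (i) shows $BX_{t-1} = $ pseudo-inverse applied to $(\Delta X_t - \eps_t)$ is stationary, giving $r \ge \mathrm{rank}(\Pi)$, and (ii) picks a full-rank $C$ with $CA = 0$ to show $CX_t$ is a random walk, giving $r \le \mathrm{rank}(\Pi)$. Your two inclusions $\mathrm{range}(\Pi^{\T}) \subseteq V$ and $V \cap \ker(\Pi^{\T}) = \{0\}$ are exactly these two steps rephrased without the explicit factorization — the rows of $B$ span $\mathrm{range}(\Pi^{\T})$, and the rows of $C$ span $\ker(\Pi^{\T}) = \mathrm{range}(A)^{\perp}$. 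Your version is marginally cleaner in the first step (you write $\beta^{\T} X_{t-1} = v^{\T}\Delta X_t - v^{\T}\eps_t$ directly rather than passing through full-rank-of-$A$ to deduce stationarity of $BX_{t-1}$ from that of $ABX_{t-1}$), and your closing remark about the existential quantifier on $X_0$ addresses a subtlety the paper glosses over, but the mathematical content is the same.
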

\begin{remark}
 The same statement holds if we replace stationarity under a specific initial condition with other characteristics of the $|\theta| < 1$ case in \eqref{eq_ar_1}, such as exponentially decaying correlations or the existence of a finite limit as $t \to \infty$.
\end{remark}
\begin{remark}
 If we would like each individual component of $X_t$ to either be stationary or behave like a random walk, then the stationarity requirement for the time increments $\Delta X_t$ is automatically true. This requirement excludes processes which grow faster than random walk, see $|\theta|>1$ case in \eqref{eq_ar_1} and the last exercise in Problem set 5 of Section \ref{Section_exercises} for examples.
\end{remark}
\begin{remark}
Reversing the direction and starting from a matrix $\Pi$ of rank $r$ requires additional conditions on the parameters to ensure that properties $(1)$ and $(2)$ hold. See \cite{johansen_book} for a detailed and precise analysis.
\end{remark}
\begin{proof}[Proof of Theorem \ref{Theorem_Granger_representation}]
 Let the rank of the matrix $\Pi$ be $\hat r$. Then it can be decomposed as
 $\Pi= A B,$ with a $K\times \hat r$ matrix $A$  and  a $\hat r\times K$ matrix $B$, such that $\mathrm{rank}(A)=\mathrm{rank}(K)=\hat r$.
 We plug into \eqref{eq_var_1} to get
 \begin{equation}\label{eq_var_1_rewritten}
\Delta X_t=A B X_{t-1}+\eps_t
\end{equation}
 and note that under appropriate choice of the initial condition $X_0$, both $\Delta X_t$ and $\eps_t$ are stationary. Hence, so is the remaining term $A B X_{t-1}$. Since $A$ has the maximal rank $\hat r$, this implies stationarity of $B X_{t-1}$, meaning there are $\hat r$ cointegrating relationships. Thus, $r\ge \hat r$.

 For the opposite inequality, we choose a $(K-\hat r) \times K$ matrix $C$ of rank $K-\hat r$ and such that $CA=0$. Multiplying \eqref{eq_var_1_rewritten} by $C$ on the left, we get
 $$
  C \Delta X_t= C\eps_t, \qquad
  C X_t= C X_0  + C\sum_{t'=1}^t \eps_{t'}.
 $$
 Hence, using the non-degeneracy of the covariance matrix $\Lambda$ of $\eps_t$, we conclude that there can not be stationary linear combinations of $X_t$ in the linear space spanned by the $K-\hat r$ rows of $C X_t$. Therefore, there can not be more than $\hat r$ cointegrating relationships and $r\le \hat r$.
\end{proof}

Note that for $r\ge 1$, there is no uniqueness in the choice of the cointegrating relationships, only the linear space spanned by them is uniquely determined: the matrix $\B$ in Theorem \ref{Theorem_Granger_representation} can be replaced with $Q \B$ for any non-degenerate $r\times r$ matrix $Q$. Similarly, in the decomposition $\Pi=AB$ we can replace $A\mapsto A Q^{-1}$, $B\mapsto Q B$.

\subsection{Cointegration testing} For a non-stationary data set, one might ask whether it is cointegrated and, if so, what is the rank of cointegration. That is, following Theorem \ref{Theorem_Granger_representation}, we can test if the rank of $\Pi$ is zero (i.e., $\Pi=0$), which would imply the absence of cointegration.



Similarly to Task \ref{Task_CCA_iid}, we can approach this new question via Gaussian Maximum Likelihood. Namely, assuming that the $\eps_t$ in \eqref{eq_var_1} are Gaussian and given observations $X=(X_0,X_1,\dots,X_T)$, we evaluate the likelihood function $L(X; \Lambda,\Pi)$, which computes the probability density at $X$. Next, we maximize this function over $\Lambda$ and $\Pi$ under the assumption $\mathrm{rank}(\Pi)\le r$, thus getting
\begin{equation}
\label{eq_Lr}
 L_r(X)=\max_{\begin{smallmatrix} \text{positive definite }\Lambda, \\ \Pi\text{ with }\mathrm{rank}(\Pi)\le r \end{smallmatrix}} L(X; \Lambda,\Pi).
\end{equation}
The definition readily implies the monotonicity:
$$
 L_0(X)\le L_1(X)\le \dots\le L_K(X).
$$
We expect that if $\Pi=0$ already fits the data well, maximizing over additional choices of $\Pi$ should not result in a substantial increase in $L_r(X)$ as a function of $r$. On the other hand, if the true rank of $\Pi$ is non-zero, a significant increase in $L_r(X)$ is anticipated. This leads us to the following statistical procedure, known as the \emph{likelihood ratio test}.

\begin{procedure} \label{Procedure_likelihood_ratio} Evaluate the logarithm of the likelihood ratio:
\begin{equation}
\label{eq_likelikhood_ratio}
 \ln \left(\frac{L_0(X)}{L_r(X)}\right)
\end{equation}
 If it is large and negative, then reject the hypothesis $\Pi=0$ in favor of the alternative $\Pi\ne 0$, $\mathrm{rank}(\Pi)\le r$.
\end{procedure}
The implementation of Procedure \ref{Procedure_likelihood_ratio} involves two steps. The first step is to compute \eqref{eq_likelikhood_ratio} as a function of $X$. This explicit computation goes back to \cite{Anderson} and \cite{johansen1988}, and we derive it in Theorem \ref{Theorem_likelihood_ratio_evaluation} below. The second step requires understanding the typical magnitude of \eqref{eq_likelikhood_ratio} to be able to interpret when it becomes significantly large.

\begin{theorem} \label{Theorem_likelihood_ratio_evaluation} For $X=(X_0,X_1,\dots,X_{T})$, the expression \eqref{eq_likelikhood_ratio} equals
 \begin{equation}
 \label{eq_likelikhood_ratio_eval}
   \ln\left( \frac{L_0(X)}{L_r(X)}\right)=\frac{T}{2}\sum_{i=1}^r \ln(1-\lambda_i),
 \end{equation}
 where $1\ge \lambda_1\ge\lambda_2\ge\dots\ge \lambda_K\ge 0$ are squared sample canonical correlations between $K$--dimensional subspaces $\U$ and $\V$ of $T$--dimensional vector space, defined as follows:  $\U$ is the span of $K$ rows of $(\Delta X_1, \Delta X_2,\dots,\Delta X_T)$, $\V$ is the span of $K$ rows of $(X_0,X_1,\dots,X_{T-1})$.
\end{theorem}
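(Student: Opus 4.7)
The plan is to profile out the nuisance parameter $\Lambda$ first, then the factor $A$ in the rank factorization $\Pi = AB$, and finally reduce the minimization over $B$ to a generalized eigenvalue problem that is exactly the squared sample canonical correlations between $\Delta X$ and $X_{-1}$.

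\textbf{Step 1: Gaussian log-likelihood and profile over $\Lambda$.} Under \eqref{eq_var_1} with Gaussian $\eps_t$, the log-likelihood is, up to an additive constant,
\[
 -2\ln L(X;\Lambda,\Pi) = T\ln\det(\Lambda) + \mathrm{tr}\bigl(\Lambda^{-1}(\Delta X-\Pi X_{-1})(\Delta X-\Pi X_{-1})^\T\bigr),
\]
where $\Delta X$ and $X_{-1}$ are the $K\times T$ matrices with columns $\Delta X_t$ and $X_{t-1}$ respectively. As in Theorem~\ref{Theorem_Gaussian_MLE}, for fixed $\Pi$ the maximizing $\Lambda$ is $\hat\Lambda(\Pi)=\frac{1}{T}(\Delta X-\Pi X_{-1})(\Delta X-\Pi X_{-1})^\T$, and the profile log-likelihood becomes $-\frac{T}{2}\ln\det\hat\Lambda(\Pi)$ plus constants. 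Hence \eqref{eq_likelikhood_ratio} reduces to
\[
 \ln\!\left(\tfrac{L_0(X)}{L_r(X)}\right)=\tfrac{T}{2}\ln\det(S_{00})-\tfrac{T}{2}\min_{\mathrm{rank}(\Pi)\le r}\ln\det\!\bigl(S_{00}-\Pi S_{11}\Pi^\T - S_{01}\Pi^\T-\Pi S_{10}\bigr)/\text{etc.},
\]
where $S_{00}=\tfrac{1}{T}\Delta X\Delta X^\T$, $S_{01}=S_{10}^\T=\tfrac{1}{T}\Delta X X_{-1}^\T$, $S_{11}=\tfrac{1}{T}X_{-1}X_{-1}^\T$.

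\textbf{Step 2: Profile over $A$ in $\Pi=AB$.} Writing $\Pi=AB$ with $A$ of size $K\times r$ and $B$ of size $r\times K$, the first-order conditions in $A$ for fixed $B$ give $\hat A=S_{01}B^\T(BS_{11}B^\T)^{-1}$. Substituting back, the residual matrix becomes $\Delta X(I-P_B)$ where $P_B$ is the orthogonal projector (in $\mathbb R^T$) onto the row span of $BX_{-1}$, and one computes
\[
 \tfrac{1}{T}(\Delta X-\hat AB X_{-1})(\Delta X-\hat AB X_{-1})^\T = S_{00}-S_{01}B^\T(BS_{11}B^\T)^{-1}BS_{10}.
\]

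\textbf{Step 3: Reduce to a generalized eigenvalue problem.} Using Sylvester's determinant identity,
\[
 \det\!\Bigl(S_{00}-S_{01}B^\T(BS_{11}B^\T)^{-1}BS_{10}\Bigr) = \det(S_{00})\cdot\frac{\det\!\bigl(B(S_{11}-S_{10}S_{00}^{-1}S_{01})B^\T\bigr)}{\det(BS_{11}B^\T)}.
\]
Thus minimizing the determinant over $r\times K$ matrices $B$ of rank $r$ is equivalent to minimizing the generalized Rayleigh-type quotient $\det(BMB^\T)/\det(BS_{11}B^\T)$ with $M=S_{11}-S_{10}S_{00}^{-1}S_{01}$. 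By the standard Courant--Fischer characterization for generalized eigenvalues of the symmetric pencil $(M,S_{11})$, the minimum equals the product of the $r$ smallest generalized eigenvalues $\mu_1\le\dots\le\mu_K$ of $Mv=\mu S_{11}v$. Rewriting $Mv=\mu S_{11}v$ as $S_{10}S_{00}^{-1}S_{01}v=(1-\mu)S_{11}v$, the $\mu_i$ are in one-to-one correspondence with eigenvalues $\lambda_i=1-\mu_i$ of $S_{11}^{-1}S_{10}S_{00}^{-1}S_{01}$, and the smallest $\mu_i$'s correspond to the largest $\lambda_i$'s.

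\textbf{Step 4: Identify with sample canonical correlations and conclude.} By Proposition~\ref{prop_CCA_matrix_form} (in the form \eqref{eq_x16}), the eigenvalues of $S_{11}^{-1}S_{10}S_{00}^{-1}S_{01}=(X_{-1}X_{-1}^\T)^{-1}X_{-1}\Delta X^\T(\Delta X\Delta X^\T)^{-1}\Delta X X_{-1}^\T$ are precisely the squared sample canonical correlations $\lambda_1\ge\dots\ge\lambda_K$ between the row spans of $\Delta X$ and $X_{-1}$. Plugging in, the minimum of the profiled determinant equals $\det(S_{00})\prod_{i=1}^r(1-\lambda_i)$; for $r=0$ the minimum is $\det(S_{00})$ (achieved at $\Pi=0$). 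Taking the ratio and applying $\ln$ yields exactly \eqref{eq_likelikhood_ratio_eval}.

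The main obstacle is Step~3: correctly performing the determinant manipulation and invoking the right variational characterization of generalized eigenvalues so that minimization over the rank-$r$ constraint collapses to the $r$ largest $\lambda_i$ rather than, say, the smallest $r$. The linear-algebraic bookkeeping (keeping straight which direction one is extremizing in, and verifying that the optimal $B$ indeed has rank $r$) is the only delicate part; everything else is mechanical.
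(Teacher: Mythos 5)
Your proof is correct and follows the paper's proof structure through the profiling of $\Lambda$ (via Problem 2 of Problem Set 2) and $A$ (via the first-order condition), but the final reduction over $B$ uses a genuinely different linear-algebraic device. The paper keeps the $B$-dependence bundled inside the residual covariance, recognizes $\det\bigl(I_K-S_{00}^{-1}S_{01}S_{11}^{-1}S_{10}\bigr)$ (with $S_{01},S_{10},S_{11}$ built from $BX_{t-1}$) as $\prod_{i=1}^r(1-\lambda_i^B)$ where $\lambda_i^B$ are squared canonical correlations between $\Delta X$ and the row span of $BX_{-1}$, and then invokes the iterative-maximization characterization of CCA from Section \ref{subsection_maximiz_problem} to conclude that optimizing over all $r$-dimensional subspaces of the row span of $X_{-1}$ picks off the top $r$ canonical correlations. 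You instead apply the Schur-complement/Sylvester determinant identity to factor out $\det(S_{00})$ and exhibit the $B$-dependence as a generalized Rayleigh quotient $\det(BMB^\T)/\det(BS_{11}B^\T)$ with $M=S_{11}-S_{10}S_{00}^{-1}S_{01}$, then appeal to the Ky Fan/Courant--Fischer extremal characterization of the generalized eigenvalues of the pencil $(M,S_{11})$. Both routes rest on essentially the same extremal principle, but yours makes the determinant bookkeeping explicit and does not require interpreting intermediate quantities as CCA against a moving subspace, while the paper's is more native to the CCA apparatus already developed in Chapter 1 and avoids the pencil machinery. One point you should supply when writing this up: the Ky Fan-type step (infimum of $\det(BMB^\T)/\det(BS_{11}B^\T)$ over rank-$r$ $B$ equals the product of the $r$ smallest generalized eigenvalues) deserves a one-line reduction, e.g.\ substitute $B\mapsto BS_{11}^{-1/2}$ to pass to a standard symmetric matrix and then apply the Cauchy interlacing/Ky Fan inequality for the product of eigenvalues restricted to an $r$-dimensional subspace — but you correctly flagged this as the only delicate step.
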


To understand why CCA might be relevant to identifying the rank of $\Pi$, let’s revisit \eqref{eq_var_1}. Suppose that $\eps_t$ is very small, while $\Pi$ has rank $1$. Then  $\Delta X_t=\Pi X_{t-1}+\eps_t$ implies that there is a linear combination of rows of $X_{t-1}$, which is nearly proportional to each row of $\Delta X_t$, resulting in a canonical correlation close to $1$. Similarly, if $\Pi$ has rank $r$ (still with $\eps_t$ small), there would be $r$ canonical correlations close to $1$. This suggests that the number of large canonical correlations corresponds to the rank of $\Pi$. Theorem \ref{Theorem_likelihood_ratio_evaluation} formalizes this observation and extends it to cases where $\eps_t$ is not negligible.

\begin{proof}[Proof of Theorem \ref{Theorem_likelihood_ratio_evaluation}] We solve for the maximum of \eqref{eq_Lr} to find $L_r(X)$. We represent matrix $\Pi$ of rank $r$ as $\Pi=A B$ with $K\times r$ matrix $A$ and $r\times K$ matrix $B$. Using Gaussian density from Problem 2 in Problem set 1 of Chapter \ref{Section_exercises},  the likelihood $L$ as a function of $\Lambda,A,B,X$ is proportional to
\begin{equation}
\label{eq_x10}
 (\det \Lambda)^{-T/2} \exp\left(-\frac{1}{2} \sum_{t=1}^T (\Delta X_t- A B X_{t-1})^{\T}\Lambda^{-1}(\Delta X_t- A B X_{t-1})\right).
\end{equation}
We need to maximize \eqref{eq_x10} over $\Lambda$, $A$, and $B$. We first maximize over $\Lambda$ by noticing that the logarithm of \eqref{eq_x10} can be rewritten as
\begin{equation}
\label{eq_x11}
 -\frac{T}{2}\ln \det \Lambda-\frac{1}{2} \mathrm{Trace}\left(\Lambda^{-1}\sum_{t=1}^T (\Delta X_t- A B X_{t-1}) (\Delta X_t- A B X_{t-1})^{\T} \right).
\end{equation}
We use results of Problem 2 in Problem set 2 of Chapter \ref{Section_exercises} with $\gamma=T$, $G=\Lambda$, and $D=\sum_{t=1}^T (\Delta X_t- A B X_t) (\Delta X_t- A B X_t)^{\T}$ to find the maximum of \eqref{eq_x11} and conclude that the maximum is achieved at
\begin{equation}
\label{eq_x13}
 \Lambda= \frac{1}{T} \sum_{t=1}^T (\Delta X_t- A B X_{t-1}) (\Delta X_t- A B X_{t-1})^{\T},
\end{equation}
and the maximum value is
\begin{equation}
 \label{eq_x12}
 -\frac{T}{2}\ln \det  \Lambda+\frac{K T}{2}\ln T-\frac{KT}{2}.
\end{equation}
In addition, we need to maximize \eqref{eq_x10} over $A$, which is easy because its logarithm is a quadratic function of $A$. The derivatives with respect to the matrix elements of $A$ should be zeros, which means that the maximum is achieved at parameters satisfying
\begin{equation}
 \Lambda^{-1} \sum_{t=1}^T\left(\Delta X_t-AB X_{t-1}\right) (B X_{t-1})^\T=0.
\end{equation}
Using this identity, we express $AB$ through other parameters and simplify \eqref{eq_x13}:
\begin{multline}
 \frac{1}{T}\sum_{t=1}^T \Bigl(\Delta X_t (\Delta X_t)^\T - A B X_{t-1}(\Delta X_t)^\T - \Delta X_t (B X_{t-1})^\T A^\T+ A B X_{t-1} (B X_{t-1})^\T A^\T \Bigr)\\= \frac{1}{T}\sum_{t=1}^T \Bigl(\Delta X_t (\Delta X_t)^\T - A B X_{t-1}(\Delta X_t)^\T \Bigr)=\frac{1}{T}S_{00} \cdot \Bigl( I_K-S_{00}^{-1} S_{01} S_{11}^{-1} S_{10} \Bigr),
\end{multline}
where $S_{\cdot}$ are sums of products of $\Delta X_t$ and $B X_{t-1}$:
$$
 S_{00}=\sum_{t=1}^T \Delta X_t (\Delta X_t)^\T, \quad S_{01}= \sum_{t=1}^T \Delta X_t (B X_{t-1})^\T,  \quad S_{10}=S_{01}^\T,\quad S_{11}=\sum_{t=1}^T B X_{t-1} (B X_{t-1})^\T.
$$
Plugging back into \eqref{eq_x12}, we should maximize over $B$ the expression
\begin{equation}
 \label{eq_x14}
 -\frac{T}{2}\ln \det  \Bigl( I_K-S_{00}^{-1} S_{01} S_{11}^{-1} S_{10} \Bigr)-\frac{T}{2}\ln \det S_{00}+\frac{K T}{2}\ln T-\frac{KT}{2},
\end{equation}
in which only the first term actually depends on $B$ (through $S_{01}$, $S_{10}$, $S_{11}$). Recalling that canonical correlations can be computed as eigenvalues of the product of four matrices \eqref{eq_x16}, we identify the first term with
\begin{equation}
\label{eq_x15}
 -\frac{T}{2}\sum_{i=1}^r \ln(1-\lambda_i^B),
\end{equation}
Where $\lambda_1^B,\lambda_2^B,\dots,\lambda_r^B$ are squared canonical correlations between the $K$ rows of $(\Delta X_t)_{t=1}^T$ and the $r$ rows of $(B X_{t-1})_{t=1}^T$ in $T$--dimensional space. Using the variational meaning of the canonical correlations, i.e., that they can be obtained through recursive maximizations, as in Section \ref{subsection_maximiz_problem}, and noting that the linear combinations of the rows of $(B X_{t-1})_{t=1}^T$ are also linear combinations of rows of $(X_{t-1})_{t=1}^T$, we conclude that \eqref{eq_x15} is maximized when $\lambda_i^B$ is the $i$--th largest squared sample canonical correlations between $K$--dimensional subspaces $\U$ and $\V$ claimed in the theorem. Subtracting from $\ln (L_0(X))$, which is given by the same formula \eqref{eq_x14}, but without the first term (because $B=0$ for $r=0$), we arrive at \eqref{eq_likelikhood_ratio_eval}.
\end{proof}

\medskip

For the second step in implementing Procedure \ref{Procedure_likelihood_ratio}, we draw on the ideas that partly parallel the developments in Chapter 3. Specifically, we analyze the asymptotic behavior of the sum $\sum_{i=1}^r \ln(1-\lambda_i)$ in two regimes: for fixed $K$ and large $T$, and for $K$ and $T$ being simultaneously large.

\subsection{Finite $K$, large $T$ asymptotics} We start from the classical asymptotic regime where the time series dimension $K$ is small and fixed, while the number of observations $T$ is large and is assumed to tend to infinity. Here, we aim to analyze $\sum_{i=1}^r \ln(1 - \lambda_i)$ under the null hypothesis $\Pi = 0$, which helps define ``large'' in Procedure \ref{Procedure_likelihood_ratio}, and under alternatives with non-trivial $\Pi$, to assess the asymptotic power of the test. Let us start with the null hypothesis case.


\begin{theorem}\label{Theorem_Johansen_test_small_K}
 Fix $K$ and let $\lambda_i$ be the squared sample canonical correlations  of Theorem \ref{Theorem_likelihood_ratio_evaluation}.
 Then for $X_t$ evolving according to \eqref{eq_var_1} with $\Pi=0$, we have
 \begin{equation} \label{eq_Johansen_small_K}
  \lim_{T\to\infty} (T\lambda_1,T\lambda_2,\dots, T \lambda_K)\stackrel{d}{=} (\nu_1,\nu_2,\dots,\nu_K),
 \end{equation}
 where $(\nu_i)_{i=1}^K$ are eigenvalues of the matrix $\mathcal C \mathcal V^{-1} \mathcal C^{\T}$,
 with the $K\times K$ matrices $\mathcal C$ and $\mathcal V$ defined in terms of $K$ independent standard Brownian motions $B^{i}$, $1\le i \le K$:
 \begin{equation}
   \mathcal C[i,j]=\int_0^1 B^{j}(\tau) \dd B^{i}(\tau), \qquad \mathcal V[i,j]=\int_0^1  B^{i}(\tau)  B^{j}(\tau)  \dd \tau.
 \end{equation}
\end{theorem}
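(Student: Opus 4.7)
The plan is to identify the limits of the three Gram-type sums appearing in the proof of Theorem \ref{Theorem_likelihood_ratio_evaluation} and then read off the eigenvalue limits. Under $\Pi=0$ the process is a random walk $X_t = X_0 + \eps_1+\dots+\eps_t$, so the canonical correlations are those between the rows of $(\eps_1,\dots,\eps_T)$ and of $(X_0,\dots,X_{T-1})$. By the analogue of Theorem \ref{Theorem_not_depend_on_cov} (the canonical correlations between two row spaces are unchanged if we multiply both row spans by the same nondegenerate $K\times K$ matrix $F$ acting on the rows), I may replace $\eps_t\mapsto \Lambda^{-1/2}\eps_t$ and $X_t\mapsto \Lambda^{-1/2}X_t$ simultaneously, reducing to the case $\Lambda=I_K$. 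Henceforth assume this normalization; then the $\eps_t$ are i.i.d.\ standard Gaussians in $\mathbb R^K$.

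Recall from the proof of Theorem \ref{Theorem_likelihood_ratio_evaluation} that $\lambda_1\ge\dots\ge\lambda_K$ are the eigenvalues of $S_{00}^{-1}S_{01}S_{11}^{-1}S_{10}$ with $S_{00}=\sum_{t}\eps_t\eps_t^\T$, $S_{01}=\sum_{t}\eps_t X_{t-1}^\T$, $S_{11}=\sum_{t}X_{t-1}X_{t-1}^\T$, $S_{10}=S_{01}^\T$. Counting powers of $T$ (each $X_{t-1}$ is of size $\sqrt T$), the natural rescalings are $T^{-1}S_{00}$, $T^{-1}S_{01}$, $T^{-2}S_{11}$, giving a net factor $T^{-1}$ in the product, so $T\lambda_i$ are the eigenvalues of
\begin{equation*}
 \bigl(T^{-1}S_{00}\bigr)^{-1}\bigl(T^{-1}S_{01}\bigr)\bigl(T^{-2}S_{11}\bigr)^{-1}\bigl(T^{-1}S_{10}\bigr).
\end{equation*}
By the strong law of large numbers, $T^{-1}S_{00}\to I_K$. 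Set $B^{(T)}(\tau):=T^{-1/2}X_{\lfloor \tau T\rfloor}$; the functional CLT gives $B^{(T)}\Rightarrow B$ in $D([0,1];\mathbb R^K)$, where $B=(B^1,\dots,B^K)$ is a standard $K$-dimensional Brownian motion. Riemann-sum approximation then yields $T^{-2}S_{11}\Rightarrow \int_0^1 B_\tau B_\tau^\T\,\dd\tau=\mathcal V$, and by the martingale (stochastic) integral convergence theorem jointly with the preceding,
\begin{equation*}
 T^{-1}S_{01}=\frac{1}{T}\sum_{t=1}^T\eps_t X_{t-1}^\T \;\Rightarrow\; \int_0^1 \dd B_\tau\, B_\tau^\T = \mathcal C,
\end{equation*}
understood entrywise as $(i,j)\mapsto \int_0^1 B^{j}(\tau)\,\dd B^{i}(\tau)$ and holding jointly with the SLLN and Riemann limits above.

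The final step is an application of the continuous mapping theorem. Almost surely $\mathcal V$ is positive definite (the $K$ coordinate Brownian paths are a.s.\ linearly independent as elements of $L^2[0,1]$), so $(A,C,V)\mapsto \mathrm{eig}\bigl(A^{-1}CV^{-1}C^\T\bigr)$ is continuous at $(I_K,\mathcal C,\mathcal V)$; consequently $T\lambda_i\Rightarrow \nu_i$ where $\nu_1\ge\dots\ge\nu_K$ are the eigenvalues of $\mathcal C\mathcal V^{-1}\mathcal C^\T$, which is the claim. The principal technical obstacle is the joint convergence of $(T^{-1/2}X_{\lfloor\tau T\rfloor},\,T^{-1}S_{01},\,T^{-2}S_{11})$ to $(B,\mathcal C,\mathcal V)$: the stochastic-integral piece $T^{-1}S_{01}$ is not a continuous functional of the path $B^{(T)}$, so one cannot quote the functional CLT off the shelf and must invoke a martingale-CLT/stochastic-integral convergence theorem (e.g.\ Kurtz--Protter, or the discrete Itô-sum arguments in \citealt{hamilton2020time}, \citealt{johansen_book}) to handle it together with the Riemann-sum limits; the rest is routine linear algebra.
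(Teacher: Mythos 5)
Your proof is correct and follows essentially the same route as the paper: reduce to $\Lambda=I_K$ via the argument of Theorem \ref{Theorem_not_depend_on_cov}, rescale the three Gram matrices by $T^{-1},T^{-1},T^{-2}$, identify the limits via LLN, Riemann sums, and an Itô-sum limit, and conclude by continuity of eigenvalues. The only small methodological difference is in the Itô term $T^{-1}S_{01}$: the paper uses the Gaussianity of $\eps_t$ to identify the discrete sum in \eqref{eq_Ito} exactly in distribution with a Riemann approximation of the stochastic integral, while you instead invoke a general martingale/stochastic-integral convergence theorem (Kurtz--Protter type), which is heavier machinery but more robust to non-Gaussian errors and more explicit about the joint-convergence subtlety that the paper glosses over.
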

\begin{remark}
$\mathcal C[i,j]$ is an Ito integral, see e.g.\ \cite{oksendal2013stochastic} for an introduction to stochastic integration. For $i=j$ it can be simplified to  $\mathcal C[i,i]=\frac{1}{2} \left( B^{i}(1) \right)^2- \frac{1}{2}$. In general, this object should be defined as a limit of integral sums:
\begin{equation}
\label{eq_Ito}
 \int_0^1 B^{j}(\tau) \dd B^{i}(\tau)=\lim_{n\to\infty} \sum_{\ell=1}^n B^{j}\left(\frac{\ell-1}{n}\right) \left[B^{i}\left(\frac{\ell}{n}\right)-B^{i}\left(\frac{\ell-1}{n}\right)\right].
\end{equation}
For $i\ne j$, conditionally on $B^{j}$, the expression under the limit is a sum of independent mean $0$ Gaussian random variables. Hence, in the limit $n\to\infty$ we get a mixed normal $\mathcal N\left(0, \int_0^1 \left(B^{j}(\tau)\right)^2 \dd \tau\right)$.
\end{remark}
\begin{proof}[Proof of Theorem \ref{Theorem_Johansen_test_small_K}]  Directly solving \eqref{eq_var_1} in the case $\Pi=0$, we get
\begin{equation}
\Delta X_t=\eps_t,\qquad X_t=X_0+\sum_{s=1}^t \eps_s,\qquad t=1,\ldots,T.
\end{equation}
The $T$ columns of the matrix $\U$ from Theorem \ref{Theorem_likelihood_ratio_evaluation}  are
$\eps_t$, $t=1,2,\dots,T$,
and the $T$ columns of the matrix $\V$ are
$X_0+\sum_{s=1}^{t-1} \eps_s$, $t=1,2,\dots,T$. Recall that $\eps_t$ are i.i.d.\ $\mathcal N(0,\Lambda)$. Using the same argument as in Theorem \ref{Theorem_not_depend_on_cov}, we see that the distribution of the squared canonical correlations $\lambda_1,\dots,\lambda_K$ does not depend on the choice of $\Lambda$. Hence, we set $\Lambda=I_K$ without loss of generality.

We compute $\lambda_i$ as the eigenvalues of $(\U \U^\T)^{-1} \U \V^\T (\V \V^\T)^{-1} \V \U^\T$, and use the Law of Large Numbers and the Functional Central Limit Theorem (for which see, e.g.,  \citet[Theorem 8.1.4]{durrett2019probability}). to deduce the joint distributional convergences:
\begin{equation}
 \lim_{T\to\infty} \frac{1}{T} (\U \U^\T)=I_K, \qquad \lim_{T\to\infty} \frac{1}{T^2}(\V \V^\T)=\mathcal V
\end{equation}
For $\U\V^\T$ we have, using the notation $\eps^i_t$ for the $i$--th coordinate of $\eps_t$:
$$
 \frac{1}{T} \U\V^{\T}[i,j]=\frac{1}{T}\sum_{t=1}^T \eps_t^i X_0^j + \frac{1}{T}\sum_{1\le \tau<t\le T}\eps_t^i \eps_\tau^j.
$$
Using the Law of Large Numbers, the first term tends to $0$ as $T\to\infty$. For the Gaussian errors $\eps_t$, the second sum is exactly the same as the right-hand side of  \eqref{eq_Ito}, by identifying $n=T$ and matching $\eps_t$ with the time-increments of the Brownian motions.\footnote{If $\eps_t$ were non-Gaussian, we would have to use an appropriate version of the Functional Central Limit Theorem here, however, the distributional limit would remain the same.} Hence, we obtain the distributional convergence to $\mathcal C[i,j]$ as $T\to\infty$. Therefore,
$$
 T (\U \U^\T)^{-1} \U \V^\T (\V \V^\T)^{-1} \V \U^\T \xrightarrow[T\to\infty]{d} \mathcal C \mathcal V^{-1} \mathcal C^{\T}.
$$
Convergence of matrix elements implies convergence of eigenvalues, see, e.g.,  \citet[Chapter 9, Theorem 6]{Lax} or  \citet[Corollary 6.3.8]{horn2012matrix}.
\end{proof}
\begin{corollary}\label{Corollary_Johansen_small_K}
  Fix $K$ and let $\lambda_i$ be the squared canonical correlations  of Theorem \ref{Theorem_likelihood_ratio_evaluation}.
 Then for $X_t$ evolving according to \eqref{eq_var_1} with $\Pi=0$, we have
 $$
  \lim_{T\to\infty} \frac{T}{2}\sum_{i=1}^r \ln(1-\lambda_i)\stackrel{d}{=}-\frac{1}{2} \sum_{i=1}^r \nu_i,
 $$
 where $\nu_i$ are as in Theorem \ref{Theorem_Johansen_test_small_K}.
\end{corollary}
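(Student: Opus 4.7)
The plan is to derive the corollary from Theorem \ref{Theorem_Johansen_test_small_K} by a straightforward Taylor expansion argument, combined with Slutsky's theorem.

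First, I would invoke Theorem \ref{Theorem_Johansen_test_small_K} to obtain the joint distributional convergence
$$
 (T\lambda_1, T\lambda_2, \dots, T\lambda_K) \xrightarrow[T\to\infty]{d} (\nu_1,\nu_2,\dots,\nu_K).
$$
Since $\nu_i$ is an almost surely finite random variable, the sequence $T\lambda_i$ is tight, and therefore $\lambda_i = (T\lambda_i)/T \to 0$ in probability as $T\to\infty$ for each $i=1,\dots,K$ (in fact, jointly).

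Next, I would write
$$
 \frac{T}{2}\ln(1-\lambda_i) = -\frac{1}{2}\,(T\lambda_i) \cdot \frac{-\ln(1-\lambda_i)}{\lambda_i},
$$
with the convention that the ratio equals $1$ at $\lambda_i = 0$. The function $x\mapsto -\ln(1-x)/x$ is continuous on $[0,1)$ with value $1$ at $x=0$, so $\lambda_i\to 0$ in probability implies $-\ln(1-\lambda_i)/\lambda_i \to 1$ in probability. By Slutsky's theorem applied jointly to the $K$-dimensional vectors, we obtain
$$
 \left( \frac{T}{2}\ln(1-\lambda_1),\, \dots,\, \frac{T}{2}\ln(1-\lambda_K)\right) \xrightarrow[T\to\infty]{d} \left(-\frac{\nu_1}{2},\, \dots,\, -\frac{\nu_K}{2}\right).
$$
Finally, applying the continuous mapping theorem with the continuous function $(y_1,\dots,y_K)\mapsto \sum_{i=1}^r y_i$ yields the desired
$$
 \frac{T}{2}\sum_{i=1}^r \ln(1-\lambda_i) \xrightarrow[T\to\infty]{d} -\frac{1}{2}\sum_{i=1}^r \nu_i.
$$

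There is no real obstacle here: the only point worth checking is that $\lambda_i$ stays strictly bounded away from $1$ with probability tending to one (so that the Taylor expansion is valid), but this is immediate from $T\lambda_i$ being tight and $T\to\infty$. The content of the corollary is essentially a delta-method consequence of Theorem \ref{Theorem_Johansen_test_small_K}.
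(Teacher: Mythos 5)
Your proof is correct and follows the paper's approach exactly: the paper's proof is simply ``apply Theorem \ref{Theorem_Johansen_test_small_K} and notice that $\ln(1-\lambda)\approx -\lambda$ for small $\lambda$.'' You have merely spelled out the details (tightness of $T\lambda_i$, Slutsky, continuous mapping) that the paper leaves implicit.
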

\begin{proof} We apply Theorem \ref{Theorem_Johansen_test_small_K} and notice that $\ln(1-\lambda)\approx -\lambda$ for small $\lambda$.\end{proof}
Corollary \ref{Corollary_Johansen_small_K} clarifies the term ``large'' in Procedure \ref{Procedure_likelihood_ratio}: when $K$ is small and $T$ is large, the likelihood ratio \eqref{eq_likelikhood_ratio} should be compared with quantiles of the random variable $-\frac{1}{2} \sum\limits_{i=1}^r \nu_i$.

The procedure’s effectiveness --- rejecting the null of no cointegration when $\Pi \neq 0$ --- relies on the fact that asymptotically $I(0)$ (stationary) and $I(1)$ (unit root) variables are uncorrelated: The covariance between them remains bounded, $I(0)$ variables have bounded variance, while $I(1)$ variables' variance grows to infinity over time. When $\Pi=0$, no stationary linear combination $\beta^{\T}X_{t-1}$ exists, and we are left with correlations between various $I(0)$ and $I(1)$ variables, which causes all sample canonical correlations to vanish asymptotically. In contrast, when $\Pi\neq0$ a stationary linear combination exists and the corresponding sample canonical correlation does not vanish.


Analyzing the procedure under  $\Pi\neq0$ corresponds to assessing its power. Theorem \ref{Theorem_Johansen_test_small_K} and Corollary \ref{Corollary_Johansen_small_K} can be extended to allow for this scenario. Additionally, extensions accommodate higher-order autoregressions (i.e., including $X_{t-1},X_{t-2},\dots,X_{t-k}$ rather than only $X_{t-1}$ in \eqref{eq_var_1}) and inclusion of additional deterministic terms, see \cite{johansen1988,johansen1991}.
\subsection{Large $K$ regime}
Soon after Theorem \ref{Theorem_Johansen_test_small_K} and its extensions were discovered, practitioners noticed that $K$ needs to be much smaller than $T$, in order for the result to be useful. The distance between finite sample and asymptotic limit is too large for intermediate values of $K$, e.g., for $K=10$, $T=100$, see, for example, \cite{ho_sorensen1996} and \citet{gonzalo_pitarakis}. One possible way to address this distortion is by considering a different asymptotic regime. The development of alternative asymptotics based on $K$ and $T$ growing to infinity proportionally to each other is recent, see \cite{onatski_ecta}, \cite{BG1}, \cite{BG2}. It turns out that the asymptotic behavior is governed by the same objects as in Theorems \ref{Theorem_CCA_LLN} and \ref{Theorem_CCA_extreme} of Chapter 3: Wachter distribution and Airy$_1$ random point process. However, the parameters change.

\begin{theorem}\label{Theorem_Cointegration_LLN} Suppose that $X_t$ evolves according to \eqref{eq_var_1} with $\Pi=0$ and arbitrary finite initial condition $X_0$. Let $\lambda_i$ be the squared sample canonical correlations  of Theorem \ref{Theorem_likelihood_ratio_evaluation} and let $\mu_{K,T}=\frac{1}{K}\sum_{i=1}^K \delta_{\lambda_i}$ be the empirical measure of $\lambda_i$. Suppose that $K,T\to\infty$ in such a way that $T/K\to\tau>2$. Then, for the Wachter distribution $\omega$, as in Definition \ref{Definition_Wachter}, we have
 \begin{equation}
 \label{eq_Coint_LLN}
  \lim_{K,T\to\infty}  \mu_{K,T} = \omega_{1+\tau,\, (1+\tau)/2}, \qquad \text{weakly, in probability,}
 \end{equation}
 which means that for each continuous function $f(x)$, we have
 \begin{equation}
  \label{eq_Coint_LLN_f}
   \lim_{K,T\to\infty}  \mu_{K,T}[f]=\omega_{1+\tau,\, (1+\tau)/2}[f], \qquad \text{in probability.}
 \end{equation}
\end{theorem}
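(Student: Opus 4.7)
The plan is to reduce the problem to a concrete random-matrix question and then carry out a Stieltjes-transform/loop-equation analysis analogous to the derivation in Sections on ``Existence of the limit'' and ``Identification of the limit'' in the excerpt.

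First, by the argument of Theorem \ref{Theorem_not_depend_on_cov}, multiplying $\eps_t$ by a fixed invertible $K\times K$ matrix left-multiplies both $\U$ and $\V$ and leaves the canonical correlations unchanged, so I may take $\Lambda = I_K$. The initial condition $X_0$ contributes to $\V$ only the rank-one matrix $X_0 \mathbf 1_T^\T$, which can shift at most a bounded number of the $\lambda_i$ and therefore leaves the empirical measure $\mu_{K,T}$ asymptotically unaffected; I may set $X_0 = 0$. After these reductions, $\U = E$ and $\V = E L$, where $E$ is a $K\times T$ matrix of i.i.d.\ $\mathcal N(0,1)$ entries and $L$ is the deterministic $T\times T$ strictly upper-triangular matrix with $L_{s,t} = \mathbf 1[s<t]$. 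By Procedure \ref{Procedure_CCA}, the $\lambda_i$ are the eigenvalues of
$$M = (EE^\T)^{-1}(EL^\T E^\T)(ELL^\T E^\T)^{-1}(ELE^\T),$$
and writing $E = (EE^\T)^{1/2} U$ gives $\lambda_i = \mathrm{ev}_i\bigl((UL^\T U^\T)(ULL^\T U^\T)^{-1}(ULU^\T)\bigr)$, where $U$ is Haar-distributed on the Stiefel manifold $V_{K,T}$.

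Next, I would derive the limit of the Stieltjes transform $G_{K,T}(z) = \tfrac{1}{K}\mathrm{Trace}(zI - M)^{-1}$ by a loop-equation argument. Concretely, Gaussian integration by parts in the entries of $E$ (equivalently, Haar integration over the frame $U$) produces a hierarchy of identities in the spirit of Theorem \ref{Theorem_DS_equation}; combining these with concentration for traces of rational expressions in $E, L, L^\T$ and with the explicit structure of $\tfrac{1}{T} L L^\T$ (whose $(s,s')$-entry is $1 - \max(s,s')/T$, a kernel with an explicit limiting spectrum) produces a closed polynomial equation for the limiting Stieltjes transform $G$. Solving it and matching to the quadratic equation \eqref{eq_Wachter_density_Stieltjes_equation} of Corollary \ref{Corollary_Wachter_Stieltjes} at parameters $(\tau_K,\tau_M) = (1+\tau,(1+\tau)/2)$ identifies $G$ as the Stieltjes transform of $\omega_{1+\tau,(1+\tau)/2}$, which yields \eqref{eq_Coint_LLN_f} for resolvent test functions $f(x) = 1/(z-x)$ and hence for all continuous $f$ by a standard density argument.

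The main obstacle is that, in contrast to Theorem \ref{Theorem_CCA_LLN}, here $\U$ and $\V$ are \emph{not} independent: they are built from a common Gaussian matrix $E$ via the deterministic right-action of the nilpotent $L$, so $M$ is not of MANOVA/Jacobi type and the direct route of Chapter 3 is unavailable. The delicate point is to convert the very specific structure of $L$ (spectrum $\{0\}$, yet $LL^\T$ equal to the discrete Brownian-covariance kernel $T - \max(s,s')$) into exactly the Wachter parameters $(1+\tau,(1+\tau)/2)$: the doubling $\tau_M/\tau_K = 1/2$ should reflect the fact that $\mathrm{rowspan}(\U) + \mathrm{rowspan}(\V)$ is generically $2K$-dimensional, while the shift $\tau_K = 1+\tau$ should come from the interaction of this joint span with $L$. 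Making these heuristics quantitative, so that the coefficients of the limiting loop equation match \eqref{eq_Wachter_density_Stieltjes_equation} exactly, is the heart of the proof.
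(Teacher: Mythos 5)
Your reductions---taking $\Lambda = I_K$, discarding $X_0$ by a low-rank interlacing argument, writing $\U = E$ and $\V = EL$ with $L$ the strict-upper-triangular indicator matrix, and factoring $E = (EE^\T)^{1/2}U$ to express the $\lambda_i$ as eigenvalues of $(UL^\T U^\T)(ULL^\T U^\T)^{-1}(ULU^\T)$ for a Haar-distributed frame $U$---are correct, and your parameter-matching heuristics are on target. But the proposal stops exactly at the substantive step: you never derive the limiting Stieltjes-transform equation from the integration-by-parts hierarchy, nor verify that it closes to the quadratic \eqref{eq_Wachter_density_Stieltjes_equation} at $(\tau_K,\tau_M)=(1+\tau,(1+\tau)/2)$. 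Because $\U$ and $\V$ share the underlying Gaussian $E$, the Dyson--Schwinger hierarchy does not truncate as it did in Theorem \ref{Theorem_DS_equation}; one must introduce and close a \emph{coupled} system of resolvent quantities involving $UL^\T U^\T$, $(ULL^\T U^\T)^{-1}$, and the resolvent of $M$ simultaneously. This is precisely what \cite{onatski_ecta} carries out with a system of four coupled equations, and it is the hard part. As written, the assertion that the loop equations ``produce a closed polynomial equation'' that matches Wachter is the theorem itself, not an argument for it.

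The survey's own route is genuinely different and bypasses the Stieltjes-transform derivation entirely. It proves the coupling Theorem \ref{Theorem_main_approximation}: the $\tilde\lambda_i$ of the de-trended, de-meaned Procedure \ref{Procedure_modified_coint} coincide with the eigenvalues of an exact Jacobi ensemble $\mathcal J(K;\frac{K}{2},\frac{T-2K}{2})$ up to $o(K^{-1+\epsilon})$ in sup-norm, after which one simply imports the Jacobi LLN of Theorem \ref{Theorem_CCA_LLN}. The pivot is the Haar-frame observation you already reach, but instead of deriving loop equations the paper's proof replaces the deterministic shift hidden in $L$ by a uniformly random orthogonal operator, observes that the replaced model is literally of MANOVA type, and bounds the perturbation created by the replacement. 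That route also delivers the Tracy--Widom edge behavior of Theorem \ref{Theorem_Cointegration_Airy} at once, while the Stieltjes-transform route you sketch is more robust to non-Gaussianity but only gives the global law and would require separate work at the edge. Either route can succeed, but as written yours has not executed its core computation.
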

We illustrate Theorem \ref{Theorem_Cointegration_LLN} using a real data set. Figure \ref{Fig_SP_Wachter_data} in Chapter 1 shows a histogram of the squared sample canonical correlations $\lambda_i$, calculated for weekly stock returns in S\&P$100$ over ten years, so that $K=92$, $T=521$, see \cite[Section 9.6]{BG1} for the data description. The histogram shows a striking alignment with the density of Wachter distribution $\omega_{1+\tau,\,(1+\tau)/2}$ (orange curve), where $\tau=521/92\approx 5.66$.


Note that one should be careful with applying Theorem \ref{Theorem_Cointegration_LLN} to $\sum_{i=1}^r \ln(1-\lambda_i)$ needed for Procedure \ref{Procedure_likelihood_ratio}. The reason is that $\ln(1-x)$ explodes at $x=1$ and a priory, Theorem \ref{Theorem_Cointegration_LLN} only implies a one-sided bound on $\lim_{K,T\to\infty} \frac{1}{K}\sum_{i=1}^{r}\ln(1-\lambda_i)$ in the regime of $r$ growing linearly with $K,T$, cf.\ \cite[(6)]{onatski_ecta} and \cite[Corollary 4]{BG2}. In order to obtain more precise asymptotics, one needs better control on $\lambda_1$; this control was achieved in \cite{BG1} (see also  \cite{onatski2019extreme}) at the expense of slightly modifying the procedure for producing $\lambda_i$.

\begin{procedure} \label{Procedure_modified_coint} Given $K$-dimensional times series $X_0,X_1,\dots,X_T$, the modified likelihood ratio statistic is $\frac{T}{2}\sum_{i=1}^r \ln(1-\tilde \lambda_i)$, where $\tilde \lambda_i$ are defined the following way:
\begin{enumerate}
\item De-trend the data and define
$$
 \tilde X_t = X_{t-1} - \frac{t-1}{T} (X_T-X_0).
$$
\item  De-mean the data and define the $T$ columns of the matrix $\tilde \U$ as
 $$
 \Delta X_{t}-\frac{1}{T}\sum_{s=1}^T \Delta X_{s}, \qquad t=1,2,\dots,T,
 $$
and the $T$ columns of the matrix $\tilde \V$ as
$$
 \tilde X_{t}-\frac{1}{T}\sum_{s=1}^T \tilde X_{s}, \qquad t=1,2,\dots,T.
$$
\item Let $\tilde \lambda_1\ge \dots\ge \tilde \lambda_K$ be the squared sample canonical correlations between $\tilde \U$ and $\tilde \V$.
\end{enumerate}
\end{procedure}
This modified procedure is discussed in detail in \cite{BG1}. Conceptually, new detrending and demeaning steps remove potential non-zero means in the noise terms $\eps_t$ making the approach quite natural (in practice there is no guarantee that the true mean is zero). Technically, these steps enable us to establish an asymptotic theorem:

\begin{theorem} \label{Theorem_Cointegration_Airy} Suppose that $X_t$ evolves according to \eqref{eq_var_1} with $\Pi=0$ and arbitrary finite initial condition $X_0$. Let $\tilde \lambda_i$ be the squared sample canonical correlations of Procedure \ref{Procedure_modified_coint}. Then for each finite $r=1,2,\dots$, we have convergence in distribution for the largest correlations:
 \begin{equation}
	\label{eq_statistic_limit}
	 \frac{\sum_{i=1}^{r} \ln(1-\tilde \lambda_i)- r \cdot c_1(K,T)}{ K^{-2/3}  c_2(K,T)}  \, \xrightarrow[T,K\to\infty]{d} \sum_{i=1}^r \aa_i,
 \end{equation}
	where
	\begin{equation}
	c_1\left(K,T\right)=\ln\left(1-\lambda_+\right), \qquad
	c_2\left(K,T\right)=-\frac{2^{2/3} \lambda_+^{2/3}}{(1-\lambda_+)^{1/3} (\lambda_+-\lambda_-)^{1/3}} \left(\tau+1\right)^{-2/3}  <0,
	\end{equation}	
   \begin{equation}\label{eq_lpm_def}
	 \lambda_\pm=\frac{1}{(\tau+1)^2}\left[\sqrt{2\tau}\pm \sqrt{\tau-1}  \right]^2.
	\end{equation}
\end{theorem}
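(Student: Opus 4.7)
The plan is to reduce the claim to the Jacobi edge asymptotics of Theorem~\ref{Theorem_CCA_extreme} and then perform a first-order Taylor expansion of $\ln(1-x)$ near $x=\lambda_+$. The parameters already line up: with $\tau_K = 1+\tau$ and $\tau_M = (1+\tau)/2$, formula \eqref{eq_lambda_pm_def} reproduces \eqref{eq_lpm_def}, consistent with the Wachter identification of Theorem~\ref{Theorem_Cointegration_LLN}. Moreover, a direct algebraic check shows that with these Wachter parameters the prefactor $c_+ = \frac{\tau_K}{2}\sqrt{\lambda_+-\lambda_-}/(\lambda_+(1-\lambda_+))$ of \eqref{eq_Jacobi_up_edge} satisfies $c_+^{-2/3}/(1-\lambda_+) = -c_2(K,T)$, so the exact scaling in \eqref{eq_statistic_limit} is forced by the Jacobi edge limit.

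The first step is to identify the distribution of $(\tilde\lambda_1,\dots,\tilde\lambda_K)$ sharply enough to apply an edge theorem. Using Theorem~\ref{Theorem_not_depend_on_cov} I would reduce to $\Lambda = I_K$, so that $\eps = (\eps_1,\dots,\eps_T)$ is $K\times T$ with i.i.d.\ standard Gaussian entries. Both $\tilde \U$ and $\tilde \V$ are then deterministic linear transformations of this single matrix $\eps$: the columns of $\tilde \U$ arise by applying the demeaning projector $N = I_T - \frac{1}{T}\mathbf{1}_T\mathbf{1}_T^\T$ on the right, while the columns of $\tilde \V$ arise by applying a rank-$(T-2)$ operator $M$ that first integrates (partial sums), then detrends, then demeans. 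Consequently the row spans of $\tilde \U$ and $\tilde \V$ are the images of the (uniformly distributed) random $K$-dimensional subspace $\mathrm{rowspan}(\eps) \subset \mathbb{R}^T$ under $N$ and $M$ respectively.

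Diagonalizing $M^\T M$ in an orthonormal basis of $\mathbb{R}^T$ (the discrete analogue of the Karhunen--Lo\`eve expansion of the detrended Brownian bridge, with sine-type eigenfunctions and eigenvalues of order $k^{-2}$) and using orthogonal invariance of $\mathrm{rowspan}(\eps)$, the $\tilde\lambda_i$ can be written as the eigenvalues of a quadratic form analogous to \eqref{eq_MANOVA}, but with the two Wishart-type blocks replaced by sample covariances weighted by the Karhunen--Lo\`eve eigenvalues. This is not literally a Jacobi ensemble, yet the weighted structure is precisely what produces the Wachter parameters $(1+\tau,(1+\tau)/2)$ in the bulk (Theorem~\ref{Theorem_Cointegration_LLN}), and edge universality considerations of the type alluded to after Theorem~\ref{Theorem_CCA_extreme} show that the top eigenvalues still satisfy the Airy$_1$ limit \eqref{eq_Jacobi_up_edge} with $\tau_K = 1+\tau$.

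Granting the Airy$_1$ edge limit for $\tilde\lambda_i$, the transfer to $\ln(1-\tilde\lambda_i)$ is an elementary Taylor expansion
\begin{equation*}
 \ln(1-\tilde\lambda_i) = \ln(1-\lambda_+) - \frac{\tilde\lambda_i-\lambda_+}{1-\lambda_+} + O\!\bigl((\tilde\lambda_i-\lambda_+)^2\bigr),
\end{equation*}
valid since $\tilde\lambda_1$ stays bounded away from $1$ (because $\lambda_+<1$ when $\tau>2$). Summing over $i=1,\dots,r$, dividing by $K^{-2/3} c_2(K,T)$, and noting that the quadratic remainder is $O(K^{-4/3})$ and hence negligible on the claimed $K^{-2/3}$ scale, yields \eqref{eq_statistic_limit}. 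The main obstacle, and the source of the real technical work (carried out in \cite{BG1}), is the sharp $K^{-2/3}$ control of $\tilde\lambda_1$ in the previous step, because the prelimit ensemble is not a standard Jacobi ensemble and the MANOVA identification of Lemma~\ref{Lemma_MANOVA} does not apply directly; a pure bulk input such as Theorem~\ref{Theorem_Cointegration_LLN} is insufficient, and one genuinely needs either a non-i.i.d.\ comparison argument or the edge machinery developed specifically for this weighted ensemble.
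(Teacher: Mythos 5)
Your outer shell is sound: the parameter identification $\tau_K=1+\tau$, $\tau_M=(1+\tau)/2$, the resulting $\lambda_\pm$, the Taylor expansion $\ln(1-\tilde\lambda_i)=\ln(1-\lambda_+)-\tfrac{\tilde\lambda_i-\lambda_+}{1-\lambda_+}+O\bigl((\tilde\lambda_i-\lambda_+)^2\bigr)$, and the algebraic check that $c_+^{-2/3}/(1-\lambda_+)=-c_2(K,T)$ with $c_+$ from Theorem~\ref{Theorem_CCA_extreme} all come out right, and the $O(K^{-4/3})$ remainder is indeed negligible on the $K^{-2/3}$ scale. The issue is the step you flag yourself: you assume the Airy$_1$ edge limit for $\tilde\lambda_i$ but do not supply it, and the route you sketch toward it does not match the structure of the problem. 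You diagonalize $M^\T M$ and speak of a ``weighted MANOVA'' with two Wishart-type blocks, but in Lemma~\ref{Lemma_MANOVA} the two blocks $\Z$ and $\Y$ are \emph{independent}, whereas here $\tilde\U$ and $\tilde\V$ are both deterministic functions of the single Gaussian array $\eps$ (indeed, up to the demeaning/detrending corrections, $\tilde\U$ is the time-increment of $\tilde\V$). Diagonalizing $M^\T M$ by orthogonal invariance does not simultaneously diagonalize the operator producing $\tilde\U$, so the problem does not reduce to a weighted MANOVA and the ``edge universality'' literature cited after Theorem~\ref{Theorem_CCA_extreme} (comparison with non-Gaussian i.i.d.\ entries for \emph{independent} $\U,\V$) is not applicable.

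The paper's actual bridge to the edge is Theorem~\ref{Theorem_main_approximation}, stated immediately after the result you are proving: there is a coupling of $(\tilde\lambda_1,\dots,\tilde\lambda_K)$ with the eigenvalues $(x_1,\dots,x_K)$ of the genuine Jacobi ensemble $\mathcal J(K;\tfrac{K}{2},\tfrac{T-2K}{2})$ such that $\max_i|\tilde\lambda_i-x_i|<K^{-1+\epsilon}$ with probability tending to $1$. Since $K^{-1+\epsilon}\ll K^{-2/3}$ for $\epsilon<1/3$, this coupling transports the edge fluctuations of Theorem~\ref{Theorem_CCA_extreme} (applied to the Jacobi ensemble with these parameters) to $\tilde\lambda_i$, and then your Taylor step finishes the argument. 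The proof idea for that coupling is to replace the time-shift operator implicitly defining $\tilde\V$ from $\tilde\U$ by a Haar-random orthogonal operator, which converts the dependent pair into the exact MANOVA/Jacobi structure, and then show this replacement perturbs the canonical correlations by $o(K^{-2/3})$ --- a fundamentally different mechanism from edge universality for a weighted ensemble. So: your reduction and scaling are correct, but the crucial edge-level input is missing, and the intermediate ensemble you propose does not capture the shared-randomness structure that the paper's coupling is designed to handle.
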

\begin{remark}
The theorem is stated for $\sum_{i=1}^r\ln(1-\lambda_i)$, but a similar result holds for individual $\lambda_i$ with $i$ not growing with $K,T$. The constants $\lambda_\pm$ are precisely the endpoints of the Wachter distribution  $\omega_{1+\tau,\, (1+\tau)/2}$, cf.\ \eqref{eq_lambda_pm_def}.
\end{remark}
Again, \eqref{eq_statistic_limit} can be immediately applied to clarify the term ``large'' in Procedure \ref{Procedure_modified_coint} (modified version of Procedure \ref{Procedure_likelihood_ratio} based on $\tilde\lambda_i$). One should use the quantiles of the random variables $\sum_{i=1}^r \aa_i$ to deduce what values are atypically large. The densities of these random variables for $r\le 3$ are plotted in Figure \ref{Fig_airy_density} and tables of distribution functions for $r\le 10$ are contained in \cite{vignette_largevars}.

\begin{figure}[t]
	\centering
	{\scalebox{0.75}{\includegraphics{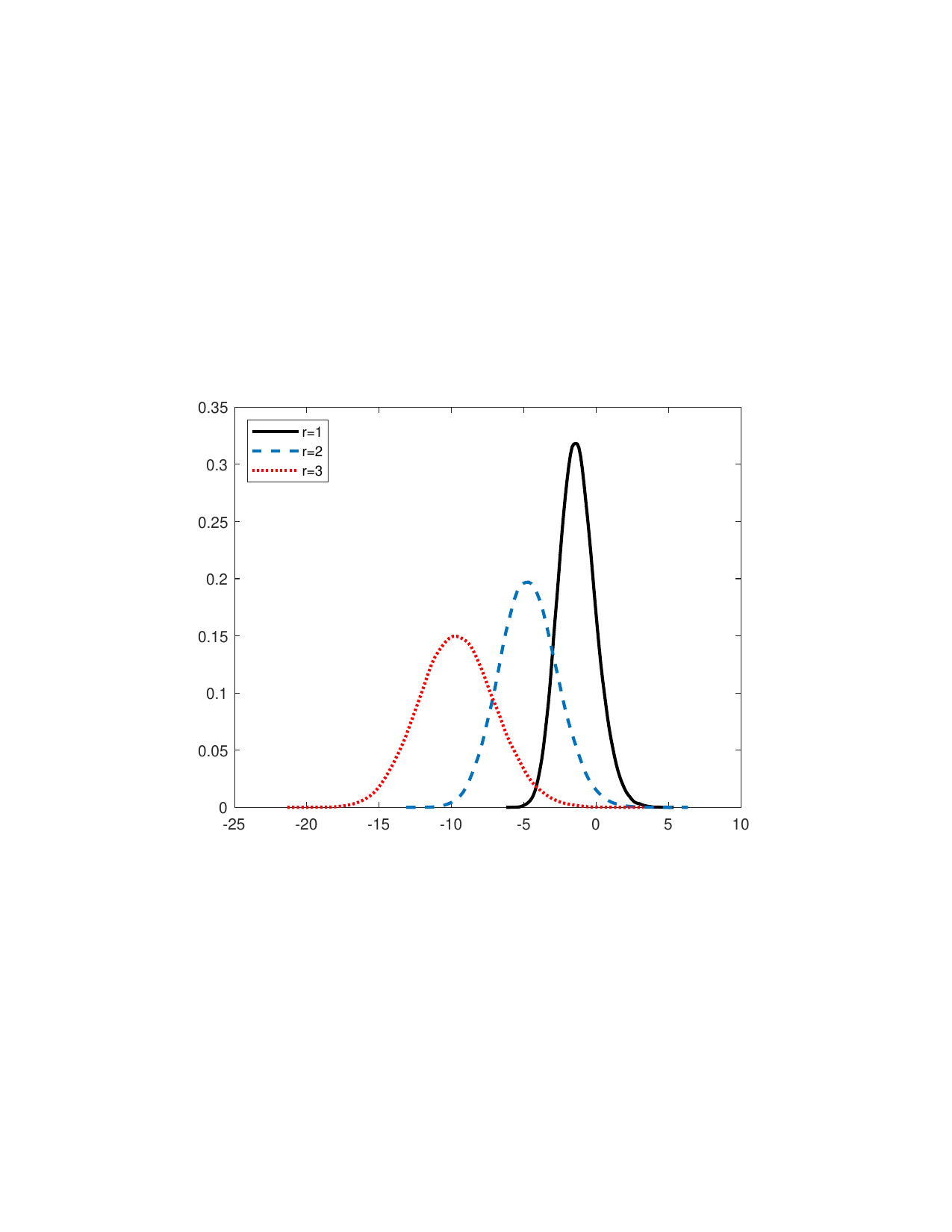}}}
	\caption{The probability density for the random variables $\sum\limits_{i=1}^r\aa_i$.}
	\label{Fig_airy_density}
\end{figure}

Theorem \ref{Theorem_Cointegration_LLN} was first proved in \cite{onatski_ecta} by using delicate manipulations with resolvents to obtain a system of four equations for the Stieltjes transform of $\lim_{K,T\to\infty}  \mu_{K,T}$ and then identifying the solution as the Wachter distribution. In this approach the Gaussianity of $\eps_t$ is not important, and it allows slight modifications of the procedure (e.g., works both for $\lambda_i$ and $\tilde \lambda_i$). An alternative proof of Theorem \ref{Theorem_Cointegration_LLN}, as well as the proof of Theorem \ref{Theorem_Cointegration_Airy} was obtained in \cite{BG1} exploring a different set of ideas, for which both Gaussianity of $\eps_t$ and the precise details of Procedure \ref{Procedure_modified_coint} are important\footnote{We expect Theorem \ref{Theorem_Cointegration_Airy} to extend to much wider generality, but this was not proven as of 2024.}. The latter approach builds on a connection to Jacobi ensemble:

\begin{theorem}
\label{Theorem_main_approximation}
In the setting of Theorem \ref{Theorem_Cointegration_Airy} suppose that $T,K\to\infty$ in such a way that $T>2K$ and the ratio $T/K$ remains bounded away from $2$ and $+\infty$. One can couple (i.e.~define on the same probability space) the $\tilde \lambda_1\ge \tilde \lambda_2\ge \dots\ge \tilde \lambda_K$ of Procedure \ref{Procedure_modified_coint} and eigenvalues $x_1\ge \dots\ge x_K$ of the Jacobi ensemble $\mathcal J(K;\frac{K}{2}, \frac{T-2K}{2})$ of Definition \ref{Definition_Jacobi_ev} in such a way that  for each $\epsilon>0$ we have
 $$
   \lim_{T,K\to\infty} \mathrm{Prob}\left( \max_{1\le i \le K} |\tilde \lambda_i-x_i|< \frac{1}{K^{1-\epsilon}}\right)=1.
 $$
\end{theorem}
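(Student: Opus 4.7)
The strategy has four stages: reduce to i.i.d.\ $\mathcal{N}(0,1)$ noise, write $\tilde{\U}$ and $\tilde{\V}$ explicitly as linear images of this noise, couple to a MANOVA matrix via orthogonal invariance, and control the residual perturbation using low-rank structure plus Jacobi rigidity.

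\emph{Setup and matrix form.} Exactly as in the proof of Theorem \ref{Theorem_not_depend_on_cov}, the row spans of $\tilde{\U}$ and $\tilde{\V}$ are invariant under left multiplication of $E := [\eps_1,\dots,\eps_T]$ by a non-degenerate matrix, so the distribution of $(\tilde{\lambda}_i)$ does not depend on $\Lambda$; the demeaning step in Procedure \ref{Procedure_modified_coint} further annihilates the contribution of $X_0$. I therefore assume $\Lambda = I_K$ and $X_0 = 0$, so that $E$ is $K\times T$ with i.i.d.\ $\mathcal{N}(0,1)$ entries. Under $\Pi = 0$ one checks directly that $\tilde{\U} = E P$ and $\tilde{\V} = E\,\Xi\,P$, where $P = I_T - \tfrac{1}{T}\mathbf{1}\mathbf{1}^{\T}$ is the time-centering projector and $\Xi$ is the deterministic $T\times T$ matrix with entries $\Xi_{s,t} = \mathbf{1}_{s<t} - (t-1)/T$, encoding ``partial sum minus linear trend''. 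By \eqref{eq_x16}, the $\tilde{\lambda}_i$ are exactly the eigenvalues of
\[
  (E P E^{\T})^{-1}\,(E P \Xi^{\T} E^{\T})\,(E\,\Xi P \Xi^{\T} E^{\T})^{-1}\,(E\,\Xi P E^{\T}).
\]

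\emph{Coupling via orthogonal invariance.} This matrix depends on $E$ only through the four bilinear forms $E M E^{\T}$ with $M\in\{P,\,\Xi P,\,P\Xi^{\T},\,\Xi P\Xi^{\T}\}$. Since $E\stackrel{d}{=} EO^{\T}$ for any deterministic orthogonal $T\times T$ matrix $O$, I can conjugate $M\mapsto O M O^{\T}$ without altering the law of $(\tilde{\lambda}_i)$. The choice I would make is to take the last column of $O$ equal to $\mathbf{1}/\sqrt{T}$, so that $OPO^{\T} = \mathrm{diag}(1,\dots,1,0)$, and then to further rotate within the first $T-1$ coordinates so as to reveal the block structure of $O\,\Xi P\Xi^{\T} O^{\T}$. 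After this rotation the last coordinate of $EO^{\T}$ contributes nothing (it is killed by $P$), and a second coordinate is killed by the detrending baked into $\Xi P$. What remains lives on $T-2$ coordinates and matches the MANOVA construction of Lemma \ref{Lemma_MANOVA} with the splitting $L=2K-1$, $Q=T-K-1$ (one checks that $(L-K+1)/2 = K/2$ and $(Q-K+1)/2 = (T-2K)/2$). That lemma, combined with Theorem \ref{Theorem_two_Jacobis}, identifies the eigenvalue law of the reference MANOVA matrix as $\mathcal{J}(K;\tfrac{K}{2},\tfrac{T-2K}{2})$, and its eigenvalues $x_1 > \dots > x_K$, built from the same underlying Gaussian blocks of $E O^{\T}$, furnish the coupling.

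\emph{Main obstacle: perturbation bound at the $K^{-1+\eps}$ scale.} The difference between the Stage 2 matrix and the ideal MANOVA matrix is \emph{not} small in operator norm: a direct Weyl inequality would yield only $O(K^{-1/2})$ control, whereas the theorem demands $K^{-1+\eps}$. The decisive structural observation I would exploit is that this difference has rank bounded by a small constant independent of $K$—it arises entirely from the finitely many deterministic directions removed by detrending-demeaning (mean and linear trend) and from the handful of off-block matrix entries of $O\,\Xi\,O^{\T}$ that fail to fit cleanly into the MANOVA split. By Cauchy interlacing, an additive rank-$r$ symmetric perturbation shifts each eigenvalue by at most $r$ positions in the ordered spectrum. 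Combining this with Jacobi eigenvalue rigidity—uniform bounds $|x_i - \gamma_i|\le K^{-1+\eps}$ with $\gamma_i$ the classical quantile of the Wachter density $\omega_{1+\tau,(1+\tau)/2}$, obtainable via the tridiagonal matrix model of \cite{killip2004matrix,ramirez2011beta} or from local laws built on the loop equation of Theorem \ref{Theorem_DS_equation}—typical consecutive gaps $x_i - x_{i+1}$ are $O(K^{-1})$ in the bulk and $O(K^{-2/3})$ near the edges, so an $O(1)$-rank perturbation shifts each eigenvalue by $O(K^{-1+\eps})$ with probability tending to one. Carrying out the rank bookkeeping rigorously, and importing a rigidity estimate that is uniform in $1\le i \le K$, is the technical heart of the argument.
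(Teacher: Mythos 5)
Your reduction to $\tilde\U = EP$, $\tilde\V = E\Xi P$ with $E$ i.i.d.\ Gaussian is correct and is a good start, but the proposal breaks at the coupling stage, and the intended patch via rigidity is also too weak.

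\textbf{The MANOVA structure does not emerge from a deterministic conjugation.} In the proof of Theorem~\ref{Theorem_CCA_Jacobi}, the Jacobi ensemble appears because $\U$ and $\V$ are \emph{independent}: one conditions on $\V$, rotates so that $\V$ occupies coordinate directions, and uses that $\U O^\T\stackrel{d}{=}\U$ because $O$ is independent of $\U$. Here $\tilde\U$ and $\tilde\V$ are deterministic functions of the \emph{same} Gaussian matrix $E$, so conditioning on $\tilde\V$ essentially fixes $E$ and there is no orthogonal invariance left to exploit. Writing $\tilde\U=(EO^\T)(OPO^\T)$, $\tilde\V=(EO^\T)(O\Xi P O^\T)$ for a deterministic $O$ does not change this: $\Xi$ is a lower-triangular, non-normal operator and cannot be brought to a coordinate projection (or near one, up to finitely many entries) by orthogonal conjugation, so $\tilde\V$ depends on \emph{all} of $E'=EO^\T$, not on a coordinate block. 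The dimension count confirms the mismatch: the parameters you want, $L=2K-1$ and $Q=T-K-1$, give $L+Q=T+K-2$, exceeding the $T$ (or $T-2$, after removing the two killed directions) columns available in $E$, so there is no honest splitting of $E'$ into independent $K\times L$ and $K\times Q$ Gaussian blocks as Lemma~\ref{Lemma_MANOVA} requires. The paper's actual mechanism is different in kind: it \emph{replaces} the deterministic time-shift operator $\Xi$ by a Haar-random orthogonal operator $Q$ independent of $E$. With that replacement $\tilde\V_{\mathrm{rand}}=EQP$ becomes effectively independent of $\tilde\U$, the Jacobi ensemble appears exactly, and the technical content is a distributional comparison showing that swapping $\Xi$ for $Q$ perturbs the canonical correlations by at most $K^{-1+\eps}$; this is the content of \cite[Theorem 4]{BG1}, not a rank perturbation estimate.

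\textbf{Even if the low-rank claim held, rigidity + interlacing cannot give $K^{-1+\eps}$.} An additive rank-$r$ perturbation shifts $\lambda_i$ by at most $r$ positions in the ordered spectrum, hence by at most the width of $r$ consecutive gaps. But near the spectral edges the Jacobi gaps scale as $K^{-2/3}$, which for small $\eps$ is far larger than the required $K^{-1+\eps}$. So this route would fail precisely where the bound is needed for Theorem~\ref{Theorem_Cointegration_Airy}: the Airy$_1$ limit lives on the $K^{-2/3}$ scale, and to transfer it from the Jacobi ensemble to $\tilde\lambda_i$ you must couple at a strictly finer scale than $K^{-2/3}$, which a gap-counting argument cannot deliver. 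Getting $K^{-1+\eps}$ uniformly, including at the edge, requires the stronger comparison built into the replacement argument of \cite{BG1} rather than a Weyl/interlacing bound.
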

Let us emphasize the differences between Theorem \ref{Theorem_main_approximation} and the previous appearance of the Jacobi ensemble in CCA in Theorem \ref{Theorem_CCA_Jacobi}. In the latter theorem, the rectangular matrices $\U$ and $\V$ are independent; in contrast, in the former theorem one matrix is a function of another: up to minor corrections, $\tilde \U$ of Procedure \ref{Procedure_modified_coint} is the time-increment of $\tilde \V$. Moreover, Theorem \ref{Theorem_CCA_Jacobi} is an exact match with the Jacobi ensemble, while Theorem \ref{Theorem_main_approximation} is only an approximation (and no exact match is known or expected). The idea of the proof of Theorem \ref{Theorem_main_approximation} is to replace the time-shift operator implicitly entering Procedure \ref{Procedure_modified_coint} by a uniformly random orthogonal operator and check that this replacement does not change the canonical correlations much, see \cite[Theorem 4]{BG1} for details.

\begin{figure}[t]
	\centering
	{\scalebox{0.75}{\includegraphics{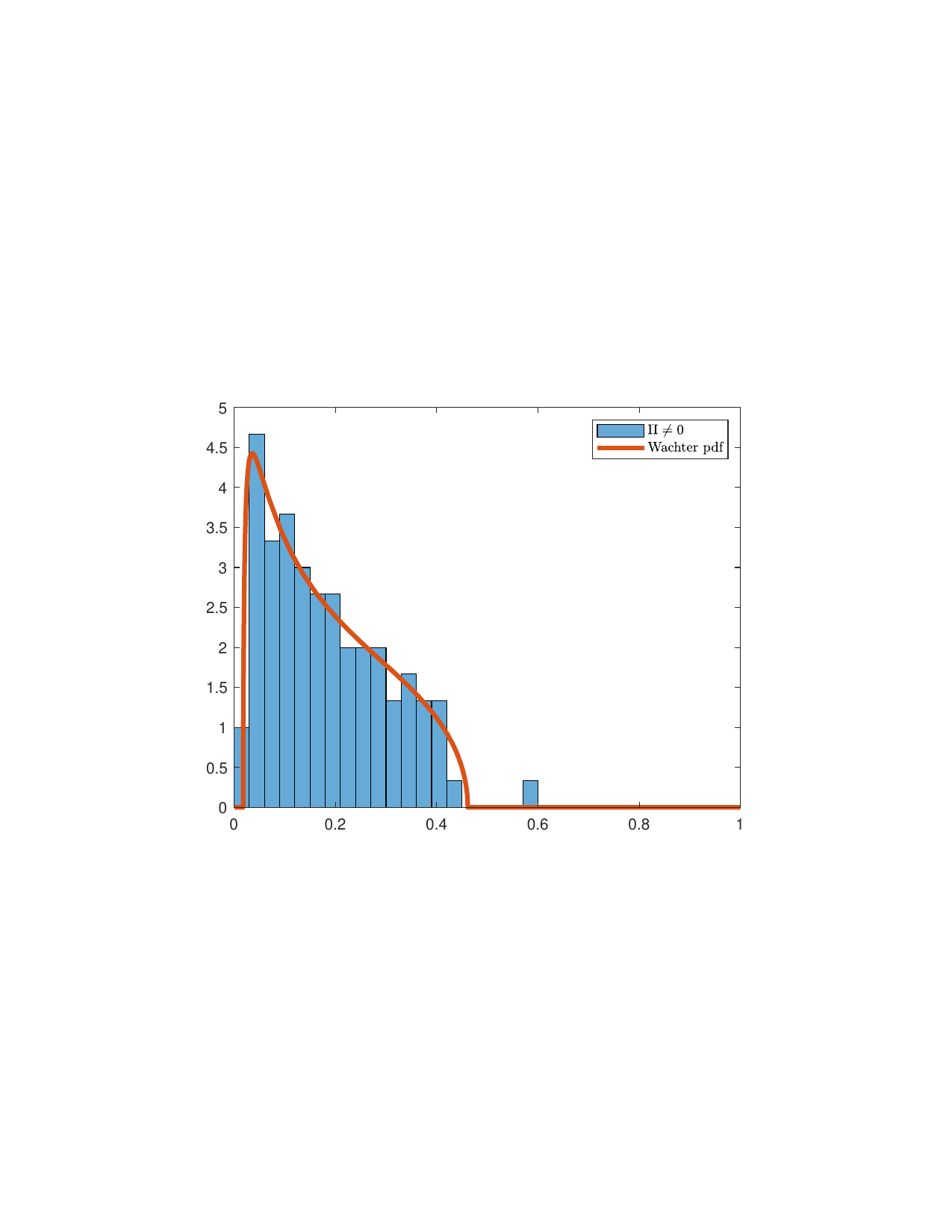 }}}
	\caption{Histogram of $\tilde \lambda_i$ for $\Pi$ of rank $1$ ($\Pi$ has single $-1$ in top-left corner), $K=100$, $T=1000$.}
	\label{Fig_cointegration_spike}
\end{figure}

Once Theorem \ref{Theorem_main_approximation} is established, Theorems \ref{Theorem_Cointegration_LLN} and \ref{Theorem_Cointegration_Airy} become applications of the asymptotic results for the Jacobi ensemble which we discussed in Chapter 3.

\subsection{Outlook: signal plus noise for cointegration} So far we only explored the large $K$, $T$ asymptotics of the objects related to cointegration tests under the null hypothesis of no cointegration, $\Pi=0$. Drawing from the ideas in Chapter 4, one might hope to develop a signal-plus-noise framework for cases where $\Pi\ne 0$. As of 2024, such a theory has yet to be fully established. Nevertheless, Monte Carlo simulations (see Figure \ref{Fig_cointegration_spike}) are encouraging and suggest that this theory may soon emerge. Looking at the real data, the presence of three correlations extending to the right of the support of the Wachter distribution for cryptocurrencies in Figure \ref{Fig_cryptovar1} of Chapter 1 likely signals the presence of three cointegrating relationships.


\newpage
\section{Exercises} \label{Section_exercises}

\subsection{Problem set 0 --- before Chapter 1} \textit{A refresher on the linear algebra}

\bigskip

\noindent {\bf Problem 1.} Consider two pairs of lines in $\mathbb R^n$: $(\u,\v)$ and $(\u',\v')$. Suppose that the angle between $\u$ and $\v$ is $0\le \phi\le \frac{\pi}{2}$ and the angle between $\u'$ and $\v'$ is $0\le \phi'\le \frac{\pi}{2}$. Prove that $\phi=\phi'$ if and only if there exists an orthogonal transformation of $\mathbb R^n$ (i.e., $n\times n$ orthogonal matrix $O$), which maps $\u$ to $\u'$ and $\v$ to $\v'$.

\bigskip

\noindent {\bf Problem 2.} Let $A$ be a $k\times n$ real matrix. Treat the rows of this matrix as vectors in $\mathbb R^{n}$, let $\U\subset R^n$ be the linear space spanned by the rows of $A$, and let $P_{\U}$ be the orthogonal projection on $\U$. Viewing $P_\U$ as a $n\times n$ matrix, find an expression for it in terms of $A$.

\smallskip

\noindent {\bf Remark.} For simplicity, feel free to assume $k\le n$ and that $A$ has rank $k$. If you want to challenge yourself, you can also try to develop a formula for the case when rank of $A$ is smaller than $k$.

\bigskip

\noindent {\bf Problem 3.} Consider two lines in $\mathbb R^n$, $\u$ and $\v$ with angle $\phi$ between them. Let $P_\u$ be orthogonal projection on $\u$ and let $P_\v$ be orthogonal projection on $\v$. Prove that $P_\v P_\u$ has a unique non-zero eigenvalue and this eigenvalue equals $\cos^2\phi$.

\bigskip

\noindent {\bf Problem 4.} Let $\xi_1,\xi_2,\dots,\xi_n$ be a random vector. Assume that $\E \xi_i=0$ and $\E \xi_i^2<\infty$ for all $1\le i \le n$. Prove that there exists a deterministic $n\times n$ matrix $A$ and a random vector $\eta_1,\eta_2,\dots,\eta_n$, such that:
\begin{itemize}
 \item $\displaystyle \begin{pmatrix} \xi_1\\ \xi_2\\ \vdots \\ \xi_n \end{pmatrix}= A \begin{pmatrix} \eta_1\\ \eta_2\\ \vdots \\ \eta_n \end{pmatrix}$\quad  and
 \item $\eta_i$, $1\le i \le n$, are mean $0$, uncorrelated, and have variance $1$, i.e.,
 $$
  \E \eta_i=0,\qquad \E [\eta_i \eta_j] = \delta_{i=j}, \qquad 1\le i,j\le n.
 $$
\end{itemize}

\newpage

\subsection{Problem set 1 --- between Chapters 1 and 2} \textit{Transformations of Gaussian vectors}

\bigskip

\noindent {\bf Problem 1.} Compute $\displaystyle \int_{-\infty}^{\infty} e^{-x^2}\, \dd x$.

\noindent {\bf Hint.} One possible approach is to square the expression and treat it as a 2-dim integral.

\bigskip

\noindent {\bf Problem 2.} Given a $n\times n$ positive-definite matrix $\Phi$, we define an $n$--dimensional mean $0$ Gaussian vector $(\xi_1,\xi_2,\dots,\xi_n)$ through its probability density with respect to the Lebesgue measure:
$$
 \rho(x_1,\dots,x_n)= \frac{1}{Z(\Phi)} \exp\left( - \frac{1}{2} \begin{pmatrix} x_1 & x_2& \dots & x_n \end{pmatrix} \Phi \begin{pmatrix} x_1\\ \vdots \\ x_n \end{pmatrix} \right)
$$

\begin{enumerate}

\item[a.] Compute the value of $Z(\Phi)$ which makes $\rho$ a probability measure.

\item[b.] Prove that $\E \xi_i=0$.

\item[c.] Let $\Lambda$ be the covariance matrix of $\xi$, i.e., $\Lambda_{i,j}=\E [\xi_i \xi_j]$. Express $\Lambda$ via $\Phi$.

\noindent {\bf Hint.} Problem 4 from the previous set of exercises might help.

\end{enumerate}

\smallskip

\noindent{\bf Notation.} Such a Gaussian vector is denoted $\mathcal N(0,\Lambda)$, where $0$ indicates that the mean is $0$ (and alternatively, if we add a deterministic vector $\mu$, then the mean would be $\mu$), and $\Lambda$ is the covariance matrix. In particular, in the one-dimensional case $\mathcal N(0,1)$ is the standard Gaussian random variable of mean $0$ and variance $1$.

\bigskip

\noindent {\bf Problem 3.} Let $(\xi_1,\dots,\xi_n)$ be a random vector with i.i.d.\ $\mathcal N(0,1)$ components. Find the probability density function of $\xi_1^2+\xi_2^2+\dots+\xi_n^2$.

\smallskip

\noindent {\bf Hint.} The answer for $n=2$ is particularly simple.

\bigskip

\noindent {\bf Problem 4.} Let $(\xi_1,\dots,\xi_n)$ be a vector with i.i.d.\ $\mathcal N(0,\sigma^2_{\xi})$ components and let $(\eta_1,\dots,\eta_n)$ be another independent vector with i.i.d.\ $\mathcal N(0,\sigma^2_{\eta})$ components. Find the probability distribution (its density function) of the squared correlation coefficient between $\xi$ and $\eta$, which is
$$
 \frac{(\xi_1 \eta_1 + \xi_2 \eta_2+\dots +\xi_n \eta_n)^2}{(\xi_1^2+\xi_2^2+\dots+\xi_n^2) \cdot (\eta_1^2+\eta_2^2+\dots+\eta_n^2)}.
$$

\smallskip

\noindent {\bf Hint.} You might try to \emph{rotate $\mathbb R^{n}$} and show first that the distribution of this squared correlation coefficient is the same as the distribution of the squared correlation coefficient between vector $\xi$ and deterministic vector $(1,0,0,\dots,0)$.

\newpage

\subsection{Problem set 2 --- between Chapters 2 and 3}\textit{ MLE and matrix algebra}

\bigskip

\noindent {\bf Problem 1.} Let $\xi_1,\xi_2,\dots,\xi_S$ be i.i.d.\ $\mathcal N(0,\sigma^2)$, where $\sigma^2$ is unknown parameter, which we would like to estimate. Write down the joint density of $\xi_1,\dots,\xi_S$ as a function $L(x_1,\dots,x_S;\sigma^2)$. Find the Gaussian MLE $\hat \sigma^2$ by maximizing
$$
 \hat \sigma^2=\mathrm{argmax}_{\sigma^2}\left[ L(\xi_1,\dots,\xi_S;\sigma^2)\right].
$$
Use the Law of Large Numbers to prove that $\lim_{S\to\infty} \hat \sigma^2 = \sigma^2$.

\bigskip

\noindent {\bf Problem 2.} Suppose that $D$ is a positive-definite symmetric $N\times N$ matrix and let $\gamma>0$. Consider a function
$$
 f(G)=-\gamma\ln\det(G)-\mathrm{Trace}(G^{-1} D).
$$
Prove that the maximum of $f(G)$ over all $N\times N$ positive-definite symmetric $N\times N$ matrices occurs at $G=\frac{1}{\gamma}D$.

\smallskip

\noindent {\bf Hint.} First, consider $D=I_N$ and represent trace and determinant through eigenvalues. Then reduce general $D$ to $D=I_N$ by a change of variables.

\bigskip

\noindent {\bf Problem 3.}  Let $\xi_1,\xi_2,\dots,\xi_S$ be i.i.d.\ $N$-dimensional vectors $\mathcal N(0,\Lambda)$, where $\Lambda$ is an unknown covariance matrix, which we would like to estimate. Write down the joint density of $\xi_1,\dots,\xi_S$ as a function $L(x_1,\dots,x_S; \Lambda)$. Find the Gaussian MLE $\hat \Lambda$ by maximizing
$$
 \hat \Lambda=\mathrm{argmax}_{\Lambda}\left[ L(\xi_1,\dots,\xi_S; \Lambda)\right].
$$
Use the Law of Large Numbers to prove that $\lim_{S\to\infty} \hat \Lambda = \Lambda$.

\smallskip

\noindent {\bf Hint.} Use Problem 2 to find $\mathrm{argmax}$.

\bigskip

\noindent {\bf Problem 4.} Let $R$ be a $N\times N$ matrix. Define
$$
 O=\exp(R)=I_N + R + \frac{R^2}{2!} + \frac{R^3}{3!} + \dots.
$$
\begin{enumerate}
  \item Prove that if $R$ is skew-symmetric, then $O$ is orthgonal.
  \item Prove that there exists an open neighborhood $\mathcal U_1$ of $0$ in the space of all skew-symmetric matrices and an open neighborhood $\mathcal U_2$ of $I_N$ in the space of all orthogonal matrices, such that $R\mapsto \exp(R)$ is a diffeomorphism between $\mathcal U_1$ and $\mathcal U_2$.
\end{enumerate}



\newpage

\subsection{Problem set 3 --- between Chapters 3 and 4} \textit{Limits under independence}

\bigskip

\noindent {\bf Problem 1.} Let $\xi$ and $\eta$ be two independent mean $0$ random variables with finite fourth moments. Let $\xi_1,\xi_2,\dots$ and $\eta_1,\eta_2,\dots$ be i.i.d.\ samples of $\xi$ and $\eta$. Let $\rho^2(S)$ denote the squared sample correlation coefficient between $(\xi_1,\dots,\xi_S)$ and $(\eta_1,\dots,\eta_S)$. Using LLN and CLT investigate the behavior of $\rho^2(S)$ as $S\to\infty$.

\smallskip

\noindent {\bf Hint.} You should get the same answer as in the Gaussian case discussed in Chapter 3.

\bigskip

\noindent {\bf Problem 2.} Recall that we showed in Chapter 2 that in the Gaussian case the single squared sample canonical correlation  $\hat{\mathfrak c}^2$ between $1\times S$ matrix $\U$ and $M\times S$ matrix $\V$ is distributed as  $\mathcal J(1; \frac{M}{2},\frac{S-M}{2})$, i.e., with Beta-density proportional to
\begin{equation*}
  x^{\frac{M-2}{2}} (1-x)^{\frac{S-M-2}{2}}, \qquad 0<x<1.
\end{equation*}
Suppose that $M,S\to\infty$ in such a way that $S/M\to \tau_M>1$. Prove that there exist constants $a$ and $b$ such that
$$
 \lim_{M,S\to\infty} \left[\frac{\hat{\mathfrak c}^2 - a}{b}\sqrt{S}\right]\stackrel{d}{=} \mathcal N(0,1).
$$
Find formulas for $a$ and $b$ in terms of $\tau_M$.

\smallskip

\noindent {\bf Hint.} Taylor-expand the density near its maximum.

\bigskip

\noindent {\bf Problem 3.} Extend the asymptotics of Problem 2 to the case when $\U$ is $K\times S$ and $K$ stays fixed as $S\to\infty$.

\bigskip

\noindent {\bf Problem 4.} Let $\xi$ be a random variable $\mathcal N(m,\sigma^2)$. Let $\mathbf {Poly}$ denote the class of all continuously differentiable functions $f(x)$ on $\mathbb R$, which grow at most polynomially as $x\to\infty$.

\begin{enumerate}
\item Prove that for each $f\in \mathbf {Poly}$ we have
\begin{equation*}
 \E \bigl[f'(\xi)\bigr]=\E \left[\frac{\xi-m}{\sigma^2} f(\xi)\right].
\end{equation*}
\item Prove that the last identity uniquely characterizes the Gaussian distribution. In other words, if $\xi$ satisfies such an identity for all $f\in\mathbf{Poly}$, then $\xi\stackrel{d}{=}\mathcal N(m,\sigma^2)$.

\smallskip

\noindent {\bf Hint.} You can choose $f$ from your favorite class of functions determining a distribution: $\{x^n\}_{n=0}^{\infty}$, $\{\frac{1}{z-x}\}_{z\in\mathbb C\setminus \mathbb R}$, $\{\exp(\ii t x)\}_{t\in\mathbb R}$ --- each of them works, although the argument using each family would be slightly different.
\end{enumerate}

\newpage

\subsection{Problem set 4 --- between Chapters 4 and 5} \textit{Spiked GOE} (simpler version of Theorem \ref{Theorem_CCA_one_spike})

\bigskip

\noindent{\bf Problem 1.} Take $N\times N$ real symmetric matrix $\B$ with distinct eigenvalues $\lambda_1>\lambda_2>\dots>\lambda_N$ and normalized eigenvectors $\u_i$, $1\le i \le N$, satisfying $\langle \u_i, \u_j \rangle =\delta_{i=j}$. In addition, take a column-vector $\u^*$. For a real constant $\theta$ define
$$
 \A=\theta\cdot  \u^* (\u^*)^\T + \B.
$$
Suppose that $\langle \u_i, \u^*\rangle\ne 0$ for all $i$. Prove that each eigenvalue $a$ of $\A$ satisfies an equation
$$
 \frac{1}{\theta }=\sum_{i=1}^N \frac{  \langle \u_i, \u^*\rangle^2}{a-\lambda_i}.
$$

\noindent {\bf Hint.} Search for an eigenvector of the form $\sum_{i=1}^N \alpha_i \u_i$.

\bigskip

\noindent{\bf Problem 2.} As continuation of Problem 1, suppose that $\u^*$ has i.i.d.\ $\mathcal N(0,\frac{1}{N})$ components. Prove that $ \langle \u_i, \u^*\rangle$, $i=1,2,\dots,N$ are also i.i.d.\ $\mathcal N(0,\frac{1}{N})$.

\bigskip

\noindent{\bf Problem 3.} Suppose that $\B$ is the rescaled GOE, which means that $\B=\frac{1}{2\sqrt{N}}(X+X^*)$, where $X$ is a matrix of i.i.d.\ $\mathcal N(0,1)$. You can use without proof the semicircle law in the form: if $\lambda_1\ge \dots\ge \lambda_N$ are eigenvalues of $\B$ and $G_N(z)=\frac{1}{N}\sum_{i=1}^N \frac{1}{z-\lambda_i}$, then:
$$\lim_{N\to\infty} \lambda_1=2, \qquad \lim_{N\to\infty}\lambda_N = -2, \qquad \lim_{N\to\infty} G_N(z)=\frac{1}{2}\bigl(z-\sqrt{z^2-4}\bigr).$$
Suppose $\theta>1$. Prove that (for $\u^*$ as in Problem 2) as $N\to\infty$, the relation between $\theta$ and $a>2$ of Problem $1$ becomes $a=\theta+\frac{1}{\theta}$.

\bigskip

\noindent{\bf Problem 4.} In the setting of Problem 3 compute the limit as $N\to\infty$ of the angle between $\u^*$ and the eigenvector of $\A$ with the largest eigenvalue $a\approx \theta + \frac{1}{\theta}$. How does it behave as $\theta\to 1$? As $\theta\to+\infty$?

\newpage

\subsection{Problem set 5 --- after Chapter 5} \textit{Examples and properties of autoregressions}
\bigskip

\noindent{\bf Problem 1.} Let $\eps_t$ be i.i.d.\ $\mathcal N(0,1)$. Fix a deterministic parameter $\theta$ and let $x_t$, $t=0,1,2,\dots$ be a
solution to
$$
 x_t=\theta x_{t-1}+\eps_t,\qquad t=1,2,\dots,\qquad x_0=0.
$$
For all values of $\theta$, compute explicitly the correlation coefficient between $x_t$ and $x_{t+s}$ as a function of $t$ and $s$. Prove that the coefficient:
\begin{enumerate}
 \item Decays exponentially as $s\to\infty$, if $|\theta|<1$.
 \item Decays polynomially as $s\to\infty$, if $|\theta|=1$.
\end{enumerate}

\bigskip

\noindent{\bf Problem 2.} In the setting of previous problem with $|\theta|<1$, find out what should be the distribution of (random) $x_0=\xi$, so that $x_t$ would become a stationary process in $t$. Prove that for any other initial condition $x_0$, one has
$$
 \lim_{t\to\infty} x_t\stackrel{d}{=} \xi.
$$

\bigskip

\noindent{\bf Problem 3.} Suppose that a two-dimensional vector $X_t$ solves an evolution equation:
$$
 \Delta X_t=\begin{pmatrix} -1/2 & -1/2 \\ -1/2 & -1/2 \end{pmatrix} X_{t-1} + \eps_t, \qquad \Delta X_t = X_t- X_{t-1}, \qquad t=1,2,\dots,
$$
where $\eps_t$ are i.i.d.\ with independent $\mathcal N(0,1)$ components. Show that $X_t$ is cointegrated, i.e.,
\begin{itemize}
 \item Both components of $X_t$ have linearly growing variance as $t\to\infty$.
 \item There exists a linear combination of two components of $X_t$ which is stationary in $t$ (under an appropriate choice of the initial condition $X_0$).
\end{itemize}

\bigskip

\noindent{\bf Problem 4.} Suppose that a two-dimensional vector $X_t$ solves an evolution equation:
$$
 \Delta X_t=\begin{pmatrix} 0 & 1 \\ 0 & 0 \end{pmatrix} X_{t-1} + \eps_t, \qquad \Delta X_t = X_t- X_{t-1}, \qquad t=1,2,\dots,
$$
where $\eps_t$ are i.i.d.\ with independent $\mathcal N(0,1)$ components. Compute the speed of growth of the variances of each of the two components of $X_t$ as $t\to\infty$.

\medskip

\noindent{\bf Remark.} We say that one of the components of $X_t$ is $I(1)$ process, while another one is $I(2)$ process; the same terminology refers to the stationary processes as $I(0)$. The value $d$ in $I(d)$ refers to how many times one needs to apply $\Delta$ to $X_t$ to transform it to a stationary process.

\newpage

\bibliographystyle{abbrvnat}
\bibliography{CCA_lectures_biblio}

\end{document}